\definecolor{myYellow}{RGB}{251, 199, 19}
\colorlet{myDGray}{black!70}
\colorlet{myLGray}{black!50}
\colorlet{myTGray}{black!75}
\tikzset{every path/.style={->,thick}}
\tikzstyle{envplace}=[circle,thick,draw=black!75,fill=white,minimum size=15pt,transform shape]
\tikzstyle{sysplace}=[circle,thick,draw=black!75,fill=black!20,minimum size=15pt,transform shape]
\tikzstyle{transition}=[rectangle,thick,draw=black!75,fill=white,minimum size=10pt,transform shape]
\tikzstyle{token}=[circle,thick,fill=black!75, inner sep=1.5pt,transform shape]
\tikzstyle{specialSys}=[double=black!20,minimum size=13.5pt]
\tikzstyle{specialEnv}=[double,minimum size=13.5pt]
\tikzstyle{aastate}=[regular polygon,regular polygon sides=8,thick,fill=white,draw=black!75,minimum size=13pt,transform shape]
\tikzstyle{specialAA}=[regular polygon,regular polygon sides=8,thick,double,fill=white,draw=black!75,minimum size=0pt,transform shape]
\tikzstyle{arrow}=[->,thick,black!75]
\tikzstyle{darrow}=[<->,thick,black!75]
\newlength{\bls}
\newcommand{\refLemma}[1]{Lemma~\ref{lem:#1}}
\newcommand{\refTheo}[1]{Theorem~\ref{theo:#1}}
\newcommand{\refFig}[1]{Fig.~\ref{fig:#1}}
\newcommand{\refSection}[1]{Sec.~\ref{sec:#1}}
\newcommand{\refProp}[1]{Proposition~\ref{prop:#1}}
\newcommand{\abs}[1]{\ensuremath | #1 |}
\newcommand{\pNet}{\ensuremath\mathcal{N}}
\newcommand{\pl}{\ensuremath\mathcal{P}}
\newcommand{\places}{\pl}
\newcommand{\tr}{\ensuremath\mathcal{T}}
\newcommand{\transitions}{\tr}
\newcommand{\fl}{\ensuremath\mathcal{F}}
\newcommand{\flow}{\fl}
\newcommand{\init}{\ensuremath\mathit{In}}
\newcommand{\win}{\ensuremath\mathcal{W}}
\newcommand{\pre}[2]{\mathit{pre}^{#1}(#2)}
\newcommand{\post}[2]{\mathit{post}^{#1}(#2)}
\newcommand{\petriNet}{\ensuremath\pNet=(\pl,\tr,\fl,\init)}
\newcommand{\conflict}[2]{{#1}\,\sharp\,{#2}}
\newcommand{\fireSeq}[2]{\ensuremath #1[\bigtriangledown\, #2 \,]}
\newcommand{\fireTranTo}[3]{\ensuremath #1  \;\boldsymbol{[} \, {\scriptstyle #2}\, \boldsymbol{\rangle}\; #3}
\newcommand{\markpro}[2]{\ensuremath {#1}^{\langle {#2} \rangle} }
\newcommand{\pGame}{\ensuremath\mathcal{G}}
\newcommand{\plS}{\ensuremath\pl_\psys}
\newcommand{\plE}{\ensuremath\pl_\penv}
\newcommand{\bad}{\ensuremath\mathcal{B}}
\newcommand{\reach}{\ensuremath\mathcal{R}}
\newcommand{\pSpecial}{\ensuremath\mathcal{S}}
\newcommand{\slices}{\ensuremath \boldsymbol{\mathscr{S}}}
\newcommand{\processes}{\ensuremath \boldsymbol{\mathscr{P}}}
\newcommand{\slice}{\ensuremath \varsigma}
\newcommand{\ind}{\ensuremath \mathds{I}}
\newcommand{\penv}{\ensuremath \mathcal{E}}
\newcommand{\psys}{\ensuremath \mathcal{S}}
\newcommand{\pts}[1]{\ensuremath \slices(#1)}
\newcommand{\stp}[1]{\ensuremath \processes(#1)}
\newcommand{\projtr}[1]{\ensuremath \langle #1 \rangle_{\downarrow}^\transitions}
\newcommand{\DL}{\ensuremath \mathit{DL}}
\newcommand{\badw}{\ensuremath \bot}
\newcommand{\env}{\ensuremath \mathit{env}}
\newcommand{\sys}{\ensuremath \mathit{sys}}
\newcommand{\dom}{\ensuremath \mathit{dom}}
\newcommand{\Seq}{\ensuremath \mathit{seq}}
\newcommand{\past}{\ensuremath \mathit{past}}
\newcommand{\domain}{\ensuremath \mathit{domain}}
\newcommand{\Plays}{\ensuremath \mathit{Plays}}
\newcommand{\PlaysInf}{\ensuremath \mathit{Plays}^F}
\newcommand{\view}{\ensuremath \mathit{view}}
\newcommand{\CG}{\ensuremath \cGame}
\newcommand{\cGame}{\ensuremath\mathcal{C}}
\renewcommand{\AA}{\ensuremath\mathcal{A}}
\newcommand{\enabled}[1]{\ensuremath act(#1)}
\newcommand{\state}[1]{\ensuremath \mathit{state}(#1)}
\newcommand{\statep}[2]{\ensuremath \mathit{state}_{#1}(#2)}
\newcommand{\tuplein}{\ensuremath  \in }
\newcommand{\myItem}[1]{\textbf{\color{black}\sffamily #1}}
\newcommand{\inv}[1]{#1\raisebox{1.15ex}{$\scriptscriptstyle-\!1$}}
\newcommand{\relation}{\ensuremath \approx_\mathfrak{B}}
\newcommand{\pastt}[1]{\ensuremath \past_{\transitions}(#1)}
\newcommand{\rec}{\ensuremath \mathit{rec}}
\newcommand{\recv}{\ensuremath \mathit{rec}_{\varrho}}
\newcommand{\indx}{\ensuremath \mathfrak{I}}
\newcommand{\parcomp}{\ensuremath {\langle \parallel_{\slices} \rangle}}
\tikzset{every loop/.style={min distance=5mm}}
\newcommand{\sysstrat}{\ensuremath{\sigma}}
\title{Translating Asynchronous Games for Distributed Synthesis (Full Version)}
\titlerunning{Translating Asynchronous Games for Distributed Synthesis (Full Version)}
\author{Raven Beutner}{Saarland University, Germany}{}{}{}
\author{Bernd Finkbeiner}{Saarland University, Germany}{}{}{}
\author{Jesko Hecking-Harbusch}{Saarland University, Germany}{}{}{}
\authorrunning{R.\ Beutner, B.\ Finkbeiner, and J.\ Hecking-Harbusch}
\keywords{
	synthesis, 
	distributed systems, 
	asynchronous systems, 
	causal memory, 
	Petri games, 
	Petri nets, 
	control games
	asynchronous automata
	}
\theoremstyle{plain}
\begin{document}

\maketitle

\begin{abstract}
  In distributed synthesis, a set of process
  implementations is generated, which together, accomplish an objective against all
  possible behaviors of the environment. A lot of recent work has
  focussed on systems with causal memory,
  i.e., sets of asynchronous processes that exchange their causal histories upon
  synchronization. Decidability results for this problem have been stated either in
  terms of control games, which extend Zielonka's asynchronous
  automata by partitioning the actions into controllable and
  uncontrollable, or in terms of Petri games, which extend Petri nets
  by partitioning the tokens into system and environment players.  The
  precise connection between these two models was so far, however, an open
  question.

  In this paper, we provide the first formal connection between
  control games and Petri games.  We establish the equivalence of the
  two game types based on weak bisimulations between their strategies. For
  both directions, we show that a game of one type can be translated
  into an equivalent game of the other type.  We provide exponential
  upper and lower bounds for the translations. Our translations allow
  to transfer and combine decidability results between the
  two types of games. Exemplarily, we translate decidability in
  acyclic communication architectures, originally obtained for control
  games, to Petri games, and decidability in single-process systems, originally obtained for
  Petri games, to control games.
\end{abstract}

\section{Introduction}
\label{sec:introduction}
Synthesis is the task of automatically generating an implementation fulfilling a given objective or proving that no such implementation can exist.
Synthesis can be viewed as a game between the \emph{system} and the \emph{environment} with winning strategies for the system being correct implementations~\cite{buchi1967solving}.
We call a class of games \emph{decidable} if we can determine the existence of a winning strategy.
A distributed system consists of local processes, that possess incomplete information about the global system state.
\emph{Distributed synthesis} searches for distributed strategies that govern the local processes such that the system as a whole satisfies an objective, independently of the inputs that are received from the environment.

After some early results on \emph{synchronous} distributed systems~\cite{DBLP:conf/focs/PnueliR90}, most work has focussed on the synthesis of \emph{asynchronous} distributed systems with \emph{causal memory}~\cite{DBLP:conf/fsttcs/GastinLZ04,DBLP:conf/icalp/GenestGMW13,DBLP:conf/fsttcs/MuschollW14,DBLP:conf/fsttcs/Gimbert17,DBLP:journals/iandc/FinkbeinerO17,DBLP:conf/fsttcs/FinkbeinerG17}.
Causal memory means that two processes share no information while they run independently; during every synchronization, however, they exchange their complete local histories.
The study of the synthesis problem with causal memory has, so far, been carried out, independently of each other, in two different models: control games and Petri games.

\noindent\textbf{\textsf{Control Games and Petri Games~~}}
\emph{Control games}~\cite{DBLP:conf/icalp/GenestGMW13} are based on Zielonka's asynchronous automata~\cite{DBLP:journals/ita/Zielonka87}, which are compositions of local processes.
The actions of the asynchronous automaton are partitioned as either controllable or uncontrollable.
Hence, each process can have both controllable and uncontrollable behavior.
A strategy comprises a family of one individual controller for each process that can restrict controllable actions based on the causal past of the process but has to account for all uncontrollable actions.
Together, the local controllers aim to fulfill an objective against all possible unrestricted behavior. 
There are non-elementary decidability results for acyclic communication architectures~\cite{DBLP:conf/icalp/GenestGMW13,DBLP:conf/fsttcs/MuschollW14}.
Decidability has also been obtained for restrictions on the dependencies of actions~\cite{DBLP:conf/fsttcs/GastinLZ04} or on the synchronization behavior~\cite{DBLP:conf/concur/MadhusudanT02,DBLP:conf/fsttcs/MadhusudanTY05} and, recently, for decomposable games~\cite{DBLP:conf/fsttcs/Gimbert17}.

\emph{Petri games}~\cite{DBLP:journals/iandc/FinkbeinerO17} are based on Petri nets.
They partition the places of the underlying Petri net into system places and environment places and, thereby, group the tokens into system players and environment players. 
For tokens in system places, the outgoing transitions can be restricted by the strategy whereas tokens in environment places cannot be controlled, i.e., every possible transition has to be accounted for.
Strategies are defined as restrictions of the global unfolding and aim to fulfill an objective against all possible unrestricted behavior.
Petri games are \textsc{EXPTIME}-complete for a bounded number of system players and one environment player~\cite{DBLP:journals/iandc/FinkbeinerO17} as well as for one system player and a bounded number of environment players~\cite{DBLP:conf/fsttcs/FinkbeinerG17}.
Both models are based on causal information: Control games utilize local views whereas Petri games utilize unfoldings.

\noindent\textbf{\textsf{Translations~~}}
The precise connection between control games and Petri games, and hence, the question whether results can be transferred between them, was, so far, open.
We translate control games into Petri games, and vice versa.
Both game types admit strategies based on causal information but the formalisms for the possibilities of system and environment differ.
In control games, an action is either controllable or uncontrollable and therefore can be restricted by either all or none of the involved players.
From the same state of a process, both controllable and uncontrollable behavior is possible.
By contrast, Petri games utilize a partitioning into system and environment places. 
While this offers more precise information about which player can control a shared transition, a given place can no longer comprise both system and environment behavior. 
The challenge is to resolve the controllability while preserving the causal information in the game.
For both translations, we adopt the concept of \emph{commitment sets}: 
The local players do not enable behavior directly but move to a state or place that explicitly encodes their decision of what to enable. 
Using this explicit representation, we can express the controllability aspects of one game in the respective other one, i.e., make actions in a control game controllable by only a subset of players and allow places in Petri games that comprise both environment and system behavior.

Our translations preserve the structure of winning strategies in a weak bisimilar way. 
In addition to the upper bounds established by our exponential translations, we provide matching lower bounds.
The translations show that contrasting formalisms can be overcome whereas our lower bounds highlight an intrinsic difficultly to achieve this. 
The equivalence of both models, as witnessed by our results, gives rise to more practical applications by allowing the transfer of existing decidability results between both models. 
As an example, we can transfer decidability of single-process systems for Petri games~\cite{DBLP:conf/fsttcs/FinkbeinerG17} to control games and decidability for acyclic communication architectures for control games~\cite{DBLP:conf/icalp/GenestGMW13} to Petri games.

\section{Examples}
\label{sec:motivation}

\begin{figure}[t!]
	\centering
	\begin{tikzpicture}[scale=0.8, every label/.append style={font=\tiny}, label distance=-1mm]
	\node[aastate] at (0,0) (s11){};
	\node[aastate] at (-1,-1) (s12){};
	\node[aastate] at (1,-1) (s13){};
	\node[aastate] at (-1,-2) (s14){};
	\node[aastate] at (1,-2) (s15){};
	
	\node[specialAA] at (0,-2) (s16){};
	
	\draw[arrow] (s11)+(0.35, 0.35) to (s11);
	
	\draw[arrow, loop left] (s11) to (s11);
	
	\draw[arrow] (s11) to node[left=-1mm,yshift=1mm] {\tiny$r_X$} (s12);
	\draw[arrow] (s11) to node[right=-1mm,yshift=1mm] {\tiny$r_Y$} (s13);
	\draw[arrow] (s12) to node[left=-0.5mm] {\tiny$\mathit{acc}$} (s14);
	\draw[arrow] (s13) to node[right=-0.5mm] {\tiny$\mathit{acc}$} (s15);
	
	\draw[arrow] (s14) to[out=40,in=270] node[left=-1.5mm,yshift=1mm] {\tiny$u_X$} (s11);
	\draw[arrow] (s15) to[out=140,in=270] node[right=-1.5mm,yshift=1mm] {\tiny$u_Y$} (s11);
	
	\draw[arrow] (s14) to[out=0,in=180] node[below] {\tiny$u_Y$} (s16);
	\draw[arrow] (s15) to[out=180,in=0] node[below] {\tiny$u_X$} (s16);
	
	\node[] at (-1.25,0) () {\small\color{gray}$T$:};

	
	\node[aastate] at (3,0) (s21){};
	\node[aastate] at (3,-1) (s22){};
	\node[aastate] at (3,-2) (s23){};
	\node[aastate] at (4,-2) (s24){};
	
	\draw[arrow] (s21)+(0.35, 0.35) to (s21);
	
	\draw[arrow] (s21) to[out=240,in=120] node[left=-0.5mm] {\tiny$r_X$} (s22);
	\draw[arrow] (s21) to[out=300,in=60] node[right=-0.5mm] {\tiny$r_Y$} (s22);
	\draw[arrow] (s22) to node[left=-0.5mm] {\tiny$c$} (s23);
	\draw[arrow] (s23) to node[above] {\tiny$\mathit{acc}$} (s24);
	\draw[arrow] (s24) to[out=45,in=0] (s21);
	
	\node[] at (2,0) () {\small\color{gray}$N$:};
	
	
	\node[aastate] at (7,-0) (s1){};
	\node[aastate] at (7,-1) (s2){};
	\node[aastate] at (6,-2) (s3){};
	\node[aastate] at (8,-2) (s4){};
	
	\draw[arrow] (s1)+(0.35, 0.35) to (s1);
	
	\draw[arrow] (s1) to[out=240,in=120] node[left=-0.5mm] {\tiny$c$} (s2);
	\draw[arrow] (s1) to[out=300,in=60] node[right=-0.5mm] {\tiny$c'$} (s2);
	
	\draw[arrow, densely dotted] (s2) to node[below,xshift=1mm] {\tiny$g_X$} (s3);
	\draw[arrow, densely dotted] (s2) to node[below,xshift=-1mm] {\tiny$g_Y$} (s4);
	
	\draw[arrow] (s3) to[out=90,in=180] node[left] {\tiny$u_X, u'_X$} (s1);
	\draw[arrow] (s4) to[out=90,in=0] node[right] {\tiny$u_Y, u'_Y$} (s1);
	
	\node[] at (5,0) () {\small\color{gray}$M$:};
	
	
	\node[aastate] at (11,0) (s31){};
	\node[aastate] at (11,-1) (s32){};
	\node[aastate] at (11,-2) (s33){};
	\node[aastate] at (10,-2) (s34){};
	
	\draw[arrow] (s31)+(0.35, 0.35) to (s31);
	
	\draw[arrow] (s31) to[out=240,in=120] node[left=-0.5mm] {\tiny$r'_X$} (s32);
	\draw[arrow] (s31) to[out=300,in=60] node[right=-0.5mm] {\tiny$r'_Y$} (s32);
	\draw[arrow] (s32) to node[right=-0.5mm] {\tiny$c'$} (s33);
	\draw[arrow] (s33) to node[above] {\tiny$\mathit{acc}'$} (s34);
	\draw[arrow] (s34) to[out=180,in=180] (s31);
	
	\node[] at (9.5,0) () {\small\color{gray}$N'$:};

	\node[aastate] at (14,0) (s41){};
	\node[aastate] at (13,-1) (s42){};
	\node[aastate] at (15,-1) (s43){};
	\node[aastate] at (13,-2) (s44){};
	\node[aastate] at (15,-2) (s45){};
	
	\node[specialAA] at (14,-2) (s46){};
	
	\draw[arrow] (s41)+(0.35, 0.35) to (s41);
	
	\draw[arrow, loop left] (s41) to (s41);
	
	\draw[arrow] (s41) to node[left=-1mm,yshift=1mm] {\tiny$r'_X$} (s42);
	\draw[arrow] (s41) to node[right=-1mm,yshift=1mm] {\tiny$r'_Y$} (s43);
	\draw[arrow] (s42) to node[left=-0.5mm] {\tiny$\mathit{acc}'$} (s44);
	\draw[arrow] (s43) to node[right=-0.5mm] {\tiny$\mathit{acc}'$} (s45);
	
	\draw[arrow] (s44) to[out=40,in=270] node[left=-1.5mm,yshift=1mm] {\tiny$u'_X$} (s41);
	\draw[arrow] (s45) to[out=140,in=270] node[right=-1.5mm,yshift=1mm] {\tiny$u'_Y$} (s41);
	
	\draw[arrow] (s44) to[out=0,in=180] node[below] {\tiny$u'_Y$} (s46);
	\draw[arrow] (s45) to[out=180,in=0] node[below] {\tiny$u'_X$} (s46);
	
	\node[] at (12.75,0) () {\small\color{gray}$T'$:};
	\end{tikzpicture}
	\caption{A control game for a manager $M$ of resources $X$ and $Y$ between threads $T$ and $T'$ with networks $N$ and $N'$ is depicted.
		Communication occurs by synchronization on shared actions.
		Dotted actions are controllable, all others are uncontrollable.
		Losing states are double circles.}
	\label{fig:exp-aa}
\end{figure}

\begin{figure}[t!]
	\begin{subfigure}[c]{0.37\textwidth}
		\centering
		\begin{tikzpicture}[scale=0.75, every label/.append style={font=\tiny}, label distance=-0.5mm]
		\node[envplace,label=south:\color{black}$B$] at (0,-0.75) (p1){};
		\node[envplace] at (-1,-1.5) (p2){};
		\node[envplace] at (1,-1.5) (p3){};
		
		\node[sysplace,specialSys,label=north:\color{black}$U$] at (-2.25,-1.5) (p4){};
		\node[envplace,specialEnv,label=north:\color{black}$T$] at (2.25,-1.5) (p5){};
		\node[sysplace, specialSys,label=north:\color{black}$H_u$] at (-2.25,-3) (p6){};
		\node[envplace,specialEnv,label=north:\color{black}$H_d$] at (2.25,-3) (p7){};
		
		\node[envplace,label=north:\color{black}$H$] at (0,-3) (p8){};
		
		\node[sysplace,label=south:\color{black}$C$] at (0.75,-3.75) (p9){};
		\node[sysplace,label=south:\color{black}$D$] at (0,-4.5) (p10){};
		\node[sysplace,specialSys] at (0,-5.5) (p11){};
		
		\node[envplace,label=south:\color{black}$L_u$] at (-1.5,-5.25) (p12){};
		\node[envplace,label=south:\color{black}$L_d$] at (1.5,-5.25) (p13){};
		
		\node[envplace,specialEnv] at (-3,-5.25) (p14){};
		\node[envplace,specialEnv] at (3,-5.25) (p15){};

		\node[transition,label=west:\color{black}$u$] at (-1,-0.75) (t1) {};
		\node[transition,label=east:\color{black}$d$] at (1,-0.75) (t2) {};
		
		\node[transition,label=south:\color{black}$i_u$] at (-1,-2.25) (t3) {};
		\node[transition,label=south:\color{black}$i_d$] at (1,-2.25) (t4) {};
		
		\node[transition,label=west:\color{black}$i$] at (0,-3.75) (t5) {};
		
		\node[transition,label=west:\color{black}$s_u$] at (-1.5,-4.5) (t6) {};
		\node[transition,label=east:\color{black}$s_d$] at (1.5,-4.5) (t7) {};
		
		\node[transition,label=south:\color{black}$c_u$] at (-2.25,-5.25) (t8) {};
		\node[transition,label=south:\color{black}$c_d$] at (2.25,-5.25) (t9) {};

		\node[token] at (0,-0.75) (){};
		\node[token] at (-2.25,-1.5) (){};
		\node[token] at (2.25,-1.5) (){};
		\node[token] at (0.75,-3.75) (){};
		
		\draw[arrow] (p1) -- (t1);
		\draw[arrow] (p1) -- (t2);
		\draw[arrow] (t1) -- (p2);
		\draw[arrow] (t2) -- (p3);
		
		\draw[arrow] (p2) -- (t3);
		\draw[arrow] (p3) -- (t4);
		\draw[arrow] (t3) -- (p8);
		\draw[arrow] (t4) -- (p8);
		
		\draw[arrow] (p4) -- (t3);
		\draw[arrow] (p5) -- (t4);
		\draw[arrow] (t3) -- (p6);
		\draw[arrow] (t4) -- (p7);
		
		\draw[darrow] (p8) -- (t5);
		\draw[arrow] (p9) -- (t5);
		\draw[arrow] (t5) -- (p10);
		
		\draw[arrow] (p10) -- (t6);
		\draw[arrow] (p10) -- (t7);
		\draw[arrow] (t6) -- (p11);
		\draw[arrow] (t7) -- (p11);
		
		\draw[arrow] (p8) to[out=180,in=90] (t6);
		\draw[arrow] (p8) to[out=0,in=90] (t7);
		\draw[arrow] (t6) -- (p12);
		\draw[arrow] (t7) -- (p13);
		
		\draw[arrow] (p12) -- (t8);
		\draw[arrow] (p13) -- (t9);
		\draw[arrow] (t8) to (p14);
		\draw[arrow] (t9) to (p15);
		\draw[darrow] (p6) to (t8);
		\draw[darrow] (p7) to (t9);
		\end{tikzpicture}
		
		\subcaption{Petri game for a police strategy.}
		\label{fig:pgA}
	\end{subfigure}
	\begin{subfigure}[c]{0.63\textwidth}
		\centering
		\begin{tikzpicture}[scale=0.75, every label/.append style={font=\tiny}, label distance=-0.5mm]
		\node[envplace,label=south:$B$] at (0,-1.5) (pe1){};
		\node[envplace] at (-2,-1.5) (pe2){};
		\node[envplace] at (2,-1.5) (pe3){};
		
		\node[envplace,label=west:$H$] at (-2,-3) (pe4){};
		\node[envplace,label=east:$H$] at (2,-3) (pe5){};
		
		\node[envplace,label=south:$H$] at (-2,-3.75) (pe6){};
		\node[envplace,label=south:$H$] at (2,-3.75) (pe7){};
		
		\node[envplace,label=south:$L_u$] at (-3.5,-6) (pe8){};
		\node[envplace,draw=gray!50,label=south:\color{gray}$L_d$] at (-0.5,-6) (pe9){};
		\node[envplace,draw=gray!50,label=south:\color{gray}$L_u$] at (0.5,-6) (pe10){};
		\node[envplace,label=south:$L_d$] at (3.5,-6) (pe11){};
		
		\node[sysplace,specialSys,label=west:$U$] at (-3.5,-1.5) (pb1){};
		\node[envplace,specialEnv,label=east:$T$] at (3.5,-1.5) (pb2){};
		\node[sysplace, specialSys,label=north:$H_u$] at (-3.5,-3) (pb3){};
		\node[envplace,specialEnv,label=north:$H_d$] at (3.5,-3) (pb4){};
		
		\node[sysplace,label=north:$C$] at (0,-3) (ps1){};
		
		\node[sysplace,label=south:$D$] at (-2,-4.75) (ps2){};
		\node[sysplace,label=south:$D$] at (2,-4.75) (ps3){};
		
		\node[sysplace,specialSys] at (-2.5,-6) (ps4){};
		\node[sysplace,specialSys,draw=gray!50,fill=gray!20,double=gray!20] at (-1.5,-6) (ps5){};
		\node[sysplace,specialSys,draw=gray!50,fill=gray!20,double=gray!20] at (1.5,-6) (ps6){};
		\node[sysplace,specialSys] at (2.5,-6) (ps7){};

		\node[transition,label=south:$u$] at (-1,-1.5) (te1) {};
		\node[transition,label=south:$d$] at (1,-1.5) (te2) {};
		
		\node[transition,label=east:$i_u$] at (-2,-2.25) (te3) {};
		\node[transition,label=west:$i_d$] at (2,-2.25) (te4) {};
		
		\node[transition,label=north:$i$] at (-1,-3) (te5) {};
		\node[transition,label=north:$i$] at (1,-3) (te6) {};
		
		\node[transition,label=east:$s_u$] at (-3.5,-5.25) (te7) {};
		\node[transition,draw=gray!50,label=west:\color{gray}$s_d$] at (-0.5,-5.25) (te8) {};
		\node[transition,draw=gray!50,label=east:\color{gray}$s_u$] at (0.5,-5.25) (te9) {};
		\node[transition,label=west:$s_d$] at (3.5,-5.25) (te10) {};

		\node[transition,label=south:$c_u$] at (-4.25,-6) (tt1) {};
		\node[sysplace,specialSys,label=north:$H_u$] at (-5,-4.5) (pp1){};
		\node[envplace,specialEnv] at (-5,-6) (pp2){};

		\node[transition,label=south:$c_d$] at (4.25,-6) (tt2) {};
		\node[envplace,specialEnv,label=north:$H_d$] at (5,-4.5) (pp3){};
		\node[envplace,specialEnv] at (5,-6) (pp4){};
		
		\node[token] at (0,-1.5) (){};
		\node[token] at (-3.5,-1.5) (){};
		\node[token] at (3.5,-1.5) (){};
		\node[token] at (0,-3) (){};

		\draw[arrow] (pe1) -- (te1);
		\draw[arrow] (pe1) -- (te2);
		\draw[arrow] (te1) -- (pe2);
		\draw[arrow] (te2) -- (pe3);
		
		\draw[arrow] (pe2) -- (te3);
		\draw[arrow] (pe3) -- (te4);
		\draw[arrow] (te3) -- (pe4);
		\draw[arrow] (te4) -- (pe5);
		
		\draw[arrow] (pe4) -- (te5);
		\draw[arrow] (pe5) -- (te6);
		\draw[arrow] (te5) -- (pe6);
		\draw[arrow] (te6) -- (pe7);
		
		\draw[arrow] (pe6) to[out=180,in=90] (te7);
		\draw[arrow,color=gray!50] (pe6) to[out=0,in=90] (te8);
		
		\draw[arrow] (te7) -- (pe8);
		\draw[arrow,color=gray!50] (te8) -- (pe9);
		
		\draw[arrow,color=gray!50] (pe7) to[out=180,in=90] (te9); 
		\draw[arrow] (pe7) to[out=0,in=90] (te10);
		\draw[arrow,color=gray!50] (te9) -- (pe10);
		\draw[arrow] (te10) -- (pe11);

		\draw[arrow] (pb1) -- (te3);
		\draw[arrow] (pb2) -- (te4);
		\draw[arrow] (te3) -- (pb3);
		\draw[arrow] (te4) -- (pb4);
		
		\draw[arrow] (ps1) -- (te5);
		\draw[arrow] (ps1) -- (te6);
		\draw[arrow] (te5) -- (ps2);
		\draw[arrow] (te6) -- (ps3);
		
		\draw[arrow] (ps2) to[out=180,in=45] (te7);
		\draw[arrow,color=gray!50] (ps2) to[out=0,in=135] (te8);
		\draw[arrow] (te7) -- (ps4);
		\draw[arrow,color=gray!50] (te8) -- (ps5);
		
		\draw[arrow,color=gray!50] (ps3) to[out=180,in=45] (te9);
		\draw[arrow] (ps3) to[out=0,in=135] (te10);
		\draw[arrow,color=gray!50] (te9) -- (ps6);
		\draw[arrow] (te10) -- (ps7);

		
		\draw[arrow] (pe8) -- (tt1);
		\draw[arrow] (pb3) to[out=225,in=90] (tt1);
		\draw[arrow] (tt1) -- (pp1);
		\draw[arrow] (tt1) -- (pp2);
		
		\draw[arrow] (pe11) -- (tt2);
		\draw[arrow] (pb4) to[out=315,in=90] (tt2);
		\draw[arrow] (tt2) -- (pp3);
		\draw[arrow] (tt2) -- (pp4);
		
		\end{tikzpicture}
		
		\subcaption{Unfolding and winning strategy (without grayed parts).}
		\label{fig:pgB}
	\end{subfigure}

	\caption{A Petri game, an unfolding, and a winning strategy are given.
		Gray places belong to the system whereas white places belong to the environment.
		Winning places are double circles.}
	\label{fig:pg}
\end{figure}

We illustrate the models with two examples. The examples demonstrate the use of control games and Petri games and their differences, which our translations overcome.
Both examples highlight decidable classes~\cite{DBLP:conf/fsttcs/FinkbeinerG17,DBLP:conf/icalp/GenestGMW13}, that are transferable through our results.

As a control game, consider the example of a manager for resources in \refFig{exp-aa}. 
The control game consists of five players: A manager $M$ and two pairs of thread and network connection ($T$, $N$ and $T'$, $N'$).
Both pairs of thread and network connection are identical but act on disjoint actions (primed and not).
There are two resources $X$ and $Y$ that are managed by $M$. 
Each thread ($T$, $T'$) can request access to one of them ($r_X$, $r_Y$) and afterwards wait for the acknowledgement from its network connection ($\mathit{acc}$). 
After the acknowledgement, the thread can use one of the resources ($u_X$, $u_Y$).
Each network connection ($N$, $N'$) synchronizes with its thread on the actions for requests and synchronizes with the manager for communication ($c$). 
Afterwards, each network connection sends the acknowledgement to its thread.
The manager is the only process that comprises controllable actions. 
Upon communication with one of the two network connections, the manager can grant access to the resources $X$ or $Y$ using the controllable actions $g_X$ or $g_Y$. 
The enabled resource can afterwards be accessed and used ($u_X$, $u_Y$).
A losing state can be reached for either thread if an unwanted resource is enabled, i.e., after the acknowledgement, the requested and granted resource do not match. 

This control game can be won by the system. 
After every communication with a network connection, the manager enables the resource that the respective thread requested. 
A winning controller relies on the information transfer associated with every synchronization.
The request of the process is transferred to the manager upon communication with the network connection. 
Then, the correct resource can be enabled. 
This control game falls into a decidable class by our translation to Petri games as it is a single-process system with bad places~\cite{DBLP:conf/fsttcs/FinkbeinerG17}.
Note that the control game has a cyclic communication architecture.

As a Petri game, consider the example of a burglary in \refFig{pgA}.
A crime boss in environment place $B$ decides to either burgle up- or downtown by firing transition $u$ or $d$. 
Depending on the choice, an undercover agent in system place $U$ or a thug in environment place $T$ is instructed by transition $i_u$ or $i_d$ and commits the burglary, i.e., moves to place $H_u$ or $H_d$.
This returns the crime boss to her hideout $H$ where she gets caught and interrogated ($i$) by a cop in system place~$C$.
Afterwards, the cop can send ($s_u$, $s_d$) the flipped crime boss up- or downtown to place $L_u$ or $L_d$ in order to intercept the burglary ($c_u$, $c_d$).

Causal past is key for the existence of winning strategies.
Only upon synchronization players exchange all information about their past.
After the crime boss instructs for a location to burgle, only she and the respective burglar know about the decision.
The cop learns about the location of the burglary after catching the crime boss. 
A winning strategy for the cop catches and interrogates the crime boss and then uses the obtained information to send the flipped crime boss to the correct location.
For this Petri game, our translation results in a control game with acyclic communication architecture~\cite{DBLP:conf/icalp/GenestGMW13}.
Note that the Petri game has two system and two environment players.

\section{Background}
\label{sec:background}

We recall asynchronous automata~\cite{DBLP:journals/ita/Zielonka87}, control games~\cite{DBLP:conf/icalp/GenestGMW13}, Petri nets~\cite{DBLP:books/sp/Reisig85a}, and Petri games~\cite{DBLP:journals/iandc/FinkbeinerO17}. 
Further details can be found in Appendix \ref{app:background}.

\subsection{Zielonka's Asynchronous Automata}
\label{sec:aa}

An \emph{asynchronous automaton}~\cite{DBLP:journals/ita/Zielonka87} is a family of finite automata, called \emph{processes}, synchronizing on shared actions.
Our definitions follow \cite{DBLP:conf/icalp/GenestGMW13}.
The finite set of processes of an asynchronous automaton is defined as $\processes$.
A \emph{distributed alphabet} $(\Sigma, \dom)$ consists of a finite set of \emph{actions} $\Sigma$ and a \emph{domain} function $\dom: \Sigma \rightarrow 2^{\processes} \setminus \{\emptyset\}$. 
For an action $a \in \Sigma$, $\dom(a)$ are all processes that have to synchronize on $a$. 
For a process $p \in \processes$, $\Sigma_p = \{ a \in \Sigma \mid p \in \dom(a) \}$ denotes all actions $p$ is involved in.
A (deterministic) asynchronous automaton $\AA = (\{S_p\}_{p\in\processes}, s_\mathit{in}, \{\delta_a\}_{a\in\Sigma})$ is defined by a finite set of local states $S_p$ for every process $p\in\processes$, the initial state $s_\mathit{in} \in \prod_{p\in\processes} S_p$, and a partial function $\delta_a : \prod_{p\in\dom(a)} S_p \xrightarrow{.} \prod_{p\in\dom(a)} S_p$.
We call an element $\{s_p\}_{p\in\processes} \in \prod_{p\in\processes} S_p$ \emph{a global state}.
For a set of processes $R \subseteq \processes$, we abbreviate $s_R = \{s_p\}_{p\in R}$ as the restriction of the global state to $R$.
We denote that a local state $s' \in S_p$ is part of a global state $s_R$ by $s' \in s_R$.
For a local state $s'$, we define the set of outgoing actions by $\enabled{s'} = \{ a \in \Sigma \mid \exists s_{\dom(a)} \in \domain(\delta_a) : s' \in s_{\dom(a)} \}$.
We can view an asynchronous automaton as a sequential automaton with state space $\prod_{p\in\processes} S_p$ and transitions $s \xrightarrow{a} s'$ if $(s^{}_{\dom(a)}, s'_{\dom(a)}) \in \delta_a$ and $s^{}_{\processes\setminus\dom(a)} = s'_{\processes\setminus\dom(a)}$.
By $\Plays(\AA) \in \Sigma^* \cup \Sigma^\omega$, we denote the set of finite and infinite sequences in this global automaton.
For a finite $u \in \Plays(\AA)$, $\state{u}$ denotes the global state after playing $u$ and $\statep{p}{u}$ the local state of process $p$.

The domain function $\dom$ induces an independence relation $I$: Two actions $a, b \in \Sigma$ are independent, denoted by $(a, b) \in I$, if they involve different processes, i.e., $\dom(a) \cap \dom(b) = \emptyset$.
Adjoint independent actions of sequences of actions can be swapped.
This leads to an equivalence relation $\sim_I$ between sequences, where $u \sim_I w$ if $u$ and $w$ are identical up to multiple swaps of consecutive independent actions. 
The equivalence classes of $\sim_I$ are called \emph{traces} and denoted by $[u]_I$ for a sequence $u$.
Given the definition of asynchronous automata, it is natural to abstract from concrete sequences and consider $\Plays(\AA)$ as a set of traces.

In our translation, an alternative characterization of a subset of asynchronous automata turns out to be practical: We describe every process $p$ by a finite local automaton $\leo_p = (Q_p,  s_{0,p}, \vartheta_p)$ acting on actions from $\Sigma_p$. 
Here, $Q_p$ is a finite set of states, $s_{0,p}$ the initial state and $\vartheta_p \subseteq Q_p \times \Sigma_p \times Q_p$ a deterministic transition relation.
For a family of local processes $\{\leo_p\}_{p\in\processes}$, we define the \emph{parallel composition} $\bigotimes_{p\in\processes} \leo_p$ as an asynchronous automaton with \textbf{\textsf{(1)}} $\forall p \in \processes : S_p = Q_p$, \textbf{\textsf{(2)}} $s_\mathit{in} = \{s_{0,p}\}_{p \in \processes}$, and \textbf{\textsf{(3)}} $\delta_a\big(\{s_p\}_{p \in \dom(a)}\big)$: If for all $p \in \dom(a)$, there exists a state $s'_p \in S_p$ with $(s_p, a, s_p') \in \vartheta_p$ then define $\delta_a(\{s_p\}_{p \in \dom(a)}) = \{s_p'\}_{p\in\dom(a)}$, otherwise it is undefined.
Figure \ref{fig:exp-aa} is an example of such a parallel composition.
Note that not every asynchronous automaton can be described as a composition of local automata.

\subsection{Control Games}
\label{sec:cg}

A \emph{control game}~\cite{DBLP:conf/icalp/GenestGMW13} $\CG = (\AA, \Sigma^{\sys}, \Sigma^{\env}, \{\pSpecial_p\}_{p\in\processes})$ consists of an asynchronous automaton~$\AA$ as a game arena, a distribution of actions into \emph{controllable} actions $\Sigma^\sys$ and \emph{uncontrollable} actions $\Sigma^\env$, and \emph{special states} $\{\pSpecial_p\}_{p\in\processes}$ for a winning objective.
We define the set of plays in the game as $\Plays(\cGame) = \Plays(\AA)$.
Intuitively, a strategy for $\cGame$ can restrict controllable actions but cannot prohibit uncontrollable actions.
Given a play~$u$, a process $p$ only observes parts of it. 
The local $p$-view, denoted by $\view_p(u)$, is the shortest trace $[v]_I$ such that $u \sim_I v \, w$ for some $w$ not containing any actions from $\Sigma_p$.
The $p$-view describes the \emph{causal past} of process $p$ and contains all actions the process is involved in and all actions it learns about via communication. 
We define the set of $p$-views as $\Plays_p(\cGame) = \{ \view_p(u) \mid u \in \Plays(\cGame) \}$.

To avoid confusion with Petri games, we refer to strategies for control games as controllers.
A \emph{controller} for $\cGame$ is a family of local controllers for all processes $\varrho = \{f_p\}_{p \in \processes}$.
A local controller for a process $p$ is a function $f_p : \Plays_p(\cGame) \rightarrow \Sigma^\sys \cap \Sigma_p$.
$\Plays(\cGame, \varrho)$ denotes the set of plays respecting $\varrho$.
It is defined as the smallest set containing the empty play $\epsilon$ and such that for every $u\in\Plays(\cGame, \varrho)$: \textbf{\textsf{(1)}} if $a\in\Sigma^\env$ and $ua \in Plays(\cGame)$ then $ua \in \Plays(\cGame, \varrho)$ and \textbf{\textsf{(2)}} if $a\in\Sigma^\sys$, $ua \in \Plays(\cGame)$, and $\forall p \in \dom(a) : a \in f_p(\view_p(u))$ then $ua \in \Plays(\cGame, \varrho)$.
Environment actions are always possible whereas system actions are only possible if allowed by the local controllers of \emph{all} participating processes.
Local controllers base their decisions on their local view and thereby act only on their causal past.

We define the (possibly empty) set of \emph{final plays} $\PlaysInf(\cGame, \varrho)$ as all finite plays $u \in \Plays(\cGame, \varrho)$ such that there is no $a$ with $u \, a \in \Plays(\cGame, \varrho)$.
We consider either reachability or safety objectives for the system. 
Therefore, $\{\pSpecial_p\}_{p\in\processes}$ describes sets of winning ($\win_p$) or losing ($\bad_p$) states. 
A controller $\varrho$ is \emph{reachability-winning} if it only admits finite plays and on each final play all processes terminate in a winning state.
For safety objectives, we need to ensure progress.
A controller $\varrho$ is \emph{deadlock-avoiding} if $\PlaysInf(\cGame, \varrho) \subseteq \PlaysInf(\cGame, \top)$ for the controller $\top$ allowing all actions, i.e., the controller only terminates if the asynchronous automaton does.
A controller~$\varrho$ is \emph{safety-winning} if it is deadlock-avoiding and no play in $\Plays(\cGame, \varrho)$ visits any local, losing state from $\bigcup_{p\in\processes} \bad_p$. 

\subsection{Petri Nets}
\label{sec:pn}

A \textit{Petri net}~\cite{DBLP:books/sp/Reisig85a,DBLP:journals/tcs/NielsenPW81} $\petriNet$ consists of disjoint sets of \textit{places} $\pl$ and \textit{transitions}~$\tr$, the \textit{flow relation} $\fl$ as multiset over $(\pl \times \tr)\cup (\tr \times \pl)$, and the \textit{initial marking} $\init$ as multiset over $\pl$.
We call elements in $\places \cup \transitions$ \emph{nodes} and $\pNet$ \emph{finite} if the set of nodes is finite.
For node~$x$, the \emph{precondition} (written $\pre{}{x}$) is the multiset defined by $\pre{}{x}(y) = \flow(y, x)$ and \emph{postcondition} (written $\post{}{x}$) the multiset defined by $\post{}{x}(y) = \flow(x, y)$. 
For multiple nets $\pNet^\sigma, \pNet^1, \cdots$, we refer to the components by  $\places^{\pNet^\sigma}$ and write $\pre{\pNet^\sigma}{x}$ unless clear from the context.
Configurations of Petri nets are represented by multisets over places, called \emph{markings}. 
$\init$ is the initial marking.
For a transition $t$, $\pre{}{t}$ is the multiset of places from which tokens are consumed. 
A transition $t$ is \emph{enabled} in marking $M$ if $\pre{}{t} \subseteq M$, i.e., every place in $M$ contains at least as many tokens as required by $t$.  
If no transition is enabled from marking~$M$ then we call $M$ \emph{final}.
An enabled transition~$t$ can \textit{fire} from a marking~$M$ resulting in the successor marking $M' = M - \pre{}{t} + \post{}{t}$ (denoted $\fireTranTo{M}{t}{M'}$).
For markings~$M$ and $M'$, we write $\fireTranTo{M}{t_0,\dotsc,t_{n-1}}{M'}$ if there exist markings $ M = M_0,\dotsc,M_{n} = M'$ s.t.\ $\fireTranTo{M_i}{t_i}{M_{i+1}}$ for all $0 \leq i \leq n-1$.
The set of \textit{reachable markings} of $\pNet$ is defined as $\reach(\pNet) = \{M \mid \exists n \in \mathbb{N}, t_0,\dotsc, t_{n-1} \in \tr : \fireTranTo{\init}{t_0,\dotsc,t_{n-1}}{M}\}$.
A net $\pNet'$ is a \emph{subnet} of $\pNet$ (written $\pNet' \sqsubseteq \pNet$) if $\places' \subseteq \places$, $\transitions' \subseteq \transitions$, $\init' \subseteq \init$, and $\flow' = \flow \upharpoonright (\places' \times \transitions') \cup (\transitions' \times \places')$. 
A Petri net is \emph{1-bounded} if every reachable marking contains at most one token per place. 
It is \emph{concurrency-preserving} if $|\pre{}{t}| = |\post{}{t}|$ for all transitions $t$.

For nodes $x$ and $y$, we write $x \lessdot y$ if $x\in\pre{}{y}$, i.e., there is an arc from $x$ to $y$.
With $\leq$, we denote the reflexive, transitive closure of $\lessdot$.
The \textit{causal past} of $x$ is $\past(x) = \{ y \mid y \leq x \}$.
$x$ and $y$ are \emph{causally related} if $x \leq y \lor y \leq x$.
They are \textit{in conflict} (written $\conflict{x}{y}$) if there exists a place $q\in\pl\setminus \{x,y\}$ and two distinct transitions $t_1, t_2 \in \post{}{q}$ s.t.\ $t_1 \leq x$ and $t_2 \leq y$.
Node $x$ is in \textit{self-conflict} if $\conflict{x}{x}$.
We call $x$ and $y$ \emph{concurrent} if they are neither causally related nor in conflict.
An \textit{occurrence net} is a Petri net~$\pNet$, where the pre- and postcondition of all transitions are sets, the initial marking coincides with places without ingoing transitions ($\forall q \in \pl : q\in\init \Leftrightarrow |\pre{}{q}| = 0$), all other places have exactly one ingoing transition (${\forall q \in \pl \setminus \init : |\pre{}{q}| = 1}$), $\leq$ is \textit{well-founded} (no infinite path following the inverse flow relation exists), and no transition is in self-conflict. 
An \textit{initial homomorphism} from $\pNet$ to $\pNet'$ is a function $\lambda : \pl\cup \tr \rightarrow \pl'\cup \tr'$ that respects node types ($\lambda(\pl) \subseteq \pl' \wedge \lambda(\tr) \subseteq \tr'$),
is structure-preserving on transitions ($\forall t \in \tr : \lambda[\pre{\pNet}{t}] = \pre{\pNet'}{\lambda(t)} \wedge \lambda[\post{\pNet}{t}] = \post{\pNet'}{\lambda(t)}$),
and agrees on the initial markings ($\lambda[\init] = \init'$). 

A branching process \cite{DBLP:journals/acta/Engelfriet91,DBLP:journals/tcs/MeseguerMS96,DBLP:series/eatcs/EsparzaH08} describes parts of the behavior of a Petri net.
Formally, an \textit{(initial) branching process} of a Petri net $\pNet$ is a pair $\iota = (\pNet^\iota,\lambda^\iota)$ where $\pNet^\iota$ is an occurrence net and $\lambda^\iota :\pl^\iota\cup \tr^\iota\rightarrow\pl\cup \tr$ is an initial homomorphism  from $\pNet^\iota$ to $\pNet$ that is injective on transitions with the same precondition ($\forall t, t' \in \tr^\iota : (\pre{\pNet^\iota}{t} = \pre{\pNet^\iota}{t'} \wedge \lambda^\iota(t) = \lambda^\iota(t')) \Rightarrow t = t'$).
A branching process describes subsets of possible behaviors of a Petri net. 
Whenever a place or transition can be reached on two distinct paths it is split up. 
$\lambda$ can be thought of as label of the copies into nodes of $\pNet$. 
The injectivity condition avoids additional unnecessary splits: Each transition must either be labelled differently or occur from different preconditions. 
The \emph{unfolding} $\mathfrak{U}$ of $\pNet$ is the maximal branching process: Whenever there is a set of pairwise concurrent places $C$ s.t.\ $\lambda[C] = \pre{\pNet}{t}$ for some transition $t$ then there exists $t'$ with $\lambda(t')=t$ and $\pre{\pNet^\mathfrak{U}}{t'} = C$. It represents ever possible behavior of $\pNet$.

\subsection{Petri Games}
\label{sec:pg}

A \textit{Petri game}~\cite{DBLP:journals/iandc/FinkbeinerO17} is a tuple $\pGame=(\plS,\plE,\tr,\fl,\init,\mathit{Sp})$. 
\emph{System places} $\plS$ and \emph{environment places} $\plE$ partition the places of the underlying, finite net $\petriNet$ with $\pl = \plS \uplus \plE$.
We extend notation from the underlying net to $\pGame$ by, e.g., defining $\pre{\pGame}{\cdot} = \pre{\pNet}{\cdot}$ and $\places^\pGame = \places^\pNet$.
The game progresses by firing transitions in the underlying net. 
Intuitively, a strategy can control the behavior of tokens on system places by deciding which transitions to allow. 
Tokens on environment places belong to the environment and cannot be restricted by strategies. 
$\mathit{Sp}\subseteq\pl$ denotes \emph{special places} used to pose a winning objective.
For graphical representation, we depict a Petri game as the underlying net and color system places gray, environment places white, and special places as double circles (cf.\ \refFig{pg}).

A \emph{strategy} for $\pGame$ is an initial branching process $\sysstrat = (\pNet^\sysstrat, \lambda^\sysstrat)$ satisfying \emph{justified refusal}:
If there is a set of pairwise concurrent places $C$ in $\pNet^\sigma$ and a transition $t \in \transitions^\pGame$ with $\lambda[C] = \pre{\pGame}{t}$ then there either is a transition $t'$ with $\lambda(t')=t$ and $C = \pre{\pNet^\sigma}{t'}$ or there is a system place $q \in C \cap \lambda^{-1}[\places_\psys]$ with $t \not\in \lambda[\post{\pNet^\sigma}{q}]$. 
Since a branching process describes subsets of the behavior of a Petri net, a strategy is a restriction of possible moves in the game. 
Justified refusal enforces that only system places can prohibit transitions based on their causal past. 
From every situation in the game, a transition possible in the underlying net is either allowed, i.e., in the strategy, or there is a \emph{system} place that never allows it.
In particular, transitions involving only environment places are always possible. 
A strategy $\sigma$ is \emph{reachability-winning} for a set of winning places $\mathit{Sp} = \win$ if $\pNet^\sigma$ is a finite net and in each final, reachable marking every token is on a winning place.
A strategy is \emph{deadlock-avoiding} if for every final, reachable marking $M$ in the strategy, $\lambda[M]$ is final as well, i.e., the strategy is only allowed to terminate if the underlying Petri net does so. 
A strategy $\sigma$ is \emph{safety-winning} for bad places $\mathit{Sp} = \bad$ if it is deadlock-avoiding and no reachable marking contains a bad place.
For both objectives, we can require $\sigma$ to be \emph{deterministic}: For every reachable marking $M$ and system place $q \in M$ there is at most one transition from $\post{\pNet^\sysstrat}{q}$ enabled in $M$.
In \refFig{pgB}, the unfolding of \refFig{pgA} is depicted labeled by $\lambda$. 
Excluding the grayed parts, this is a winning strategy for the system.

For safety as winning objective, unbounded Petri games are undecidable in general~\cite{DBLP:journals/iandc/FinkbeinerO17} whereas bounded ones with either one system player~\cite{DBLP:conf/fsttcs/FinkbeinerG17} or one environment player~\cite{DBLP:journals/iandc/FinkbeinerO17} are \textsc{EXPTIME}-complete.
Bounded synthesis is a semi-decision procedure to find winning strategies~\cite{DBLP:conf/birthday/Finkbeiner15,DBLP:journals/corr/abs-1711-10637,ATVA19bounded}.
Both approaches are implemented in the tool \textsc{Adam}~\cite{DBLP:conf/cav/FinkbeinerGO15,DBLP:journals/corr/abs-1711-10637}.

\section{Game Equivalence}
\label{sec:equivalence}

A minimum requirement for translations between games is to be \emph{winning-equivalent}.
The system has a winning strategy in one game if and only if it has a winning strategy in the translated other one.
One trivial translation fulfilling this is to solve the game and to return a minimal winning-equivalent game.
Such a translation is not desirable, especially since decidability in both control games and Petri games is still an open question~\cite{DBLP:conf/icalp/Muscholl15,DBLP:journals/iandc/FinkbeinerO17}.
Instead, our translations preserve the underlying structure of the games.
We propose \textit{strategy-equivalence} as an adequate equivalence notion. 
Our notion is based on weak bisimulation which is popular and powerful to relate concurrent systems represented as Petri nets~\cite{DBLP:journals/acta/BestDKP91,DBLP:conf/apn/AutantS92,olderogbooks}.

For our purpose, a bisimulation between the underlying Petri net and the asynchronous automaton is not sufficient.
Instead, we want to express that any strategy can be matched by a strategy that allows equivalent (bisimilar) behavior, i.e., allows identical actions/transitions.
In both models, strategies are defined based on the causal past of the players.
A Petri game $\pGame$ utilizes unfoldings whereas a control game $\cGame$ utilizes local views.
We consider a strategy and a controller equivalent if there is a weak bisimulation between the branching process of the strategy and the plays that are compatible with the controller.
We base our definition on a set of shared actions and transitions between the Petri game and the control game. 
We refer to them as \emph{observable}.
All non-shared transitions and actions are considered \emph{internal} ($\tau$). 
If we, e.g., translate a Petri game to a control game we aim for a control game that contains all transitions as observable actions but might add internal ones.

\begin{definition}
	A strategy $\sigma$ for $\pGame$ and controller $\varrho$ for $\mathcal{C}$ are \emph{bisimilar} if there exists a relation $\relation \, \subseteq \reach(\pNet^\sigma) \times \Plays(\AA, \varrho)$ s.t.\ $\init^\sigma \relation \epsilon$ and all following conditions hold:
	\begin{itemize}
		\item If $M \relation u$ and $\fireTranTo{M}{a}{M'}$ there exists $u' \in \Plays(\AA, \varrho)$ with $u' = u \tau^* a \tau^*$ and $M' \relation u'$

		\item If $M \relation u$ and $\fireTranTo{M}{\tau}{M'}$ there exists $u' \in \Plays(\AA, \varrho)$ with $u' = u \tau^*$ and $M' \relation u'$

		\item If $M \relation u$ and $u' = u \, a$ there exists $M' \in \reach(\pNet^\sigma)$ with $\fireTranTo{M}{\tau^* a \tau^*}{M'}$ and $M' \relation u'$

		\item If $M \relation u$ and $u' = u \, \tau$ there exists $M' \in \reach(\pNet^\sigma)$ with $\fireTranTo{M}{\tau^*}{M'}$ and $M' \relation u'$

	\end{itemize}
\end{definition}

A Petri game $\pGame$ and a control game $\cGame$ are called \emph{strategy-equivalent} if for every winning strategy~$\sigma$ for~$\pGame$ there exists a bisimilar winning controller $\varrho_\sigma$ for $\cGame$ and for every winning controller~$\varrho$~for~$\cGame$ there exists a bisimilar winning strategy $\sigma_\varrho$ for $\pGame$.

\section{Translating Petri Games to Control Games}
\label{sec:PGtoAA}

We give our translation from Petri games to control games and prove that it yields strategy-equivalent (and therefore winning-equivalent) games. 
Moreover, we provide an exponential lower bound, showing that our translation is asymptomatically optimal when requiring strategy-equivalence. 
We present the translation for reachability objectives.

\subsection{Construction}

We describe the construction of our translation for a restrictive class of Petri games called \emph{sliceable}.
In \refSection{pg-cp}, the construction is generalized to \emph{concurrency-preserving} Petri games.

\noindent\textbf{\textsf{Slices~~}}
A Petri game describes the \emph{global} behavior of the players. 
By contrast, a control game is defined in terms of \emph{local} processes. 
Similarly, a Petri game strategy is a global branching process opposed to a family of local controllers for control games. 
The first difference our translation needs to overcome is to distribute a Petri game into parts describing the local behavior of players. 
Therefore, we dismantle the Petri game into \emph{slices} for each token.

\begin{definition}
A \textit{slice} of a Petri net $\pNet$ is a Petri net $\slice = (\places^\slice, \transitions^\slice, \flow^\slice, \init^\slice)$ s.t.,
{\normalfont\textbf{\textsf{(1)}}} $\slice \sqsubseteq \pNet$,
{\normalfont\textbf{\textsf{(2)}}} $|\init^\slice| = 1$,
{\normalfont\textbf{\textsf{(3)}}}  $\forall t \in \transitions^\slice : \abs{\pre{\slice}{t}} = \abs{\post{\slice}{t}} = 1$,
{\normalfont\textbf{\textsf{(4)}}}  $\forall q \in \places^\slice : \post{\pNet}{q} \subseteq \transitions^\slice$
\end{definition}

A slice is a subnet of $\pNet$ \textbf{\textsf{(1)}} that describes the course of exactly one token \textbf{\textsf{(2, 3)}} and includes every possible move of this token \textbf{\textsf{(4)}}. 
A slice characterizes the exact behavior of a single token in the global net $\pNet$.
For a family of slices $\{\slice\}_{\slice \in \slices}$, the parallel composition $\parallel_{\slice \in \slices} \slice$ is the Petri net with places $\biguplus_{\slice \in \slices} \places^\slice$, transitions $\bigcup_{\slice \in \slices} \transitions^\slice$, flow relation $\biguplus_{\slice \in \slices} \flow^\slice$, and initial marking $\biguplus_{\slice \in \slices} \init^\slice$.
All unions, except for the union of transitions, are disjoint.
Transitions can be shared between multiple slices, creating synchronization.
A Petri net $\pNet$ is \textit{sliceable} if there is a family of slices $\{\slice\}_{\slice \in \slices}$ s.t.\ $\pNet = \; \parallel_{\slice \in \slices} \slice$ and $\biguplus_{\slice \in \slices} \places^\slice$ is a partition of $\places^\pNet$, i.e., $\pNet$ can be described by the local movements of tokens.
Sliceable Petri nets are concurrency-preserving and 1-bounded.
We extend slices to Petri games in the natural way by distinguishing system, environment, and special places. 
Figure~\ref{fig:firstRed} depicts a Petri game (a) and a possible distribution into slices (b).
Note that even concurrency-preserving and 1-bounded Petri games must not be sliceable and that a distribution in slices is not unique.

\noindent\textbf{\textsf{Commitment Sets~~}}
In control games, actions are either controllable or uncontrollable whereas, in Petri games, players are distributed between the system and the environment.
In our construction, we represent transitions as actions and need to guarantee that only certain players can control them. 
In control games, this cannot be expressed directly.
We overcome this difference by using commitment sets.
Each process that should be able to control an action chooses a commitment set, i.e., moves to a state that explicitly encodes its decision. 

\begin{figure}[t!]
	\begin{tcolorbox}[colback=white, colframe=myYellow, arc=3mm, boxrule=1mm]
		\small
		Define $\processes = \slices$ and the distributed alphabet as $(\Sigma, dom)$ with:
		\vspace{-0.2cm}
		\begin{align*}
		\Sigma = \transitions &\;\cup\; \{ \tau_{(q, A)}  \mid  q \in \places_\psys \; \land \;  A \subseteq \post{\pGame}{q}\}\\
		&\color{red}\;\cup\; \{ \; \text{\Lightning}^{(q, A)}_{[t_1, t_2]} \mid q \in \places_\psys \; \land \; A \subseteq \post{\pGame}{q} \; \land \; t_1,  t_2 \in A \; \land \;  t_1 \neq t_2\}
		\end{align*}
		
		\vspace{-0.15cm}
		and $\dom: \Sigma \to 2^{\processes} \setminus \{\emptyset\}$:
		\vspace{-0.15cm}
		\begin{align*}
			\dom(\,t\,) &= \{\slice \in \slices \mid t \in \transitions^\slice \} \quad \text{for} \; t \in \transitions\\
			\dom(\,\tau_{(q, A)}\,) &= \{ \slice \} \; \textrm{where } \slice \in \slices \textrm{ is the unique slice  s.t.\ } q \in \places^\slice\\
			\color{red}\dom(\,\text{\Lightning}^{(q, A)}_{[t_1, t_2]}\,) &\color{red}= \{\slice \in \slices \mid t_1 \in \transitions^\slice \; \lor \; t_2 \in \transitions^\slice\}
		\end{align*}
		
		\vspace{-0.15cm}
		For each slice $\slice = (\places^\slice, \transitions^\slice, \flow^\slice, \init^\slice) \in \slices$, we define a local process $\leo_\slice = (Q_\slice, q_{0,\slice}, \vartheta_\slice)$ with $\vartheta_\slice \subseteq Q_\slice \times \Sigma_\slice \times Q_\slice$ as:
		\begin{itemize}
			\item $Q_\slice = \places^\slice \cup \{ (q, A) \mid q \in \places^\slice \cap \places_\psys \; \land \; A \subseteq \post{\slice}{q} \} \textcolor{red}{ \;\cup\; \{\badw_\slice\}}$
			
			\item $q_{0,\slice}$ is the unique state s.t.\ $\init^\slice = \{q_{0,\slice}\}$	
		\end{itemize}
		
		and $\vartheta_\slice$ is given by: 
		\vspace{-0.2cm}
		\begin{center}
			\begin{tikzpicture}[scale=0.88]
			
			\draw [draw, thick,rounded corners,] (0,0) rectangle (4.5,-2);
			
			\draw [draw, thick,rounded corners,] (5,0) rectangle (9.5,-2);
			
			\draw [draw, thick,rounded corners,] (10,0) rectangle (14.5,-2);
			
			\draw[-, thick, densely dashed] (0,-0.7) to (4.5,-0.7);
			\draw[-, thick, densely dashed] (5,-0.7) to (9.5,-0.7);
			\draw[-, thick, densely dashed] (10,-0.7) to (14.5,-0.7);

			\node[align=center] at (2.25,-0.35) () {\footnotesize$q \xmapsto[]{\tau_{(q, A)}} (q, A)$};
			\node[align=center] at (2.25,-1.35) () {\scriptsize$q \in \places_\psys \; \land$\\
				\scriptsize$A \subseteq \post{\slice}{q}$};
			
			\node[align=center] at (0.5,-0.35) () {\footnotesize \textbf{\textsf{(1)}}};
			
			\node[align=center] at (7.25,-0.35) () {\footnotesize$q \xmapsto[]{\;\;\, t \;\;\,} q'$};
			\node[align=center] at (7.25,-1.35) () {\scriptsize$q \in \places_\penv \; \land \; t \in \transitions \; \land $\\
				\scriptsize$q \in \pre{\slice}{t} \; \land \; q' \in \post{\slice}{t}$};
			
			\node[align=center] at (5.5,-0.35) () {\footnotesize \textbf{\textsf{(2)}}};
			
			\node[align=center] at (12.25,-0.35) () {\footnotesize$(q, A) \xmapsto[]{\;\;\, t\;\;\,} q'$};
			\node[align=center] at (12.25,-1.35) () {\scriptsize$q \in \places_\psys \; \land \; t \in A \; \land $\\
				\scriptsize$q \in \pre{\slice}{t} \; \land \; q' \in \post{\slice}{t}$};
			
			\node[align=center] at (10.5,-0.35) () {\footnotesize \textbf{\textsf{(3)}}};
			
			
			\draw [draw=red, thick,rounded corners,] (0,-2.5) rectangle (4.5,-5);
			
			\draw [draw=red, thick,rounded corners,] (5,-2.5) rectangle (9.5,-5);
			
			\draw [draw=red, thick,rounded corners,] (10,-2.5) rectangle (14.5,-5);
			
			\draw[-,red, thick, densely dashed] (0,-3.3) to (4.5,-3.3);
			\draw[-,red, thick, densely dashed] (5,-3.3) to (9.5,-3.3);
			\draw[-,red, thick, densely dashed] (10,-3.3) to (14.5,-3.3);

			\node[align=center] at (2.25,-2.9) () {\footnotesize$\color{red} (q, A) \xmapsto[]{\text{\Lightning}^{(q, A)}_{[t_1, t_2]}}  \badw_\slice$ };
			\node[align=center] at (2.25,-4.15) () {};
			
			\node[align=center] at (0.5,-2.9) () {\footnotesize \color{red}\textbf{\textsf{(4)}}};

			\node[align=center] at (7.25,-2.9) () {\footnotesize$\color{red} q \xmapsto[]{\text{\Lightning}^{(q', A')}_{[t_1, t_2]}} \badw_\slice$};
			\node[align=center] at (7.25,-4.15) () {\scriptsize $\color{red} q' \not\in Q_\slice \; \land \; q \in \places_\penv \; \land$\\
				\scriptsize $\color{red} (t_1 \in \transitions^\slice \Rightarrow t_1 \in \post{\slice}{q}) \; \land $\\
				\scriptsize $\color{red} (t_2 \in \transitions^\slice \Rightarrow t_2 \in \post{\slice}{q})$};
			
			\node[align=center] at (5.5,-2.9) () {\footnotesize \color{red}\textbf{\textsf{(5)}}};

			\node[align=center] at (12.25,-2.9) () {\footnotesize$\color{red} (q, A) \xmapsto[]{\text{\Lightning}^{(q', A')}_{[t_1, t_2]}}  \badw_\slice$ };
			\node[align=center] at (12.25,-4.15) () {\scriptsize $\color{red} q' \not\in Q_\slice \; \land$\\
				\scriptsize $\color{red} (t_1 \in \transitions^\slice \Rightarrow t_1 \in A) \; \land $\\
				\scriptsize $\color{red} (t_2 \in \transitions^\slice \Rightarrow t_2 \in A)$};
			
			\node[align=center] at (10.5,-2.9) () {\footnotesize \color{red}\textbf{\textsf{(6)}}};

			\end{tikzpicture}
		\end{center}
		\vspace{-0.2cm}
		Define $\AA_\pGame = \bigotimes_{\slice \in \slices} \leo_\slice$ and 
		%
		the control game as $\cGame_\pGame = ( \AA_\pGame, \Sigma^{\sys}, \Sigma^{\env}, \{\win_\slice\}_{\slice \in \slices} )$ 
		where
		\begin{itemize}
			\item $\Sigma^{\sys} = \{ \tau_{(q, A)} \mid q \in \places_\psys \; \land \; A \subseteq \post{\pGame}{q} \}$
			\item $\Sigma^{\env} = \transitions \color{red}\;\cup\; \{ \;\text{\Lightning}^{(q, A)}_{[t_1, t_2]} \mid q \in \places_\psys \; \land \; A \subseteq \post{\pGame}{q} \; \land \; t_1,  t_2 \in A \; \land \;  t_1 \neq t_2\}$
			\item $\win_\slice = (\win \; \cap \; \places^\slice) \; \cup \; \{(q, A) \mid q \in (\win \cap \places^\slice) \; \land \; A \subseteq \post{\pGame}{q} \}$
		\end{itemize}
		
	\end{tcolorbox}
	\caption{The construction of the translated control game for a Petri game $\pGame = (\places_\psys, \places_\penv, \transitions, \flow, \init, \win)$, distributed in slices $\{\slice\}_{\slice \in \slices}$, is depicted.
		Excluding the red parts, this is the definition of $\cGame_\pGame$.
		Including the red parts, this is the definition of $\widehat{\cGame_\pGame}$.
	}
	\label{fig:PGtoAA}
\end{figure}
We fix a sliceable game $\pGame = (\places_\psys, \places_\penv, \transitions, \flow, \init, \win)$ and a distribution in slices $\{\slice\}_{\slice \in \slices}$.
We begin by defining a control game $\cGame_\pGame$. Afterwards, we describe a possible modification $\widehat{\cGame_\pGame}$, that enforces determinism. The construction is depicted in \refFig{PGtoAA}.

We transform every slice $\slice$ into a process that is described by a local automaton $\leo_\slice$.
Hence, we use the terms slice and process interchangeably.
Every place in $\slice$ becomes a local state in $\leo_\slice$. The process starts in the state that corresponds to the initial place of the slice. 
For every \emph{system} place $q$, we furthermore add the aforementioned commitment sets. These are states $(q, A)$ representing every possible commitment, i.e., every $A \subseteq \post{\pGame}{q}$.

Every transition $t$ is added as an uncontrollable action. Action $t$ involves all processes with slices synchronizing on $t$. 
To choose a commitment set, we furthermore add controllable actions ($\tau$-actions) that are local to each process. 
We assume that each process chooses at most one commitment set. 
The transition relation $\vartheta_\slice$ is given by three rules:
From every \emph{system} place $q \in \places_\psys$, a process can choose a commitment set using the corresponding $\tau$-action \textbf{\textsf{(1)}}.
From an environment place $q \in \places_\penv$, $t$ can fire if $q$ is in the precondition of $t$ ($q \in \pre{\slice}{t}$). The process is then moved to the state $q'$ that corresponds to the place that is reached when firing $t$ in the slice ($q' \in \post{\slice}{t}$) \textbf{\textsf{(2)}}.
A process on an environment place can hence never restrict any actions; as in Petri games.
For a system place, the rule is almost identical but only admits $t$ if a commitment set has been chosen, that contains $t$ \textbf{\textsf{(3)}}. 
Therefore, a process on a system place can control actions by choosing commitment sets; as in Petri games.
States corresponding to winning places become winning states. 

An example translation is depicted in \refFig{firstRed}.
The Petri game (a) comprise two players starting in $A$ and $C$. 
They can move to $B$ and $D$ using $e_1, e_2$, or $i$ and afterwards synchronize on $a$ or $b$. 
The Petri game can be distributed into slices (b).
In the control game from our construction (c), the slice containing only environment places results in the local process on the left.
For the system places in the other slice, commitment sets are added as states $\{C\} \times 2^{\{i\}}$ and $\{D\} \times 2^{\{a,b\}}$. 
The process can choose them using controllable $\tau$-actions and the actions $a, b$, and $i$ can only occur if included in the current set. 
The construction guarantees that only the second process can control transitions $a$ and $b$, as in the Petri game.

\begin{figure}[t!]
	\begin{subfigure}[c]{0.25\textwidth}
		\centering
		\begin{tikzpicture}[scale=0.75, every label/.append style={font=\tiny}, label distance=-1mm]
		\node[envplace,label=west:$A$] at (0,0) (e1){};
		
		\node[envplace,specialEnv,label=west:$B$] at (0,-1.5) (e2){};
		
		\node[sysplace,label=east:$C$] at (2.5,0) (s1){};
		\node[sysplace,specialSys,label=east:$D$] at (2.5,-1.5) (s2){};
		
		\node[transition,label=west:$e_1$] at (-0.5, -0.75) (t1){};
		\node[transition,label=east:$e_2$] at (0.5, -0.75) (t2){};
		
		\node[transition,label=east:$i$] at (2.5, -0.75) (t3){};
		
		\node[transition,label=east:$a$] at (1.5, -0.75) (t4){};
		\node[transition,label=north:$b$] at (1.5, -2.25) (t5){};
		
		\node[token] at (0,0) (){};
		\node[token] at (2.5,0) (){};
		
		\draw[arrow] (e1) to (t1);
		\draw[arrow] (e1) to (t2);
		\draw[arrow] (t1) to (e2);
		\draw[arrow] (t2) to (e2);
		
		\draw[arrow] (e2) to[out=0,in=270] (t4);
		\draw[darrow] (e2) to[out=270,in=180] (t5);
		\draw[arrow] (t4) to[out=90,in=0] (e1);

		\draw[arrow] (s1) to (t3);
		\draw[arrow] (t3) to (s2);
		
		\draw[arrow] (s2) to[out=180,in=270] (t4);
		\draw[darrow] (s2) to[out=270,in=0] (t5);
		\draw[arrow] (t4) to[out=90,in=180] (s1);
		
		\end{tikzpicture}
		\subcaption{}
	\end{subfigure}
	\begin{subfigure}[c]{0.3\textwidth}
		\centering
		\begin{tikzpicture}[scale=0.75, every label/.append style={font=\tiny}, label distance=-1mm]
		\node[envplace,label=west:$A$] at (0,0) (e1){};
		\node[envplace,specialEnv,label=west:$B$] at (0,-1.5) (e2){};
		
		\node[transition,label=west:$e_1$] at (-0.5, -0.75) (t1){};
		\node[transition,label=west:$e_2$] at (0.5, -0.75) (t2){};
		
		\node[transition,label=west:$a$] at (1.5, -0.75) (t4){};
		\node[transition,label=west:$b$] at (0, -2.25) (t5){};
		
		\node[token] at (0,0) (){};

		\draw[arrow] (e1) to (t1);
		\draw[arrow] (e1) to (t2);
		\draw[arrow] (t1) to (e2);
		\draw[arrow] (t2) to (e2);
		
		\draw[arrow] (e2) to[out=0,in=270] (t4);
		\draw[darrow] (e2) to (t5);
		\draw[arrow] (t4) to[out=90,in=0] (e1);
		
		
		\node[sysplace,label=east:$C$] at (3,0) (s1){};
		\node[sysplace,specialSys,label=east:$D$] at (3,-1.5) (s2){};
		
		\node[transition,label=east:$i$] at (3, -0.75) (ts3){};
		
		\node[transition,label=east:$a$] at (2.25, -0.75) (ts4){};
		\node[transition,label=east:$b$] at (3, -2.25) (ts5){};
		
		\node[token] at (3,0) (){};
		
		\draw[arrow] (s1) to (ts3);
		\draw[arrow] (ts3) to (s2);
		
		\draw[arrow] (s2) to[out=180,in=270] (ts4);
		\draw[darrow] (s2) to (ts5);
		\draw[arrow] (ts4) to[out=90,in=180] (s1);
		
		\end{tikzpicture}
		\subcaption{}
	\end{subfigure}
	\begin{subfigure}[c]{0.45\textwidth}
		\centering
		\begin{tikzpicture}[scale=0.8, every label/.append style={font=\tiny}, label distance=-0.5mm]
		\node[aastate,label=north:$A$] at (0,0) (s11){};
		\node[] at (0,-1.5) (s12){};
		
		\node[aastate,draw=red,label=east:{$\color{red}\bot_{\slice_1}$}] at (0,-3) (er1){};
		
		\draw[arrow] (s11)+(0.35, 0.35) to (s11);
		
		\draw[arrow] (s11) to [out=270,in=90] node[left=-1mm] {\tiny$e_1$} (s12);
		\draw[arrow] (s11) to [out=340 ,in=20] node[right=-1mm] {\tiny$e_2$} (s12);
		
		\draw[arrow] (s12) to [out=160 ,in=200] node[left=-0.5mm] {\tiny$a$} (s11);
		
		\draw[arrow,loop right] (s12) to node[right=-0.5mm] {\tiny$b$} (s12);

		\draw[arrow,red] (s12) to node[left=-0.5mm] {\tiny$\text{\Lightning}$} (er1);
		
		
		\node[aastate,label=north:$C$] at (4,0) (s1){};
		
		\node[aastate,label=north:{$(C, \emptyset)$}] at (2,0) (c1){};
		\node[aastate,label=north:{$(C, \{i\})$}] at (6,0) (c2){};
		
		\node[specialAA,label=south:$D$] at (4,-2.2) (s2){};
		
		\node[] at (2,-2.2) (c3){};
		\node[specialAA,label=east:{$(D, \{b\})$}] at (4,-3.7) (c4){};
		\node[specialAA,label=south:{$(D, \emptyset)$}] at (6,-2.2) (c5){};
		\node[specialAA,label=east:{$(D, \{a\})$}] at (4,-1) (c6){};

		\node[aastate,draw=red,label=west:{$\color{red}\bot_{\slice_2}$}] at (2,-3.7) (er2){};
		
		\draw[arrow] (s1)+(0.35, 0.35) to (s1);
		
		\draw[arrow, densely dotted] (s1) to node[above=-0.5mm] {\tiny$\tau_{(C, \emptyset)}$} (c1);
		\draw[arrow, densely dotted] (s1) to node[above=-0.5mm] {\tiny$\tau_{(C, \{i\}}$} (c2);

		\draw[arrow] (c2) to[out=270,in=10] node[right=0mm] {\tiny$i$} (s2);
		
		\draw[arrow, densely dotted] (s2) to[out=150,in=30] node[above=-0.5mm] {\tiny$\tau_{(D, \{a,b\})}$} (c3);
		\draw[arrow, densely dotted] (s2) to[out=300,in=60] node[right=-0.5mm] {\tiny$\tau_{(D, \{b\})}$} (c4);
		\draw[arrow, densely dotted] (s2) to node[below=-0.5mm] {\tiny$\tau_{(D, \emptyset)}$} (c5);
		\draw[arrow, densely dotted] (s2) to node[right=-0.5mm] {\tiny$\tau_{(D, \{a\})}$} (c6);
		
		\draw[arrow] (c3) to[out=90,in=190] node[left] {\tiny$a$} (s1);
		\draw[arrow] (c3) to[out=330,in=210] node[below=-0.5mm] {\tiny$b$} (s2);
		\draw[arrow] (c4) to[out=120,in=240] node[left=-0.5mm] {\tiny$b$} (s2);
		\draw[arrow] (c6) to node[right=-0.5mm] {\tiny$a$} (s1);
		
		\draw[arrow,red] (c3) to node[left=-0.5mm] {\tiny$\text{\Lightning}$} (er2);

		\node[specialAA,label=west:{$(D, \{a,b\})$}] at (2,-2.2) (c3){};
		\node[specialAA,label=west:$B$] at (0,-1.5) (s12){};

		\end{tikzpicture}
		\subcaption{}
	\end{subfigure}

	\caption{A sliceable Petri game $\pGame$ (a), a possible (in this case unique) distribution in slices ($\slice_1, \slice_2$) (b), and the asynchronous automaton $\cGame_\pGame$ obtained by our translation (c).
	$\widehat{\cGame_\pGame}$ comprises additional $\bot$-states and one $\text{\Lightning}^{(D, \{a,b\})}_{[a, b]}$-action (named \Lightning) depicted in red.
	}
	\label{fig:firstRed}
\end{figure}

\noindent\textbf{\textsf{Non-Determinism~~}}
In deterministic strategies, every system place allows transitions s.t.\ in every situation, there is at most one of them enabled.
In $\cGame_\pGame$, the controller can choose arbitrary commitment sets and, thus, a winning controller can result in a \emph{non-deterministic} strategy for $\pGame$.
To ensure deterministic strategies, we want to penalize situations where a commitment set in $\cGame_\pGame$ is chosen s.t.\ two or more distinct actions from this set can be taken. 

To achieve this, we define the modified game $\widehat{\cGame_\pGame}$.
We equip each process with a $\badw$-state from which no winning configurations are reachable.
Uncontrollable \Lightning-actions move processes to $\badw$-states and thereby cause the system to lose. 
The situation to be covered comprises a process that has chosen a commitment set, i.e., is in a state $(q, A)$, and two distinct actions $t_1$ and $t_2$ in $A$.
For every such combination, we add a $\text{\Lightning}^{(q, A)}_{[t_1, t_2]}$-action that involves all processes participating in $t_1$ or $t_2$ and can be taken exactly if $(q, A)$ is a current state and both $t_1$ \emph{and} $t_2$ could occur from the current global state.
The three rules in $\vartheta$ add the $\text{\Lightning}^{(q, A)}_{[t_1, t_2]}$-action to each process.
It fires if one process is in state $(q,A)$ \textbf{\textsf{(4)}} and all other involved processes are in states such that both $t_1, t_2 \in A$ are possible \textbf{\textsf{(4, 5)}}.
To ensure that $t_1$ and $t_2$ can both be taken, we distinguish between system and environment places:
Every process $\slice$ on an environment place needs to be in the right state, i.e., if $\slice$ is involved in $t_i$ ($t_i \in \transitions^\slice$), then $t_i$ is in the postcondition of its current place for $i=1,2$ \textbf{\textsf{(5)}}.
If on a system place, $t_i$ must not only be in the postcondition but also in the currently chosen commitment set \textbf{\textsf{(6)}}.

\noindent\textbf{\textsf{Size~~}}
In both $\cGame_\pGame$ and $\widehat{\cGame_\pGame}$, the size of the alphabet and number of local states is exponential in the number of transitions and linear in the number of places. 
For $\cGame_\pGame$, the blow-up in the alphabet can be kept polynomial by using a tree construction to choose commitment sets.
For a bound on the number of outgoing transitions, both $\cGame_\pGame$ and $\widehat{\cGame_\pGame}$ are of polynomial size.

\subsection{Correctness}
We show that our translation yields strategy-equivalent games by outlining the translation of winning strategies and controllers between $\pGame$ and $\cGame_\pGame$.
Both game types rely on causal information, i.e., a strategy/controller bases its decisions on every action/transition it took part in as well as all information it received upon communication. 
In $\pGame$, the information is carried by individual tokens. 
In our translation, we transform each slice for a token into a process that is involved in exactly the transitions that the slice it is build from takes part in, i.e., we preserve the communication architecture.
At every point, all processes in $\cGame_\pGame$ possess the same information as their counterpart slices. 
Using the commitment set, our translation ensures that only processes on a state based on a system place can control any behavior. 
Therefore, a process and its counterpart slice have the same possibilities for control.

\noindent\textbf{\textsf{Translating a Strategy for $\pGame$ to a Controller for $\cGame_\pGame$~~}}
Given a winning strategy $\sigma$, we construct a controller $\varrho_\sigma$. 
The only states from which a process $p$ can control any behavior (in terms of controllable actions) are of the form $q \in \places_\psys$.
$\sigma$ decides for every system place which transitions to enable. 
Due to our construction, $p$ can copy the decision of $\sigma$ by choosing an appropriate commitment set.
Therefore, $\varrho_\sigma$ allows the same behavior as $\sigma$.
If $\sigma$ is deterministic then the commitments sets are chosen such that no \Lightning-actions are possible. 

\noindent\textbf{\textsf{Translating a Controller for $\cGame_\pGame$ to a Strategy for $\pGame$~~}}
Given a winning controller $\varrho$, we incrementally construct a strategy $\sigma_\varrho$.
Every system place $q$ in the partially constructed strategy can control which transitions are enabled. 
The place $q$ belongs to some process. 
If on a state corresponding to a system place, this process can control all actions using its commitment sets. $q$ enables exactly the transitions that the process has chosen as a commitment set. 
An environment place cannot control any behavior and neither can the process it belongs to.
Hence, $\varrho$ and $\sigma_\varrho$ allow the same actions and transitions. 
A winning controller for $\widehat{\cGame_\pGame}$ additionally avoids any uncontrollable \Lightning-actions and results in a deterministic strategy.

For a detailed translation of strategies and controllers, we refer to Appendix \ref{sec:firstDir} where we formally prove the following:
\begin{theorem}
	\label{theo:theor1}
	$\pGame$ and $\cGame_\pGame$ are strategy-equivalent.
	$\pGame$ and $\widehat{\cGame_\pGame}$ are strategy-equivalent if we require deterministic Petri game strategies.
\end{theorem}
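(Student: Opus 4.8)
The plan is to prove both strategy-equivalence claims by exhibiting, for each direction, an explicit construction together with a weak bisimulation $\relation$, and to handle the deterministic refinement for $\widehat{\cGame_\pGame}$ as an add-on. The backbone of the whole argument is that slicing preserves the communication architecture: a process $\slice$ participates in exactly the actions $t \in \transitions^\slice$ that its defining slice participates in, so the local view $\view_\slice(u)$ of a play carries precisely the causal information that the corresponding token accumulates in the unfolding. I would therefore relate a reachable marking $M \in \reach(\pNet^\sigma)$ (equivalently, a finite configuration of the occurrence net $\pNet^\sigma$) to those plays $u \in \Plays(\AA, \varrho)$ whose projection to $\transitions$ equals the transition content of the configuration of $M$ and in which each token's condition in $M$ matches the current local state of its process in $\state{u}$, up to a pending commitment move. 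Since the observable (shared) labels are exactly the transitions $\transitions$ while $\tau_{(q,A)}$- and $\text{\Lightning}$-actions are internal, and $\sigma$ itself has no internal transitions, the clause that lets the control game perform an internal step is matched on the Petri side by the empty move ($M' = M$), so $\relation$ must by design relate $M$ to every play that differs from a related one only by admissible commitment moves.

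For the direction $\sigma \mapsto \varrho_\sigma$, given a winning strategy $\sigma$ I define the local controller $f_\slice$ so that whenever $\view_\slice(u)$ ends in a state corresponding to a system place $q$ it allows the single commitment $\tau_{(q,A)}$ with $A = \lambda^\sigma[\post{\pNet^\sigma}{c}]$, where $c$ is the condition of $\pNet^\sigma$ identified by the view. The well-definedness step is that the view determines $c$ uniquely: architecture-preservation puts $\view_\slice(u)$ in correspondence with the causal past $\past(c)$, so $\sigma$'s decision at $c$ is genuinely a function of the local view, as a controller requires. Firing a transition in $\sigma$ is then matched by first scheduling the internal commitment moves of the participating system processes and then the shared action, giving $u' = u\,\tau^* t\, \tau^*$; the reverse clause follows from justified refusal, which guarantees that whenever all participating processes permit $t$ the token-level transition is present in $\sigma$. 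Winning transfers because winning places of $\pGame$ become winning states (and their commitment variants) in $\cGame_\pGame$, final markings correspond to final plays under $\relation$, and finiteness of $\pNet^\sigma$ yields that every $\varrho_\sigma$-play is finite.

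For the direction $\varrho \mapsto \sigma_\varrho$, I build $\pNet^{\sigma_\varrho}$ inductively along firings, enabling at each system-place condition $c$ exactly the transitions in the commitment set $A$ that $\varrho$ chose along the matching local view. Justified refusal is immediate: a transition $t$ with concurrent preset $C$ that is enabled in $\pNet$ is either allowed, when every system place in $C$ has $t$ in its commitment and every environment place admits $t$ (rules \textbf{\textsf{(2)}} and \textbf{\textsf{(3)}}), or refused by some system place $q \in C$ with $t \notin A_q$; environment conditions never refuse, mirroring rule \textbf{\textsf{(2)}}. The same relation $\relation$ witnesses bisimilarity and the winning condition transfers by the identical winning-place/winning-state correspondence. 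For $\widehat{\cGame_\pGame}$, I would argue that the uncontrollable $\text{\Lightning}^{(q,A)}_{[t_1,t_2]}$-actions, which lead to the absorbing non-winning states $\badw_\slice$, fire exactly when a committed set $A$ lets two distinct transitions $t_1,t_2$ fire simultaneously (rules \textbf{\textsf{(4)}}--\textbf{\textsf{(6)}} encode precisely this joint enabledness across the involved system and environment places). Hence a controller can be winning only if it never reaches such a situation, which is exactly determinism of $\sigma_\varrho$; conversely a deterministic $\sigma$ yields commitments that never enable a $\text{\Lightning}$-action, so $\varrho_\sigma$ stays winning, with $\relation$ unchanged since $\text{\Lightning}$-moves are internal and never taken.

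I expect the main obstacle to be the rigorous identification of $\view_\slice(u)$ with the token's causal past, since this is simultaneously what makes the controllers well-defined functions of the local view and what guarantees that the interleaving of internal commitment moves can always be scheduled so that a Petri-net firing is matched by $u\,\tau^* a\, \tau^*$. Transferring the reachability-winning condition across the weak bisimulation—in particular the equivalence of final markings with final plays and the preservation of finiteness—will also require care, but should follow from the structure of $\relation$.
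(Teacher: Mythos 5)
Your proposal follows essentially the same route as the paper's own proof: the same bisimulation relation (a marking $M$ is related to the plays $u$ whose observable projection $\projtr{u}$ fires to $M$ in the unfolding), the same key lemma identifying the local view $\view_p(u)$ with the causal past $\pastt{q}$ of the corresponding token, the same commitment-copying constructions in both directions with justified refusal closing the reverse simulation clause, and the same observation that the $\text{\Lightning}$-actions are enabled exactly in non-deterministic situations, so that winning controllers for $\widehat{\cGame_\pGame}$ correspond to deterministic strategies. The only point you gloss over is that a winning controller may allow \emph{several} commitment sets from the same local view, so that ``the commitment set $A$ that $\varrho$ chose'' is not yet well-defined; the paper handles this by first normalizing controllers to choose exactly one commitment set (replacing multiple allowed sets by their union, which preserves winningness and observable behavior) before running the controller-to-strategy construction you describe.
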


\subsection{Generalization to Concurrency-Preserving Games}
\label{sec:pg-cp}
Our translation builds processes from a slice distribution of the Petri game. 
This limits the translation to sliceable games.
The notion of slices is too strict: 
Our translation only requires to distribute the global behavior of the Petri game into local behavior, a \emph{partitioning} of the places is not necessarily needed. 
We introduce the new concept of \emph{singular nets} (SN). 
Similar to a slice, an SN describes the course of one token. 
Instead of being a subnet, it is equipped with a labeling function assigning to each node in the singular net a node in the original net. 
This labeling allows us to split up places and transitions by equally labelled copies enabling us to distribute every concurrency-preserving Petri net and game into singular nets. 
We can build our previous translation with an SN-distribution instead of a slice-distribution.
This allows us to generalize our result by not restricting to sliceable games (c.f., \refTheo{theor1}):

\begin{theorem}
	\label{theo:theor2}
	For every concurrency-preserving Petri game $\pGame$, there exist control games $\cGame_\pGame$ and $\widehat{\cGame_\pGame}$ with an equal number of players such that {\normalfont\textbf{\textsf{(1)}}} $\pGame$ and $\cGame_\pGame$ are strategy-equivalent and {\normalfont\textbf{\textsf{(2)}}} $\pGame$ and $\widehat{\cGame_\pGame}$ are strategy-equivalent if we require deterministic Petri game strategies. 
\end{theorem}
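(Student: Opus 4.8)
The plan is to generalize the slice-based construction behind \refTheo{theor1} by replacing the partition-based notion of a slice with the more flexible notion of a \emph{singular net} (SN). An SN plays the same role as a slice---it traces the course of a single token---but rather than being a literal subnet of $\pGame$ it carries a labeling homomorphism $\lambda$ into the underlying net, so it may contain several equally-labeled copies of one place or transition. This extra freedom is exactly what is needed to decompose games that are concurrency-preserving but not sliceable: whenever a token cannot be assigned to a place under a genuine partition, we split that place, together with the transitions touching it, into labeled copies, one per token that may occupy it.

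First I would fix the structural notion: define SNs with their labelings and the labeled parallel composition $\parcomp$, in which transition-copies carrying the same label synchronize and the induced labeling sends every marking of the composition to a marking of $\pGame$. The central lemma is that \emph{every} concurrency-preserving Petri game admits an SN-distribution, i.e.\ a finite family of SNs whose labeled composition realizes the behavior of $\pGame$ through $\lambda$. I would prove existence by following token flow: since $|\pre{}{t}| = |\post{}{t}|$ for every transition, the number of tokens is an invariant $k$, so I build exactly $k$ SNs---one per token---splitting each shared place and transition into as many labeled copies as there are tokens reaching it and synchronizing the resulting transition-copies across the SNs they involve. As the number of SNs is this invariant $k$, the translation uses $k$ processes, giving the promised \textbf{equal number of players}.

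Second, I would rerun the construction of \refFig{PGtoAA} verbatim with SNs in the role of slices: each SN becomes a local process, a place labeled by a system place receives the commitment states $(q,A)$, a place labeled by an environment place receives only the uncontrollable transition-actions, and the $\tau$- and $\text{\Lightning}$-actions are added precisely as before, now indexed through $\lambda$. Because an SN-distribution preserves the communication architecture---each process is involved in exactly the transitions its SN is labeled over---the causal-past and local-view reasoning of \refTheo{theor1} carries over unchanged, as does the role of the $\text{\Lightning}$-gadget in penalizing nondeterministic commitments.

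Third, I would lift the correctness argument. The bisimulation $\relation$ of \refTheo{theor1} relates a reachable marking of the control-game strategy to a play of the automaton; here I compose it with the labeling so that a marking $M$ of the SN-based game is related to its image $\lambda[M]$ in $\pGame$, and the two strategy/controller translations then proceed as before, read through $\lambda$. I expect the main obstacle to be the structural lemma itself: showing that the splitting into labeled copies can always be performed for concurrency-preserving games while keeping the labeled composition behaviorally equal to $\pGame$, and---crucially---that the homomorphism and injectivity conditions required of strategies as branching processes survive the insertion of $\lambda$. Once this lemma is in place, both directions of strategy-equivalence, for the nondeterministic game $\cGame_\pGame$ and for the determinism-enforcing game $\widehat{\cGame_\pGame}$, follow from the corresponding arguments of \refTheo{theor1}.
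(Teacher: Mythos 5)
Your structural half is sound and matches the paper: singular nets with a labeling into $\pGame$, their composition $\parcomp$, and the existence of an SN-distribution for every concurrency-preserving game (the paper's \refProp{allSND}, built with one SN per token, which also yields the equal number of players). The genuine gap is in your second step, the claim that one can rerun the construction of \refFig{PGtoAA} \emph{verbatim} with SNs in place of slices. An SND necessarily splits a transition of $\pGame$ into several distinct, equally $\pi$-labelled copies, and if the commitment sets of the translated control game range over these \emph{copies}, a local controller can enable one copy while refusing another equally labelled one. No strategy of $\pGame$ can mimic this: justified refusal forces a system place to treat a transition uniformly, not per-copy. This is not a hypothetical defect — the paper exhibits a game (\refFig{PtoC_commitment}) and an SND (\refFig{ex_snDist}) where the composition, and hence the verbatim-translated control game, admits a winning controller (allow $\hat{i_1}$, forbid $\hat{i_2}$) although $\pGame$ has no winning strategy at all. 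So the verbatim translation is not even winning-equivalent, and your bisimulation lift cannot be carried out; your stated worry about homomorphism/injectivity conditions is not where the argument breaks.

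The missing idea is a deliberate coarsening of the commitment sets: for a state built from a place $q$ of an SN, one must admit only commitment sets $A \subseteq \pi(\post{}{q})$, i.e., sets of \emph{original} transitions, with the semantics that choosing $A$ enables \emph{all} copies of each transition in $A$. On the Petri-net side this corresponds to restricting attention to $\pi$-insensitive strategies for $\parcomp$ — strategies in which a system place that refuses a transition refuses every equally $\pi$-labelled copy — and these are exactly the strategies whose coarse relabelling $\pi \circ \lambda$ satisfies justified refusal in $\pGame$. With this modification both directions go through as you envisage: a winning strategy for $\pGame$ refines to a $\pi$-insensitive strategy for $\parcomp$ and then to a controller by the argument of \refTheo{theor1}, and conversely a controller over the coarsened commitment sets induces a $\pi$-insensitive strategy for $\parcomp$ whose projection is a strategy for $\pGame$, preserving winning, bisimilarity, and (via the $\text{\Lightning}$-gadget) determinism.
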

For a detailed discussion of singular nets and a sketch of the generalization, we refer the reader to Appendix \ref{sec:appSND}.

\subsection{Lower Bound}

We can show that there is a family of Petri games such that every strategy-equivalent control game must have exponentially many local states.
In a control game, either all or none of the players can restrict an action. 
By contrast, Petri games offer a finer granularity of control by allowing only some players to restrict a transition. 
The insight for the lower bound is to create a situation where a transition is shared between players but can only be controlled by one of them.
Using careful reasoning, we can show that in any strategy-equivalent control game there must be actions that can only be controlled by a single process, resulting in exponentially many local states.
Our translation shows that the difference between both formalism can be overcome but our lower bound shows an intrinsic difficulty to achieve this. 

\begin{theorem}
	\label{theo:theor3}
	There is a family of Petri games such that every strategy-equivalent control game (with an equal number of players) must have at least $\Omega(d^n)$ local states for $d > 1$.
\end{theorem}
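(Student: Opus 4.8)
The goal is to exhibit a family $\{\pGame_n\}_{n \in \mathbb{N}}$ of Petri games, each with $O(n)$ places and transitions, such that any strategy-equivalent control game with the same number of players is forced to use at least $\Omega(d^n)$ local states in some process. Following the insight stated in the excerpt, the construction should revolve around a transition $t$ that is \emph{shared} between two players but \emph{controllable by only one of them}: in the Petri game this is expressed by having $t$ consume from one system place and one environment place. The plan is to build $\pGame_n$ so that one designated player (the ``controlling'' player) must, at the moment it commits to enabling or refusing $t$, encode a choice that depends on $n$ independent bits of causal information — but those bits live in other players and are only learned through synchronization. I would set up $n$ independent ``bit gadgets,'' each a small sub-game in which an environment player makes a binary choice, and arrange the winning condition so that the controlling player wins iff its commitment is a correct function of all $n$ bits it has so far been told about.

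\textbf{Key steps.}
First I would fix the concrete family: take $n$ environment players each branching into one of two places (the bit), a communication phase in which these bits are forwarded to the controlling system player, and a final shared transition $t$ whose firing must be permitted or forbidden according to a target predicate $\varphi$ over the $n$ bits. I would choose $\varphi$ (e.g.\ parity, or any function whose truth-table has no small branching-program/automaton representation) so that no small state space can implement the decision. Second, I would argue on the Petri-game side that a winning deterministic strategy exists and, crucially, that because $t$ is shared, a \emph{control game} is forced to make the action corresponding to $t$ either controllable by all its participating processes or by none — so to emulate ``only the system player controls $t$'' the translation must introduce, via commitment sets, a dedicated process that alone decides $t$. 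Third, I would use the bisimulation requirement from \refDef{}: any strategy-equivalent controller $\varrho$ must, for the winning Petri-game strategy, produce bisimilar plays, and in particular the controlling process's local view $\view_p(u)$ must determine whether $t$ is enabled. Since the local view after the communication phase carries all $n$ bits, the controller $f_p$ restricted to these views computes $\varphi$; a pumping/fooling-set argument then shows that distinguishing the $2^n$ relevant views requires $\Omega(2^n)$ reachable local states, giving the bound with $d=2$ (sharpening $d$ is a matter of widening each gadget's branching).

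\textbf{The main obstacle.}
The delicate part is \emph{not} the combinatorial lower bound on a state machine computing $\varphi$ — that is a standard fooling-set/Myhill--Nerode argument — but rather showing that strategy-equivalence genuinely forces the controller to route \emph{all} $n$ bits through a single process's local view before the decision on $t$ is fixed. One must rule out cleverer control games that split the decision across several processes, or that exploit extra internal ($\tau$) actions and additional players to diffuse the information, since the definition of strategy-equivalence only constrains \emph{observable} behavior up to weak bisimulation and permits auxiliary internal structure. The careful reasoning alluded to in the statement is exactly here: I would argue that because the shared transition $t$ is a single observable action whose enabledness must be bisimilar to the Petri-game strategy, and because in a control game an action is enabled only when \emph{all} its domain processes' local controllers permit it, the ``deciding'' permission must reside in one process whose causal past already contains the $n$ bits — no internal reshuffling can lower the information that this one process must locally store. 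Pinning down this information-flow invariant under weak bisimulation, and making it robust to the freedom the translation has in adding internal actions, is where the real work lies.
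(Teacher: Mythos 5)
There is a genuine gap, and it sits at the very step you flag as ``standard'': the fooling-set/Myhill--Nerode argument does not apply here, because in a control game the local controller $f_p$ is an \emph{arbitrary function} of the local view $\view_p(u)$ (the causal past), not a function of the automaton's current local state. The number of local states of $\AA$ places no bound whatsoever on how many views the controller can distinguish: two plays reaching the same local state may still receive different decisions from $f_p$. Consequently, your family does not force a blow-up at all. Take the control game with one process per Petri-game player, mirror the bit gadgets with $O(n)$ local states, and declare the single action corresponding to your shared transition $t$ \emph{controllable}: the deciding process's controller simply evaluates $\varphi$ on its local view (which contains all $n$ bits after the communication phase) and permits or forbids $t$ accordingly. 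This controller is winning and weakly bisimilar to the Petri-game strategy, so your instance admits a strategy-equivalent control game of linear size, and no choice of $\varphi$ (parity or otherwise) can repair this, since controllers have unbounded causal memory.

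The paper's proof succeeds precisely by arranging the opposite information structure and then eliminating controllability. In its family, the \emph{deciding} system player (place $D$) shares the transitions $t_1,\dots,t_n$ with a player who \emph{does} know the environment's prior choice ($a$ vs.\ $b$), while $D$ itself does not. The paper first shows that in any strategy-equivalent control game all the $t_i$ must be \emph{uncontrollable}: if some $t_i$ were controllable, then \emph{every} process in $\dom(t_i)$ could restrict it --- in particular the informed one --- and filtering $t_i$ based on the occurrence of $a$ yields a winning controller whose bisimilar Petri-game strategy would have to realize $\Seq(\sigma)=\{\epsilon,a,b\}\cup\{a\,t_i \mid i\geq 2\}\cup\{b\,t_i \mid \text{all } i\}$, which justified refusal forbids (the system place cannot distinguish $a$ from $b$, and the environment place cannot refuse anything). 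Once the $t_i$ are uncontrollable, the controller cannot restrict them at all, so for each of the $2^n$ subsets $B\subseteq\{t_1,\dots,t_n\}$ that some winning strategy enables, ``exactly $B$ is enabled'' must be a \emph{structural} property of a reachable global state of $\AA$; this forces $\Omega(2^n)$ global states and hence $\Omega((\sqrt{2})^n)$ local states for a fixed number of processes. The lesson is that a local-state lower bound can only be extracted from behavior the controller is powerless to influence; your construction gives the decisive player full information and a single controllable choice, which is exactly the situation control games handle with constant state overhead.
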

The proof of the lower bound can be found in Appendix \ref{sec:appLB1}.

\section{Translating Control Games to Petri Games}
\label{sec:AAtoPG}

We give our translation from control games to Petri games, prove that it yields strategy-equivalent (and therefore winning-equivalent) games, and give an exponential lower bound. 
We present our translation for safety objectives.

\subsection{Construction}

\begin{figure}[t]
	\begin{tcolorbox}[colback=white, colframe=myYellow, arc=3mm, boxrule=1mm]
		\fontsize{8.5}{10.5}
		Define $\pGame_\cGame = (\places_\psys, \places_\penv, \transitions, \flow, \init, \mathcal{B})$ where
		\def\arraystretch{1.2}
		\begin{itemize}
			\item $\places_\psys = \bigcup_{p \in  \processes} S_p$, \, $\places_\penv =  \{\, (s, A) \,\mid s \in \bigcup_{p \in  \processes} S_p,\; A \subseteq \enabled{s} \cap \Sigma^{\sys} \} \color{gray} \,\cup \, \{\badw_\DL^p \mid p \in \processes\}$
			
			\item $\transitions = \begin{aligned}[t]
					&\{ \, (a, B, \{A_s\}_{s \tuplein B}) \, \mid a \in \Sigma^{\env}  ,\; B \in \domain(\delta_a),\; A_s \subseteq \enabled{s} \cap \Sigma^{\sys}  \} \; \cup &\text{\footnotesize\textbf{\textsf{(1)}}}\\
					&\{ \, (a, B, \{A_s\}_{s \tuplein B}) \, \mid a \in \Sigma^{\sys},\; B \in \domain(\delta_a), \; A_s \subseteq \enabled{s} \cap \Sigma^{\sys},\; a \in A_s \} \; \cup &\text{\footnotesize\textbf{\textsf{(2)}}}\\
					&\{\,\tau_{(s, A)} \,\mid s \in \textstyle\bigcup\nolimits_{p \in  \processes} S_p, \; A \subseteq \enabled{s} \cap \Sigma^{\sys} \} \color{gray}\; \cup \;\{t_\DL^M \mid M \in \mathfrak{D}_\DL \} &\text{\footnotesize\textbf{\textsf{(3), \textcolor{gray}{(7)}}}}
				\end{aligned}$
			\item $\flow = \begin{aligned}[t]
				&\{ \, \big( \, (s, A) \, , \,(a, B, \{A_s\}_{s \tuplein B}) \,\big) \mid s \tuplein B,\; A_s = A  \} \; \cup \hspace{4.573cm} &\text{\footnotesize\textbf{\textsf{(4)}}} \\ 
				&\{ \, \big(\,(a, B, \{A_s\}_{s \tuplein B})\,,\, s' \,\big) \mid s' \tuplein \delta_a(B) \} \; \cup &\text{\footnotesize\textbf{\textsf{(5)}}} \\
				&\{\,  \big(s, \tau_{(s, A)}\big) \} \; \cup \; \{ \big(\tau_{(s, A)}, (s, A) \big)\, \} \color{gray} \;\cup &\text{\footnotesize\color{black}\textbf{\textsf{(6)}}}\\
				&\color{gray}\{ \,\big(q, t_\DL^M\big)\, \mid q \in M \} \; \cup \; \{\, \big(t_\DL^M, \badw_\DL^p\big)\, \mid p \in \processes \} &\text{\footnotesize\textbf{\textsf{\textcolor{gray}{(8)}}}}
			\end{aligned}$ 
			
			\item $\init = s_{in}^\AA$ and $\mathcal{B} = \bigcup_{p \in  \processes} \mathcal{B}_p \color{gray} \; \cup \; \bigcup_{p \in  \processes} \badw_\DL^p$
		\end{itemize}
	\end{tcolorbox}
	\caption{We give the construction of the translated Petri game $\pGame_\cGame$ for a control game $\CG = (\AA, \Sigma^{\sys}, \Sigma^{\env}, \{\bad_p\}_{p \in \processes})$ where $\AA = (\{S_p\}_{p \in \processes}, s_{in}^\AA, \{\delta_a\}_{a \in \Sigma})$. The initial state $s_{in}^\AA$ is viewed as a set.
	The gray parts penalize the artificial deadlocks, i.e., all markings in $\mathfrak{D}_\DL$.
		}
	\label{fig:secReduction}
\end{figure}

We fix a control game $\CG = (\AA, \Sigma^{\sys}, \Sigma^{\env}, \{\bad_p\}_{p \in \processes})$ with safety objective.
The translation to $\pGame_\cGame$  is depicted in \refFig{secReduction}.
We represent each local state $s$ as a system place.
We add environment places $(s, A)$, which encode every possible commitment set of actions that can be allowed by a controller ($A \subseteq \enabled{s} \cap \Sigma^{\sys}$).
From each system place, the player can move to places for the commitment sets using a $\tau_{(s, A)}$-transition \textbf{\textsf{(3, 6)}}.
Each action $a$ in $\cGame$ can occur from different configurations of the processes in $\dom(a)$, i.e., all states in $\domain(\delta_a)$, whereas in Petri games transitions fire from fixed preconditions. 
We want to represent $a$ as a transition that fires from places representing commitment sets that correspond to configurations from which $a$ can occur in $\cGame$.
We hence duplicate $a$ into multiple transitions to account for every configuration in $\domain(\delta_a)$ and for every combination of commitment sets.
Transitions have the form $(a, B, \{A_s\}_{s \tuplein B})$ where $a$ is the action in the control game, $B \in \domain(\delta_a)$ is the configuration from which $a$ can fire, and $\{A_s\}_{s \tuplein B}$ are the involved commitment sets.
If action $a$ is uncontrollable the corresponding transitions are added independently of the commitment sets \textbf{\textsf{(1)}}. 
If $a$ is controllable a transition is only added if $a$ is in the commitment sets of all involved players, i.e., $a \in A_s$ for every $s \tuplein B$ \textbf{\textsf{(2)}}. 
If $(a, B, \{A_s\}_{s \tuplein B})$ is added it fires from precisely the precondition that is encoded in it, i.e., the places $(s, A)$ where $s \in B$ and $A_s = A$, and moves every token to the system place that corresponds to the resulting local state when firing $a$ in $\cGame$ \textbf{\textsf{(4, 5)}}. 
A strategy can restrict controllable actions by moving to an appropriate commitment set but cannot forbid uncontrollable ones, since they can occur from every combination of commitment sets.
If a system player decides to refuse any commitment set it could prohibit transitions that correspond to uncontrollable actions.
In \refSection{enforecommitment}, we show how to force the system to always choose a commitment set.

In safety games, every winning strategy must avoid deadlocks. 
By introducing explicit commitment sets, we add \emph{artificial} deadlocks, i.e., configurations that are deadlocked in $\pGame_\cGame$ but where the corresponding state in $\cGame$ could still act. 
This permits trivial strategies that, e.g., always choose the empty commitment set.
We define $\mathfrak{D}_\DL$ as the set of all reachable markings that are final in $\pGame_\cGame$ but where the corresponding global state in $\cGame$ can still perform an action, i.e., all artificial deadlocks.
Similar to the \Lightning-actions, we introduce $t^M_\DL$-transitions that fire from every marking $M$ in $\mathfrak{D}_\DL$ and move every token to a losing place $\badw_\DL$ \textbf{\textsf{(7, 8)}}.
The mechanism to detect artificial deadlocks is depicted as the gray parts in \refFig{secReduction}.
For a formal construction of the deadlock detection mechanisms, we refer the reader to Appendix~\ref{sec:appAD}.

Figure \ref{fig:redAAtoPG} depicts an example translation. 
The system cannot win this game: The uncontrollable action $b$ can always happen, independent of the commitment set for place $A$. If one of the two tokens refuses $c$ (moves to a blue place) a (losing) transition $t_\DL$ can fire.

\begin{figure}[t!]
	\begin{subfigure}[t]{0.3\textwidth}
		\begin{center}
			\begin{tikzpicture}[scale=0.8, every label/.append style={font=\tiny}, label distance=-0.5mm]
			\node[aastate, label=east:\tiny$A$] at (0,0) (s11) {};
			\node[aastate, label=west:\tiny$C$] at (-1,0) (s12) {};
			\node[aastate, label=east:\tiny$B$] at (0,-1) (s13) {};
			\node[specialAA, label=west:\tiny$D$] at (-1,-1) (s14) {};
			
			\node[aastate, label=east:\tiny$E$] at (2,0) (s21) {};
			\node[aastate, label=east:\tiny$F$] at (2,-1) (s22) {};
			
			\draw[arrow] (s11)+(0.35, 0.35) to (s11);
			\draw[arrow] (s11) to node[above=-0.5mm] {\tiny$b$} (s12);
			\draw[arrow, densely dotted] (s11) to node[right=-0.5mm] {\tiny$a$} (s13);
			\draw[arrow, densely dotted] (s12) to node[left=-0.5mm] {\tiny$c$} (s14);
			
			\draw[arrow,loop below] (s13) to node[below=-0.5mm] {\tiny$d$} (s13);
			
			\draw[arrow] (s21)+(0.35, 0.35) to (s21);
			\draw[arrow, densely dotted] (s21) to node[right=-0.5mm] {\tiny$c$} (s22);
			
			\draw[arrow,loop left] (s21) to node[left=-0.5mm] {\tiny$d$} (s21);
			\end{tikzpicture}
		\end{center}
		\subcaption{}
	\end{subfigure}
	\begin{subfigure}[t]{0.7\textwidth}
		\begin{center}
			\begin{tikzpicture}[scale=0.75, every label/.append style={font=\fontsize{4}{4}}, label distance=-0.8mm]
			\node[sysplace,label=north:$A$] at (0,0) (p11){};
			\node[envplace,label=south:{$(A, \emptyset)$}] at (1.5,1) (p12){};
			\node[envplace,label=north:{$(A, \{a\})$}] at (1.5,-1) (p13){};
			
			\node[sysplace,label=225:$C$] at (3,1) (p14){};
			\node[sysplace,label=north:$B$] at (3,-1) (p15){};
			\node[envplace,label=south:{$(B, \emptyset)$}] at (4.5,-1) (pt1){};
			
			\node[envplace,label=south:{$(C, \{c\})$}] at (4.5,1) (p16){};
			\node[envplace,draw=blue,label={[label distance=-1.1mm] east:{$(C, \emptyset)$}}] at (4.5,0) (p17){};
			
			\node[sysplace, specialSys,label=west:$D$] at (5.75,0.6) (p18){};

			\node[transition] at (0.75, 1) (t11){};
			\node[transition] at (0.75, -1) (t12){};
			
			\node[transition,label={[label distance=0mm] north:{$(b, \langle A \rangle, \{ \emptyset\} \})$}}] at (2.25, 1) (t13){};
			\node[transition,label=left:{$(b, \langle A \rangle, \{ \{a\} \})$}] at (3, 0) (t14){};
			
			\node[transition,label={[label distance=0mm] south:{$(a, \langle A \rangle, \{ \{a\} \})$}}] at (2.25, -1) (t15){};
			
			\node[transition] at (3.75, 1) (t16){};
			\node[transition] at (3.75, 0) (t17){};
			
			\node[transition] at (3.75, -1) (tt1){};
			
			\draw[arrow] (p11) to (t11);
			\draw[arrow] (p11) to (t12);
			
			\draw[arrow] (t11) to (p12);
			\draw[arrow] (t12) to (p13);
			
			\draw[arrow] (p12) to (t13);
			
			\draw[arrow] (p13) to (t14);
			\draw[arrow] (p13) to (t15);
			
			\draw[arrow] (t13) to (p14);
			\draw[arrow] (t14) to (p14);
			
			\draw[arrow] (t15) to (p15);
			
			\draw[arrow] (p14) to (t16);
			\draw[arrow] (p14) to (t17);
			
			\draw[arrow] (t16) to (p16);
			\draw[arrow] (t17) to (p17);
			
			\node[token] at (0,0) (){};
			
			\draw[arrow] (p15) to (tt1);
			\draw[arrow] (tt1) to (pt1);
			
			\node[sysplace,label=north:$E$] at (11, 0) (p21){};
			
			\node[envplace,draw=blue,label=south:{$(E, \emptyset)$}] at (9.5, -1) (p22){};
			\node[envplace,label=north:{$(E, \{c\})$}] at (9.5, 1) (p23){};
			
			\node[sysplace,label=east:$F$] at (8.25, 0.6) (p24){};
			
			\node[transition] at (10.25, -1) (t21){};
			\node[transition] at (10.25, 1) (t22){};
			
			\node[token] at (11,0) (){};
			
			\draw[arrow] (p21) to (t21);
			\draw[arrow] (p21) to (t22);
			
			\draw[arrow] (t21) to (p22);
			\draw[arrow] (t22) to (p23);

			\node[transition,label={[label distance=0mm]north:{$(c, \langle C, E \rangle, \{ \{c\}, \{c\} \})$}}] at (7, 1) (tcom){};
			
			\node[transition,label={[label distance=-0.8mm, xshift=7mm]south:{$(d, \langle B, E \rangle, \{ \emptyset, \{c\} \})$}}] at (7, 0) (tcom1){};
			
			\node[transition,label={[label distance=0mm]south:{$(d, \langle B, E \rangle, \{ \emptyset, \emptyset \})$}}] at (7, -1) (tcom2){};
			
			\draw[arrow] (p23) to (tcom);
			\draw[arrow] (p16) to (tcom);
			
			\draw[arrow] (tcom) to (p24);
			\draw[arrow] (tcom) to (p18);

			\draw[arrow] (pt1) to[out=30,in=220] (tcom1);
			\draw[arrow] (pt1) to (tcom2);
			\draw[arrow] (tcom1) to[out=190,in=25] (p15);
			\draw[arrow] (tcom2) to[out=150,in=25] (p15);
			
			\draw[arrow] (p23) to[out=230,in=11] (tcom1);
			\draw[arrow] (p22) to (tcom2);
			\draw[arrow] (tcom1) to (p21);
			\draw[arrow] (tcom2) to[out=30,in=210] (p21);
			\end{tikzpicture}
		\end{center}
		\subcaption{}
	\end{subfigure}
	
	\caption{Control game $\cGame$ (a) and translated Petri game $\pGame_\cGame$ (b) are given. Commitment sets without outgoing transitions are omitted. The set of artificial deadlocks $\mathfrak{D}_\DL$ comprises every final marking that contains at least one blue place. The resulting $t_\DL^M$-transitions are omitted. 
	}
	\label{fig:redAAtoPG}
\end{figure}

\subsection{Correctness}

We show strategy-equivalence of $\cGame$ and $\pGame_\cGame$ by translating strategies (that always commit) and controllers between both of them. 
We observe that each token moves on the local states of one process and takes part in precisely the actions of the process.
At every point, a token hence possesses the same local information as the process. 
A token can restrict the controllable actions using the commitment sets but cannot restrict the uncontrollable ones. 
The token therefore has the same possibilities as the process counterpart.

\noindent\textbf{\textsf{Translating Controllers to Strategies~~}}
Given a winning controller $\varrho$, we incrementally build a (possibly infinite) winning, deterministic strategy $\sigma_\varrho$.
Every system place $q$ in a partially constructed strategy can choose one of the commitment sets. 
$q$ copies $\varrho$ by committing to exactly the actions that the process it belongs to has allowed. 
The commitment sets can only restrict controllable actions, as the process can. 
Hence, $\sigma_\varrho$ allows the same behavior as~$\varrho$.

\noindent\textbf{\textsf{Translating Strategies to Controllers~~}}
Given a winning, deterministic strategy $\sigma$, we construct a winning controller $\varrho_\sigma$.
A process $p$ that resides on a local state $s$ can decide which of the controllable actions should be allowed. 
Every token in $\sigma$ can decide for a commitment set and therefore implicitly chooses which controllable actions should be enabled. 
$p$ allows exactly the actions that $\sigma$ chooses as a commitment set. 
Both can only restrict controllable actions and, by copying, $\varrho_\sigma$ achieves the same behavior as $\sigma$.

For a formal translation, we refer the reader to Appendix \ref{sec:appsecondDir}. 
Under the assumption that any strategy for $\pGame_\cGame$ always commits (we will see that this is valid in \refSection{enforecommitment}), we can prove:

\begin{theorem}
	\label{theo:theor4}
	$\cGame$ and $\pGame_\cGame$ are strategy-equivalent.
\end{theorem}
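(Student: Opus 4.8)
The plan is to establish strategy-equivalence by exhibiting, for every winning controller $\varrho$ of $\cGame$, a bisimilar winning strategy $\sigma_\varrho$ for $\pGame_\cGame$, and conversely, for every winning deterministic strategy $\sigma$ of $\pGame_\cGame$, a bisimilar winning controller $\varrho_\sigma$. Both directions rest on the structural correspondence already sketched: in $\pGame_\cGame$ each token lives on the system places $S_p$ of a single process $p$ and participates in exactly the action-transitions labelled by $\Sigma_p$, so the causal past of that token mirrors the local view $\view_p$, and a commitment place is the only point at which control is exercised. Throughout I take as given (justified in \refSection{enforecommitment}) that every strategy always selects a commitment set, so that the artificial deadlocks in $\mathfrak{D}_\DL$ never arise.

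For the direction from controllers to strategies, I would build $\sigma_\varrho$ incrementally along its branching process: whenever a token of process $p$ reaches a system place $s$ after a partial play $u$, the strategy fires the single transition $\tau_{(s,A)}$ with $A = f_p(\view_p(u)) \cap \enabled{s} \cap \Sigma^\sys$ and refuses every other $\tau_{(s,A')}$. Since $s \in \places_\psys$ this refusal is justified, and since the action-transitions $(a,B,\{A_s\}_{s \in B})$ consume only from the environment commitment places $(s,A_s)$, they can never be refused --- matching exactly the fact that a committed set of controllable actions can no longer be blocked. Because a controllable transition $(a,B,\{A_s\}_{s \in B})$ exists only when $a \in A_s$ for all $s \in B$, refusing $a$ amounts precisely to committing to a set that omits $a$, so $\sigma_\varrho$ enables the same actions as $\varrho$. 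I would then define $\relation$ by relating a marking $M \in \reach(\pNet^{\sigma_\varrho})$ to the play $u$ such that every token sits either on $s = \statep{p}{u}$ or on a commitment place $(\statep{p}{u}, A)$ chosen by $\varrho$, and check the four clauses: firing an action-transition matches the observable $a$ (clauses 1 and 3, with the $\tau^*$ padding absorbing the commitment steps), firing $\tau_{(s,A)}$ matches the empty control move (clause 2), and clause 4 is vacuous since $\cGame$ has no internal actions. Winning follows because the bisimulation forbids reaching any $s \in \bad_p$ (as $\varrho$ is safety-winning), the always-commit property keeps all markings out of $\mathfrak{D}_\DL$ so no $t_\DL^M$ fires and $\badw_\DL^p$ stays unreachable, and deadlock-avoidance of $\sigma_\varrho$ follows from that of $\varrho$ together with always committing.

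Conversely, given a winning deterministic $\sigma$, I would read off $\varrho_\sigma = \{f_p\}_p$ by setting $f_p(v)$ to the commitment set $A$ that $\sigma$ selects for the token of $p$ at the unique causal node whose history equals the view $v$. The crucial step is \emph{well-definedness}: I must show the chosen commitment depends only on $\view_p(u)$ and not on the full play $u$. This follows because the causal past of a system place $s$ in the branching process of $\sigma$ is exactly the $p$-view, so any two plays $u, u'$ with $\view_p(u) = \view_p(u')$ drive $p$'s token to the same node, and justified refusal forces the enabled $\tau$-transition (hence the commitment) to be a function of that causal past; determinism of $\sigma$ guarantees a single commitment. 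The remaining checks --- that $\varrho_\sigma$ is a legal controller restricting only controllable actions, that it is deadlock-avoiding and never visits a losing state, and that the inverse relation is again a bisimulation --- proceed symmetrically to the first direction.

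The main obstacle is the causal correspondence itself: proving rigorously that the causality and conflict order of the unfolding of $\pGame_\cGame$, restricted to a single token, faithfully reproduces the trace equivalence $\sim_I$ and the view operator $\view_p$ of $\cGame$. Once this is pinned down, well-definedness of $f_p$, the justified-refusal verification, and the clause-by-clause bisimulation arguments become bookkeeping; the chief residual complications are the intermediate commitment states $(s,A)$, which must be absorbed by the $\tau^*$ segments of the bisimulation clauses, and the deadlock machinery, which is precisely why the always-commit assumption is invoked to eliminate the artificial deadlocks of $\mathfrak{D}_\DL$ and keep every $\badw_\DL^p$ unreachable.
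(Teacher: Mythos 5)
Your architecture coincides with the paper's: the same incremental strategy construction that copies $\varrho$'s commitment chosen on the view recovered from the token's causal past, the same read-off of $f_p$ from the node of $\pNet^\sigma$ reached by the $p$-view, and the same pivotal lemma — that the causal past of the token of process $p$, after erasing $\tau$'s and mapping $(a, B, \{A_s\}_{s \tuplein B})$ back to $a$, equals $\view_p(u)$ — which is exactly \refLemma{identicalView}, proved in the paper via the labelled-poset characterization of views. However, your treatment of the artificial deadlocks contains a genuine error: you claim that the always-commit assumption ``keeps all markings out of $\mathfrak{D}_\DL$ so no $t_\DL^M$ fires.'' By definition, every marking in $\mathfrak{D}_\DL$ satisfies $M \subseteq \places_\penv$, i.e., consists entirely of \emph{chosen} commitment places; the paradigmatic artificial deadlock — committing everywhere to the empty set — always commits. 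The two mechanisms are independent: always-commit (enforced by the challenger of \refSection{enforecommitment}) handles tokens that \emph{refuse} to commit, whereas $\mathfrak{D}_\DL$ penalizes over-restrictive commitments that jointly block the game while $\cGame$ could still act.

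The correct wiring, which your plan misses in both directions, is as follows. In the controller-to-strategy direction, no $t_\DL^M$ is enabled in $\sigma_\varrho$ because $\varrho$ is \emph{deadlock-avoiding}: a reachable final marking $M$ with $\zeta(\lambda[M])$ non-final would, via the bisimulation, yield a maximal $\varrho$-play ending in a non-final global state (\refLemma{AtoPdl1}). In the strategy-to-controller direction, deadlock-avoidance of $\varrho_\sigma$ is not ``symmetric bookkeeping'' but is derived from $\sigma$ being winning and hence avoiding all $t_\DL^M$-transitions — these fire from environment-only preconditions, so the strategy cannot refuse them and must avoid their precondition markings (\refLemma{AtoPdl2}). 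Without this, the equivalence fails outright: the always-empty-commitment strategy commits at every step, would be deadlock-avoiding and trivially safety-winning in $\pGame_\cGame$ stripped of the $t_\DL^M$-machinery, yet no deadlock-avoiding controller is bisimilar to it. A secondary repair: the relation you sketch for the first direction, pairing $M$ with any $u$ whose global state matches the token positions, is too coarse — the commitments in $M$ are functions of causal pasts, so the relation must be history-based, $\square(\pastt{M}) = u$ as equality of labelled posets (as in the paper); otherwise the clause checks can pair $M$ with a play whose views differ from those that determined $M$'s commitments.
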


\subsection{Enforcing Commitment}
\label{sec:enforecommitment}

Our construction assumes wining strategies to always choose a commitment set. 
We can modify $\pGame_\cGame$ such that every non-committing strategy cannot win.
The insight is to use the deadlock-avoidance of winning strategies.
Deadlocks define a \emph{global} situation of the game. 
To enforce commitment, we require \emph{local} deadlock-avoidance in the sense that every token has to choose a commitment set.
This is not prevented by global deadlock-avoidance, where, e.g., a single player being able to play locally enables every other player to refuse to commit without being deadlocked.
We reduce local to global deadlocks by adding transitions to challenge the players to have reached a local deadlock. 
Using challenge transitions, every player currently residing on a place that corresponds to a chosen commitment set moves to a terminating place. 
Every player that has chosen commitment sets can terminate, resulting in the players that are locally deadlocked to cause a global deadlock. Although the challenge is always possible, the scheduler decides the point of challenge.
The game with the added challenger has a winning strategy iff $\pGame_\cGame$ has a winning strategy that always commits.
For a formal construction, we refer the reader to Appendix \ref{sec:appEC}.

\subsection{Lower Bounds}

We can provide a family of control games where every strategy-equivalent Petri game must be of exponential size. 
In control games, both controllable and uncontrollable actions can occur from the same state. 
In Petri games, a given place can either restrict all transitions (system place) or none. 
A control game where both actions types are possible already results in Petri games of exponential size.
We assume the absence of infinite $\tau$-sequences.

\begin{theorem}
	\label{theo:theor5}
	There is a family of control games such that every strategy-equivalent Petri game (with an equal number of players) must have at least $\Omega(d^n)$ places for $d > 1$.
\end{theorem}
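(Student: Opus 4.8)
The plan is to exhibit a concrete family of control games $\{\cGame_n\}_{n \in \mathbb{N}}$, parameterized by $n$, in which a single action is shared among $n+1$ processes but the controllability structure forces any strategy-equivalent Petri game to distinguish exponentially many local configurations. Following the intuition stated before \refTheo{theor5}, the key feature to engineer is a \emph{state that admits both a controllable and an uncontrollable action simultaneously}. Concretely, I would design each process $p_i$ (for $1 \le i \le n$) so that from some local state it can either take a controllable action $c_i$ or an uncontrollable action $u_i$, and then let all $n$ processes synchronize with a distinguished referee process on a shared action $a$ whose enabledness depends on the full vector of choices $(x_1, \dots, x_n) \in \{c, u\}^n$. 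The asymmetry exploited is exactly the one the introduction highlights: in $\cGame_n$ the $n$ processes each hold independent, partial control over whether the synchronization $a$ is reachable, which is cheap to encode with $n$ local automata of constant size.

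First I would make precise what ``strategy-equivalent Petri game with an equal number of players'' must look like. By \refDef{} of strategy-equivalence (the bisimulation notion introduced in \refSection{equivalence}), every winning controller $\varrho$ for $\cGame_n$ must have a bisimilar winning strategy $\sigma_\varrho$ for the candidate Petri game $\pGame$, and conversely. Since the games share the observable actions, a marking of $\pGame$ that is $\relation$-related to a play $u$ must enable exactly the observable transitions that $u$ can be extended by. The core combinatorial step is then to argue that the reachable observable behaviors immediately preceding the synchronization $a$ encode $2^n$ mutually distinguishable ``commitment profiles'': each subset $S \subseteq \{1, \dots, n\}$ of processes that have committed to their controllable branch yields a distinct set of allowed continuations, and a winning Petri-game strategy must be in a different reachable marking for each such profile. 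I would set this up so that two distinct profiles are separated by some later observable action being possible in one but refusable in the other.

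The second step is the place-counting argument. Assuming $\pGame$ has only $k$ players (tokens), I would track which token in $\pGame$ carries the information distinguishing the profiles. Because a Petri game place either controls all its outgoing transitions (system place) or none (environment place), and because the synchronization $a$ can be controlled by some but not all processes in $\cGame_n$, the responsibility for refusing $a$ cannot be split across tokens the way it is split across processes in $\cGame_n$. The pigeonhole-style conclusion is that a single token must internally distinguish all $2^n$ (or, more carefully, $\Omega(d^n)$ for a suitable constant $d>1$) profiles, and distinct reachable markings that a token can occupy correspond to distinct places in the unfolding and hence, by finiteness of the winning strategy, to distinct places of $\pGame$ itself. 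This forces $|\places^\pGame| \ge \Omega(d^n)$.

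The main obstacle I expect is the second step: rigorously ruling out that the $2^n$ profiles are distributed across multiple tokens so that each token only needs polynomially many places. I would have to use the justified-refusal constraint on Petri-game strategies together with the fact that, in $\cGame_n$, the controllability of $a$ is a genuinely \emph{joint} and non-uniform property — no single process and no clean partition of processes controls it in the all-or-nothing way Petri places demand. Making this ``cannot be factored'' argument airtight is where the careful reasoning mentioned in the paper lives; I would formalize it by showing that any attempted factorization produces either a winning controller in $\cGame_n$ with no bisimilar winning Petri strategy, or a spurious winning Petri strategy with no bisimilar controller, contradicting strategy-equivalence in one of the two directions. The remaining bookkeeping (verifying both games are well-formed, checking the bisimulation conditions on the $\tau$-moves introduced by commitment, and confirming the number of players matches) I expect to be routine once the separating family and the factorization obstruction are pinned down.
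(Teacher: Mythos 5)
Your construction fails at exactly the point you flag as the ``main obstacle,'' and the obstacle is not closable for the family you propose. Because you distribute the choice structure across $n{+}1$ processes, each with constantly many local states and at most a constant number of controllable actions enabled per state, the paper's \emph{own forward translation} $\pGame_\cGame$ (\refSection{AAtoPG}) already yields a strategy-equivalent Petri game with an equal number of players and only \emph{polynomially} many places: its places are the local states $s$ together with commitment sets $(s,A)$ for $A \subseteq \enabled{s} \cap \Sigma^{\sys}$, and in your game every such set of subsets has constant size. The exponential blow-up lands entirely in the transitions $(a, B, \{A_s\}_{s \tuplein B})$, which \refTheo{theor5} does not count. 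So no refinement of your ``cannot be factored'' argument can succeed: a polynomial-place witness exists, and indeed your $2^n$ commitment profiles only force exponentially many reachable \emph{markings}, which with $n{+}1$ tokens factor as tuples over polynomially many places. Note also that joint controllability of a shared action by some-but-not-all players is the mechanism of the paper's lower bound in the \emph{other} direction (\refTheo{theor3}); for this direction it is the wrong lever.

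The paper instead concentrates everything on a \emph{single} state of a \emph{single} process: one state with $n$ controllable actions $a_1,\dots,a_n$ and one uncontrollable action $x$ side by side (\refFig{AtoC_LB}), under the explicit extra assumption that winning strategies admit no infinite $\tau$-sequences — an assumption you omit, and without which the bound is false (the paper notes a polynomial-size strategy-equivalent game exists otherwise). The key step (\refLemma{CtoP_lb1}) is: for each nonempty $B \subseteq \{a_1,\dots,a_n\}$, take the winning controller allowing exactly $B$ and a bisimilar winning strategy, fire $\tau$-transitions maximally to reach a singleton marking $\{q\}$ with $\post{\sigma_\varrho}{q} = B \cup \{x\}$; then $\lambda(q)$ must be an \emph{environment} place, because if it were a system place, justified refusal would permit deleting the $x$-branch, producing a winning strategy whose bisimilar controller blocks the uncontrollable $x$ — a contradiction. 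Since the game has one player and environment places restrict nothing, $\post{\pGame}{\lambda(q)} = B \cup \{x\}$ exactly, so distinct sets $B$ yield distinct places $q_B$, giving $2^n - 1 = \Omega(2^n)$ places outright, with no factorization issue to rule out. This single-state concentration of many controllable actions next to one uncontrollable one is the idea your proposal is missing.
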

The proof can be found in Appendix~\ref{sec:appLB2}.

\section{New Decidable Classes}
\label{sec:decidableClasses}

We exemplarily show one transferrable class of decidability for both control games and Petri games to highlight the applicability of our translations.

\noindent\textbf{\textsf{New Decidable Control Games~~}}
A process in a control game is an \emph{environment process} if all its action are uncontrollable. A \emph{system process} is one that is not an environment process. We can modify our second translation by not adding system places if there are no outgoing controllable actions. Therefore, environment processes do not add system places to the Petri game and we can use the results from \cite{DBLP:conf/fsttcs/FinkbeinerG17}.
\begin{corollary}
	Control games with safety objectives and one system process are decidable.
\end{corollary}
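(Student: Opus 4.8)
The plan is to leverage \refTheo{theor4}, which establishes strategy-equivalence between a control game $\cGame$ and its translated Petri game $\pGame_\cGame$, together with the decidability result for Petri games with a single system player from~\cite{DBLP:conf/fsttcs/FinkbeinerG17}. Since strategy-equivalence implies winning-equivalence, the system has a winning controller for $\cGame$ if and only if it has a winning strategy for $\pGame_\cGame$; deciding the latter suffices. The key observation is that the number of \emph{system} players in $\pGame_\cGame$ is controlled by the number of \emph{system} processes in $\cGame$, provided we avoid introducing spurious system places for environment processes.

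First I would recall the construction of \refFig{secReduction}: each local state $s \in \bigcup_{p} S_p$ is represented as a system place, and the commitment-set places $(s,A)$ are environment places. The crucial modification, already hinted at in the corollary's preamble, is that for an \emph{environment process} $p$ (one all of whose actions lie in $\Sigma^\env$), every state $s \in S_p$ has $\enabled{s} \cap \Sigma^\sys = \emptyset$, so the only possible commitment set is $A = \emptyset$ and the $\tau_{(s,\emptyset)}$-transition together with the system place $s$ carry no genuine control. Thus for environment processes I would \emph{omit} the system place entirely and route the token directly through the (unique, trivial) environment place, so that the slice corresponding to an environment process contains only environment places. This yields a modified translation $\pGame_\cGame'$ in which the number of system players equals the number of \emph{system} processes of $\cGame$.

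The main steps are then: (1) verify that this omission preserves strategy-equivalence, i.e.\ that the modified translation still satisfies the bisimulation conditions of \refTheo{theor4}, since the removed system places never offered any controllable choice and a strategy on them is forced; (2) confirm that the token belonging to an environment process always resides on environment places, so it counts as an environment player and never contributes to the system-player count; (3) apply the \textsc{EXPTIME}-completeness result for bounded Petri games with one system player from~\cite{DBLP:conf/fsttcs/FinkbeinerG17}, which requires the resulting net to be bounded (guaranteed here, as our translation yields $1$-bounded concurrency-preserving nets) and to have exactly one system player, matching our single system process; (4) also incorporate the enforcing-commitment gadget of \refSection{enforecommitment} so that the decidability hypothesis ``strategies always commit'' is discharged, checking that the challenge transitions do not introduce additional system players (the challenger places can be made environment places, or system places belonging to the single existing system process).

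The main obstacle I expect is step (1) combined with step (4): ensuring that neither the omission of trivial system places for environment processes nor the addition of the commitment-enforcing challenger mechanism inadvertently introduces a second system player or breaks the bisimulation. The challenger construction of \refSection{enforecommitment} adds terminating places and challenge transitions for \emph{every} player, including environment processes, so I would need to argue carefully that these auxiliary places can be assigned to the environment (or folded into the single system player) without giving environment processes any controllable behavior. Once the bookkeeping confirms a single system player and winning-equivalence is inherited from \refTheo{theor4}, decidability follows immediately from~\cite{DBLP:conf/fsttcs/FinkbeinerG17}.
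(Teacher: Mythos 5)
Your proposal is correct and takes essentially the same route as the paper: the paper likewise modifies the second translation by not adding system places for states without outgoing controllable actions, so that environment processes contribute only environment players, and then combines the winning-equivalence from \refTheo{theor4} with the single-system-player decidability result of \cite{DBLP:conf/fsttcs/FinkbeinerG17}. Your extra checks---that the removed system places carried no genuine choice, and that the commitment-enforcing challenger and deadlock-detection nodes ($\top_p$, $\badw_\DL^p$) are environment places and hence add no system players---are exactly the bookkeeping the paper leaves implicit.
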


\noindent\textbf{\textsf{New Decidable Petri Games~~}}
Given a Petri game $\pGame$ and a distribution into slices (or SNs) $\{\slice\}_{\slice \in \slices}$, we analyze the communication structure between the slices by building the undirected graph $(V, E)$ where $V = \slices$ and $E = \{ (\slice_1, \slice_2) \mid \transitions^{\slice_1} \cap \transitions^{\slice_2} \neq \emptyset \}$.
$(V, E)$ is isomorphic to the \emph{communication architecture} of the constructed asynchronous automaton $\cGame_\pGame$ (as introduced in \cite{DBLP:conf/icalp/GenestGMW13}).
We define $\pGame_{\text{\Cancer}}$ as every Petri game that has a distribution $\{\slice\}_{\slice \in \slices}$ where ${(V, E)}$ is \emph{acyclic}. 
We can show that such distributions are hard to find. From \cite{DBLP:conf/icalp/GenestGMW13}, we obtain decidability.
\begin{lemma}
	Deciding whether a Petri net has an acyclic slice-distribution is \textsc{NP}-complete. 
\end{lemma}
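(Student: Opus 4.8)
The plan is to prove both membership in \textsc{NP} and \textsc{NP}-hardness. For membership, note that a slice-distribution is a succinct object: since $\biguplus_{\slice \in \slices}\places^\slice$ is a partition of $\places^\pNet$, there are at most $|\places^\pNet|$ slices, and the whole distribution is determined by a single function assigning to each place the slice it belongs to (each $\slice$ then inherits, by condition \textbf{\textsf{(4)}}, all outgoing transitions of its places, and by conditions \textbf{\textsf{(2)}}, \textbf{\textsf{(3)}} exactly one input and one output place of every transition it contains). I would guess such an assignment and verify in polynomial time that (i) each resulting subnet satisfies the four slice conditions; (ii) their parallel composition $\parallel_{\slice \in \slices}\slice$ reproduces $\pNet$, i.e.\ for every transition $t$ the slices occupying $\pre{}{t}$ coincide with those occupying $\post{}{t}$; and (iii) the communication graph $(V,E)$ with $V = \slices$ and $E = \{(\slice_1,\slice_2) \mid \transitions^{\slice_1} \cap \transitions^{\slice_2} \neq \emptyset\}$ is acyclic, which a single depth-first search decides. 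Hence the problem lies in \textsc{NP}.

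For hardness, the key observation is that the freedom in a slice-distribution is purely local and combinatorial. Condition \textbf{\textsf{(3)}} forces the two input places of a transition $t$ with $|\pre{}{t}| = |\post{}{t}| = 2$ to lie in two distinct slices, each of which must also contain exactly one output place; the only remaining choice is which input thread continues to which output place. Thus every binary transition behaves as a \emph{switch} with two settings, and a slice-distribution is nothing but a joint setting of all switches, routing the $|\init|$ tokens into threads and thereby fixing, for each binary transition, the unordered pair of slices it makes adjacent in $(V,E)$. Making $(V,E)$ acyclic is therefore a global routing-to-avoid-cycles problem. Moreover, any transition with $|\pre{}{t}| \geq 3$ forces a clique of size $\geq 3$ among distinct slices and hence a cycle, so hard instances can be restricted to unary and binary transitions.

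I would then reduce from \textsc{3Sat}. The idea is to build a net whose threads, under a ``default'' routing, already form a fixed forest (a path or star of slices). Each variable $x_i$ gets a switch gadget whose two settings encode $x_i$ being true or false and route a dedicated literal thread. Each clause gets a gadget through which a clause thread is passed across the three relevant literal switches, wired so that the clause thread returns to its originating slice --- closing a cycle in $(V,E)$ --- exactly when all three literals are set to false. Arranging the construction so that the clause gadgets are the only edges capable of closing a cycle yields the equivalence: $(V,E)$ is a forest iff every clause is satisfied, and switch settings are in bijection with truth assignments. One must also check that the constructed net is concurrency-preserving, $1$-bounded, and admits a slice-distribution at all, so that the ``yes'' instances are not vacuous.

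I expect the main obstacle to be the gadget design itself. The tension is that the construction needs enough routing freedom to encode the $2^n$ assignments, yet enough rigidity everywhere else that the \emph{only} way to create a cycle is through an unsatisfied clause. Concretely, the hard parts will be (a) guaranteeing global sliceability, i.e.\ that the per-transition constraint ``slices on $\pre{}{t}$ equal slices on $\post{}{t}$'' is jointly satisfiable across all gadgets, and (b) proving that no \emph{spurious} cycle arises from the base wiring or from interactions between gadgets, independently of the clauses. Establishing that every cycle in $(V,E)$ must pass through exactly one clause gadget, and that such a cycle is unavoidable precisely when the clause is violated, is where the careful structural argument is required.
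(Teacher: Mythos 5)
Your \textsc{NP}-membership argument is correct and matches the paper's: a slice-distribution is a polynomial-size certificate (an assignment of the places to at most $\abs{\init}$ slices), and both validity of the slice conditions and acyclicity of the communication graph $(V,E)$ are checkable in polynomial time. The hardness half, however, is a genuine gap: you announce a reduction from 3-SAT but never construct it. You describe variable gadgets as ``switches'' and clause gadgets through which a clause thread is routed so that a cycle closes exactly when all three literals are false, and then you yourself list as open obstacles precisely the two items that would constitute the proof --- (a) joint satisfiability of the per-transition slicing constraints across gadgets, and (b) absence of spurious cycles. In a reduction-based hardness proof, the explicit gadget construction together with both directions of the equivalence \emph{is} the proof; what you have is a plan, and it is far from immediate that it can be executed as stated. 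Your mechanism needs the existence of a cycle-closing edge to depend on the \emph{conjunction} of three independent switch settings, a nonlocal condition; since your own (correct) clique observation pushes you toward transitions of arity at most $2$, each such transition contributes a single edge between the two slices of its pre-places, and making that edge appear only under an unsatisfied clause requires a delicate sequential routing argument you have not given.

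The paper takes a different, more local route worth comparing against. Its workhorse is the slicing condition itself rather than cycle-closing through violated clauses: a triangle gadget (three places pairwise connected by side-condition transitions, one place initially marked) forces two designated places into the \emph{same} slice, since otherwise three distinct slices pairwise share transitions and form a $3$-cycle in $(V,E)$. Two marked places $\top$ and $\bot$ feed, for each variable, a transition with postcondition $\{x_i, \widehat{x_i}\}$, so condition \textbf{\textsf{(3)}} routes each literal place into the slice of $\top$ or of $\bot$ --- this encodes the truth assignment. Each clause becomes a transition whose precondition consists of its three literal places and whose postcondition contains a dedicated place $V_j$; chaining triangle gadgets forces all $V_j$ into the $\top$-slice, and since the slice of $V_j$ must coincide with the slice of one of the clause's precondition places, an acyclic distribution requires every clause to have a literal in the $\top$-slice. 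So where you try to make unsatisfied clauses \emph{create} cycles, the paper makes acyclicity (via the triangle gadget) \emph{force} same-slice constraints and lets the slice conditions on the ternary clause transitions encode clause coverage. Note that your observation about transitions shared by three distinct slices is exactly the constraint any completed construction must negotiate at these ternary clause transitions: analyzing how condition \textbf{\textsf{(3)}} distributes a clause's three literal places over the available slices is where the remaining care lies, both in the paper's argument and in any variant you would build.
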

The proof can be found in Appendix~\ref{sec:acyclicSliceNPcomplete}.
\begin{corollary}
	Petri games in $\pGame_{\text{\Cancer}}$ with reachability objectives are decidable.
\end{corollary}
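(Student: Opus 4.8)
The plan is to reduce solving a Petri game in $\pGame_{\text{\Cancer}}$ to solving a control game over an acyclic communication architecture and then to invoke the decidability result of~\cite{DBLP:conf/icalp/GenestGMW13}. Fix a Petri game $\pGame \in \pGame_{\text{\Cancer}}$ with reachability objective $\win$. By definition of $\pGame_{\text{\Cancer}}$, there is a distribution $\{\slice\}_{\slice \in \slices}$ (into slices or SNs) whose communication graph $(V,E)$ is acyclic. Since the existence of such a distribution is decidable (indeed \textsc{NP}-complete by the preceding lemma), we may assume this witnessing distribution is given, or compute one by search over candidate distributions.

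First I would apply our translation --- \refTheo{theor1} if the witnessing distribution is by slices, or its generalization \refTheo{theor2} for an SN-distribution --- built from the fixed distribution $\{\slice\}_{\slice \in \slices}$, to obtain a control game $\cGame_\pGame$ with an equal number of players that is strategy-equivalent to $\pGame$. Any game admitting an SN-distribution is concurrency-preserving, so \refTheo{theor2} is applicable. Because the translation transforms each slice into a process synchronizing on exactly the transitions the slice takes part in, it preserves the synchronization pattern: the communication architecture of $\cGame_\pGame$ is isomorphic to $(V,E)$, as observed just before the definition of $\pGame_{\text{\Cancer}}$, and hence is acyclic.

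Next I would invoke the non-elementary decidability of control games over acyclic communication architectures from~\cite{DBLP:conf/icalp/GenestGMW13}, applied to $\cGame_\pGame$: it is decidable whether $\cGame_\pGame$ admits a winning controller for the reachability objective $\{\win_\slice\}_{\slice \in \slices}$ that the translation induces from $\win$. Finally, strategy-equivalence (\refTheo{theor1}, \refTheo{theor2}) yields winning-equivalence, i.e.\ $\pGame$ has a winning strategy if and only if $\cGame_\pGame$ has a winning controller. Composing this equivalence with the decision procedure for $\cGame_\pGame$ decides the existence of a winning strategy for $\pGame$, which establishes the corollary.

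The main obstacle I anticipate is bookkeeping around objectives and distributions rather than any deep new argument. One has to check that the reachability winning states $\{\win_\slice\}_{\slice \in \slices}$ produced by the translation fall within the class of (local, $\omega$-regular / reachability) objectives for which~\cite{DBLP:conf/icalp/GenestGMW13} guarantees decidability, and that membership in $\pGame_{\text{\Cancer}}$ indeed hands us a concrete acyclic distribution to feed into the translation. The latter is secured by the \textsc{NP}-completeness lemma, which makes the distribution computable; the former reduces to matching our reachability winning condition to the objective format of control games used in~\cite{DBLP:conf/icalp/GenestGMW13}, where the per-process special states $\{\win_\slice\}$ encode precisely the local reachability condition required.
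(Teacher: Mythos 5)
Your proposal is correct and follows essentially the same route as the paper: translate $\pGame$ via the witnessing acyclic distribution into $\cGame_\pGame$ (\refTheo{theor1}/\refTheo{theor2}), use the stated isomorphism between $(V,E)$ and the communication architecture of $\cGame_\pGame$, and transfer decidability from~\cite{DBLP:conf/icalp/GenestGMW13} back through strategy-equivalence (hence winning-equivalence). Your added bookkeeping --- that the distribution is effectively computable and that the induced local winning states $\{\win_\slice\}_{\slice\in\slices}$ fit the objective format of~\cite{DBLP:conf/icalp/GenestGMW13} --- is exactly what the paper leaves implicit.
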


\section{Conclusion}
\label{sec:conclusion}

We have provided the first formal connection between control games and Petri games by showing that both are equivalent.
This indicates that synthesis models for asynchronous systems with causal memory are stable under the concrete formalisms of system and environment responsibilities for the two most common models. 
Conversely, our lower bounds show an intrinsic difference between control games and Petri games. 
By our translations, existing and future decidability results can be combined and transferred between both game types.
Our translations could be adapted to other winning objectives. 
An interesting direction for future work is to investigate how action-based control games~\cite{muscholl2009look} relate to Petri games and to study unified models that combine features from control games and Petri games.


\bibliography{lipics-v2019-sample-article}

\newpage

\appendix

\section{Preliminaries}
\label{app:background}

For convenience, we overload notation: 
The transitions in a branching process of a Petri game are not the ones from the game but are merely equipped with a $\lambda$-label to them. 
Writing $\fireTranTo{M}{t}{M'}$ for some marking $M$ in a branching process and transitions $t$ in the underlying game is therefore not defined. 
Unless for very specific occasions, we are, however, not interested in the precise transition in a branching process but solely for the label of it. 
In the proofs and notions defined below, we hence always identify transitions in the branching process (strategy) with the original ones. 
$\fireTranTo{M}{t}{M'}$ should therefore be understood as: There is a transition $t'$ in the branching process with $\fireTranTo{M}{t'}{M'}$ and $\lambda(t') = t$. 
Using this notational shortcut, we can, for instance, write $\fireSeq{\pNet^\mathfrak{U}}{\kappa}$ for a sequence $\kappa$ of transitions in the underlying game. 
It should be noted that this notional shortcut is not well-defined for arbitrary Petri nets; there could be multiple equally labelled transitions enabled from the same marking in the branching process. 
For branching processes of safe games (and therefore of sliceable games), however, there is at most one transition with a matching $\lambda$-label enabled. 

\paragraph*{Partially Ordered Sets}
Recall that a partially ordered set (poset) is a pair $(\mathcal{X}, \leq)$ where $\leq$ is a partial order on elements from $\mathcal{X}$. 
We introduce a labelled partially ordered set as a triple $(\mathcal{X}, \leq, \beta)$ where $(\mathcal{X}, \leq)$ is a poset and $\beta: \mathcal{X} \to \mathcal{Y}$ labels the elements from $\mathcal{X}$ in some set $\mathcal{Y}$. 
Two posets $(\mathcal{X}_1, \leq_1)$ and $(\mathcal{X}_2, \leq_2)$ are \emph{isomorphic} if there is a bijection $g$ between $\mathcal{X}_1$ and $\mathcal{X}_2$ such that for all $x, y \in \mathcal{X}_1: x \leq_1 y \Leftrightarrow g(x) \leq_2 g(y)$. In the literature, such a function is referred to as an \emph{order isomorphism}. 
Two labelled posets $(\mathcal{X}_1, \leq_1, \beta_1)$ and $(\mathcal{X}_2, \leq_2, \beta_2)$ that are labelled in the same set $\mathcal{Y}$ are \emph{isomorphic} if there exists an order isomorphism $g$ between $\mathcal{X}_1$ and $\mathcal{X}_2$ where $\forall x \in \mathcal{X}_1: \beta_1(x) = \beta_2(g(x))$, i.e., the labels agree.
We call two posets \emph{equal} and write ``$=$'' between them if they are isomorphic.

\subparagraph{Causal Past as Partially Ordered Sets}

Both game formalisms allow for strategies that depend on causal information but do so in different ways. For Petri games, the restriction of the branching process (justified refusal) enforces that causal memory is obeyed whereas control games allow decisions based on an explicit local view on the previous play. 
Partially ordered sets are a natural representation of concurrent execution, i.e., sequences of events (in our case transitions or actions) that are not executed subsequently but can be interleaved to a certain degree. 
In Petri games, the causal information is represented as the causal past of a place. If this place is a system place it must make a decision of what transitions to allow solely based on this causal past (otherwise it would violate justified refusal). The causal past of a place hence characterizes the causal information of a player on that place. 
In control games, the causal past is represented as a local view on a previous play. 
Both the causal past of a place and the local view on a play can be characterized precisely using posets. This gives us a way to compare partial information between Petri games and control games despite their substantially different formulations.

We call a trace $u$ \emph{prime} if all linearizations of $u$, i.e., all sequences in the equivalence class, end with the same action. For a prime trace $u$, $\mathit{last}(u)$ denotes the last action. Note that $\mathit{last}$ is only defined for prime traces. 
A trace $u$ is a prefix of $w$, denoted $u \sqsubseteq w$ if there are linearizations $u'$ of $u$ and $w'$ of $w$ with $u' \sqsubseteq_{Seq} w'$. 
Here, $\sqsubseteq_{Seq}$ denotes the usual prefix relation on sequences. 
\begin{itemize}
	\item For the causal past of place $q$ in an branching process, we can define the labelled poset $(\pastt{q}, \leq, \lambda)$ where $\leq$ is the causal dependency relation and $\lambda$ the homomorphism associated to each branching process. 
	\item For a trace $u$, we can define the labelled poset $(\mathit{Pre}^{\mathit{prime}}(u), \sqsubseteq, \mathit{last})$ where $\mathit{Pre}^{\mathit{prime}}(u)$ are all primed prefixes of $u$, $\sqsubseteq$ is the prefix relation, and $\mathit{last}$ labels each prefix with its last action.
\end{itemize}
For both Petri games and control games, the poset representation is an intuitive concept to represent the causal past. Note that the poset of a trace describes the dependency between the actions.

\section{Translating Petri Games to Control Games}
\label{sec:firstDir}

\subsection{Proving Strategy-Equivalence}
\label{sec:formal1}

In this section, we discuss causal information in both game types. 
Afterwards, we give a detailed translation of strategies and controllers and derive a proof of \refTheo{theor1}.

\subparagraph{On the relation $\relation$}

Any state in $\bigcup_{\slice \in \slices} Q_\slice \,\setminus\, \{\badw_\slice\}$ corresponds to a place in $\pGame$ in the natural way. This correspondence is formalized by $\zeta$ where:
\begin{align*}
\zeta (q) &= q\\
\zeta ( \,(q, A)\, ) &= q
\end{align*}
We extend $\zeta$ to global states by defining for each global state $\{q_p\}_{p \in \processes}$ a corresponding marking by: $\zeta (\{q_p\}_{p \in  \processes}) = \bigcup_{p \in  \processes} \{\zeta(q_p) \}$.
For a process $p$, we define the shortcut $\pts{p}$ for the slice that $p$ has been build from\footnote{In the construction, $p$ is exactly this slice. However, having explicit notion is more convenient.}. 
Conversely, for a slice $\slice$, $\stp{\slice}$ denotes the process that is build from $\slice$.
By definition of $\cGame_\pGame$, we have that $\transitions \subseteq \Sigma$. For a sequence of actions $u \in \Sigma^*$, we denote the projection on $\transitions$ by $\projtr{u}$. It is defined by:
\begin{align*}
\projtr{\epsilon} &= \epsilon \\
\projtr{u \, \tau} &= \projtr{u} \\
\projtr{u \, t} &= \projtr{u} \, t \quad \text{if } t \in \transitions
\end{align*}
We can now formalize the relation $\relation \subseteq \reach(\pGame^\mathfrak{U}) \times \Plays(\cGame_\pGame)$ by defining:
\begin{tcolorbox}[colback=white]
	\centering
	$M \relation u$ iff $\fireSeq{\pGame^\mathfrak{U}}{\projtr{u}} = M$
\end{tcolorbox}
\noindent This captures the idea that a marking and play are similar/related if they are reached with the same observable trace. $\fireSeq{\pGame^\mathfrak{U}}{\projtr{u}}$ should be understood as firing any linearization of $\projtr{u}$. 
We hence need to prove that $\relation$ is \emph{well-defined}, i.e., $\fireSeq{\pGame^\mathfrak{U}}{\projtr{u}}$ is invariant under elements of the equivalence class $u$. 
Since the actions in $\cGame_\pGame$ are constructed from transitions they inherit the dependency from the transition. 
If two actions are independent the corresponding transitions are concurrent in the Petri net and can be executed in any order:
\begin{lemma}
	\label{lem:wellDefined}If $\fireSeq{\pGame^\mathfrak{U}}{\projtr{u}} = M$ for $u \in \Sigma^*$ and $u \sim_\ind w$ for some $w \in \Sigma^*$ then $\fireSeq{\pGame^\mathfrak{U}}{\projtr{w}} = M$.
\end{lemma}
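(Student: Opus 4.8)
The plan is to reduce the claim to a single adjacent swap of independent actions and then exploit the standard concurrency (diamond) property of Petri nets, after converting action-independence into disjointness of the involved places. Since $\sim_\ind$ is the reflexive--transitive closure of swapping two consecutive independent actions, I would argue by induction on the number of such elementary swaps relating $u$ and $w$. The base case $u = w$ is immediate, and the induction chains single-swap equalities that each preserve the reached marking; so it suffices to treat $u = v\,a\,b\,v'$ and $w = v\,b\,a\,v'$ with $(a,b) \in \ind$, i.e.\ $\dom(a) \cap \dom(b) = \emptyset$, and to show $\fireSeq{\pGame^\mathfrak{U}}{\projtr{u}} = \fireSeq{\pGame^\mathfrak{U}}{\projtr{w}}$.

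First I would split on whether $a$ and $b$ lie in $\transitions$. If at most one of them is a transition (the other being an action outside $\transitions$, such as a $\tau$- or $\text{\Lightning}$-action, which $\projtr{}$ discards), then $\projtr{v\,a\,b\,v'}$ and $\projtr{v\,b\,a\,v'}$ are literally the same sequence, so both fire to the same marking trivially. The only substantive case is $a, b \in \transitions$, where $\projtr{u} = \projtr{v}\,a\,b\,\projtr{v'}$ and $\projtr{w} = \projtr{v}\,b\,a\,\projtr{v'}$; here it remains to show that $a$ and $b$ commute when fired consecutively in the unfolding.

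For that I would turn independence into concurrency. Since $\dom(t) = \{\slice \in \slices \mid t \in \transitions^\slice\}$ for a transition $t$, the hypothesis $\dom(a)\cap\dom(b)=\emptyset$ says that $a$ and $b$ are carried by disjoint families of slices. Because the slices of a sliceable net partition $\places^\pNet$ and each transition has singleton pre- and postset inside every slice containing it, the sets $\pre{\pNet}{a}\cup\post{\pNet}{a}$ and $\pre{\pNet}{b}\cup\post{\pNet}{b}$ consist of places in disjoint slices and are therefore disjoint. Lifting through the unfolding homomorphism $\lambda$ (every unfolding place maps to a single net place, hence to a single slice, and $\lambda$ is a bijection on the neighborhood of each transition), the pre- and postsets of the corresponding transitions in $\pGame^\mathfrak{U}$ are likewise disjoint, so $a$ and $b$ are concurrent there.

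Finally, writing $M_1 = \fireSeq{\pGame^\mathfrak{U}}{\projtr{v}}$, firing $\projtr{u}$ yields $\fireTranTo{M_1}{a}{M_2}$, $\fireTranTo{M_2}{b}{M_3}$, and then $\projtr{v'}$ reaches $M$. Disjointness of the pre/post places of $a$ and $b$ means $b$ is already enabled in $M_1$ and that the diamond property gives $\fireTranTo{M_1}{b}{M_2'}$, $\fireTranTo{M_2'}{a}{M_3}$ with the \emph{same} $M_3$; firing $\projtr{v'}$ from the common $M_3$ again reaches $M$, so $\fireSeq{\pGame^\mathfrak{U}}{\projtr{w}} = M$. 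Throughout I would rely on the preliminaries' remark that in branching processes of safe (sliceable) games at most one transition with a given $\lambda$-label is enabled, so that $\fireSeq{\pGame^\mathfrak{U}}{\cdot}$ applied to a transition sequence is well-defined. The genuinely delicate step, and the main obstacle, is exactly this passage from the domain-based independence of actions to disjointness of pre- and postsets in the unfolding: one must ensure that the place-partition property of slices survives both the parallel composition and the labelling homomorphism, so that the purely combinatorial diamond argument becomes applicable.
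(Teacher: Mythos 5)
Your proposal is correct and follows essentially the same route as the paper's proof: induction on the number of adjacent swaps, with the key step converting domain-disjointness of independent actions into disjointness of the pre- and postsets via the slice structure, so that the swapped transitions commute in the unfolding. You spell out details the paper leaves implicit (the case where a swapped action is a $\tau$- or \Lightning-action discarded by the projection, and the lifting through $\lambda$), but these are elaborations, not a different argument.
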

\begin{proof}
	If actions $t_1, t_2 \in \transitions$ are independent in $\cGame_\pGame$ they belong to different slices (by definition of the dependency relation), so $(\pre{\pGame}{t_1} \cup \post{\pGame}{t_1}) \cap (\pre{\pGame}{t_2} \cup \post{\pGame}{t_2}) = \emptyset$. 
	Swapping $t_1$ and $t_2$ hence results in the same marking in the unfolding $\pGame^\mathfrak{U}$. 
	The claim follows by induction on the number of swaps in the proof of $u \sim_\ind w$. 
\end{proof}

In our construction, every place in the Petri game is represented as possibly many states in the control games. 
These additional copies, used to represent commitment sets, are equipped with the same $\zeta$ label.
Every observable action $t$ in the control game precisely captures the movement of the tokens involved in $t$. 
We hence see that for a related marking and play the underlying net/automaton is in an equally labelled state: 
\begin{lemma}
	If $M \relation u$ then $\zeta(\state{u}) = \lambda[M]$.
	\label{lem:PtoC_sameLabel}
\end{lemma}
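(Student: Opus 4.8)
The plan is to prove the equality by induction on a linearization of the play $u$. By \refLemma{wellDefined} the value $\fireSeq{\pGame^\mathfrak{U}}{\projtr{u}}$ does not depend on the chosen representative of the trace, so we may fix an arbitrary representative sequence and induct on its length, using that $\Plays(\cGame_\pGame) = \Plays(\AA_\pGame)$ is prefix-closed. The invariant to maintain is precisely the claim: whenever $M = \fireSeq{\pGame^\mathfrak{U}}{\projtr{u}}$, the marking $\lambda[M]$ of $\pGame$ equals $\zeta(\state{u})$.

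For the base case $u = \epsilon$ we have $\projtr{\epsilon} = \epsilon$, hence $M = \init^\mathfrak{U}$ and $\state{\epsilon} = s_\mathit{in} = \{q_{0,\slice}\}_{\slice \in \slices}$. Since $\zeta(q_{0,\slice}) = q_{0,\slice}$ and $\init^\slice = \{q_{0,\slice}\}$, we obtain $\zeta(s_\mathit{in}) = \biguplus_{\slice \in \slices} \init^\slice = \init^\pGame$, which equals $\lambda[\init^\mathfrak{U}]$ because an initial homomorphism agrees on initial markings. For the inductive step write $u = u'\, x$ with $x \in \Sigma$ and set $M' = \fireSeq{\pGame^\mathfrak{U}}{\projtr{u'}}$, so that $\zeta(\state{u'}) = \lambda[M']$ by the induction hypothesis. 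If $x = \tau_{(q,A)}$ is a $\tau$-action (rule \textbf{\textsf{(1)}}), then $\projtr{u} = \projtr{u'}$, so $M = M'$, while $\state{u}$ differs from $\state{u'}$ only in the component of the unique slice $\slice$ with $q \in \places^\slice$, which moves from state $q$ to $(q,A)$. As $\zeta(q) = q = \zeta((q,A))$, the induced marking is unchanged and the claim is immediate.

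The crux is the case $x = t \in \transitions$ (rules \textbf{\textsf{(2)}} and \textbf{\textsf{(3)}}), where $\projtr{u} = \projtr{u'}\, t$ and $\fireTranTo{M'}{t}{M}$ in the unfolding. For $t$ to be enabled in $\AA_\pGame$ after $u'$, every process $\slice \in \dom(t) = \{\slice \in \slices \mid t \in \transitions^\slice\}$ must reside in a state labelled by its unique pre-place $\pre{\slice}{t}$ and, on firing $t$, moves to the state labelled by its unique post-place $\post{\slice}{t}$ (slice condition \textbf{\textsf{(3)}} guarantees both are singletons). Because the places are partitioned among slices, sliceability yields $\pre{\pGame}{t} = \biguplus_{\slice \in \dom(t)} \pre{\slice}{t}$ and $\post{\pGame}{t} = \biguplus_{\slice \in \dom(t)} \post{\slice}{t}$, so $\zeta(\state{u}) = \zeta(\state{u'}) - \pre{\pGame}{t} + \post{\pGame}{t}$. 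On the other side, since $\lambda$ commutes with firing, $\lambda[M] = \lambda[M'] - \pre{\pGame}{t} + \post{\pGame}{t}$, and the induction hypothesis closes the step. The only real obstacle is verifying here that the per-process state updates in $\AA_\pGame$ reassemble exactly into the global token flow of $t$ in $\pGame$ and that this matches firing the $\lambda$-image of $t$; this is the single point at which sliceability is used.

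Finally, in $\widehat{\cGame_\pGame}$ the letter $x$ could be a \Lightning-action, which moves some process into a state $\badw_\slice$ on which $\zeta$ is undefined. The statement is therefore understood for plays whose global state avoids all $\badw_\slice$ (equivalently, plays in which no \Lightning-action has fired), and for such plays the induction never reaches this case; note that $\projtr{\cdot}$ discards both $\tau$- and \Lightning-letters, so $M$ is insensitive to them in any event.
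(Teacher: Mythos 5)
Your proof is correct and takes essentially the same route as the paper's: an induction on the length of $u$ whose decisive ingredient is that for every $t \in \transitions$ and $B \in \domain(\delta_t)$ one has $\zeta(B) = \pre{\pGame}{t}$ and $\zeta(\delta_t(B)) = \post{\pGame}{t}$, which is exactly the per-slice bookkeeping you carry out in your transition case. Your explicit base case, $\tau$-case, and the remark on \text{\Lightning}-actions merely spell out details the paper's one-line proof leaves implicit.
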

\begin{proof}
	Follows by induction on the length of $u$ using the fact that for all $t \in \transitions$ and all $B \in \domain(\delta_t)$ it holds that $\zeta(B) = \pre{\pGame}{t}$ and $\zeta(\delta_t(B)) = \post{\pGame}{t}$.
\end{proof}

\subparagraph{Causal Information Flow}

In our construction, we represent each slice as a distinct process. 
The actions of a process $p$ ($\Sigma_p$) are precisely the transitions that $\pts{p}$ is involved in (and additional $\tau$-actions). 
Now consider a marking $M$ and play $u$ where $M \relation u$. By construction, firing the observable action from $u$ in the unfolding results in $M$.
The marking $M$ and trace~$u$ do not only represent the global state of the system but also include the local information of each token or process. 
The crucial observation of our translation is that this information is ``the same''. 
The local view of process $p$ on $u$ is the same as the causal past of the token in $M$ from slice $\pts{p}$.
This holds as in $\cGame_\pGame$ the communication behavior of $\pGame$ is modeled truthfully. Every process hence participates in exactly the actions that its slice takes part in. 
For our translation, we need a more formal notion of what ``having the same information'' means. 
We thus need to find a way to relate causal information between both game types. Unfortunately, Petri games and control games represent causal information in a fundamentally different way utilizing either the causal past of a place or the local view on a play.

If we consider a play $u$ and the poset representation of $\projtr{u}$ we observe that the poset is labelled in $\transitions$. For any place $q$ in the unfolding of a $\pGame$, the poset is also labelled in $\transitions$. 
This allows us to express equality between the causal past of a place and a trace. 
We can, for instance, write $\pastt{q} = \projtr{u}$, which should be understood as the fact that both sides have equal poset representations, i.e., the labelled poset representations for $\pastt{q}$ and $\projtr{u}$ are isomorphic. 
Labelled posets thus allow us to compare the causal information between both game types.  

We can now state the following result which gives us a direct characterization of the local information of individual player. It tells us that in $\relation$-related situations, the local view of each process aligns with the causal past of the corresponding place. 

\begin{lemma}
	If $M \relation u$ and $q \in M \cap \inv{\lambda}[\places^{\pts{p}}]$ (for some $p \in \processes$) then $\pastt{q} = \projtr{\view_p(u)}$
	\label{lem:PtoC_sameCausalPast}
\end{lemma}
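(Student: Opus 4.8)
The plan is to prove the stronger \emph{simultaneous} statement by induction on the length of $u$: for every process $p$ and the unique token $q \in M \cap \inv{\lambda}[\places^{\pts{p}}]$, the labelled posets $\pastt{q}$ and $\projtr{\view_p(u)}$ are equal. Before starting the induction I would record two facts. First, since $\pGame$ is sliceable (hence concurrency-preserving and $1$-bounded) and $\lambda[M] = \zeta(\state{u})$ by \refLemma{PtoC_sameLabel}, the marking $M$ carries exactly one token per slice, so the token $q$ named in the statement is uniquely determined by $p$. Second, the causal past decomposes along the flow of the unfolding: if $q'$ is an output place of an unfolding transition labelled $t$, then $\pastt{q'} = \{t\} \cup \bigcup_{q_i \in \pre{\pNet^\mathfrak{U}}{t}} \pastt{q_i}$, with $t$ the unique maximal element.

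For the base case $u = \epsilon$ we have $M = \init$ and $\view_p(\epsilon) = \epsilon$, so both posets are empty. For the inductive step, write $u' = u\,a$ and distinguish the kind of $a$. If $a$ is a $\tau$-action, then $\projtr{u'} = \projtr{u}$, so $M$ and every token are unchanged. For a process $p$ not involved in $a$, the action $a \notin \Sigma_p$ can be absorbed into the suffix witnessing the view, giving $\view_p(u') = \view_p(u)$; for the single process that owns $a$, the view grows only by the internal action $a$, which $\projtr{\cdot}$ deletes. In either case $\projtr{\view_p(u')} = \projtr{\view_p(u)}$, and the claim follows from the induction hypothesis.

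If $a = t \in \transitions$, then $\fireTranTo{M}{t}{M'}$ in the unfolding. A process $p \notin \dom(t)$ keeps its token and satisfies $t \notin \Sigma_p$, so both posets are unchanged exactly as in the $\tau$-case. For $p \in \dom(t)$, its token moves to the output place $q'$ of $t$. Here I would invoke the standard view-composition identity for asynchronous automata, $\view_p(u\,t) = \big(\bigsqcup_{p' \in \dom(t)} \view_{p'}(u)\big)\,t$, where $\bigsqcup$ denotes the least trace dominating all participant views. Projecting onto $\transitions$ and applying the induction hypothesis to each participant $p'$ identifies $\projtr{\view_{p'}(u)}$ with $\pastt{q_{p'}}$, where $q_{p'}$ is the input place of $t$ lying in slice $\pts{p'}$. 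Since the construction preserves the communication architecture, $\dom(t) = \{\slice \mid t \in \transitions^\slice\}$, the input places of $t$ in the unfolding are precisely $\{q_{p'} \mid p' \in \dom(t)\}$; hence the join of the participant views matches $\bigcup_{q_i \in \pre{\pNet^\mathfrak{U}}{t}} \pastt{q_i}$, and appending $t$ yields $\pastt{q'}$ by the decomposition above.

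I expect the crux to be this last identification: that under the induction-hypothesis isomorphisms the trace-join $\bigsqcup_{p' \in \dom(t)} \view_{p'}(u)$ corresponds exactly to the poset-union of the causal pasts of $t$'s input places, with neither spurious nor missing order relations. The justification is that both objects embed into the single global causal order of the unfolding $\pNet^\mathfrak{U}$: two transitions are ordered in a participant's view iff one lies in the causal past of the other, which is the same relation recorded by $\leq$ on $\pastt{\cdot}$. Because our slices participate in exactly the transitions they share ($\dom(t) = \{\slice \mid t \in \transitions^\slice\}$), information is exchanged in the control game precisely when tokens synchronize in the net, so the two notions of causal predecessor coincide and the join and the union agree.
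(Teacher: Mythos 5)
Your proof is correct in substance, but it takes a genuinely different route from the paper's. The paper argues globally and non-inductively: from $M \relation u$ it simulates $\projtr{\view_p(u)}$ in the unfolding, shows $\fireSeq{\pGame^\mathfrak{U}}{\projtr{\view_p(u)}} = \fireSeq{\pGame^\mathfrak{U}}{\pastt{q}}$ by contradiction --- if the two markings differed, some other slice would have progressed strictly further under $\projtr{\view_p(u)}$, and the maximal transition $t$ witnessing this would satisfy $t \notin \Sigma_p$ while admitting a linearization $\view_p(u) = r\,t$, contradicting the minimality of the $p$-view --- and then reads off the poset isomorphism from the fact that independence of observable actions coincides with disjointness of slices. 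You instead do an induction on $u$, combining the trace-theoretic view-composition identity $\view_p(u\,t) = \bigl(\text{join of the } \view_{p'}(u),\ p' \in \dom(t)\bigr)\,t$ with the occurrence-net decomposition $\pastt{q'} = \{t\} \cup \bigcup_{q_i \in \pre{\pNet^\mathfrak{U}}{t}} \pastt{q_i}$. What the paper's route buys is brevity and reuse: it leans on \refLemma{PtoC_sameLabel} and \refLemma{wellDefined} and never has to reason about joins of views; the minimality-of-view trick does all the work in one stroke. What your route buys is a more transparent picture of causal information flow --- the synchronization step makes explicit that views are merged exactly when tokens meet --- and it naturally proves the statement for all processes simultaneously. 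The one point where your write-up is weaker than you acknowledge is the crux itself: the induction hypothesis as stated gives you one isomorphism per participant, and to conclude that the join of views matches the union of pasts you need these isomorphisms to agree on overlaps. Your appeal to "embedding into the global causal order of the unfolding" is the right idea, but to make it airtight you should strengthen the inductive invariant to a single canonical correspondence between the observable events of $u$ and the transitions fired when simulating $\projtr{u}$ in $\pGame^\mathfrak{U}$ (well-defined because the game is safe, so at most one transition per label is enabled from any marking of a branching process, as the paper notes), of which each per-process isomorphism is then a restriction; with that invariant, your shared-slice argument that immediate dependencies coincide on both sides (a place of slice $\slice$ between two unfolding transitions forces both into $\transitions^\slice$, hence into $\Sigma_{\stp{\slice}}$, and conversely) closes the proof.
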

\begin{proof}
	From $M \relation u$, we conclude that $\fireSeq{\pGame^\mathfrak{U}}{\projtr{u}} = M$.
	The simulation is invariant under elements from $u$ as we argued in \refLemma{wellDefined}.\\
	The local view of $p$ on $u$ is defined as the smallest trace $[v]_\ind$ such that $u \sim_\ind v \, w$ for some $w$ that contains no actions from $\Sigma_p$.
	We can hence write $u = \view_p(u) \, w$.
	Since the $\tau$-actions are local to one process it holds that $\projtr{u} = \projtr{\view_p(u)} \, \projtr{w}$ and $\projtr{w}$ contains no actions from $\Sigma_p$. We hence obtain that $$\fireSeq{\pGame^\mathfrak{U}}{\projtr{u}} = \fireSeq{\pGame^\mathfrak{U}}{\projtr{\view_p(u)} \, \projtr{w}}$$
	and, in particular,
	$$\fireSeq{\pGame^\mathfrak{U}}{\projtr{u}} \cap \inv{\lambda}[\places^{\pts{p}}] = \fireSeq{\pGame^\mathfrak{U}}{\projtr{\view_p(u)} \, \projtr{w}} \cap \inv{\lambda}[\places^{\pts{p}}]$$
	The observable actions in $\Sigma_p$ are exactly the transitions that the slice $\pts{p}$ is involved in. $\projtr{w}$ contains no actions from $\Sigma_p$ and therefore contains no transitions that involve $\pts{p}$.
	We can hence see that 
	\begin{align*}
	M \cap \inv{\lambda}[\places^{\pts{p}}] &= \fireSeq{\pGame^\mathfrak{U}}{\projtr{u}} \cap \inv{\lambda}[\places^{\pts{p}}]\\
	&= \fireSeq{\pGame^\mathfrak{U}}{\projtr{\view_p(u)} \, \projtr{w}} \cap \inv{\lambda}[\places^{\pts{p}}] \\
	&= \fireSeq{\pGame^\mathfrak{U}}{\projtr{\view_p(u)}} \cap \inv{\lambda}[\places^{\pts{p}}]
	\end{align*}
	Firing $\projtr{\view_p(u)}$ and firing $\projtr{u}$ results in the same place for slice $\pts{p}$.
	We later recover exactly this statement (\refLemma{PtoC_localeViewSamePlace}) from our current lemma.\\
	We next show that $\fireSeq{\pGame^\mathfrak{U}}{\projtr{\view_p(u)}} = \fireSeq{\pGame^\mathfrak{U}}{\pastt{q}}$, i.e., firing the transitions in the causal past of $q$ results in the same marking as firing $\projtr{\view_p(u)}$. 
	Note that by definition every linearization of $\pastt{q}$ results in the same marking, so, $\fireSeq{\pGame^\mathfrak{U}}{\pastt{q}}$ is well-defined. 
	It trivially holds that $\fireSeq{\pGame^\mathfrak{U}}{\pastt{q}} \cap \inv{\lambda}[\places^{\pts{p}}] = M \cap \inv{\lambda}[\places^{\pts{p}}]$, so we get that
	\begin{align*}
	\fireSeq{\pGame^\mathfrak{U}}{\projtr{\view_p(u)}} \cap \inv{\lambda}[\places^{\pts{p}}] = \fireSeq{\pGame^\mathfrak{U}}{\pastt{q}} \cap \inv{\lambda}[\places^{\pts{p}}] \tag*{\myItem{(1)}}
	\end{align*}
	We want to show the more general statement that not only the place that belongs to process~$p$ is shared in $\fireSeq{\pGame^\mathfrak{U}}{\projtr{\view_p(u)}}$ and $\fireSeq{\pGame^\mathfrak{U}}{\pastt{q}}$ but the place of every process.  \\
	We can first observe that $\pastt{q}$ is the smallest set of transitions that needs to fire to reach~$q$. As soon as we remove a single transition from the set, the simulation will no longer reach place $q$. 
	From \myItem{(1)}, we get that simulating $\projtr{\view_p(u)}$ also results in place $q$. Simulating $\projtr{\view_p(u)}$ instead of $\pastt{q}$ therefore results in a marking that has progressed more, i.e., a marking where the game has progressed further \myItem{(2)}. \\
	We assume for contradiction that $\fireSeq{\pGame^\mathfrak{U}}{\projtr{\view_p(u)}} \neq \fireSeq{\pGame^\mathfrak{U}}{\pastt{q}}$.
	There hence is a process $p'$ with 
	$$\fireSeq{\pGame^\mathfrak{U}}{\projtr{\view_{p}(u)}} \cap \inv{\lambda}[\places^{\pts{p'}}] \neq \fireSeq{\pGame^\mathfrak{U}}{\pastt{q}} \cap \inv{\lambda}[\places^{\pts{p'}}]$$
	Let $q_1$ and $q_2$ be the unique places with 
	\begin{align*}
	q_1 &\in \fireSeq{\pGame^\mathfrak{U}}{\projtr{\view_{p}(u)}} \cap \inv{\lambda}[\places^{\pts{p'}}] \\
	q_2 &\in \fireSeq{\pGame^\mathfrak{U}}{\pastt{q}} \cap \inv{\lambda}[\places^{\pts{p'}}]
	\end{align*}
	By assumption $q_1 \neq q_2$ and from \myItem{(2)}, it is easy to see that $q_2 < q_1$, i.e., the token of slice $\pts{p'}$ has progressed further when firing $\projtr{\view_{p}(u)}$ instead of $\pastt{q}$.\\
	Let $t$ be the unique transition in $\pre{\pGame^\mathfrak{U}}{q_1}$. It holds that $q_2 < t < q_1$. 
	We know that $t$ must be included in $\view_p(u)$ and since $t$ has no successor transitions we observe that $\view_p(u) = r \, t$ \myItem{(3)} for some play $r$, i.e., there is a linearization of $\view_p(u)$ that ends with~$t$. 
	Since $t$ does not involve the token from slice $\pts{p}$ we can conclude that $t \not\in \Sigma_p$.
	\myItem{(3)} is, however, a contradiction to the minimality of $\view_p(u)$.\\
	Hence, $\fireSeq{\pGame^\mathfrak{U}}{\projtr{\view_p(u)}} = \fireSeq{\pGame^\mathfrak{U}}{\pastt{q}}$. 
	If two transitions in $\pastt{q}$ are unordered they are independent in $\projtr{\view_p(u)}$. Conversely, consecutive independent actions in $\projtr{\view_p(u)}$ involve disjoint sets of slices and are hence unordered in $\pastt{q}$.
	It is therefore easy to see that $\pastt{q} = \projtr{(\view_p(u))}$.
	\vspace{0.1cm}
\end{proof}

 \refLemma{PtoC_sameCausalPast} tells us that our relation $\relation$ does not only capture the global configuration of both game types (as stated in \refLemma{PtoC_sameLabel}) but also respects the local information. 
This is of tremendous importance for a translation of strategies/controller. 
If $M \relation u$ then every process in $p$ possesses the same information (in terms of the local view on $u$) as the corresponding place in $M$ has (in terms of the causal past). 

\subsection{Translating Strategies to Controllers}
\label{sec:PtoC_translateStrtoCont}

In this section, we provide a formal translation of strategies to controllers. 
Given a winning strategy $\sigma$ for $\pGame$, 
we construct a winning controller $\varrho_\sigma = \{f^{\varrho_\sigma}_p\}_{p \in \processes}$ for $\cGame_\pGame$ and, furthermore, show that if $\sigma$ is deterministic, $\varrho_\sigma$ is a winning controller for $\widehat{\cGame_\pGame}$.
The description of the local controller $f^{\varrho_\sigma}_p$ for process $p$ is depicted in \refFig{PtoC_translation1}.

\begin{figure}[t!]
	\begin{tcolorbox}[colback=white, colframe=myYellow, arc=3mm, boxrule=1mm]
		For $p \in \processes$ and $u \in \Plays_p(\cGame_\pGame)$:
		\begin{enumerate}
			\item[\myItem{1.}] If $\statep{p}{u} \in \places_\penv$ all outgoing transitions are uncontrollable. Define $f^{\varrho_\sigma}_p(u) = \emptyset$. 
			
			\item[\myItem{2.}] If $\statep{p}{u} = (q, A)$ for some $q \in \places_\psys$ and $A \subseteq \post{\pGame}{q}$ all outgoing transitions are uncontrollable. Define $f^{\varrho_\sigma}_p(u) = \emptyset$.
			
			\item[\myItem{3.}] If $\statep{p}{u} \in \places_\psys$ we distinguish two cases
			\begin{enumerate}
				\item[\myItem{a)}] $\projtr{u}$ is a valid sequence of transitions in $\pNet^\sigma$:\\
				Let $M = \fireSeq{\pNet^\sigma}{\projtr{u}}$. There exists a unique place $q \in M \cap \inv{\lambda}[\places^{\pts{p}}]$. \\
				Define $f^{\varrho_\sigma}_p(u) = \{\; \tau_{(\statep{p}{u}, \lambda[A])} \; \}$ where $A = \post{\pNet^\sigma}{q}$.
				
				\item[\myItem{b)}] $\projtr{u}$ is no valid sequence of transition $\pNet^\sigma$:\\
				Define $f^{\varrho_\sigma}_p(u) = \emptyset$. \\
				\textit{This case will never occur if $u$ is a controller-compatible play.}
				
			\end{enumerate}
			\item[\myItem{4.}] If $\statep{p}{u} = \badw$ there are no outgoing transitions. Define $f^{\varrho_\sigma}_p(u) = \emptyset$.
		\end{enumerate}
	\end{tcolorbox}
	
	\caption{Description of local controller $f^{\varrho_\sigma}_p$ for process $p \in \processes$. The controller is build from a strategy $\sigma$ for $\pGame$ with branching process $\pNet^\sigma$. }
	\label{fig:PtoC_translation1}
\end{figure}

Every process $p$ in $\varrho_\sigma$ does what we described informally. 
Given a play $u \in \Plays_p(\cGame_\pGame)$, every process computes its current state. 
Only if this state corresponds to a system place of $\pGame$ (case \myItem{3.}) any controllable actions are available. In this case, the observable actions in $u$ are simulated in $\pNet^\sigma$, i.e., the branching process of $\sigma$.
In \refLemma{wellDefined}, we already argued that simulation of traces is well-defined, i.e., invariant under linearizations.
For an arbitrary $u$, $\projtr{u}$ might not be a valid sequence in the strategy. 
We therefore include case \myItem{b)} to obtain a total function $f^{\varrho_\sigma}_p$.
In case of a successful simulation (case \myItem{a)}), the simulation reaches some marking $M$. 
Now, $p$ should copy the decision of the strategy made in $M$. It therefore computes the place in $M$ that corresponds to the slice $p$ is build from, i.e., the place $q \in M \cap \inv{\lambda}[\places^{\pts{p}}]$. 
The set of transitions allowed by this place are $\post{\pNet^\sigma}{q}$.
To copy the decision, $p$ hence chooses the commitment set that contains exactly those transitions. 
We later show that for controller-compatible plays $u$,  $\projtr{u}$ is always a valid sequence, i.e., we never land in case \myItem{b)}.

\begin{example}
	As an example, we consider the translation from \refFig{firstRed} and the winning (non-deterministic) strategy $\dot{\sigma}$ depicted in \refFig{example_Strat}.
	Whenever possible, $\dot{\sigma}$ allows transition~$i$ to move to place $D$. 
	If in place $D$ for the first time the strategy allows communication on both $a$ and~$b$. 
	Upon communication on either $a$ or $b$, the strategy can, furthermore, deduce whether the environment played $e_1$ or $e_2$, as this information is conceptually transmitted in the communication. There are hence four different cases possible:
	In case of synchronization on~$a$, $\dot{\sigma}$ allows the token on a system place to move to $D$ using transition $i$. It then distinguishes whether $e_1$ or $e_2$ have been played. In case of $e_1$, it terminates and, in case of $e_2$, it allows communication on $b$ one more time. 
	If synchronization occurred on $b$ the strategy again distinguishes the two cases. If it can deduce $e_1$ the strategy allows $b$ one more time. In case of $e_2$, it terminates directly. 
	Even though $\dot{\sigma}$ seems unnecessary complicated\footnote{In the sense that there are much simpler winning strategies.}, strategy-equivalence requires us to build a controller that copies this behavior.
	We can now translate $\dot{\sigma}$ according to our translation of strategies and obtain a controller $\dot{\varrho}_\sigma$ for~$\dot{\cGame_\pGame}$.
	Since there is no intuitive way to represent a controller graphically, we depict $\dot{\varrho}_\sigma$ as a table that summarizes a selection of plays in $\dot{\cGame_\pGame}$ and the decision made by $p_2$ (the local controller~$f_{p_2}^{\dot{\varrho}_\sigma}$). 
	The table is shown in \refFig{possiblePlays}. As we only depict the decisions of $p_2$, we listed the $p_2$-view on all plays.  
	$\dot{\varrho}_\sigma$ initially allows action $i$ by choosing commitment set $(C, \{i\})$. Afterwards, it admits communication on both $a$ and $b$ by moving to commitment set $(D, \{a, b\})$. Then, $\dot{\varrho}_\sigma$ copies the ``case analysis'' of $\dot{\sigma}$. 
	We can observe that every decision of the controller is made in accordance with our construction.
	As an example, consider the play $[\tau_{(C, \{i\})}, i, \tau_{(D, \{a, b\})}, e_1, b]_\ind$ (play \myItem{(1)} in \refFig{possiblePlays}).
	The observable actions of that play comprise $e_1$, $i$, and $b$.
	The simulation of this play in $\pNet^{\dot{\sigma}}$ results in the red marking $M_1$. 
	Since the system place (the place in $M_1 \cap \inv{\lambda}[\places^{\pts{p_2}}]$) allows $b$ in its postcondition, the controller chooses $(D, \{b\})$ as a commitment set. 
	The interested reader is advised to convince herself that all decisions listed in \refFig{possiblePlays} are in accordance with both our construction and $\dot{\sigma}$.
	
	\begin{figure}[t]
		\centering
		
		\begin{tikzpicture}[scale=1.0, every label/.append style={font=\scriptsize}, label distance=-1mm]
		\node[envplace,label=east:$\color{myDGray}A$] at (0,0.75) (p1){};
		\node[transition,label=north:$\color{myDGray}e_1$] at (-3.5,-0.75) (t1){};
		\node[envplace,label=east:$\color{myDGray}B$] at (-3.5,-1.5) (p2){};
		\node[transition,label=north:$\color{myDGray}e_2$] at (3.5,-0.75) (t2){};
		\node[envplace,label=west:$\color{myDGray}B$] at (3.5,-1.5) (p3){};
		
		\node[sysplace,label=east:$\color{myDGray}C$] at (0,-0) (ps0){};
		\node[transition,label=east:$\color{myDGray}i$] at (0,-0.75) (ts0){};
		\node[sysplace,label=south:$\color{myDGray}D$] at (0,-1.5) (p4){};
		
		\node[transition,label=north:$\color{myDGray}b$] at (-5,-3) (t3){};
		\node[transition,label=north:$\color{myDGray}a$] at (-2,-3) (t4){};
		\node[transition,label=north:$\color{myDGray}a$] at (2,-3) (t5){};
		\node[transition,label=north:$\color{myDGray}b$] at (5,-3) (t6){};
		
		\node[envplace,label=south:$\color{myDGray}B$] at (-5.75,-3.75) (p5){};
		\node[sysplace,label=south:$\color{myDGray}D$] at (-4.25,-3.75) (p6){};
		\node[envplace,label=east:$\color{myDGray}A$] at (-2,-3.75) (p7){};
		\node[sysplace,label=east:$\color{myDGray}C$] at (-2,-4.5) (p8){};
		
		\node[envplace,label=west:$\color{myDGray}A$] at (2,-3.75) (p9){};
		\node[sysplace,label=west:$\color{myDGray}C$] at (2,-4.5) (p10){};
		\node[sysplace,label=south:$\color{myDGray}D$] at (4.25,-3.75) (p11){};
		\node[envplace,label=south:$\color{myDGray}B$] at (5.75,-3.75) (p12){};


		\node[transition,label=north:$\color{myDGray}e_1$] at (-3.25,-5.25) (t7){};
		\node[transition,label=west:$\color{myDGray}i$] at (-2,-5.25) (t8){};
		\node[transition,label=north:$\color{myDGray}e_2$] at (-0.75,-5.25) (t9){};

		\node[envplace,label=south:$\color{myDGray}B$] at (-3.25,-6) (p13){};
		\node[sysplace,label=south:$\color{myDGray}D$] at (-2,-6) (p14){};
		\node[envplace,label=south:$\color{myDGray}B$] at (-0.75,-6) (p15){};
		
		\node[transition,label=north:$\color{myDGray}e_1$] at (0.75,-5.25) (t12){};
		\node[transition,label=west:$\color{myDGray}i$] at (2,-5.25) (t13){};
		\node[transition,label=north:$\color{myDGray}e_2$] at (3.25,-5.25) (t14){};

		\node[envplace,label=west:$\color{myDGray}B$] at (0.75,-6) (p20){};
		\node[sysplace,label=south:$\color{myDGray}D$] at (2,-6) (p21){};
		\node[envplace,label=east:$\color{myDGray}B$] at (3.25,-6) (p22){};
		
		\node[transition,label=west:$\color{myDGray}b$] at (0.75,-7) (t15){};
		\node[transition,label=east:$\color{myDGray}b$] at (3.25,-7) (t16){};

		\node[sysplace,label=south:$\color{myDGray}D$] at (0,-7.75) (p23){};
		\node[envplace,label=south:$\color{myDGray}B$] at (1.5,-7.75) (p24){};
		
		\node[sysplace,label=south:$\color{myDGray}D$] at (2.5,-7.75) (p25){};
		\node[envplace,label=south:$\color{myDGray}B$] at (4,-7.75) (p26){};
		
		\node[transition,label=south:$\color{myDGray}b$] at (-5,-4.5) (tt1){};
		
		\node[envplace,label=south:$\color{myDGray}B$] at (-5.75,-5.25) (pp1){};
		\node[sysplace,label=south:$\color{myDGray}D$] at (-4.25,-5.25) (pp2){};


		\node[token] at (0,0.75) (){};
		\node[token] at (0,0) (){};
		
		\draw[arrow] (p1) to (t1);
		\draw[arrow] (p1) to (t2);
		\draw[arrow] (t1) to (p2);
		\draw[arrow] (t2) to (p3);
		
		\draw[arrow] (ps0) to (ts0);
		\draw[arrow] (ts0) to (p4);
		
		\draw[arrow] (p2) to (t3);
		\draw[arrow] (p2) to (t4);
		\draw[arrow] (p4) to (t3);
		\draw[arrow] (p4) to (t4);
		
		\draw[arrow] (p3) to (t5);
		\draw[arrow] (p3) to (t6);
		\draw[arrow] (p4) to (t5);
		\draw[arrow] (p4) to (t6);
		
		
		\draw[arrow] (t3) to (p5);
		\draw[arrow] (t3) to (p6);
		
		\draw[arrow] (t4) to (p7);
		\draw[arrow] (t4) to[out=180, in=180] (p8);
		
		\draw[arrow] (t5) to (p9);
		\draw[arrow] (t5) to[out=0, in=0] (p10);
		
		\draw[arrow] (t6) to (p11);
		\draw[arrow] (t6) to (p12);
		
		
		\draw[arrow] (p7) to (t7);
		\draw[arrow] (p7) to (t9);
		\draw[arrow] (t7) to (p13);
		\draw[arrow] (t9) to (p15);
		
		\draw[arrow] (p8) to (t8);
		\draw[arrow] (t8) to (p14);
		
		
		\draw[arrow] (p9) to (t12);
		\draw[arrow] (p9) to (t14);
		\draw[arrow] (t12) to (p20);
		\draw[arrow] (t14) to (p22);
		
		\draw[arrow] (p10) to (t13);
		\draw[arrow] (t13) to (p21);
		
		\draw[arrow] (p20) to (t15);
		\draw[arrow] (p22) to (t16);
		
		\draw[arrow] (p21) to (t15);
		\draw[arrow] (p21) to (t16);
		
		\draw[arrow] (t15) to (p23);
		\draw[arrow] (t15) to (p24);
		
		\draw[arrow] (t16) to (p25);
		\draw[arrow] (t16) to (p26);
		\draw[arrow] (p5) to (tt1);
		\draw[arrow] (p6) to (tt1);
		
		\draw[arrow] (tt1) to (pp1);
		\draw[arrow] (tt1) to (pp2);
		
		\draw[red,dotted,very thick,rounded corners=10pt] (-6.25,-3.35) rectangle (-3.75,-4.25);
		\node[red] at (-5, -3.75) () {\scriptsize $M_1$};
		
		\draw[green,dotted,very thick,rounded corners=10pt] (1.5,-5.6) rectangle (4.4,-6.5);
		\node[green] at (4.15, -6) () {\scriptsize $M_2$};

		\draw[blue,dotted,very thick,rounded corners=10pt] (0.9,-3.35) rectangle (2.5,-4.15);
		
		\node[blue] at (1.2,-3.75) () {\scriptsize $M_3$};
		
		\draw[blue,dotted,very thick,rounded corners=10pt] (1.4,-5.5) rectangle (2.6,-7);

		\node[blue] at (2,-6.75) () {\scriptsize $M_3$};
		
		\end{tikzpicture}
		
		\caption{A winning strategy $\dot{\sigma}$ for the Petri game $\dot{\pGame}$ in \refFig{firstRed}. The marking of winning places has been omitted. After first allowing both $a$ and $b$ the strategy makes a case distinction on which commination of $e_1$ or $e_2$ and $a$ or $b$ occurred. Reachable marking $M_1$, $M_2$ and $M_3$ are surrounded in red, green and blue. \vspace{-0.4cm}}
		\label{fig:example_Strat}
	\end{figure}
	
	\begin{figure}[t]
		\centering
		\small
		\renewcommand{\arraystretch}{1.2}
		\begin{tabular}{c|l|c|} 
			\cline{2-3}
			& \multicolumn{1}{ |c| }{$u \in \Plays(\dot{\cGame_\pGame}, \dot{\varrho}_\sigma) \cap \Plays_{p_2}(\dot{\cGame_\pGame})$} & $f^{\dot{\varrho}_\sigma}_{p_2}(u)$  \\ 
			\cline{2-3}\multicolumn{1}{c}{}&\multicolumn{2}{c}{}\\[-\bls]\cline{2-3}
			& $\epsilon$ & $\{ \tau_{(C, \{i\})}\}$  \\ 
			\cline{2-3} 
			& $\tau_{(C, \{i\})}$ & $\emptyset$ \\ 
			\cline{2-3} 
			& $\tau_{(C, \{i\})}, i$& $\{\tau_{(D, \{a, b\})}\}$ \\ 
			\cline{2-3} 
			& $\tau_{(C, \{i\})}, i, \tau_{(D, \{a, b\})}$ & $\emptyset$ \\ 
			\cline{2-3} 
			& $\tau_{(C, \{i\})}, i, \tau_{(D, \{a, b\})}, e_1, a$ & $\{ \tau_{(C, \{i\})}\}$ \\ 
			\cline{2-3} 
			\myItem{(1)} & $\tau_{(C, \{i\})}, i, \tau_{(D, \{a, b\})}, e_1, b$  & $\{ \tau_{(D, \{b\})}\}$ \\ 
			\cline{2-3} 
			& $\tau_{(C, \{i\})}, i, \tau_{(D, \{a, b\})}, e_2, a$ & $\{ \tau_{(C, \{i\})}\}$ \\ 
			\cline{2-3} 
			& $\tau_{(C, \{i\})}, i, \tau_{(D, \{a, b\})}, e_2, b$ & $\emptyset$ \\   
			\cline{2-3} 
			& $\tau_{(C, \{i\})}, i, \tau_{(D, \{a, b\})}, e_2, a, \tau_{(C, \{i\})}, i$ & $\{\tau_{(D, \{b\})}\}$ \\ 
			\cline{2-3} 
			& \multicolumn{2}{|c|}{$\cdots$} \\
			\cline{2-3}
		\end{tabular} 
		
		\caption{Controller $\dot{\varrho}_\sigma$ build from the winning strategy $\dot{\sigma}$ in \refFig{example_Strat}. 
			The controller is depicted by listing possible plays and the decision of $f^{\dot{\varrho_\sigma}}_{p_2}$ on them.  \vspace{-0.4cm}}
		\label{fig:possiblePlays}
	\end{figure}
\end{example}

\paragraph*{Strategy-Equivalence}

Given the constructed controller $\varrho_\sigma$, we can prove it strategy-equivalent to $\sigma$.
For our bisimulation $\relation$, we use the one we already defined, but restrict it to reachable markings in $\pNet^\sigma$ and plays in $\Plays(\cGame_\pGame, \varrho_\sigma)$.
The previous statements (\refLemma{PtoC_sameLabel} and \refLemma{PtoC_sameCausalPast}) extend to this restricted relation.
We begin by showing a direct consequence of \refLemma{PtoC_sameCausalPast}:

\begin{lemma}
	\label{lem:PtoC_localeViewSamePlace}
	If $M \relation u$ and $p \in \processes$ then
	\begin{align*}
	M \cap \inv{\lambda}[\places^{\pts{p}}] &= \fireSeq{\pGame^\mathfrak{U}}{\projtr{u}} \cap \inv{\lambda}[\places^{\pts{p}}] \\
	&= \fireSeq{\pGame^\mathfrak{U}}{\projtr{\view_p(u)}} \cap \inv{\lambda}[\places^{\pts{p}}]
	\end{align*}
	
\end{lemma}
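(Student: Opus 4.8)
The plan is to obtain both equalities directly from facts already in place, since this statement is essentially the intermediate marking-level identity extracted from the proof of \refLemma{PtoC_sameCausalPast} (indeed, the text flags that it is ``recovered'' from that proof). So rather than redoing the self-conflict/contradiction argument, I would isolate the two equalities and justify each separately.

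For the first equality, I would simply unfold the definition of $\relation$. By construction, $M \relation u$ holds exactly when $\fireSeq{\pGame^\mathfrak{U}}{\projtr{u}} = M$, and by \refLemma{wellDefined} this marking does not depend on the chosen linearization of $\projtr{u}$, so $\fireSeq{\pGame^\mathfrak{U}}{\projtr{u}}$ is well-defined. Intersecting both sides of $M = \fireSeq{\pGame^\mathfrak{U}}{\projtr{u}}$ with $\inv{\lambda}[\places^{\pts{p}}]$ then yields $M \cap \inv{\lambda}[\places^{\pts{p}}] = \fireSeq{\pGame^\mathfrak{U}}{\projtr{u}} \cap \inv{\lambda}[\places^{\pts{p}}]$ immediately.

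For the second equality, the key observation is that the portion of $u$ beyond the local view $\view_p(u)$ never touches the token of slice $\pts{p}$. By definition of the view there is a decomposition $u \sim_\ind \view_p(u)\, w$ with $w$ containing no action from $\Sigma_p$. Since every $\tau$-action is local to a single process, projecting onto $\transitions$ commutes with this decomposition, giving $\projtr{u} = \projtr{\view_p(u)}\, \projtr{w}$ up to $\sim_\ind$, and $\projtr{w}$ still contains no action from $\Sigma_p$. As the observable actions in $\Sigma_p$ are precisely the transitions in which slice $\pts{p}$ participates, $\projtr{w}$ consists only of transitions that neither consume nor produce a token of $\pts{p}$; firing them therefore leaves the $\pts{p}$-component of the marking unchanged. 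Hence
\[
\fireSeq{\pGame^\mathfrak{U}}{\projtr{u}} \cap \inv{\lambda}[\places^{\pts{p}}] = \fireSeq{\pGame^\mathfrak{U}}{\projtr{\view_p(u)}} \cap \inv{\lambda}[\places^{\pts{p}}],
\]
which is the second equality.

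The only point demanding care is the invariance under $\sim_\ind$: I must be allowed both to treat $\fireSeq{\pGame^\mathfrak{U}}{\projtr{u}}$ as linearization-independent and to reorder so that the $\pts{p}$-irrelevant transitions of $\projtr{w}$ are fired last. \refLemma{wellDefined} supplies exactly this, since it guarantees the reached marking is invariant under swaps of consecutive independent actions. Beyond that I expect no genuine obstacle, as the claim is a direct corollary of the reasoning already performed inside the proof of \refLemma{PtoC_sameCausalPast}.
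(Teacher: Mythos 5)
Your proof is correct, but it takes a genuinely different route from the paper's. The paper proves this lemma as a two-step corollary of \refLemma{PtoC_sameCausalPast}: letting $q$ be the unique place in $M \cap \inv{\lambda}[\places^{\pts{p}}]$, it combines the poset equality $\pastt{q} = \projtr{\view_p(u)}$ from that lemma with the observation that firing $\pastt{q}$ suffices to reach $q$. You instead argue directly from the trace decomposition $u \sim_\ind \view_p(u)\, w$ with $w$ free of $\Sigma_p$-actions, project away the local $\tau$-actions to get $\projtr{u} \sim_\ind \projtr{\view_p(u)}\,\projtr{w}$, and observe that the transitions in $\projtr{w}$ cannot move the $\pts{p}$-token --- which is exactly the first stage of the paper's own proof of \refLemma{PtoC_sameCausalPast} (the paper even remarks there that the present lemma is ``recovered'' from that computation). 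What your approach buys: it is self-contained and elementary, relying only on the definition of $\relation$, \refLemma{wellDefined}, and the slice structure, and in particular it avoids the nontrivial minimality/contradiction argument needed for the full poset equality, so your proof could even be placed \emph{before} \refLemma{PtoC_sameCausalPast} without any risk of circularity; the paper's route buys brevity once the stronger lemma is in hand. Two small points to tighten: first, make explicit why transitions outside $\transitions^{\pts{p}}$ leave $\inv{\lambda}[\places^{\pts{p}}]$ untouched --- in a sliceable game the places of distinct slices are disjoint and the flow relation is the disjoint union of the slice flows, so a transition consumes from or produces onto a slice's places if and only if it belongs to that slice's transitions; second, you need no reordering beyond what the decomposition already provides, since $\projtr{u} \sim_\ind \projtr{\view_p(u)}\,\projtr{w}$ together with \refLemma{wellDefined} already licenses simulating the reordered sequence.
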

\begin{proof}
	Let $q$ be the unique place with $q \in M \cap \inv{\lambda}[\places^{\pts{p}}]$.
	It holds that $M \cap \inv{\lambda}[\places^{\pts{p}}] = \fireSeq{\pGame^\mathfrak{U}}{\pastt{q}} \cap \inv{\lambda}[\places^{\pts{p}}]$ since firing the transitions in the past of $q$ is always sufficient to reach $q$. Note that writing down $\fireSeq{\pGame^\mathfrak{U}}{\pastt{q}}$ is well-defined. 
	By \refLemma{PtoC_sameCausalPast}, it holds that $\pastt{q} = \projtr{\view_p(u)}$.
	We know conclude that 
	\begin{align*}
	M \cap \inv{\lambda}[\places^{\pts{p}}] &= \fireSeq{\pGame^\mathfrak{U}}{\pastt{q}} \cap \inv{\lambda}[\places^{\pts{p}}]\\
	&= \fireSeq{\pGame^\mathfrak{U}}{\projtr{\view_p(u)}} \cap \inv{\lambda}[\places^{\pts{p}}]
	\qedhere
	\end{align*}
\end{proof}

 Our definition of $\varrho_\sigma$ is completely independent from the definition of $\relation$. 
\refLemma{PtoC_localeViewSamePlace}, however, establishes an important relation between them.
Suppose $u$ is the global play in $\cGame_\pGame$ and $M$ a marking such that $M \relation u$. From the definition of $\relation$, we know that $\fireSeq{\pNet^\sigma}{\projtr{u}} = M$.
Since $\view_p(u)$ differs (in general) from $u$, simulating $\view_p(u)$ instead of~$u$ results in a different marking $M'$. 
\refLemma{PtoC_localeViewSamePlace} now states that for process $p$, the place that belongs to $\pts{p}$ is identical in $M$ and $M'$. 
This establishes a connection to our controller definition as, in $\varrho_\sigma$, each process simulates its local view and copies the decisions on the resulting marking.
By \refLemma{PtoC_localeViewSamePlace} in related situations, every process therefore copies the decision of one of the places in $M$. 

\refLemma{PtoC_localeViewSamePlace} allows us to show that the defined $\varrho_\sigma$ actually enables the same behavior if $M \relation u$. Essentially, it allows us to conclude that $\varrho_\sigma$ copies $\sigma$ in $\relation$-related situations. 
We can reason in both direction:
\begin{itemize}
	\item If $u \, t \in \Plays(\cGame, \varrho_\sigma)$ then all involved processes allowed $t$.
	Every process $p \in \dom(t)$ either resides in an environment place (a state corresponding to an environment place) where it has no control or it is on a system place where it must have chosen a commitment set where $t$ is included. 
	$p$ chose its commitment set by simulating its local view on $u$ in the branching process of $\sigma$. By \refLemma{PtoC_localeViewSamePlace}, it thereby copied the decision of a system place in~$M$ (the system place $q \in M \cap \inv{\lambda}[\places^{\pts{p}}]$). 
	As $t$ is in the commitment set of every process involved in $t$, it must be in the postcondition of every of every system place involved in $t$. Therefore, $t$ is enabled in $M$.
	
	\item If the strategy allows a transition $t$ from $M$, all system places must have agreed, i.e., included $t$ in their postcondition.
	In $\varrho_\sigma$, each process decided on what to allow as a commitment set by simulating its local view and, by \refLemma{PtoC_localeViewSamePlace}, copies the decision of one system place in $M$. 
	As $t$ is included in the postcondition of all involved places, every process involved in $t$ thus chooses a commitment set where $t$ is included. 
	We can hence see that $t$ is an extension of $u$ (after playing sufficiently many $\tau$-actions to choose a commitment set).
\end{itemize}

 $\varrho_\sigma$ is a controller for both $\cGame_\pGame$ and $\widehat{\cGame_\pGame}$. To prove that $\sigma$ and $\varrho_\sigma$ are bisimilar, we can treat $\cGame_\pGame$ and $\widehat{\cGame_\pGame}$ as the same, i.e., ignoring all \Lightning-actions in $\widehat{\cGame_\pGame}$. 
We later show that,  if $\sigma$ is deterministic, \Lightning-actions are never part in any play compatible with $\varrho_\sigma$ and can hence be neglected for bisimulation. 

\begin{lemma}
	If $M \relation u$ and $u' = u \, t \in \Plays(\cGame_\pGame, \varrho_\sigma)$ then there exists a marking $M' \in \reach(\pNet^\sigma)$ with $\fireTranTo{M}{t}{M'}$ and $M' \relation u'$.
	\label{lem:PtoAl1}
\end{lemma}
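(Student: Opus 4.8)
The plan is to reduce the claim to a single fact: that the observable transition $t$ fires from $M$ \emph{inside the strategy} $\pNet^\sigma$. First I would observe that $t \in \transitions \subseteq \Sigma^\env$, so by clause \textbf{\textsf{(1)}} of the definition of $\Plays(\cGame, \varrho)$ the hypothesis $u' = u\,t \in \Plays(\cGame_\pGame, \varrho_\sigma)$ is equivalent to $t$ being globally enabled after $u$, i.e.\ $\delta_t$ is defined on the restriction of $\state{u}$ to $\dom(t)$. Moreover $\projtr{u'} = \projtr{u}\,t$, and since $\pNet^\sigma \sqsubseteq \pGame^\mathfrak{U}$, firing $t$ in $\pNet^\sigma$ coincides with firing it in the unfolding. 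Because the relation is restricted to $\reach(\pNet^\sigma)$, we have $\fireSeq{\pNet^\sigma}{\projtr{u}} = M$. Hence, once I exhibit $M' \in \reach(\pNet^\sigma)$ with $\fireTranTo{M}{t}{M'}$ in $\pNet^\sigma$, the definition of $\relation$ immediately gives $\fireSeq{\pGame^\mathfrak{U}}{\projtr{u'}} = M'$, that is, $M' \relation u'$. All the work therefore lies in the enabledness of $t$ in $\pNet^\sigma$.

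Next I would unpack global enabledness of $t$ process-by-process. For each $p \in \dom(t)$, the action $t$ must be available from $\statep{p}{u}$, so by inspecting $\vartheta$ either rule \textbf{\textsf{(2)}} applies, with $\statep{p}{u}$ an environment place $q \in \pre{\pts{p}}{t}$, or rule \textbf{\textsf{(3)}} applies, with $\statep{p}{u} = (q,A)$ for some $q \in \places_\psys$, $q \in \pre{\pts{p}}{t}$ and $t \in A$. In either case $\zeta(\statep{p}{u})$ is the unique precondition place of $t$ in the slice $\pts{p}$. By \refLemma{PtoC_sameLabel}, $\lambda[M] = \zeta(\state{u})$, so the places $q_p \in M \cap \inv{\lambda}[\places^{\pts{p}}]$ (for $p \in \dom(t)$) form a pairwise concurrent set $C \subseteq M$ in $\pNet^\sigma$ — reachable markings of a branching process are co-sets — with $\lambda[C] = \pre{\pGame}{t}$. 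Thus the syntactic premise of justified refusal for $t$ is met by $C$ in $\pNet^\sigma$.

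The crucial step is to show that no system place of $C$ refuses $t$. Take any $p \in \dom(t)$ whose process sits on a commitment state $(q,A)$. This commitment was produced by the controller in case \myItem{a)} of \refFig{PtoC_translation1}, so $A = \lambda[\post{\pNet^\sigma}{q'}]$, where $q'$ is the unique place of $\pts{p}$ in $\fireSeq{\pNet^\sigma}{\projtr{\view_p(u)}}$. By \refLemma{PtoC_localeViewSamePlace} this place $q'$ equals $q_p$, the place of $\pts{p}$ live in $M$; since $t \in A$, we obtain $t \in \lambda[\post{\pNet^\sigma}{q_p}]$, i.e.\ the system place $q_p$ allows $t$ in the strategy. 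The remaining members of $C$ are labelled by environment places, which never restrict transitions under justified refusal. Hence no place of $C$ refuses $t$, and justified refusal forces a transition $t'$ in $\pNet^\sigma$ with $\lambda(t') = t$ and $\pre{\pNet^\sigma}{t'} = C$. Since $C \subseteq M$, this $t'$ is enabled in $M$; firing it yields $M' \in \reach(\pNet^\sigma)$ with $\fireTranTo{M}{t}{M'}$, which together with the first paragraph completes the argument.

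I expect the main obstacle to be exactly this alignment in the third step: relating the commitment set $A$, which each process computes purely from its local view $\view_p(u)$, to the decision the global strategy $\sigma$ actually makes at the place $q_p$ that is live in $M$. This is where \refLemma{PtoC_localeViewSamePlace} is indispensable, guaranteeing that simulating the local view lands on the same $\pts{p}$-place as the global marking. A secondary care point is the justified-refusal step that upgrades ``$t$ is refused by no place of $C$'' to ``$t$ genuinely fires in $\pNet^\sigma$,'' which relies on $C$ being a concurrent set carrying the full precondition $\pre{\pGame}{t}$.
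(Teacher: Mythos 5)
Your proof is correct and follows essentially the same route as the paper's: both reduce the claim to showing $t$ is enabled in $\pNet^\sigma$, use \refLemma{PtoC_sameLabel} to obtain the concurrent set $C$ with $\lambda[C] = \pre{\pGame}{t}$, and then use \refLemma{PtoC_localeViewSamePlace} to identify the place reached by simulating $\projtr{\view_p(u)}$ with the $\pts{p}$-place in $M$, so that the chosen commitment set equals $\lambda[\post{\pNet^\sigma}{q_p}]$ and justified refusal yields the transition. The only presentational difference is that you argue directly (no place in $C$ refuses $t$) where the paper argues by contradiction, and you silently skip the paper's step of peeling off the last $\tau_{(\lambda(q),B)}$-action to reach the prefix where the controller actually decided — harmless, since the observable projection of the local view is unchanged by that local $\tau$.
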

\begin{proof}
	Since $M \relation u$, \refLemma{PtoC_sameLabel} allows us to conclude that $\zeta(\state{u}) = \lambda[M]$. \\
	We want to show that $t$ is enabled in $M$. This would imply that $\fireTranTo{M}{t}{M'}$ and $M' \relation u$ is a trivial consequence.
	From $\zeta(\state{u}) = \lambda[M]$ and since $t$ is possible from $\state{u}$, the construction of $\cGame_\pGame$ allows us to conclude that $t$ is enabled in $\lambda[M]$. 
	There hence is a set $C \subseteq M$ with $\lambda[C] = \pre{\pGame}{t}$. \\
	We assume for contradiction that $t$ is not allowed by the strategy.
	Because of justified refusal, there is a \textit{system} place $q \in C$ with $t \not\in \lambda[\post{\pNet^\sigma}{q}]$ \myItem{(1)}.
	Place $q$ belongs to some process $p$, i.e., $q \in M \cap \inv{\lambda}[\places^{\pts{p}}]$.
	We know that $\lambda(q) = \zeta(\statep{p}{u})$.
	By construction of $\dom$, we know that $p \in \dom(t)$.
	Since $u' = u\,t \in \Plays(\cGame_\pGame, \varrho_\sigma)$ and $q$ is a system place we know that $\statep{p}{u} = (\lambda(q), B)$ for some $B$ with $t \in B$, i.e., process $p$ has chosen a commitment set that includes $t$ \myItem{(2)}.
	We derive the contradiction by showing that the set of transitions leaving $q$ ($\lambda[\post{\pNet^\sigma}{q}]$) agrees with the decision of $p$ and must hence, by \myItem{(2)}, include $t$.\\
	As $\statep{p}{u} = (\lambda(q), B)$, there must be a $\tau_{(\lambda(q), B)}$-action in $u$, since this is the only action leading to state $(\lambda(q), B)$.
	Let $u_{\tau} \sqsubseteq u$ be the prefix obtained by removing the last such action.
	$u_\tau$ is a $\varrho_\sigma$-compatible play.
	It holds that $\statep{p}{u_\tau} = \lambda(q)$.
	We conclude that $\tau_{(\lambda(q), B)} \in f^{\varrho_\sigma}_p(\view_p(u_{\tau}))$.\\
	We can now study how $\varrho_\sigma$ chooses $B$ as its commitment set. 	
	By definition of $\varrho_\sigma$, we know that $B = \lambda[\post{\pNet^\sigma}{q'}]$ for the unique system place $q'$ with 
	$$q' \in \fireSeq{\pNet^\sigma}{\projtr{\view_p(u_{\tau})}} \cap \inv{\lambda}[\places^{\pts{p}}]$$
	Now by \refLemma{PtoC_localeViewSamePlace}: 
	\begin{align*}
	\{ q' \} &= \fireSeq{\pNet^\sigma}{\projtr{\view_p(u_{\tau})}} \cap \inv{\lambda}[\places^{\pts{p}}]\\
	&= \fireSeq{\pNet^\sigma}{\projtr{\view_p(u)}} \cap \inv{\lambda}[\places^{\pts{p}}]\\
	&= \fireSeq{\pNet^\sigma}{\projtr{u}} \cap \inv{\lambda}[\places^{\pts{p}}]\\
	&= M \cap \inv{\lambda}[\places^{\pts{p}}] \\
	&= \{q\}
	\end{align*}
	The system place reached by simulating $\projtr{\view_p(u_{\tau})}$ is hence exactly the system place in $M$. It follows that
	$$B = \lambda[\post{\pNet^\sigma}{q'}] = \lambda[\post{\pNet^\sigma}{q}]$$ 
	This is a contradiction to $t \in B$ \myItem{(2)} but we assumed $t \not\in \lambda[\post{\pNet^\sigma}{q}]$ \myItem{(1)}.
\end{proof}

\begin{lemma}
	If $M \relation u$ and $u' = u \, \tau \in \Plays(\cGame_\pGame, \varrho_\sigma)$ then $M \relation u'$.
	\label{lem:PtoAl2}
\end{lemma}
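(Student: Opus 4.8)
The plan is to observe that, unlike the preceding \refLemma{PtoAl1} where firing an observable transition changed the marking and forced an appeal to justified refusal, this statement is immediate from the definition of $\relation$: the appended action is \emph{internal} and hence invisible on the marking side of the bisimulation. So the work already done in \refLemma{PtoAl1} has no counterpart here.

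Concretely, I would first recall that in the (restricted) relation we have $M \relation u$ exactly when $\fireSeq{\pNet^\sigma}{\projtr{u}} = M$, where $\projtr{\cdot}$ deletes every symbol of the play that is not a transition of $\pGame$. The action appended in $u' = u\,\tau$ is a commitment-choosing $\tau_{(q,A)}$-action (or, in $\widehat{\cGame_\pGame}$, a \Lightning-action); in either case it lies outside $\transitions$. By the defining clauses of the projection this gives $\projtr{u'} = \projtr{u\,\tau} = \projtr{u}$, so the observable trace is left completely unchanged.

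From $M \relation u$ we have $\fireSeq{\pNet^\sigma}{\projtr{u}} = M$, and combining this with $\projtr{u'} = \projtr{u}$ yields $\fireSeq{\pNet^\sigma}{\projtr{u'}} = M$, which is precisely $M \relation u'$. I would then note that the two side conditions of the restricted relation hold for free: $M$ is the same reachable marking of $\pNet^\sigma$ as before, and $u' \in \Plays(\cGame_\pGame, \varrho_\sigma)$ is assumed in the statement. There is no genuine obstacle; the only point worth flagging is that $\fireSeq{\pNet^\sigma}{\projtr{u}}$ is well-defined independently of the chosen linearization, which we already secured in \refLemma{wellDefined}, so that ``the'' marking reached via $\projtr{u'} = \projtr{u}$ is unambiguous.
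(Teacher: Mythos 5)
Your proposal is correct and is exactly the paper's argument: the paper dismisses this lemma with ``Obvious consequence from the definition of $\relation$,'' and your write-out (the appended action lies outside $\transitions$, so $\projtr{u'} = \projtr{u}$ and hence $\fireSeq{\pNet^\sigma}{\projtr{u'}} = M$) is simply the honest expansion of that one-liner, with the well-definedness point correctly delegated to \refLemma{wellDefined}.
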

\begin{proof}
	Obvious consequence from the definition of $\relation$.
\end{proof}

In the proofs above, we never have to deal with case \myItem{b)} in the definition of $\varrho_\sigma$.
We always conclude statements under the assumption that $M \relation u$ for some $M$ and $u$. By definition of~$\relation$, $\projtr{u}$ is a valid sequence in $\pNet^\sigma$ and therefore $\projtr{\view_p(u)}$ as well (since $\projtr{\view_p(u)}$ is a prefix of $\projtr{u}$). 
We can show the next corollary which shows that case \myItem{3b)} can be ignored for \emph{any} $\varrho_\sigma$-compatible play\footnote{This is no statement required by strategy-equivalence.}.
\begin{corollary}
	If $u$ is a $\varrho_\sigma$-compatible play then $\projtr{\view_p(u)}$ is a valid sequence in $\pNet^\sigma$.
\end{corollary}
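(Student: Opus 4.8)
The plan is to reduce the statement to the defining property of $\relation$, by first noting that every $\varrho_\sigma$-compatible play is $\relation$-related to some reachable marking of $\pNet^\sigma$, and then observing that $\projtr{\view_p(u)}$ is, up to trace equivalence, a prefix of $\projtr{u}$. First I would show that for every $u \in \Plays(\cGame_\pGame, \varrho_\sigma)$ there is a marking $M \in \reach(\pNet^\sigma)$ with $M \relation u$. Since $\Plays(\cGame_\pGame, \varrho_\sigma)$ is, by definition, the smallest set containing $\epsilon$ and closed under the two extension rules (appending a $\tau$-action or an admitted observable action), this follows by induction on the construction of $u$: the base case is $\init^\sigma \relation \epsilon$, and the inductive step is exactly the content of \refLemma{PtoAl1} (for an observable transition) and \refLemma{PtoAl2} (for a $\tau$-action). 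There is no circularity with the fallback case \myItem{3b)}, because the hypothesis $M \relation u$ carried by those lemmas already supplies validity of $\projtr{u}$ in $\pNet^\sigma$; they are proved under that hypothesis and are not appealed to outside it.

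From $M \relation u$, the defining property of $\relation$ (restricted to reachable markings of the strategy) yields $\fireSeq{\pNet^\sigma}{\projtr{u}} = M$, so $\projtr{u}$ is itself a valid sequence of transitions in $\pNet^\sigma$. It then remains to pass from $\projtr{u}$ to $\projtr{\view_p(u)}$. By definition of the local view, $u \sim_\ind \view_p(u)\, w$ for some suffix $w$ containing no actions of $\Sigma_p$; projecting onto $\transitions$ and using that the $\tau$-actions are local to a single process gives $\projtr{u} \sim_\ind \projtr{\view_p(u)}\,\projtr{w}$, so that $\projtr{\view_p(u)}$ is a sequential prefix of a linearization of $[\projtr{u}]_\ind$. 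By \refLemma{wellDefined} (which applies verbatim to $\pNet^\sigma$, the strategy being a subnet of the unfolding), every linearization of $[\projtr{u}]_\ind$ is a valid firing sequence in $\pNet^\sigma$, and any prefix of a valid firing sequence is again valid; hence $\projtr{\view_p(u)}$ is a valid sequence in $\pNet^\sigma$, which is the claim.

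The main obstacle is the bookkeeping in the final step: one must verify that projecting onto $\transitions$ respects $\sim_\ind$, so that swaps involving the local $\tau$-actions collapse harmlessly while swaps of independent transitions persist, and one must read ``prefix of a valid firing sequence is valid'' at the level of the strategy $\pNet^\sigma$ rather than merely the full unfolding. Both points reduce to the locality of the $\tau$-actions together with \refLemma{wellDefined}, which already guarantees invariance of $\fireSeq{\pNet^\sigma}{\projtr{\cdot}}$ under $\sim_\ind$; the remainder is routine.
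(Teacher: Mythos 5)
Your proposal is correct and follows essentially the same route as the paper: obtain $M \relation u$ by replaying $u$ from $\init^\sigma \relation \epsilon$ via \refLemma{PtoAl1} and \refLemma{PtoAl2}, read off validity of $\projtr{u}$ from the definition of $\relation$, and conclude via the prefix property of $\projtr{\view_p(u)}$. The extra bookkeeping you supply (the $\sim_\ind$-invariance via \refLemma{wellDefined} and the non-circularity of case \myItem{3b)}) merely makes explicit what the paper's shorter proof leaves implicit.
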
 
\begin{proof}
	It holds that $\init^\sigma \relation \epsilon$. 
	By playing $u$ and using \refLemma{PtoAl1} and \refLemma{PtoAl2}, we get a reachable marking $M$ in $\pNet^\sigma$ with $M \relation u$. 
	By definition of $\relation$, $\projtr{u}$ is a valid sequence in~$\pNet^\sigma$. Since $\projtr{\view_p(u)}$ is a prefix of $\projtr{u}$ it is a valid sequence as well.  
\end{proof}

\begin{lemma}
	If $M \relation u$ and $\fireTranTo{M}{t}{M'}$ for some $M' \in \reach(\pNet^\sigma)$ there exists $u' = u \, \tau^* \, t \in \Plays(\cGame_\pGame, \varrho_\sigma)$ with $M' \approx u'$.
	\label{lem:PtoAl3}
\end{lemma}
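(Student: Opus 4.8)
The plan is to run the exact dual of \refLemma{PtoAl1}: there we started from an action extending a controller-compatible play and recovered an enabled strategy move, whereas here we start from a strategy move $\fireTranTo{M}{t}{M'}$ and must produce a matching controller-compatible extension of the form $u\,\tau^*\,t$. First I would apply \refLemma{PtoC_sameLabel} to $M \relation u$ to get $\zeta(\state{u}) = \lambda[M]$; thus, for every process $p$, the state $\statep{p}{u}$ carries the $\zeta$-label $\lambda(q_p)$, where $q_p$ is the unique place in $M \cap \inv{\lambda}[\places^{\pts{p}}]$. Since the strategy fires $t$ from $M$, there is a set $C = \{q_p \mid p \in \dom(t)\} \subseteq M$ with $\lambda[C] = \pre{\pGame}{t}$, and because $t$ is actually present in $\pNet^\sigma$ we have $t \in \lambda[\post{\pNet^\sigma}{q_p}]$ for every $p \in \dom(t)$.

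The core of the argument is to show that the commitment sets of the system-place processes involved in $t$ do contain $t$. For $p \in \dom(t)$ with $q_p \in \places_\penv$, no commitment is needed: rule \textbf{\textsf{(2)}} already lets $t$ move from the state $\lambda(q_p) = \statep{p}{u}$. For $p \in \dom(t)$ with $q_p \in \places_\psys$ I distinguish whether $p$ has committed yet in $u$. If not, i.e.\ $\statep{p}{u} = \lambda(q_p)$, then $f^{\varrho_\sigma}_p(\view_p(u))$ prescribes precisely the $\tau$-action committing to $\lambda[\post{\pNet^\sigma}{q'}]$, where $q'$ is the place obtained by simulating $\projtr{\view_p(u)}$ in $\pNet^\sigma$; by \refLemma{PtoC_localeViewSamePlace} this $q'$ equals $q_p$, so the chosen commitment set is $\lambda[\post{\pNet^\sigma}{q_p}] \ni t$. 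If $p$ has already committed, its commitment state $(\lambda(q_p), B)$ was produced by $\varrho_\sigma$ on a $\varrho_\sigma$-compatible prefix, and since no transition involving $\pts{p}$ has fired since (the token is still at $q_p$), the same computation via \refLemma{PtoC_localeViewSamePlace} gives $B = \lambda[\post{\pNet^\sigma}{q_p}] \ni t$. Either way, every system-place process involved in $t$ carries a commitment set containing $t$.

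I would then let $\tau^*$ be the sequence of commitment $\tau$-actions required in the uncommitted case, one per such process. Each $\tau_{(q,A)}$ is local to a single process (its $\dom$ is a singleton), so these actions are pairwise independent and may be played in any order; each is the unique controllable action offered by its local controller, whence $u\,\tau^*$ is $\varrho_\sigma$-compatible, and because $\tau$-actions vanish under $\projtr{\cdot}$ we still have $M \relation u\,\tau^*$. After $\tau^*$, every process in $\dom(t)$ sits in a state from which $t$ can move: rule \textbf{\textsf{(2)}} for environment places and rule \textbf{\textsf{(3)}} with $t$ in the commitment set for system places. Hence $t$ is enabled in $\AA_\pGame$ from $\state{u\,\tau^*}$, and since $t \in \Sigma^\env$ is uncontrollable, condition \textbf{\textsf{(1)}} in the definition of $\Plays(\cGame_\pGame, \varrho_\sigma)$ makes $u' = u\,\tau^*\,t$ controller-compatible. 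Finally $\projtr{u'} = \projtr{u}\,t$, so $\fireSeq{\pNet^\sigma}{\projtr{u'}} = M'$ and therefore $M' \relation u'$ by definition of $\relation$.

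The main obstacle is exactly the commitment-set computation: one must argue that the place $q'$ reached by simulating the \emph{local} view $\projtr{\view_p(u)}$ coincides with the place $q_p$ in the \emph{global} marking $M$, which is precisely where \refLemma{PtoC_localeViewSamePlace} is indispensable. The already-committed case additionally requires the observation that any earlier commitment was itself dictated by $\varrho_\sigma$, so that no system process can be stranded with a commitment set missing $t$; this is the same bookkeeping used in the $\myItem{(2)}$-step of \refLemma{PtoAl1}.
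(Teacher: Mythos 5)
Your proposal is correct and follows essentially the same route as the paper's proof: both hinge on \refLemma{PtoC_sameLabel} plus \refLemma{PtoC_localeViewSamePlace} to identify the place $q'$ reached by simulating $\projtr{\view_p(u)}$ with the place $q_p \in M \cap \inv{\lambda}[\places^{\pts{p}}]$, so that every chosen commitment set equals $\lambda[\post{\pNet^\sigma}{q_p}]$ and hence contains $t$. The only difference is presentational: the paper first saturates $u$ with $\tau$-actions to a maximal $u_\tau$ (so every process has committed) and then derives a contradiction from a hypothetical process whose commitment set misses $t$, whereas you construct $\tau^*$ directly and split the committed/uncommitted cases explicitly — the committed case of your argument is exactly the paper's $u_\tau^-$ bookkeeping.
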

\begin{proof}
	Since $M \relation u$, \refLemma{PtoC_sameLabel} allows us to conclude that $\zeta(\state{u}) = \lambda[M]$ \myItem{(1)}. 
	Transition $t$ is enabled in $M$ and hence for every place $q$ in $M$ with $\lambda(q) \in \pre{\pGame}{t}$, it holds that $t \in \lambda[\post{\pNet^\sigma}{q}]$ \myItem{(2)}.\\
	Let $u_\tau$ be $u$ extended with as many $\tau$-actions as possible such that no $\tau$-action is possible after~$u_\tau$. Since $\varrho_\sigma$ always allows a commitment set, after playing $u_\tau$, every process that can choose a commitment set, has chosen a commitment set, i.e., for every process $p$ with $\zeta(\statep{p}{u_\tau}) \in \places_\psys$ we know that $\statep{p}{u_\tau} = (\_, \_)$.\\
	Assume for contradiction $u_\tau \, t \not\in \Plays(\cGame_\pGame, \varrho_\sigma)$. Since \myItem{(1)} holds, every process on a system state has chosen a commitment set and action $t$ is uncontrollable, we can conclude that there exists a process $p \in \dom(t)$ with $\statep{p}{u_\tau} = (q_{p}, B)$ but where $t \not \in B$.
	That is, $p$ has chosen a commitment set where $t$ is not included.
	Let $q'_{p} \in M \cap \inv{\lambda}[\places^{\pts{p}}]$ be the corresponding place in $M$.
	Since $p \in \dom(t)$ and $t$ is enabled in $\lambda[M]$, we conclude that $\lambda(q'_{p}) \in \pre{\pGame}{t}$ and by \myItem{(2)} we get that $t \in \lambda[\post{\pNet^\sigma}{q'_{p}}]$ \myItem{(3)}, i.e., from the place in $M$ that corresponds to~$p$, $t$ is enabled (in the postcondition). \\
	Since $p$ is in state $(q_{p}, B)$ there is an $\tau_{(q_{p}, B)}$-action in $u_\tau$.
	Let $u_\tau^-$ be $u_\tau$ where the last such action is removed such that $p$ has not chosen a commitment set (i.e., $\statep{{p}}{u_\tau} = q_{p}$).
	We can conclude that $\tau_{(q_{p}, B)} \in f_{p}(\view_{p} (u_\tau^-))$.\\
	By the definition of $\varrho_\sigma$, it holds that $B = \lambda[\post{\pNet^\sigma}{q''_{p}}]$ for the unique place $q''_{p}$ with
	$$q''_{p} \in \fireSeq{\pNet^\sigma}{\projtr{\view_{p}(u_\tau^-)}} \cap \inv{\lambda}[\places^{\pts{p}}]$$
	Because of \refLemma{PtoC_localeViewSamePlace}:
	\begin{align*}
	\{ q''_{p} \} &= \fireSeq{\pNet^\sigma}{\projtr{\view_{p}(u_\tau^-)}} \cap \inv{\lambda}[\places^{\pts{p}}]\\
	&= \fireSeq{\pNet^\sigma}{\projtr{\view_{p}(u_\tau)}} \cap \inv{\lambda}[\places^{\pts{p}}]\\
	&= \fireSeq{\pNet^\sigma}{\projtr{u}} \cap \inv{\lambda}[\places^{\pts{p}}]\\
	&= M \cap \inv{\lambda}[\places^{\pts{p}}] \\
	&= \{q'_{p} \}
	\end{align*}
	We hence conclude that $q''_{p} = q'_{p}$ and get 
	$$B = \lambda[\post{\pNet^\sigma}{q''_{p}}] = \lambda[\post{\pNet^\sigma}{q'_{p}}]$$
	The chosen commitment set $B$ agrees with the transitions leaving $q'_{p}$. This is a contradiction to $t \in \lambda[\post{\pNet^\sigma}{q'_{p}}]$ \myItem{(3)} and our assumption $t \not\in B$. 
\end{proof}

\begin{corollary}
	$\sigma$ and $\varrho_\sigma$ are bisimilar.
\end{corollary}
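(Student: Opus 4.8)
The plan is to verify directly that the relation $\relation$, restricted to $\reach(\pNet^\sigma) \times \Plays(\cGame_\pGame, \varrho_\sigma)$, satisfies the four transfer conditions of the bisimulation definition; following the remark above, we treat $\cGame_\pGame$ and $\widehat{\cGame_\pGame}$ alike by ignoring the \Lightning-actions. The substantive work has already been carried out in \refLemma{PtoAl1}, \refLemma{PtoAl2}, and \refLemma{PtoAl3}, so the corollary amounts to little more than matching each condition to the appropriate lemma and settling the anchor pair.

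First I would check $\init^\sigma \relation \epsilon$: since $\projtr{\epsilon} = \epsilon$ and firing the empty sequence yields the initial marking, which coincides with $\init^\sigma$, the pair lies in $\relation$ by definition. The two conditions moving from the controller side into the net are then exactly the two forward lemmas. For an observable step $u' = u\,a$, \refLemma{PtoAl1} supplies $M' \in \reach(\pNet^\sigma)$ with $\fireTranTo{M}{a}{M'}$ and $M' \relation u'$, which instantiates the required pattern $\fireTranTo{M}{\tau^* a \tau^*}{M'}$ with empty $\tau$-blocks. For an internal step $u' = u\,\tau$, \refLemma{PtoAl2} gives $M \relation u'$, so we take $M' = M$ with the empty firing sequence. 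Dually, an observable transition of the net is matched on the controller side by \refLemma{PtoAl3}, which produces $u' = u\,\tau^*\,a \in \Plays(\cGame_\pGame, \varrho_\sigma)$ with $M' \relation u'$, fitting the shape $u\,\tau^* a \tau^*$ with no trailing $\tau$'s.

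The one remaining condition concerns an internal transition of the net, and here I would observe that it is vacuous: every transition of $\pNet^\sigma$ carries a $\lambda$-label in $\transitions$, and all transitions of $\pGame$ are observable actions of $\cGame_\pGame$, so $\fireTranTo{M}{\tau}{M'}$ can never occur. Collecting these observations shows that $\relation$ witnesses the bisimilarity. There is no genuine obstacle left at this stage; the difficulty was absorbed into \refLemma{PtoC_sameCausalPast} and \refLemma{PtoC_localeViewSamePlace}, which ensure that in $\relation$-related situations each process commits exactly as the corresponding system place of $\sigma$ does. The only point requiring care is the purely notational verification that the $\tau^*$-padding in the definition may be instantiated as empty, which is legitimate because the plays and firing sequences delivered by the lemmas already have precisely that restricted form.
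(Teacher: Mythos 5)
Your proof is correct and takes essentially the same route as the paper: the paper's own proof likewise anchors at $\init^\sigma \relation \epsilon$ and reduces the transfer conditions to \refLemma{PtoAl1}, \refLemma{PtoAl2}, and \refLemma{PtoAl3}, dismissing the net-internal case with the observation that $\pGame$ has no local (unobservable) transitions. Your extra remarks on instantiating the $\tau^*$-padding as empty and taking $M' = M$ in the internal case are just explicit spellings of the same argument.
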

\begin{proof}
	By definition, $\init^\sigma \relation \epsilon$ holds.
	Since there are no local (unobservable) transitions in $\pGame$ the statement follows from \refLemma{PtoAl1}, \refLemma{PtoAl2}, and \refLemma{PtoAl3}.
\end{proof}
 Having proven bisimilarity, we can show that winningness is preserved by our translation. 

\begin{lemma}
	If $\sigma$ is a winning strategy for $\pGame$ then $\varrho_\sigma$ is a winning controller for $\cGame_\pGame$.
\end{lemma}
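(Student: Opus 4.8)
The plan is to check the two conditions in the definition of a reachability-winning controller for $\varrho_\sigma$: that every $\varrho_\sigma$-compatible play is finite, and that on every final play all processes sit in winning states. Both will be read off from the bisimilarity of $\sigma$ and $\varrho_\sigma$ proved just above, together with the two guarantees that winningness of $\sigma$ provides --- that $\pNet^\sigma$ is a finite occurrence net and that in every final reachable marking of $\sigma$ all tokens lie on winning places. Note that $\cGame_\pGame$ (unlike $\widehat{\cGame_\pGame}$) has no $\badw$-states, so every reachable local state is either a place or a commitment state $(q,A)$.

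For finiteness, I would argue that $\Plays(\cGame_\pGame, \varrho_\sigma)$ contains no infinite play. Starting from $\init^\sigma \relation \epsilon$ and applying $\refLemma{PtoAl1}$, $\refLemma{PtoAl2}$, and $\refLemma{PtoAl3}$ inductively, every $\varrho_\sigma$-compatible play $u$ satisfies $M \relation u$ for some $M \in \reach(\pNet^\sigma)$, so $\projtr{u}$ is a valid firing sequence of $\pNet^\sigma$. Since $\pNet^\sigma$ is a finite occurrence net, each of its transitions fires at most once, bounding the length of $\projtr{u}$ by $|\transitions^\sigma|$. The only other actions are $\tau$-actions, and each process commits at most once per visit to a system place; hence the number of $\tau$-actions interspersed between observable ones is bounded as well, and every play is finite.

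For the winning condition, I would take a final play $u$ and choose $M$ with $M \relation u$. The first step is to show $M$ is final in $\pNet^\sigma$: if some transition $t$ fired from $M$, then $\refLemma{PtoAl3}$ would produce an extension $u\,\tau^{*}\,t \in \Plays(\cGame_\pGame, \varrho_\sigma)$; but finality of $u$ forbids any $\tau$-extension (forcing $\tau^{*}=\epsilon$) and any observable extension, a contradiction. Because $\sigma$ is winning, $\lambda[M] \subseteq \win$. By $\refLemma{PtoC_sameLabel}$ we then have $\zeta(\state{u}) = \lambda[M] \subseteq \win$, so for each process $p$ its component $\zeta(\statep{p}{u})$ lies in $\win \cap \places^{\pts{p}}$. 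By the definition of the winning states $\win_\slice$ --- which collects the winning places of $\slice$ together with the commitment states $(q,A)$ over such places --- this yields $\statep{p}{u} \in \win_{\pts{p}}$ for every $p$, i.e., all processes terminate in winning states.

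The step I expect to be the main obstacle is the clean bookkeeping around $\tau$-actions: establishing that a final play corresponds exactly to a final marking of $\pNet^\sigma$ (the $\tau^{*} = \epsilon$ argument, which relies on the fact that $\varrho_\sigma$ always offers a commitment set so that no pending commitments remain in a final play) and that finiteness of the occurrence net really does bound play length despite the extra commitment moves. Once these are pinned down, the transfer of winningness is an immediate consequence of $\refLemma{PtoC_sameLabel}$ and the shape of $\win_\slice$.
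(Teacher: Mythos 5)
Your proposal is correct and takes essentially the same approach as the paper's proof: finiteness of all $\varrho_\sigma$-compatible plays via the bound on consecutive $\tau$-actions together with the bisimulation into the finite net $\pNet^\sigma$, and the winning condition by relating a maximal play $u$ to a final marking $M$ through $\relation$, then transferring $\lambda[M] \subseteq \win$ to $\state{u}$ via \refLemma{PtoC_sameLabel} and the construction of the winning states $\win_\slice$. Your additional bookkeeping (the explicit transition-count bound from the occurrence-net structure and the finality argument via \refLemma{PtoAl3}) merely spells out steps the paper leaves implicit.
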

\begin{proof}
	We first show that all plays in $\Plays(\cGame_\pGame, \varrho_\sigma)$ are finite: 
	Assume for contradiction there is an infinite play $u$. Due to $\cGame_\pGame$ not permitting infinite sequences of consecutive $\tau$-actions, $u$ must contain infinitely many observable actions. 
	By bisimulation, we have an infinite sequence of markings $\pNet^\sigma$. This is a contradiction since $\sigma$ is by assumption winning and therefore by definition finite. \\
	We now show that all maximal plays terminate in a winning configuration:
	Suppose $u \in \Plays(\cGame_\pGame, \varrho_\sigma)^M$ is a maximal $\varrho_\sigma$-compatible play, i.e., cannot be extended by any action. Using our bisimulation, there exists a reachable marking $M$ in $\pNet^\sigma$ with $M \relation u$. Since $u$ is maximal, $M$ is final. Since $\sigma$ is winning, $M$ must be a winning marking. Now, $\zeta(\state{u}) = \lambda[M]$ (by \refLemma{PtoC_sameLabel}) and from our construction of the winning states in $\cGame_\pGame$ it follows that $\state{u}$ is winning as well.
\end{proof}

\paragraph*{Deterministic Strategies}
So far, we ignored all \Lightning-actions introduced with $\widehat{\cGame_\pGame}$. 
We can justify this by showing that the \Lightning-actions can actually never be taken, if $\varrho_\sigma$ is constructed from a deterministic $\sigma$.
The \Lightning-transition can occur when the processes have chosen their commitment sets such that two transitions are enabled from the same set. 
By construction, $\varrho_\sigma$ chooses its commitment sets in accordance with the strategy $\sigma$, i.e., the actions in a commitment set are exactly the ones that are enabled by a place in $\sigma$. 
If $\sigma$ is deterministic there is at most one transition enabled from every system place and thereby at most one action possible from each commitment set; the \Lightning-actions are thus never enabled.
Formally:

\begin{lemma}
	\label{lem:pToAdet1}
	If $\sigma$ is deterministic, then there is no play in $\Plays(\widehat{\cGame_\pGame}, \varrho_\sigma)$ that contains a \Lightning-action.
\end{lemma}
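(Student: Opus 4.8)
The plan is to argue by contradiction. Suppose some play in $\Plays(\widehat{\cGame_\pGame}, \varrho_\sigma)$ fires a \Lightning-action, and let $u$ be the prefix ending just before the \emph{first} such action, so that $u$ is \Lightning-free and hence a play of $\cGame_\pGame$ compatible with $\varrho_\sigma$; let the firing action be $\text{\Lightning}^{(q, A)}_{[t_1, t_2]}$. Replaying $u$ from $\init^\sigma \relation \epsilon$ with \refLemma{PtoAl1} and \refLemma{PtoAl2} yields a marking $M \in \reach(\pNet^\sigma)$ with $M \relation u$. The whole argument then aims to exhibit a single \emph{system} place of $\pNet^\sigma$ from which two distinct transitions are simultaneously enabled in $M$, which contradicts the determinism of $\sigma$.

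First I would unfold the firing condition of $\text{\Lightning}^{(q, A)}_{[t_1, t_2]}$ given by rules \textbf{\textsf{(4)}}, \textbf{\textsf{(5)}}, and \textbf{\textsf{(6)}}. The process $p = \stp{\slice}$ owning $q$ is in state $(q, A)$, and by the definition of the \Lightning-actions $t_1, t_2 \in A \subseteq \post{\pGame}{q}$ with $t_1 \neq t_2$; hence $q \in \pre{\slice}{t_1} \cap \pre{\slice}{t_2}$, so $p \in \dom(t_1) \cap \dom(t_2)$ and state $(q,A)$ already offers both $t_1$ and $t_2$ (rule \textbf{\textsf{(3)}} of $\vartheta_\slice$). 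Every remaining process $\slice \in \dom(t_1) \cup \dom(t_2)$ synchronizes on the \Lightning-action, and the side conditions of rules \textbf{\textsf{(5)}}/\textbf{\textsf{(6)}} guarantee, for every such $\slice$ with $t_i \in \transitions^\slice$, that $t_i$ lies in the postcondition of its current environment place or in its chosen commitment set. Thus every process in $\dom(t_i)$ offers $t_i$, so $\delta_{t_i}$ is defined at $\state{u}$ and both $t_1$ and $t_2$ are enabled in the global automaton. Since $t_1, t_2 \in \transitions \subseteq \Sigma^\env$ are uncontrollable, both $u\,t_1$ and $u\,t_2$ are therefore plays in $\Plays(\cGame_\pGame, \varrho_\sigma)$.

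Next I would transport this enabledness into $\pNet^\sigma$. Applying \refLemma{PtoAl1} to $u\,t_1$ and to $u\,t_2$ (using $M \relation u$) produces transitions $\tilde t_1, \tilde t_2$ of $\pNet^\sigma$ with $\lambda(\tilde t_i) = t_i$ and $\pre{\pNet^\sigma}{\tilde t_i} \subseteq M$. By \refLemma{PtoC_sameLabel}, $\lambda[M] = \zeta(\state{u})$, and since $\pGame$ is sliceable and hence $1$-bounded, $M$ contains a \emph{unique} place labelled $q$, namely $q^* \in M \cap \inv{\lambda}[\places^{\pts{p}}]$ with $\lambda(q^*) = q$; as $q \in \places_\psys$, $q^*$ is a system place. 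Because $\lambda[\pre{\pNet^\sigma}{\tilde t_i}] = \pre{\pGame}{t_i} \ni q$ while $\pre{\pNet^\sigma}{\tilde t_i} \subseteq M$, $1$-boundedness forces the $q$-labelled place of $\pre{\pNet^\sigma}{\tilde t_i}$ to be exactly $q^*$, so $q^* \in \pre{\pNet^\sigma}{\tilde t_i}$, i.e.\ $\tilde t_1, \tilde t_2 \in \post{\pNet^\sigma}{q^*}$. As $t_1 \neq t_2$ gives $\tilde t_1 \neq \tilde t_2$, we have two distinct transitions of $\post{\pNet^\sigma}{q^*}$ both enabled in $M$, contradicting the determinism of $\sigma$, which admits at most one enabled transition of $\post{\pNet^\sigma}{q^*}$ in $M$.

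I expect the main obstacle to be the middle step: faithfully converting the \emph{local}, state-based firing condition of the \Lightning-action in $\widehat{\cGame_\pGame}$ into the \emph{global} statement that $t_1$ and $t_2$ are genuinely enabled in the strategy marking $M$ from the \emph{same} system place. This hinges on reading rules \textbf{\textsf{(5)}}/\textbf{\textsf{(6)}} so that every process in $\dom(t_i)$ truly offers $t_i$, and on combining \refLemma{PtoAl1} with \refLemma{PtoC_sameLabel} and $1$-boundedness to pin both witnessing transitions onto $q^*$; once they are localized at $q^*$, the clash with determinism is immediate.
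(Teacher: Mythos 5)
Your proposal is correct and takes essentially the same route as the paper's own proof: isolate the first \Lightning-action so that the prefix $u$ is \Lightning-free, observe that $u\,t_1$ and $u\,t_2$ are then $\varrho_\sigma$-compatible plays (all transitions being uncontrollable), use the established bisimulation lemmas to obtain $M \relation u$ with both $t_1$ and $t_2$ enabled in $\pNet^\sigma$, and localize both at the unique system place labelled $q$ to contradict determinism. Your explicit appeal to $1$-boundedness and the label argument on the preconditions is merely a spelled-out version of the paper's direct identification of the place $q' \in M \cap \inv{\lambda}[\places^{\pts{p}}]$ belonging to the process in state $(q,A)$.
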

\begin{proof}
	Suppose the opposite, i.e., there is a $\varrho_\sigma$-compatible play $u$ that contains a \Lightning-action.
	W.l.o.g.\ $u = u'\, \text{\Lightning}^{(q,A)}_{[t_1,t_2]}$ with $q \in \places_\psys$, $A \subseteq \post{\pGame}{q}$, $t_1, t_2 \in A$, and there is no $\text{\Lightning}$-action in $u'$. 
	By construction of the \Lightning-actions, it is easy to see that if $u'\, \text{\Lightning}^{(q,A)}_{[t_1,t_2]}$ is a play then $u' \, t_1$ and $u' \, t_2$ are as well. 
	
	If $\text{\Lightning}^{(q,A)}_{[t_1,t_2]}$ is possible, the transition relation in $\widehat{\cGame_\pGame}$ also requires a process $p$ in state $(q, A)$, i.e., $\statep{p}{u'} = (q, A)$.
	Since there are no \Lightning-action in $u'$ we can use the previous bisimulation result and obtain a marking $M \in \reach(\pNet^\sigma)$ with $M \relation u'$. By bisimulation, we know that $t_1$ and $t_2$ (transitions with that label) are enabled from $M$. 
	
	Let $q' \in M \cap \inv{\lambda}[\places^{\pts{p}}]$ be the \emph{system} place in $M$ that corresponds to process $p$. From \refLemma{PtoC_sameLabel}, we know that $\lambda(q') = q$. 
	Since $t_1, t_2 \in \dom(p)$, we get $t_1, t_2 \in \transitions^{\pts{p}}$. Place $q'$ is therefore involved in both $t_1$ and $t_2$. So, we see that $t_1, t_2 \in \lambda[\post{\pNet^\sigma}{q'}]$ and both can occur from $M$.   
	This is a contradiction to the assumption that $\sigma$ is deterministic. 
\end{proof}

 If $\sigma$ is deterministic, \refLemma{pToAdet1} shows that $\varrho_\sigma$ does not allow any \Lightning-actions.
We can hence neglect all \Lightning-actions and extend our proofs for bisimulation and winningness from $\cGame_\pGame$ to $\widehat{\cGame_\pGame}$.
We get that $\varrho_\sigma$ is a winning controller for $\widehat{\cGame_\pGame}$ (and also $\cGame_\pGame$) and, furthermore, bisimilar to $\sigma$.
This gives us the first half of our correctness statement:

\begin{proposition}
	If $\sigma$ is a winning strategy for $\pGame$ then $\varrho_\sigma$ is a winning controller for $\cGame_\pGame$ and bisimilar to $\sigma$.\\
	If $\sigma$ is a winning, deterministic strategy for $\pGame$ then $\varrho_\sigma$ is a winning controller for $\widehat{\cGame_\pGame}$ (and for $\cGame_\pGame$) and bisimilar to $\sigma$.
	\label{prop:PtoC_1}
\end{proposition}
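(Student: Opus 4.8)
The plan is to assemble the pieces established in the preceding development, since this proposition is precisely the conjunction of the bisimilarity and winningness results already proved for $\varrho_\sigma$. For the first statement, bisimilarity of $\sigma$ and $\varrho_\sigma$ is exactly the preceding corollary, which rests on $\init^\sigma \relation \epsilon$ together with the three transfer lemmas \refLemma{PtoAl1}, \refLemma{PtoAl2}, and \refLemma{PtoAl3}, covering respectively the firing of an observable transition from a related marking, the firing of a $\tau$-action, and the matching in $\cGame_\pGame$ of a transition allowed by $\sigma$ from $M$. I would simply invoke this corollary. That $\varrho_\sigma$ is winning for $\cGame_\pGame$ is the preceding winningness lemma: every $\varrho_\sigma$-compatible play is finite (an infinite play would, via the bisimulation, force an infinite firing sequence in the finite net $\pNet^\sigma$, contradicting that the reachability-winning $\sigma$ is finite), and every maximal play $u$ relates to a final marking $M$ that is winning because $\sigma$ wins, whence $\zeta(\state{u}) = \lambda[M]$ by \refLemma{PtoC_sameLabel} together with the definition of $\win_\slice$ forces $\state{u}$ to be winning.

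For the second statement the extra ingredient is \refLemma{pToAdet1}: when $\sigma$ is deterministic, no $\varrho_\sigma$-compatible play of $\widehat{\cGame_\pGame}$ contains a \Lightning-action. The plan is to use this to show that passing from $\cGame_\pGame$ to $\widehat{\cGame_\pGame}$ is invisible to $\varrho_\sigma$. The only difference between the two games is the additional uncontrollable \Lightning-actions and the associated losing states $\badw_\slice$; since no \Lightning-action is ever enabled under $\varrho_\sigma$, I would argue that $\Plays(\widehat{\cGame_\pGame}, \varrho_\sigma) = \Plays(\cGame_\pGame, \varrho_\sigma)$, that no process ever enters a $\badw_\slice$-state, and hence that the relation $\relation$ restricted to these plays is literally the witness already used in the $\cGame_\pGame$ case. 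Bisimilarity therefore transfers verbatim. For winningness, finiteness is unchanged, and since the winning sets $\win_\slice$ of $\widehat{\cGame_\pGame}$ agree with those of $\cGame_\pGame$ on all reachable states (the $\badw_\slice$-states being unreachable), every maximal play still terminates in a winning configuration; thus $\varrho_\sigma$ is winning for $\widehat{\cGame_\pGame}$, and a fortiori for $\cGame_\pGame$.

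The main obstacle is not in the assembly itself but in making the deterministic case airtight, namely in justifying that ``ignoring'' the \Lightning-actions is sound: one must confirm that their unreachability under $\varrho_\sigma$ is the \emph{only} thing separating $\widehat{\cGame_\pGame}$ from $\cGame_\pGame$ along $\varrho_\sigma$-compatible plays, so that both the bisimulation witness and the winning condition carry over without reproving them. The delicate point is that \Lightning-actions are uncontrollable, so enabling or deadlock concerns could in principle arise; \refLemma{pToAdet1} is exactly what removes them, and this is precisely where determinism of $\sigma$ is used. Everything doing the real work — the causal-past identification \refLemma{PtoC_sameCausalPast} and the same-place property \refLemma{PtoC_localeViewSamePlace} — is already in hand, so the proof reduces to citing these results in the right order.
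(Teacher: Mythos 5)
Your proposal is correct and follows the paper's own proof verbatim in structure: the first statement is exactly the conjunction of the bisimilarity corollary (from \refLemma{PtoAl1}, \refLemma{PtoAl2}, \refLemma{PtoAl3}) with the winningness lemma, and the second statement is obtained, just as in the paper, by invoking \refLemma{pToAdet1} to conclude that no \Lightning-action is ever enabled along $\varrho_\sigma$-compatible plays, so that both the bisimulation witness and the winning argument carry over unchanged from $\cGame_\pGame$ to $\widehat{\cGame_\pGame}$. Your explicit observation that $\Plays(\widehat{\cGame_\pGame}, \varrho_\sigma) = \Plays(\cGame_\pGame, \varrho_\sigma)$ and that the $\badw_\slice$-states are unreachable makes precise what the paper only states as ``we can hence neglect all \Lightning-actions,'' which is a welcome tightening rather than a deviation.
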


\subsection{Translating Controllers to Strategies}
\label{sec:PGtoAA2}

In this section, we provide the formal translation of controllers to strategies. 
We first need to restrict the possible controllers for $\cGame_\pGame$: We only consider controllers that allow at most one commitment set (one $\tau$-action from each state).
This restriction is needed to allow for bisimilar strategies\footnote{If two commitment sets are chosen, two states that are indistinguishable by weak-bisimulation allow different behavior. A strategy must hence allow the behavior of both states from a single place. This is in general not possible.  }.  
Even though this constraint is not desirable, we can argue that it does not impose any relevant restriction on possible controllers:
Suppose controller $\varrho = \{f^\varrho_p\}_{p \in \processes}$ allows more than one commitment set. We can build a modified controller  $\varrho' = \{f^{\varrho'}_p\}_{p \in \processes}$ by
$$ f^{\varrho'}_p(u) = \{ \tau_{(q, \bigcup\limits_{i = 1, \cdots, n} A_i)} \} \text{ when } f^\varrho_p(u) = \{ \tau_{(q, A_1)}, \cdots, \tau_{(q, A_n)}  \}$$
Whenever $\varrho$ allows multiple commitment sets, $\varrho'$ chooses the union of all of them as the new (unique) commitment set.
$\varrho'$ admits the same observable sequences as $\varrho$. 
In particular, $\varrho'$  is winning if and only if $\varrho$ is winning. 
Allowing more commitment sets does not give an advantage to a controller\footnote{Instead of building the union-commitment set, it would be valid to simply choose one of the allowed commitment sets. An approach similar to this has been realized in \cite{muscholl2009look}. }. 
For convenience, we restrict controllers even further by enforcing \emph{exactly} one commitment set.
If a controller $\varrho$ chooses no commitment set we can instead choose the empty one\footnote{Unlike the restriction to at most one chosen commitment set, the further restriction to exactly one commitment set is not needed to maintain bisimilarity but purely for convenience.}. We call this restriction on controllers $\star$.

Assume now we are given a winning controller $\varrho$ for $\cGame_\pGame$ (or $\widehat{\cGame_\pGame}$) that satisfies $\star$. 
We need to construct a winning, bisimilar strategy $\sigma_\varrho$ for $\pGame$. 
Unlike controllers that are defined as functions evoked on an entire play, strategies for Petri games are defined as branching processes. 
We thus incrementally build a branching process for $\sigma_\varrho$.
In our incremental strategy construction, every system place needs to decide what transitions to allow from that place. This decision should be based on the causal past of that place and should be made in accordance with $\varrho$ in order to, in the end, obtain a bisimilar strategy. 
We would therefore like to be able to translate the causal past to a play in $\cGame_\pGame$, give this play to controller $\varrho$, and enable exactly the transitions that the controller chose as a commitment set. 
The crucial step is the translation of the causal past of place $q$ to a play in $\cGame_\pGame$ that is compatible with $\varrho$.
When translating strategies to controllers in Appendix \ref{sec:PtoC_translateStrtoCont}, we had to translate a local view into the causal past of a place.
We could easily do so by ignoring all $\tau$-actions using~$\projtr{\cdot}$.
By contrast, in our present translation, we have to \emph{add} $\tau$-actions to obtain a play in $\cGame_\pGame$. 
For a place $q$ with causal past $\pastt{q}$, we thus want to compute a $\varrho$-compatible play $u$ that contains the same observable actions, i.e., where $\projtr{u} = \pastt{q}$.

\paragraph*{Play Reconstruction}

\begin{figure}[t!]
	\algblockdefx[io]{Input}{Output}
	[1]{\textbf{input:} #1}
	[1]{\textbf{return:} #1}
	\algblockdefx[Pro]{Proc}{ProcEnd}
	[1]{\textbf{function} #1}

	\begin{subfigure}[c]{1.0\textwidth}
		\begin{algorithmic}
			\Proc{$\mathit{extend}_\varrho$}
			\Input{$u \in \Plays(\cGame_\pGame)$}
			\ForAll{$p \in \processes$ \textbf{with} $\statep{p}{u} \in \places_\psys$}
			\State Compute $f^\varrho_p(\view_p(u)) = \{\tau_{(\statep{p}{u}, A)}\}$
			\State $u \gets u \, \tau_{(\statep{p}{u}, A)}$
			\EndFor
			\Output{$u$}
			\ProcEnd
		\end{algorithmic}
		\subcaption{}
	\end{subfigure}\\[0.5cm]
	\begin{subfigure}[c]{0.5\textwidth}
		\begin{algorithmic}
			\Proc{$\mathit{rec'}_\varrho$}
			\Input{$\kappa \gets \kappa_0, \cdots, \kappa_{n-1} \in \transitions^*$}
			\State{$u \gets \mathit{extend}_\varrho\,(\epsilon)$}
			\For{$i \gets 0 \textbf{ to } n-1$}
			\State $u \gets u \, \kappa_i$
			\State {\textbf{assert} $u \in \Plays(\cGame_\pGame)$ \quad \myItem{(A)}}
			\State{$u \gets \mathit{extend}_\varrho\,(u)$}
			\EndFor
			\Output{$u$}
			\ProcEnd
		\end{algorithmic}
		\subcaption{}
	\end{subfigure}
	\begin{subfigure}[c]{0.5\textwidth}
		\begin{algorithmic}
			\Proc{$\mathit{rec}_\varrho$}
			\Input{$\pastt{q}$}
			\State{Order $\pastt{q}$ totally into sequence}
			\State{$\kappa \gets \kappa_0, \cdots, \kappa_{n-1} \in \transitions^*$}
			\Output{$\mathit{rec}'_\varrho(\kappa)$}
			\ProcEnd
		\end{algorithmic}
		\subcaption{}
	\end{subfigure}
	
	\caption{Description of algorithm $\recv$ used to reconstruct a play in $\cGame_\pGame$ from the transitions in the causal past of a place. \vspace{-0.4cm}}
	\label{fig:recalg}
\end{figure}

Given the causal past of $q$, we need to add $\tau$-actions to the play. 
We pursue an incremental construction of that play: We begin with an empty play and add the transitions in the past of $q$ one at a time. In between, we need to play $\tau$-actions to allow all processes on system places to choose a commitment set. 
The incremental construction is done by a function $\recv$ that is depicted in \refFig{recalg}.
We add all transitions in the causal past of $q$ in some order compatible with $\leq$ (i.e., some linearization of the causal past) and in between allow every process to choose a commitment set.
Note that, as $\varrho$ satisfies $\star$, every process allows for exactly one commitment set. 
In the algorithm, we include an assertion \myItem{(A)} that requires the trace constructed so far to be a play in $\cGame_\pGame$.
We discuss this assertion later. For now, assume that it is always fulfilled.
It is easy to see that if $\recv$ does not trigger the assertion, the outputted play $u$ satisfies $\projtr{u} = \pastt{q}$.

The first step in $\recv$ consists of finding a total order of $\pastt{q}$. 
We can prove that the resulting trace does not depend on the concrete choice. 
So, $\recv$ is a deterministic procedure.

\begin{lemma}
	Let $q \in \places^{\pGame^\mathfrak{U}}$ be any place in the unfolding of $\pGame$. 
	If $\kappa_1$ and $\kappa_2$ are two totally ordered sequences of $\pastt{q}$ then $\mathit{rec}'_\varrho(\kappa_1) = \mathit{rec}'_\varrho(\kappa_2)$.
	\label{lem:inv_poset}
\end{lemma}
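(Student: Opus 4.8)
The plan is to reduce the statement to a single adjacent swap and then track how the algorithm reacts to it. Since $\kappa_1$ and $\kappa_2$ are two linearizations of the finite poset $(\pastt{q}, \leq)$, a standard fact about posets says that one can be transformed into the other by finitely many transpositions of \emph{adjacent, incomparable} elements. I would therefore argue by induction on the number of such transpositions, so that it suffices to treat the case $\kappa_1 = \alpha\, t\, t'\, \beta$ and $\kappa_2 = \alpha\, t'\, t\, \beta$ where $t$ and $t'$ are incomparable in $\pastt{q}$. The first observation I would record is that incomparable transitions in the past of a single place $q$ are concurrent in $\pGame^\mathfrak{U}$ (they cannot be in conflict, as both lie below $q$, and they are not causally related); by the slice decomposition, concurrent transitions touch disjoint slices, hence $\dom(t) \cap \dom(t') = \emptyset$, i.e.\ $t$ and $t'$ are independent in $\cGame_\pGame$.

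Next I would trace the algorithm through the swapped region. Writing $u_\alpha$ for the play produced after both runs finish processing the common prefix $\alpha$, I would first establish the invariant that after every $\mathit{extend}_\varrho$ call no process sits on a bare system state $q \in \places_\psys$: each such process has already committed to a state $(q,A)$. Consequently, firing $t$ moves exactly the processes in $\dom(t)$, and the ensuing $\mathit{extend}_\varrho$ issues $\tau$-commitments only for processes in $\dom(t)$; symmetrically for $t'$ and $\dom(t')$. Thus in $\kappa_1$ the region reads $u_\alpha\, t\, \bar\tau_t\, t'\, \bar\tau_{t'}$ and in $\kappa_2$ it reads $u_\alpha\, t'\, \bar{\tau}'_{t'}\, t\, \bar{\tau}'_{t}$, where each $\bar\tau$-block consists of pairwise independent local $\tau$-actions (their domains are distinct singletons), so order inside a block is immaterial up to $\sim_\ind$.

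The crux is to show that the commitments chosen do not depend on the order, i.e.\ that $\bar\tau_{t'}$ and $\bar{\tau}'_{t'}$ (resp.\ $\bar\tau_t$ and $\bar{\tau}'_t$) agree. For a process $p \in \dom(t')$ the commitment is $f^\varrho_p$ evaluated on its local view. Since $p \notin \dom(t)$ and the actions of $\bar\tau_t$ are local to processes in $\dom(t)$, none of $t, \bar\tau_t$ lies in $\Sigma_p$, and each is independent of $t'$; hence $u_\alpha\, t\, \bar\tau_t\, t' \sim_\ind u_\alpha\, t'\, t\, \bar\tau_t$, and because $\view_p$ factors through $\sim_\ind$ and discards trailing actions outside $\Sigma_p$, we get $\view_p(u_\alpha\, t\, \bar\tau_t\, t') = \view_p(u_\alpha\, t')$. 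The same computation shows $p$ reaches the identical bare state in both runs (its state is unchanged by $t,\bar\tau_t \notin \Sigma_p$), so $f^\varrho_p$ returns the same commitment set; the symmetric argument handles $p \in \dom(t)$. Having matched the blocks, I would finish the local step by commuting the two independent groups: all of $t, \bar\tau_t$ touch only $\dom(t)$ and all of $t', \bar\tau_{t'}$ only $\dom(t')$, so $u_\alpha\, t\, \bar\tau_t\, t'\, \bar\tau_{t'} \sim_\ind u_\alpha\, t'\, \bar\tau_{t'}\, t\, \bar\tau_t$, which is exactly the $\kappa_2$-region up to reordering within the $\bar\tau$-blocks.

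Finally I would lift this to the whole sequence. After processing the swapped region, both runs hold $\sim_\ind$-equivalent plays, and since $\mathit{extend}_\varrho$ and the firing of a transition depend on the current play only through its global state and the local views (both invariant under $\sim_\ind$, as in \refLemma{wellDefined}), a short induction on the length of the common suffix $\beta$ shows that processing $\beta$ preserves trace-equivalence; the deadlock assertion \myItem{(A)} behaves identically on both runs for the same reason, so it is triggered in one run iff in the other. Hence $\mathit{rec}'_\varrho(\kappa_1) = \mathit{rec}'_\varrho(\kappa_2)$ (equality of plays being read modulo $\sim_\ind$, which is all the loop over processes fixes anyway), and the general statement follows by the transposition induction. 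I expect the view-invariance identity $\view_p(u_\alpha\, t\, \bar\tau_t\, t') = \view_p(u_\alpha\, t')$ to be the main obstacle, since it is where the independence of the swapped transitions and the locality of the $\tau$-commitments must be combined carefully.
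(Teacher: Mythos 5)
Your proposal is correct and follows essentially the same route as the paper's proof: reduce to a single adjacent transposition of incomparable transitions, deduce their independence via disjoint slices, and observe that the interleaved $\tau$-commitment blocks are local to $\dom(t)$ and $\dom(t')$ respectively, so the two outputs are equal as traces. You additionally spell out two steps the paper leaves implicit — that the commitment sets themselves are order-invariant (via the view identity $\view_p(u_\alpha\, t\, \bar\tau_t\, t') = \view_p(u_\alpha\, t')$) and that assertion \textbf{(A)} and the processing of the common suffix behave identically in both runs — which strengthens rather than changes the argument.
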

\begin{proof}
	We first show the statement for two totally ordered sequences $\kappa$ and $\kappa'$ of $\pastt{q}$ that only differ at exactly one location, i.e., two consecutive transitions $t_1, t_2$ have been swapped. 
	So $\kappa = \kappa^{r_1}, t_1, t_2, \kappa^{r_2}$ and $\kappa' = \kappa^{r_1}, t_2, t_1, \kappa^{r_2}$ for some sequences $\kappa^{r_1}$ and $\kappa^{r_2}$. 
	
	As $t_1$ and $t_2$ are unordered in $\pastt{q}$, we can conclude $(\pre{\pGame}{t_1} \cup \post{\pGame}{t_1}) \cap (\pre{\pGame}{t_2} \cup \post{\pGame}{t_2}) = \emptyset$. 
	As $t_1$ and $t_2$ can be fired exactly after each other, we can conclude that $t_1$ and $t_2$ involve different slices.
	By our construction of $\cGame_\pGame$, we know that $t_1 \, \ind \, t_2$. \\
	We can see that $\mathit{rec}'_\varrho(\kappa)$ and $\mathit{rec}'_\varrho(\kappa')$ have the following form for some plays $u'$ and $u''$: 
	$$\mathit{rec}'_\varrho(\kappa) = u' \, t_1 \, \underbrace{\tau \cdots \tau}_{\myItem{(1)}} \, t_2 \, \underbrace{\tau \cdots \tau}_{\myItem{(2)}} \, u'' \text{ and } \mathit{rec}'_\varrho(\kappa') = u' \, t_2 \, \underbrace{\tau \cdots \tau}_{\myItem{(3)}} \, t_1 \, \underbrace{\tau \cdots \tau}_{\myItem{(4)}} \, u''$$
	The $\tau$-actions played in \myItem{(1)} and \myItem{(4)} only involve processes from $\dom(t_1)$ and the ones in \myItem{(2)} and \myItem{(3)} from $\dom(t_2)$. Since $t_1 \, \ind \, t_2$ and all the $\tau$-actions are local to one process, both $\mathit{rec}'_\varrho(\kappa)$ and $\mathit{rec}'_\varrho(\kappa')$ describe identical traces. \\
	We have shown the claim for two totally ordered sequences that differ at exactly one location. The proof for general $\kappa_1$ and $\kappa_2$ follows by induction on the minimal number of swaps used to unify  $\kappa_1$ and $\kappa_2$ using the insight from above.
\end{proof}

Regarding the assertion \myItem{(A)}, it can happen that adding a transition from $\kappa$ results in a play that is not in $\Plays(\cGame_\pGame)$. 
The controller could have chosen its commitment sets such that the action that is added from $\kappa$ cannot be taken.
We can, however, show that \emph{if} $\recv$ did not violate the assertion, the obtained play is a valid play in $\cGame_\pGame$ and moreover compatible with $\varrho$.
We can, furthermore, observe that if there is some play in $\Plays(\cGame_\pGame, \varrho)$ that contains exactly the observable actions from the past of a place, then $\recv$ is guaranteed to find such a play without violating the assertion.

\begin{lemma}
	\label{lem:collection}
	Let $\pastt{q}$ be the causal past of some place $q \in \places^{\pGame^{\mathfrak{U}}}$ in the unfolding of $\pGame$.
	\begin{enumerate}
		\item[\myItem{1.}] Assume that $u = \recv(\pastt{q})$ and no assertion is violated. 
		Then $u \in \Plays(\cGame_\pGame, \varrho)$.
		\item[\myItem{2.}] If there is a play $u \in \Plays(\cGame_\pGame, \varrho)$ with $\projtr{u} = \pastt{q}$ then $\recv(\pastt{q})$ does not violate an assertion.
	\end{enumerate}
\end{lemma}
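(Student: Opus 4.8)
The plan is to prove both parts by tracking the partial play that $\recv$ maintains and relating it, for Part 1, to $\varrho$-compatibility, and, for Part 2, to the given reference play.

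For Part 1, I would argue by induction on the execution of $\recv$ that the running play $u$ always lies in $\Plays(\cGame_\pGame, \varrho)$; the base case is $u = \epsilon$. Two kinds of extension occur. When $\mathit{extend}_\varrho$ appends a $\tau$-action $\tau_{(q,A)}$ for a process $p$ residing on a system place $q \in \places_\psys$, the action is enabled by rule (1) of $\vartheta$, it is controllable, and it was computed as $f^\varrho_p(\view_p(u))$, so clause (2) of the definition of $\Plays(\cGame, \varrho)$ keeps the extended play $\varrho$-compatible; since $\tau$-actions are local (singleton domain), appending one leaves the view of every other process unchanged, so the order in which $\mathit{extend}_\varrho$ processes the processes is irrelevant. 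When a transition $\kappa_i \in \transitions$ is appended, the assertion guarantees $u\,\kappa_i \in \Plays(\cGame_\pGame)$, and because every $\kappa_i$ is uncontrollable, clause (1) makes $u\,\kappa_i$ $\varrho$-compatible. Since no assertion is violated by hypothesis, the output is $\varrho$-compatible, and incidentally $\projtr{u} = \pastt{q}$.

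For Part 2, let $u^\ast$ be the given play with $\projtr{u^\ast} = \pastt{q}$. By \refLemma{inv_poset}, $\recv$ is independent of the chosen linearization, so I would fix $\kappa$ to be the order in which the transitions of $\pastt{q}$ appear in $u^\ast$. I would then prove, by induction on the loop counter $i$, that just before $\recv$ appends $\kappa_i$, every process $p \in \dom(\kappa_i)$ occupies exactly the state it has in $u^\ast$ immediately before $\kappa_i$ fires there. The place component matches because the transitions of $\pastt{q}$ involving a fixed slice $\pts{p}$ form a $\leq$-chain (each slice is sequential and deterministic, with $|\pre{\slice}{t}| = |\post{\slice}{t}| = 1$), so any linearization fires exactly $p$'s causal predecessors of $\kappa_i$ before $\kappa_i$ in both plays, leaving $p$'s token on the same place $q' \in \pre{\slice}{\kappa_i}$. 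For the commitment component of a system place, I would show that the local view on which $\varrho$ bases its choice is the same in both plays: the observable part of $\view_p$ is the common causal past, $p$'s own earlier $\tau$-actions were each chosen by the same $\varrho$ from matching earlier views (induction hypothesis), and $\tau$-actions of other processes, being local, never enter $\view_p$. Since the restriction $\star$ forces a unique commitment set, the chosen sets agree; as $\kappa_i$ fires in $u^\ast$, its commitment set contains $\kappa_i$, so $\kappa_i$ is enabled in $\recv$'s play and the assertion holds. Combined with Part 1 (which gives $\varrho$-compatibility of the prefix built so far), this closes the induction and shows no assertion is violated.

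The main obstacle is the commitment-coincidence step in Part 2. Here $\recv$ commits \emph{eagerly}, invoking $\mathit{extend}_\varrho$ after every transition, whereas $u^\ast$ may interleave its $\tau$-actions in an arbitrary way, so I must argue that the differing placement of $\tau$-actions still presents each process with an identical local view at the moment it commits. The key technical facts are that $\tau$-actions are local and hence invisible to the views of other processes, and that the $p$-view before a commitment is fully determined by the causal past of $p$ together with $p$'s earlier commitments. To make the implication ``same causal past $\Rightarrow$ same view $\Rightarrow$ same $\varrho$-choice'' rigorous, I expect to rely on the poset characterization of local views used in \refLemma{PtoC_sameCausalPast}, which identifies $\view_p$ up to the equivalence that $\varrho$ cannot distinguish.
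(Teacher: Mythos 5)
Your Part~1 is sound and is exactly the paper's (one-line) argument spelled out: the appended $\tau$-actions are precisely the singleton outputs of $f^\varrho_p$ on the current $p$-view, so clause \textbf{\textsf{(2)}} of controller-compatibility applies, and the observable transitions lie in $\Sigma^\env$, so clause \textbf{\textsf{(1)}} applies; locality of $\tau$-actions makes the processing order inside $\mathit{extend}_\varrho$ irrelevant.

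In Part~2, however, one of your two ``key technical facts'' is false: $\tau$-actions of other processes \emph{do} enter $\view_p$. Independence in $\cGame_\pGame$ is domain-disjointness, and a commitment action $\tau_{(q',A')}$ of a process $p' \neq p$ has domain $\{p'\}$, hence is dependent on any subsequent shared action $t$ with $p' \in \dom(t)$ and $t \in \Sigma_p$; it cannot be commuted past $t$, so for instance in $u = \tau_{(q',A')}\, t$ the foreign $\tau$ belongs to $\view_p(u)$. Indeed this must be so, since the commitments of $p$'s communication partners are exactly the information $p$ learns upon synchronization. Consequently your view-coincidence step does not follow from your induction hypothesis, which only tracks the states of the processes in $\dom(\kappa_i)$: the views handed to $\varrho$ in the two plays contain the commitments of \emph{all} processes in the causal past of $p$, and if any of those could differ, $f^\varrho_p$ could answer differently and the whole commitment-matching argument collapses. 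The gap is repairable, but it requires strengthening the induction to all processes simultaneously: by $\star$ every local controller returns a unique commitment set, so one shows inductively that every commitment occurring anywhere in the causal past coincides in $\recv$'s play and in $u^\ast$, and that its position in the trace poset is forced by the respective process's sequential chain (each slice's transitions in $\pastt{q}$ are totally ordered, as you correctly note); only then do the $p$-views, and hence the chosen sets, agree. This global uniqueness-of-simulation argument is precisely what the paper's terse justification ``by $\star$ every process has chosen at most one commitment set and simulation is therefore unique'' compresses; your proposal reaches for it but, as written, the locality claim it leans on would not survive scrutiny.
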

\begin{proof}
	The first statement follows from the definition of $\recv$ and the fact that all observable actions are uncontrollable. 
	The second claim follows since by $\star$ every process has chosen at most one commitment set and simulation is therefore unique. 
\end{proof}

\paragraph*{Construction of $\sigma_\varrho$}
Using $\recv$, we can finally define the construction of $\sigma_\varrho$. It is depicted in \refFig{PtoC_translation2}. 
We incrementally build up a branching process by iterating over every reachable marking $M$ in the partially constructed strategy. 
Every place $q$ in a marking $M$ needs to decide which transitions to enable.
This decision is stored in a set $\mathds{A}_q$.
Since an environment place cannot be restricted by a strategy all outgoing transitions are allowed ($\mathds{A}_q = \post{\pGame}{\lambda(q)}$).
For each system place $q$, we consider its causal past and convert it to a play in $\cGame_\pGame$ using $\recv$.
In this play, the process that corresponds to $q$ has chosen a commitment set., i.e., is in a state of the form $(\_, \_)$. 
We define $\mathds{A}_q$ to be the set of transitions that are in the current commitment set of that process. Hence, $q$ copies the decision made by the corresponding process on the reconstructed play.
Once we have computed $\mathds{A}$ for every place in the marking we add all transitions where all places agree on, i.e., compute $\Delta_M$. Since $\pGame$ is sliceable and therefore safe we can uniquely tell which places need to agree on a transition:
$\Delta_M$ is the set of all transitions that are enabled in $\lambda[M]$ and where all places $q$ in the precondition of $t$ ($\lambda(q) \in \pre{\pGame}{t}$) have agreed on $t$ ($t \in \mathds{A}_q$).
For now, we impose an assertion \myItem{(B)} in the construction. We later see that the assertion can be neglected, i.e., the causal past of any place in the partially constructed strategy can always be converted to a play using $\recv$. 

\begin{figure}[t!]
	\begin{tcolorbox}[colback=white, colframe=myYellow, arc=3mm, boxrule=1mm]
		Start by creating an initial marking $\init^{\sigma_\varrho}$ and extend $\lambda$ s.t.\ $\lambda(\init^{\sigma_\varrho}) = \init^\pGame$.
		
		Iterate over every unprocessed reachable marking $M$ in $\pNet^{\sigma_\varrho}$:
		Consider every $q \in M$:
		
		\begin{itemize}
			\item If $q$ is a system place, i.e., $q \in \inv{\lambda}[\places_\psys]$:\\
			$q$ belongs to a process $p_q \in \processes$, i.e., $q \in M \cap \inv{\lambda}[\places^{\pts{p_q}}]$.
			Compute $u = \recv(\pastt{q})$. Assume that no assertion is violated and that $\zeta(\statep{p_q}{u}) = \lambda(q)$ \myItem{(B)}.
			Because of \myItem{(B)} and the fact that in $u$ every process has chosen a commitment set, it holds that $\statep{p_q}{\view_{p_q}(u)} = (\lambda(q), B)$.
			Define $\mathds{A}_q = B \subseteq \transitions^\pGame$.
			
			\item If $q$ is an environment place, i.e., $q \in \inv{\lambda}[\places_\penv]$:\\
			Define $\mathds{A}_q = \post{\pGame}{\lambda(q)}$
			
		\end{itemize}
		
		Define $$\Delta_M = \{ t \in \transitions^\pGame \mid  \pre{\pGame}{t} \subseteq \lambda[M] \land \forall q \in M: \, \lambda(q) \in \pre{\pGame}{t} \Rightarrow t \in \mathds{A}_q  \}$$
		These are all transitions that can occur and on which all places have agreed.
		We want to add exactly the transitions from $\Delta_M$ from $M$:
		For every $t \in \Delta_M$:
		Check if there already exists a transition $t'$ with $\pre{\pNet^{\sigma_\varrho}}{t'} \subseteq M$ and $\lambda(t') = t$:
		\begin{itemize}
			\item If it already exists, do not add anything.
			\item If it does not exist: Create a new transition $t'$ and extend the flow s.t.\ $\pre{\pNet^{\sigma_\varrho}}{t'} = \{q \in M \mid \lambda(q) \in \pre{\pGame}{t}\}$ and extend $\lambda$ with $\lambda(t') = t$. 
			Add a new place $q'$ for every $q \in \post{\pGame}{t}$ with $\lambda(q') = q$ and extend the flow s.t.\ $\pre{\pNet^{\sigma_\varrho}}{q'} = \{t'\}$.
		\end{itemize}
		Mark $M$ as processed and continue with another, unprocessed marking.
	\end{tcolorbox}
	
	\caption{Construction of strategy $\sigma_\varrho$ for $\pGame$ from a given controller $\varrho$ for $\cGame_\pGame$}
	\label{fig:PtoC_translation2}
\end{figure}

\begin{example}
	
	As an example, consider a possible winning controller $\dot{\varrho}$ for $\dot{\cGame_\pGame}$ from \refFig{firstRed} where $p_2$ allows the following:
	Whenever in state $C$, it chooses the commitment set including $i$ and hence allows a move to $D$. If in state $D$ for the first time, $p_2$ moves to the commitment set containing $b$, i.e., restricts communication to $b$. 
	After executing $b$, it can deduce whether the environment played $e_1$ or $e_2$. 
	In case of $e_1$, it allows $b$ for one more time and subsequentially terminates. 
	In case of $e_2$, it allows communication on $a$, afterwards moves to state $D$ and terminates. 
	The relevant plays and the decision of $\dot{\varrho}$ are depicted in \refFig{exampleController}.
	
	\begin{figure}[t]
		\centering
		\small
		\def\arraystretch{1.2}
		\begin{tabular}{|l|c|} 
			\hline
			\multicolumn{1}{ |c| }{$u \in \Plays(\dot{\cGame_\pGame}, \dot{\varrho}) \cup \Plays_p(\dot{\cGame_\pGame})$} & $f^{\dot{\varrho}}_{p_2}(u)$  \\ 
			\hline 
			\hline
			$\epsilon$ & $\{ \tau_{(C, \{i\})}\}$  \\ 
			\hline 
			$\tau_{(C, \{i\})}$ & $\emptyset$ \\ 
			\hline 
			$\tau_{(C, \{i\})}, i$& $\{\tau_{(D, \{b\})}\}$ \\ 
			\hline 
			$\tau_{(C, \{i\})}, i, \tau_{(D, \{b\})}$ & $\emptyset$ \\ 
			\hline 
			$\tau_{(C, \{i\})}, i, \tau_{(D, \{b\})}, e_1, b$ & $\{ \tau_{(D, \{b\})}\}$ \\ 
			\hline 
			$\tau_{(C, \{i\})}, i, \tau_{(D, \{b\})}, e_2, b$ & $\{ \tau_{(D, \{a\})}\}$ \\ 
			\hline 
			$\tau_{(C, \{i\})}, i, \tau_{(D, \{b\})}, e_1, b, \tau_{(D, \{b\})}$ & $\emptyset$ \\ 
			\hline 
			$\tau_{(C, \{i\})}, i, \tau_{(D, \{b\})}, e_e, b, \tau_{(D, \{a\})}$ & $\emptyset$ \\ 
			\hline 
			$\tau_{(C, \{i\})}, i, \tau_{(D, \{b\})}, e_1, b, \tau_{(D, \{b\})}, b$ & $\{\tau_{(D, \emptyset)} \}$ \\ 
			\hline 
			$\tau_{(C, \{i\})}, i, \tau_{(D, \{b\})}, e_e, b, \tau_{(D, \{a\})}, a$ & $\{ \tau_{(D, \{i\})}\}$ \\ 
			\hline 
			$\tau_{(C, \{i\})}, i, \tau_{(D, \{b\})}, e_e, b, \tau_{(D, \{a\})}, a, \tau_{(C, \{i\})}$ & $\emptyset$ \\ 
			\hline
			$\tau_{(C, \{i\})}, i, \tau_{(D, \{b\})}, e_e, b, \tau_{(D, \{a\})}, a, \tau_{(D, \{i\})}, i$ & $\{\tau_{(D, \emptyset)}\}$ \\ 
			\hline
		\end{tabular} 
		
		\caption{Example winning controller $\dot{\varrho}$ for $\dot{\cGame_\pGame}$ in \refFig{firstRed}.
			The controller is depicted by listing plays and the decision of  $f^{\dot{\varrho}}_{p_2}$ on them. } 
		\label{fig:exampleController}
	\end{figure}

	If we apply our construction we end up with the strategy $\dot{\sigma}_\varrho$ depicted in \refFig{exam_starte_trans} (e).
	Note that $\dot{\sigma}_\varrho$ allows the same behavior as $\dot{\varrho}$: After moving to $D$, the system player allows communication only on $b$. 
	Depending on whether the environment chose $e_1$ or $e_2$, $\dot{\sigma}_\varrho$ either allows $b$ once more or allows $a$ and afterwards moves to $D$ using $i$.
	Apart from showing the final strategy, \refFig{exam_starte_trans} also depicts possible indeterminate steps in the strategy construction.
	Next to each place, the set $\mathds{A}$ as computed in the construction is given in red. 
	The gray label is the one given by $\lambda$. Places $q_1, \cdots, q_5$ are named explicitly in blue. The causal past of them is surrounded in blue. \\
	We retrace the construction depicted in \refFig{exam_starte_trans}:
	The construction begins with an initial marking~(a). 
	For every system place in that marking, we compute the transitions in the causal past and reconstruct a play using $\rec_{\dot{\varrho}}$. For the system place $q_1$, $\rec_{\dot{\varrho}}$ applied to the empty causal past gives us the play $[\tau_{(C, \{i\})}]_\ind$. 
	After playing $[\tau_{(C, \{i\})}]_\ind$, the process that corresponds to $q_1$ (process $p_2$) is in state $(C, \{i\})$. So, we derive that $\mathds{A}_{q_1} = \{i\}$. 
	For the environment place, we define $\mathds{A}$ as the set of all outgoing transitions, i.e., $\{e_1, e_2\}$. After having computed the sets $\mathds{A}$ for all places in the initial marking, we add all transitions that are allowed by all involved places and corresponding places for the postcondition. We end up with the branching process in (b). 
	We repeat the same procedure:
	For both new environment places, we define $\mathds{A}$ as the set of outgoing transitions, in this case $\{a, b\}$. 
	For the system place $q_2$, we compute $\rec_{\dot{\varrho}}$ on the causal past (surrounded in blue) which gives us the play $[\tau_{(C, \{i\})}, i, \tau_{(D, \{b\})}]_\ind$ explaining why $\mathds{A}_{q_2} = \{b\}$. 
	As $q_2$ restricted its set $\mathds{A}$ to $b$, only transitions labelled with $b$ are added from that place. 
	We proceed this way and add more and more places and transitions. 
	The construction terminates with the strategy in (e). 
	At this point, all $\mathds{A}$-sets are such that no more transitions can be added and our construction terminates.
	The reader is encouraged to convince herself of this construction and, in particular, to comprehend how every $\mathds{A}$-set is chosen.
	
	\begin{figure}[!t]
		\begin{subfigure}[c]{0.15\textwidth}
			\centering
			\begin{tikzpicture}[scale=1.0, every label/.append style={font=\scriptsize}, label distance=-1mm]
			\node[envplace,label=west:$\color{myDGray}A$,label=east:{$\color{red}\{e_1, e_2\}$}] at (0,0.75) (p1){};
			
			\node[sysplace,label=west:$\color{myDGray}C$,label=east:{$\color{red}\{i\}$},label=south:{$\color{blue}q_1$}] at (0,-0) (p4){};
			
			\node[envplace,label=south:$ $,opacity=0.0] at (0,-3) (){};
			
			\node[token] at (0,0.75) (){};
			\node[token] at (0,0) (){};
			
			\draw[-,blue,thick,dashed] {[rounded corners=5pt] 
				(0,0.375) -- (0.7,0.375) -- (0.7,-0.5) -- (-0.7,-0.5) -- (-0.7,0.375) -- (0,0.375)};
			
			\end{tikzpicture}
			\subcaption{}
		\end{subfigure}
		\begin{subfigure}[c]{0.4\textwidth}
			\centering
			\begin{tikzpicture}[scale=1.0, every label/.append style={font=\scriptsize}, label distance=-1mm]
			\node[envplace,label=east:$\color{myDGray}A$] at (0,0.75) (p1){};
			\node[transition,label=north:$\color{myDGray}e_1$] at (-2,-0.75) (t1){};
			\node[envplace,label=east:$\color{myDGray}B$,label=south:{$\color{red}\{a,b\}$}] at (-2,-1.5) (p2){};
			\node[transition,label=north:$\color{myDGray}e_2$] at (2,-0.75) (t2){};
			\node[envplace,label=west:$\color{myDGray}B$,label=south:{$\color{red}\{a,b\}$}] at (2,-1.5) (p3){};
			
			\node[sysplace,label=east:$\color{myDGray}C$] at (0,-0) (p4){};
			\node[transition,label=east:$\color{myDGray}i$] at (0,-0.75) (t3){};
			\node[sysplace,label=south:$\color{myDGray}D$,label=east:{$\color{red}\{b\}$},label=west:$\color{blue}q_2$] at (0,-1.5) (p5){};
			
			\node[envplace,label=south:$ $,opacity=0.0] at (0,-3) (){};
			
			\node[token] at (0,0.75) (){};
			\node[token] at (0,0) (){};
			
			\draw[arrow] (p1) to (t1);
			\draw[arrow] (p1) to (t2);
			\draw[arrow] (t1) to (p2);
			\draw[arrow] (t2) to (p3);
			
			\draw[arrow] (p4) to (t3);
			\draw[arrow] (t3) to (p5);
			
			\draw[-,blue,thick,dashed] {[rounded corners=5pt] (0,-2.1) -- (0.8,-2.1) -- (0.8, -1.3) -- (0.55,-1.1) --(0.55,0.375) -- (-0.55,0.375) -- (-0.55,-1.1) -- (-0.8,-1.3) -- (-0.8, -2.1) -- (0,-2.1)};
			
			\end{tikzpicture}
			\subcaption{}
		\end{subfigure}
		\begin{subfigure}[c]{0.45\textwidth}
			\centering
			\begin{tikzpicture}[scale=1.0, every label/.append style={font=\scriptsize}, label distance=-1mm]
			\node[envplace,label=east:$\color{myDGray}A$] at (0,0.75) (p1){};
			\node[transition,label=north:$\color{myDGray}e_1$] at (-2,-0.75) (t1){};
			\node[envplace,label=east:$\color{myDGray}B$] at (-2,-1.5) (p2){};
			\node[transition,label=north:$\color{myDGray}e_2$] at (2,-0.75) (t2){};
			\node[envplace,label=west:$\color{myDGray}B$] at (2,-1.5) (p3){};
			
			\node[sysplace,label=east:$\color{myDGray}C$] at (0,-0) (p4){};
			\node[transition,label=east:$\color{myDGray}i$] at (0,-0.75) (t3){};
			\node[sysplace,label=south:$\color{myDGray}D$] at (0,-1.5) (p5){};
			
			\node[transition,label=west:$\color{myDGray}b$] at (-2,-2.25) (t4){};
			\node[transition,label=east:$\color{myDGray}b$] at (2,-2.25) (t5){};
			
			\node[envplace,label=south:$\color{myDGray}B$,label=north:{$\color{red}\{a,b\}$}] at (-2.75,-3) (p6){};
			\node[sysplace,label=south:$\color{myDGray}D$,label=east:{$\color{red}\{b\}$}] at (-1.25,-3) (p7){};
			
			\node[sysplace,label=south:$\color{myDGray}D$,label=west:{$\color{red}\{a\}$},label=north:$\color{blue}q_3$] at (1.25,-3) (p8){};
			\node[envplace,label=south:$\color{myDGray}B$,label=west:{$\color{red}\{a,b\}$}] at (2.75,-3) (p9){};

			\node[token] at (0,0.75) (){};
			\node[token] at (0,0) (){};
			
			\draw[arrow] (p1) to (t1);
			\draw[arrow] (p1) to (t2);
			\draw[arrow] (t1) to (p2);
			\draw[arrow] (t2) to (p3);
			
			\draw[arrow] (p4) to (t3);
			\draw[arrow] (t3) to (p5);
			
			\draw[arrow] (p2) to (t4);
			\draw[arrow] (p3) to (t5);
			
			\draw[arrow] (p5) to (t4);
			\draw[arrow] (p5) to (t5);
			
			
			\draw[arrow] (t4) to (p6);
			\draw[arrow] (t4) to (p7);
			
			\draw[arrow] (t5) to (p8);
			\draw[arrow] (t5) to (p9);
			
			\draw[-,blue, thick,dashed] {[rounded corners=5pt] 
				(1.25,-3.6) -- (1.7, -3.6) -- (1.7, -2.6) -- (2.5,-2.6) --(2.5,1.25) -- (-0.5,1.25) -- (-0.5,-2.25) -- (0.5,-2.25) -- (0.5, -3.6) -- (1.25,-3.6)};
			
			\end{tikzpicture}
			\subcaption{}
		\end{subfigure}
		
		\begin{subfigure}[c]{0.5\textwidth}
			\centering
			\begin{tikzpicture}[scale=1.0, every label/.append style={font=\scriptsize}, label distance=-1mm]
			\node[envplace,label=east:$\color{myDGray}A$] at (0,0.75) (p1){};
			\node[transition,label=north:$\color{myDGray}e_1$] at (-2,-0.75) (t1){};
			\node[envplace,label=east:$\color{myDGray}B$] at (-2,-1.5) (p2){};
			\node[transition,label=north:$\color{myDGray}e_2$] at (2,-0.75) (t2){};
			\node[envplace,label=west:$\color{myDGray}B$] at (2,-1.5) (p3){};
			
			\node[sysplace,label=east:$\color{myDGray}C$] at (0,-0) (p4){};
			\node[transition,label=east:$\color{myDGray}i$] at (0,-0.75) (t3){};
			\node[sysplace,label=south:$\color{myDGray}D$] at (0,-1.5) (p5){};
			
			\node[transition,label=west:$\color{myDGray}b$] at (-2,-2.25) (t4){};
			\node[transition,label=east:$\color{myDGray}b$] at (2,-2.25) (t5){};
			
			\node[envplace,label=south:$\color{myDGray}B$] at (-2.75,-3) (p6){};
			\node[sysplace,label=south:$\color{myDGray}D$] at (-1.25,-3) (p7){};
			
			\node[sysplace,label=south:$\color{myDGray}D$] at (1.25,-3) (p8){};
			\node[envplace,label=south:$\color{myDGray}B$] at (2.75,-3) (p9){};
			
			
			\node[transition,label=west:$\color{myDGray}a$] at (2,-3.75) (t6){};

			\node[envplace,label=east:$\color{myDGray}A$,label=west:{$\color{red}\{e_1, e_2\}$}] at (2,-4.5) (p10){};
			\node[sysplace,label=west:$\color{myDGray}C$,label=south:{$\color{red}\{i\}$},label=320:{$\color{blue}q_4$}] at (2,-5.25) (p11){};
			

			\node[transition,label=north:$\color{myDGray}b$] at (-2,-3.75) (t10){};
			\node[envplace,label=south:$\color{myDGray}B$,label=north:{$\color{red}\{a,b\}$}] at (-2.75,-4.5) (p15){};
			\node[sysplace,label=south:$\color{myDGray}D$,label=north:{$\color{red}\emptyset$}] at (-1.25,-4.5) (p16){};
			
			
			\node[] at (0,-7.5) (){};

			\node[token] at (0,0.75) (){};
			\node[token] at (0,0) (){};
			
			\draw[arrow] (p1) to (t1);
			\draw[arrow] (p1) to (t2);
			\draw[arrow] (t1) to (p2);
			\draw[arrow] (t2) to (p3);
			
			\draw[arrow] (p4) to (t3);
			\draw[arrow] (t3) to (p5);
			
			\draw[arrow] (p2) to (t4);
			\draw[arrow] (p3) to (t5);
			
			\draw[arrow] (p5) to (t4);
			\draw[arrow] (p5) to (t5);
			
			
			\draw[arrow] (t4) to (p6);
			\draw[arrow] (t4) to (p7);
			
			\draw[arrow] (t5) to (p8);
			\draw[arrow] (t5) to (p9);
			
			
			\draw[arrow] (p8) to (t6);
			\draw[arrow] (p9) to (t6);
			
			\draw[arrow] (t6) to (p10);
			\draw[arrow] (t6) to[out=0,in=0] (p11);
			
			\draw[arrow] (p6) to (t10);
			\draw[arrow] (p7) to (t10);
			\draw[arrow] (t10) to (p15);
			\draw[arrow] (t10) to (p16);
			
			\draw[-,blue,thick,dashed] {[rounded corners=5pt] 
				(2,-5.9) -- (3.25,-5.9) --(3.25,1.25) -- (-0.5,1.25) -- (-0.5,-2.25) -- (0.75,-2.25) -- (0.75, -4.1) -- (2.6,-4.1) -- (2.6, -4.875) -- (1.4, -4.875) -- (1.4, -5.9) -- (2, -5.9)};

			\end{tikzpicture}
			\subcaption{}
		\end{subfigure}
		\begin{subfigure}[c]{0.5\textwidth}
			\centering
			\begin{tikzpicture}[scale=1.0, every label/.append style={font=\scriptsize}, label distance=-1mm]
			\node[envplace,label=east:$\color{myDGray}A$] at (0,0.75) (p1){};
			\node[transition,label=north:$\color{myDGray}e_1$] at (-2,-0.75) (t1){};
			\node[envplace,label=east:$\color{myDGray}B$] at (-2,-1.5) (p2){};
			\node[transition,label=north:$\color{myDGray}e_2$] at (2,-0.75) (t2){};
			\node[envplace,label=west:$\color{myDGray}B$] at (2,-1.5) (p3){};
			
			\node[sysplace,label=east:$\color{myDGray}C$] at (0,-0) (p4){};
			\node[transition,label=east:$\color{myDGray}i$] at (0,-0.75) (t3){};
			\node[sysplace,label=south:$\color{myDGray}D$] at (0,-1.5) (p5){};
			
			\node[transition,label=west:$\color{myDGray}b$] at (-2,-2.25) (t4){};
			\node[transition,label=east:$\color{myDGray}b$] at (2,-2.25) (t5){};
			
			\node[envplace,label=south:$\color{myDGray}B$] at (-2.75,-3) (p6){};
			\node[sysplace,label=south:$\color{myDGray}D$] at (-1.25,-3) (p7){};
			
			\node[sysplace,label=south:$\color{myDGray}D$] at (1.25,-3) (p8){};
			\node[envplace,label=south:$\color{myDGray}B$] at (2.75,-3) (p9){};
			
			
			\node[transition,label=west:$\color{myDGray}a$] at (2,-3.75) (t6){};

			\node[envplace,label=west:$\color{myDGray}A$] at (2,-4.5) (p10){};
			\node[sysplace,label=west:$\color{myDGray}C$] at (2,-5.25) (p11){};

			\node[transition,label=north:$\color{myDGray}e_1$] at (0.75,-6) (t7){};
			\node[transition,label=west:$\color{myDGray}i$] at (2,-6) (t8){};
			\node[transition,label=north:$\color{myDGray}e_2$] at (3.25,-6) (t9){};

			\node[envplace,label=west:$\color{myDGray}B$,label=south:{$\color{red}\{a,b\}$}] at (0.75,-6.75) (p12){};
			\node[sysplace,label=south:$\color{myDGray}D$,label=east:{$\color{red}\emptyset$},label=west:$\color{blue}q_5$] at (2,-6.75) (p13){};
			\node[envplace,label=east:$\color{myDGray}B$,label=south:{$\color{red}\{a,b\}$}] at (3.25,-6.75) (p14){};

			\node[transition,label=north:$\color{myDGray}b$] at (-2,-3.75) (t10){};
			\node[envplace,label=south:$\color{myDGray}B$] at (-2.75,-4.5) (p15){};
			\node[sysplace,label=south:$\color{myDGray}D$] at (-1.25,-4.5) (p16){};
			

			\node[] at (0,-7.5) (){};
			
			\node[token] at (0,0.75) (){};
			\node[token] at (0,0) (){};
			
			\draw[arrow] (p1) to (t1);
			\draw[arrow] (p1) to (t2);
			\draw[arrow] (t1) to (p2);
			\draw[arrow] (t2) to (p3);
			
			\draw[arrow] (p4) to (t3);
			\draw[arrow] (t3) to (p5);
			
			\draw[arrow] (p2) to (t4);
			\draw[arrow] (p3) to (t5);
			
			\draw[arrow] (p5) to (t4);
			\draw[arrow] (p5) to (t5);
			
			
			\draw[arrow] (t4) to (p6);
			\draw[arrow] (t4) to (p7);
			
			\draw[arrow] (t5) to (p8);
			\draw[arrow] (t5) to (p9);
			
			
			\draw[arrow] (p8) to (t6);
			\draw[arrow] (p9) to (t6);
			
			\draw[arrow] (t6) to (p10);
			\draw[arrow] (t6) to[out=0,in=0] (p11);
			
			
			\draw[arrow] (p11) to (t8);
			\draw[arrow] (t8) to (p13);
			
			\draw[arrow] (p10) to (t7);
			\draw[arrow] (p10) to (t9);
			\draw[arrow] (t7) to (p12);
			\draw[arrow] (t9) to (p14);

			
			\draw[arrow] (p6) to (t10);
			\draw[arrow] (p7) to (t10);
			\draw[arrow] (t10) to (p15);
			\draw[arrow] (t10) to (p16);
			
			\draw[-,blue,thick,dashed] {[rounded corners=5pt] (2,-7.3) -- (2.75,-7.3) -- (2.75,-3.75) -- (3.25,-3.75) --(3.25,-2.25) -- (2.75,-2.25) -- (2.75,1.25) -- (-0.5,1.25) -- (-0.5,-4.125) -- (2.5,-4.125) -- (2.5, -4.875) -- (1.25, -4.875) -- (1.25, -7.3) -- (2,-7.3)};
			
			\end{tikzpicture}
			\subcaption{}
		\end{subfigure}
		
		\caption{Intermediate Steps in the construction of strategy $\dot{\sigma}_\varrho$ (for $\dot{\cGame_\pGame}$ from \refFig{firstRed}) from controller $\dot{\varrho}$ (cf.\ \refFig{exampleController}). The gray label is given by $\lambda$. The red labels are the transitions that should be enabled, i.e., the sets $\mathds{A}$ computed in the construction. 
			The causal past of places $q_1$ to $q_n$ is surrounded in blue.
			(e) illustrates the final strategy.}
		\label{fig:exam_starte_trans}
	\end{figure}

\end{example}

Coming back to our general translation, we can show that the construction does indeed yield a strategy. 
The observation is that each place in $\sigma_\varrho$ decides which transitions to enable (i.e., chooses $\mathds{A}$) based on its causal past only. 
The decision is therefore based solely on the place and not on the current marking. 
\begin{lemma}
	$\pNet^{\sigma_\varrho}$ is a strategy for $\pGame$.
\end{lemma}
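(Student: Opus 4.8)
The plan is to verify the two defining properties of a Petri game strategy in turn: first that $(\pNet^{\sigma_\varrho}, \lambda)$ is an initial branching process of $\pGame$, and then that it satisfies justified refusal. The decisive preliminary observation, which makes the whole construction well-defined, is that the set $\mathds{A}_q$ attached to a place $q$ depends \emph{only} on $q$ — more precisely, only on its causal past $\pastt{q}$ — and not on the particular reachable marking $M$ being processed when $q$ is first encountered. For environment places this is immediate, since $\mathds{A}_q = \post{\pGame}{\lambda(q)}$. For a system place $q$ belonging to process $p_q$, the set $\mathds{A}_q$ is read off from $\recv(\pastt{q})$, and by \refLemma{inv_poset} the play returned by $\recv$ is independent of the chosen linearization of $\pastt{q}$; since $\pastt{q}$ is an intrinsic property of $q$ in the occurrence net, the same $\mathds{A}_q$ is obtained no matter in which marking $q$ occurs. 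Consequently ``$q$ enables $t$'', i.e.\ $t \in \mathds{A}_q$, is a well-defined predicate of the place alone, exactly as required for a branching process.

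First I would show that $(\pNet^{\sigma_\varrho}, \lambda)$ is an initial branching process by a straightforward induction on the construction. The initial marking is created with $\lambda[\init^{\sigma_\varrho}] = \init^\pGame$ and, since every place added later is a fresh postset place, the places without an incoming transition are exactly those of $\init^{\sigma_\varrho}$, while every other place has precisely one incoming transition. Well-foundedness of $\leq$ holds because each new transition consumes tokens only from places already present and produces only fresh successors, so no infinite descending chain can arise, and no transition is in self-conflict for the same reason. That $\lambda$ is an initial homomorphism is immediate from how presets and postsets of added transitions are defined: we set $\pre{\pNet^{\sigma_\varrho}}{t'} = \{q \in M \mid \lambda(q) \in \pre{\pGame}{t}\}$ and create one successor per element of $\post{\pGame}{t}$, so $\lambda$ is structure-preserving and agrees on $\init$. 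Finally, the explicit duplicate check — adding $t'$ only if no transition with the same label and a precondition contained in $M$ already exists — together with $1$-boundedness of $\pGame$, which forces the preset of $t'$ to be uniquely determined, yields injectivity on transitions sharing a precondition.

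The main content is justified refusal. Suppose $C$ is a set of pairwise concurrent places of $\pNet^{\sigma_\varrho}$ and $t \in \transitions^\pGame$ satisfies $\lambda[C] = \pre{\pGame}{t}$. Since the members of $C$ are pairwise concurrent, they are simultaneously markable, so there is a reachable marking $M$ of $\pNet^{\sigma_\varrho}$ with $C \subseteq M$; by $1$-boundedness (at most one token per slice) we moreover have $C = \{q \in M \mid \lambda(q) \in \pre{\pGame}{t}\}$. Consider the step processing $M$. Every environment place $q \in C$ satisfies $t \in \post{\pGame}{\lambda(q)} = \mathds{A}_q$, so it never blocks $t$. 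Hence there are two cases. If $t \in \mathds{A}_q$ for \emph{all} $q \in C$, then $t \in \Delta_M$ and the construction adds, or has already added, a transition $t'$ with $\lambda(t') = t$ and $\pre{\pNet^{\sigma_\varrho}}{t'} = C$. Otherwise some $q \in C$ has $t \notin \mathds{A}_q$; by the above this $q$ must be a system place, and by well-definedness of $\mathds{A}_q$ we obtain $t \notin \lambda[\post{\pNet^{\sigma_\varrho}}{q}]$. These are precisely the two alternatives of justified refusal.

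The step I expect to be the main obstacle is the well-definedness argument: ensuring that a place's commitment set is genuinely a function of the place and not of the context in which it is visited, which rests on the determinism of $\recv$ (\refLemma{inv_poset}) and on assumption \myItem{(B)} (that $\recv(\pastt{q})$ does not violate its assertion and lands in a state over $\lambda(q)$); this latter assumption has to be discharged separately, as announced, before the branching process is known to be fully defined. The remaining subtlety is the passage from pairwise concurrency of $C$ to co-occurrence in a single reachable marking, together with the use of $1$-boundedness to identify the preset of the newly added transition with $C$ itself.
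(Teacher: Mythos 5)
Your proof is correct and follows essentially the same route as the paper's: the decisive point in both is that $\mathds{A}_q$ is a function of the causal past of $q$ alone (via the determinism of $\recv$ from \refLemma{inv_poset}, with assumption \myItem{(B)} deferred), so a refusing system place refuses $t$ uniformly, giving $t \not\in \lambda[\post{\pNet^{\sigma_\varrho}}{q}]$. Your additional detail — verifying the branching-process axioms and making explicit the passage from a pairwise concurrent set $C$ to a reachable marking containing it, using $1$-boundedness — merely fills in what the paper dismisses as ``easy to verify'' or handles implicitly by phrasing justified refusal over reachable markings.
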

\begin{proof}
	It is easy to verify that the constructed net $\pNet^{\sigma_\varrho}$ is a branching process of $\pGame$.
	We need to prove \textit{justified refusal}: \\
	Suppose there is a reachable marking $M$ in $\sigma_\varrho$ and transition $t$ in $\pGame$ s.t.\ $t$ is enabled in $\lambda[M]$ (i.e., $\pre{\pGame}{t} \subseteq\lambda[M]$) but there is no $t'$ with $\lambda(t') = t$ enabled in $M$. \\
	Since no such $t'$ has been added to $\sigma_\varrho$ we conclude that $t \not\in \Delta_M$. 
	
	Since we know that $\pre{\pGame}{t} \subseteq\lambda[M]$, the definition of $\Delta_M$ gives us that there is a $q \in M$ with $\lambda(q) \in \pre{\pGame}{t}$ but $t \not \in \mathds{A}_q$.
	By construction of $\mathds{A}_q$, we can conclude that $q$ is a system place. 
	We, furthermore, know that $\mathds{A}_q$ solely depends on the causal past of $M$. For every marking $M'$ that contains $q$ we always have that $t \not\in \mathds{A}_q$ and therefore $t \not\in \Delta_M$. 
	It hence holds that $t \not\in \lambda[\post{\pNet^{\sigma_\varrho}}{q}]$.
\end{proof}

\paragraph*{Strategy-Equivalence}

We can now prove that $\varrho$ and $\sigma_\varrho$ are bisimilar. 	
As relation $\relation$, we use the same one we used before and restrict it to the reachable markings in $\pNet^{\sigma_\varrho}$ and plays in $\Plays(\cGame_\pGame, \varrho)$.
We begin with a consequence of \refLemma{PtoC_sameCausalPast}.
\begin{lemma}
	If $M \relation u$, $p \in \processes$, and $q \in M \,\cap\, \inv{\lambda}[\places^{\pts{p}}]$, then computing $u' = \recv(\pastt{q})$ does not violate any assertions.
	If $u$ is maximal w.r.t.\ $\tau$-actions, i.e., there is no $\tau$ s.t.\ $u \, \tau \in \Plays(\cGame_\pGame, \varrho)$, it holds that
	$$\view_p(u') = \view_p(u)$$
	\label{lem:rec}
\end{lemma}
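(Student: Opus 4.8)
The plan is to prove both assertions by tying the reconstructed play $\recv(\pastt{q})$ to the local view $\view_p(u)$, leaning on the causal-past characterization already secured in \refLemma{PtoC_sameCausalPast}. That lemma applies here verbatim, since $M \relation u$ and $q \in M \cap \inv{\lambda}[\places^{\pts{p}}]$, giving $\projtr{\view_p(u)} = \pastt{q}$.

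First I would dispatch assertion-freedom through \refLemma{collection}(2), which reduces the claim to exhibiting \emph{some} $\varrho$-compatible play whose observable projection equals $\pastt{q}$. The natural witness is $\view_p(u)$ itself: it is a trace-prefix of $u$, and since $\Plays(\cGame_\pGame, \varrho)$ is prefix-closed, $\view_p(u) \in \Plays(\cGame_\pGame, \varrho)$; together with $\projtr{\view_p(u)} = \pastt{q}$ this supplies the required witness. Hence $\recv(\pastt{q})$ triggers no assertion, and by \refLemma{collection}(1) its output $u' = \recv(\pastt{q})$ lies in $\Plays(\cGame_\pGame, \varrho)$ with $\projtr{u'} = \pastt{q}$.

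For the second assertion I would compare the two views componentwise. The observable parts agree: every observable action of $u'$ lies in $\pastt{q}$ and is therefore $\leq q$, and since $q$ carries the token of process $p$, each such action sits in the causal past of $p$'s last move, so $\projtr{\view_p(u')} = \projtr{u'} = \pastt{q} = \projtr{\view_p(u)}$. It remains to match the $\tau$-actions. Because every $\tau_{(q,A)}$ is local to a single process (its $\dom$ is a singleton), a $\tau$-action survives in $\view_p(\cdot)$ only if it belongs to $p$; and by $\star$ process $p$ performs exactly one $\tau$ from each system state it enters, determined solely by $\varrho$ applied to its current view. An induction along the common observable history $\pastt{q}$ then forces the $p$-local $\tau$-actions of $\view_p(u')$ and $\view_p(u)$ to coincide in identity and causal position, where $\tau$-maximality of $u$ guarantees that $p$'s final commitment choice is already recorded in $\view_p(u)$, matching the terminal $\mathit{extend}_\varrho$-step that records it in $\view_p(u')$. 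Combining observable and $\tau$-agreement yields $\view_p(u') = \view_p(u)$ as labelled posets.

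The main obstacle I anticipate is precisely this $\tau$-matching step: one must argue inductively that after each shared observable action the local view of $p$ is identical in both plays, so that the unique (by $\star$) $\tau$-action prescribed by $\varrho$ is the same and is inserted at the corresponding causal position. This is exactly where $\tau$-maximality of $u$ is indispensable, for without it $p$ might reside on a system state in $u$ having not yet chosen a commitment set, so that $\view_p(u)$ would lack the final $\tau$ present in the freshly $\tau$-saturated $\view_p(u')$, breaking the equality.
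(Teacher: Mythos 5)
Your treatment of the first claim is exactly the paper's argument: \refLemma{PtoC_sameCausalPast} gives $\pastt{q} = \projtr{\view_p(u)}$, the view of a $\varrho$-compatible play is itself $\varrho$-compatible, and \refLemma{collection} does the rest. That part is fine.

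In the second claim, however, there is a concrete false step: you assert that ``a $\tau$-action survives in $\view_p(\cdot)$ only if it belongs to $p$,'' and your induction accordingly matches only the $p$-local $\tau$-actions. This is wrong in the game $\cGame_\pGame$. A $\tau_{(s,A)}$-action of a foreign process $p'$ has $\dom(\tau_{(s,A)}) = \{p'\}$ and is therefore \emph{dependent} on every shared transition $t$ with $p' \in \dom(t)$; if $p'$ chooses a commitment set and then synchronizes with $p$ on $t \in \Sigma_p$, that foreign $\tau$-action cannot be commuted past $t$ and hence belongs to $\view_p(u)$. (Concretely, in the example of \refFig{possiblePlays}, the $p_1$-view of $[\tau_{(C,\{i\})}, i, \tau_{(D,\{a,b\})}, e_1, a]_\ind$ contains both $\tau$-actions of $p_2$, since they precede the shared action $a$.) Indeed this is forced by the construction: before any shared transition fires, every participating process on a system state must have committed, so these foreign $\tau$'s pervade every nontrivial view. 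Consequently your induction, as stated, establishes agreement of the views only on observable actions and $p$'s own $\tau$'s, and does not yield the claimed equality of the labelled traces $\view_p(u') = \view_p(u)$, which must also agree on the foreign $\tau$'s interleaved before synchronizations.

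The gap is repairable, and the repair is what the paper actually does: since $u$ is maximal w.r.t.\ $\tau$-actions, $u'$ is $\tau$-saturated by construction of $\recv$, and $\star$ makes \emph{every} process's commitment choice unique (each local controller prescribes exactly one $\tau$ as a function of that process's own view of the shared observable history $\pastt{q}$), the positions and identities of \emph{all} $\tau$-actions occurring in the two views — not just those of $p$ — are uniquely determined by the common observable skeleton $\projtr{\view_p(u')} = \projtr{\view_p(u)}$. Your induction should therefore range over the commitment choices of all processes occurring in the causal past, not only over those of $p$; with that change the argument goes through and coincides with the paper's proof. (Trailing commitments of foreign processes after their last synchronization with $p$ are absent from both views, so $\tau$-maximality is needed, as you correctly note, only to guarantee that $p$'s own final commitment appears on both sides.)
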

\begin{proof}
	We first show that computing $u' = \recv(\pastt{q})$ does not violate any assertions:
	By definition from $\relation$, it holds that $M = \fireSeq{\pNet^{\sigma_\varrho}}{\projtr{u}}$. 
	By \refLemma{PtoC_sameCausalPast}, we get that 
	\begin{align*}
	\pastt{q} = \projtr{\view_p(u)} \tag*{\myItem{(1)}}
	\end{align*}
	Since $u \in \Plays(\cGame_\pGame, \varrho)$, we know that $\view_p(u) \in \Plays(\cGame_\pGame, \varrho)$. The claim that no assertion is violated follows from \refLemma{collection}. \\
	We can now show that $\view_p(u') = \view_p(u)$. 
	By definition of $\recv$, it holds that $\projtr{u'} = \pastt{q}$.
	When using together with \myItem{(1)}, we conclude that
	$$\projtr{u'} = \projtr{\view_p(u)}$$
	It now follows that 
	\begin{align*}
	\view_p(\projtr{u'}) &= \view_p(\projtr{\view_p(u)})\\
	&= \view_p(\view_p(\projtr{u}))\\
	&= \view_p(\projtr{u})\\
	&= \projtr{\view_p(u)}
	\end{align*}
	since $\view_p(\cdot)$ is idempotent and the $\tau$-actions removed by $\projtr{\cdot}$ are local, i.e., $\projtr{\view_p(u)}) = \view_p(\projtr{u})$.
	We, furthermore, know that $u$ and $u'$ are both maximal w.r.t.\ $\tau$-actions (by assumption and from definition of $\recv$).
	Because of $\star$, every process chooses exactly one commitment set. The $\tau$-actions in both $\view_p(u')$ and $\view_p(u)$ are hence unique and we get $\view_p(u') = \view_p(u)$.
\end{proof}

The definition of $\sigma_\varrho$ is completely independent to the definition of $\relation$. \refLemma{rec}, however, characterizes a connection between both. 
In our construction of $\sigma_\varrho$, each place computes its decision (the set $\mathds{A}$) by applying $\recv$ to the transitions in its causal past. 
In $\relation$-related situations, this results, according to \refLemma{rec}, in the local view of one of the processes.
This observation allows us to show that $\varrho$ and $\sigma_\varrho$ are bisimilar. 
We can reason in both direction:
\begin{itemize}
	
	\item If $t$ is enabled in $M$ then by construction of $\sigma_\varrho$ every involved system place $q$ has allowed it, i.e., $t \in \mathds{A}_q$. 
	The set $\mathds{A}_q$ was chosen by computing the causal past of that place and convert it to a play using $\recv$. By \refLemma{rec}, each place therefore computes the local view of one of the processes on $u$ and copies the decision. Since $t$ is allowed by all involved system places, we can conclude that all involved processes must have chosen commitment sets where $t$ is included.
	Hence, $u$ can be extended by $t$ (after playing sufficiently many $\tau$-actions to choose a commitment set).
	
	\item If $u$ can be extended with $t$ by $\varrho$ then all involved processes enable $t$.
	So, every process $p \in \dom(t)$ either resides on an environment place where it has no control or it is on a system place where it must have chosen a commitment set that includes $t$. 
	Each system place in $M$ evokes $\recv$ on its causal past and, by \refLemma{rec}, therefore computes the local view of on process on $u$.
	The place then copies the decision made on that local play, i.e., copies the chosen commitment set.
	Since $t$ is in the commitment of every involved process every place $q$ involved in $t$ will allow $t$ (i.e., choose $\mathds{A}_q$ such that $t \in \mathds{A}_q$). So together the system places allow $t$ from $M$.
\end{itemize}

 We can now prove this formally. 
Since $\varrho$ is, by assumption, winning we can neglect all \Lightning-actions. 

\begin{lemma}
	If $M \relation u$ and $\fireTranTo{M}{t}{M'}$ for some $M' \in \reach(\pNet^{\sigma_\varrho})$ there exists $u' = u \,\tau^* \,t \in \Plays(\cGame_\pGame, \varrho)$ and $M' \relation u'$.
	\label{lem:PtoAl4}
\end{lemma}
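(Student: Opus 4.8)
The plan is to mirror the structure of the proof of \refLemma{PtoAl3} from the forward direction, but to replace the role of \refLemma{PtoC_localeViewSamePlace} by \refLemma{rec}, which is the analogous bridge between the incremental construction of $\sigma_\varrho$ and the relation $\relation$. Throughout, since $\varrho$ is winning, I may ignore all \Lightning-actions.

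First I would invoke $M \relation u$ together with \refLemma{PtoC_sameLabel} to obtain $\zeta(\state{u}) = \lambda[M]$, so that the global automaton state reached by $u$ agrees, under $\zeta$ and $\lambda$, with the marking $M$. Next I would extend $u$ to a play $u_\tau = u\,\tau^*$ that is maximal with respect to $\tau$-actions, so that every process residing on a state corresponding to a system place has committed to its (by $\star$ unique) commitment set. Because $\projtr{\cdot}$ ignores $\tau$-actions, $\projtr{u_\tau} = \projtr{u}$ and hence $M \relation u_\tau$ still holds.

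The heart of the argument is to show $u_\tau\,t \in \Plays(\cGame_\pGame, \varrho)$. Here I would exploit that every transition $t \in \transitions$ is an \emph{uncontrollable} action in $\cGame_\pGame$ (i.e.\ $t \in \Sigma^\env$), so it suffices to verify that $t$ is enabled in the automaton from $\state{u_\tau}$, that is, that every process $p \in \dom(t)$ sits in a state admitting $t$. For a process on a state corresponding to an environment place this is immediate from $\zeta(\state{u_\tau}) = \lambda[M]$ and the enabledness of $t$ in $\lambda[M]$ (rule \textbf{\textsf{(2)}}). For a process on a system place state $(\lambda(q), B)$ I must show $t \in B$. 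Since $\fireTranTo{M}{t}{M'}$ holds in $\sigma_\varrho$, a transition labelled $t$ was added to the strategy with precondition $\{q \in M \mid \lambda(q) \in \pre{\pGame}{t}\}$; by the definition of $\Delta_M$ (and since $\mathds{A}_q$ depends only on the causal past of $q$) the corresponding place $q \in M \cap \inv{\lambda}[\places^{\pts{p}}]$ satisfies $t \in \mathds{A}_q$. By construction $\mathds{A}_q$ is the commitment set read off from $\recv(\pastt{q})$, and \refLemma{rec}, applied to the $\tau$-maximal $u_\tau$, yields $\view_p(\recv(\pastt{q})) = \view_p(u_\tau)$, so the commitment set $B$ that $p$ actually chose after $u_\tau$ equals $\mathds{A}_q$. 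Hence $t \in B$, and $t$ is enabled for $p$.

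Having shown $t$ enabled for every $p \in \dom(t)$, and since $t$ is uncontrollable and therefore admitted by every controller, I conclude $u' := u_\tau\,t = u\,\tau^*\,t \in \Plays(\cGame_\pGame, \varrho)$. Finally $M' \relation u'$ follows directly from the definition of $\relation$: one has $\projtr{u'} = \projtr{u}\,t$, and $\fireTranTo{M}{t}{M'}$ gives $\fireSeq{\pNet^{\sigma_\varrho}}{\projtr{u'}} = M'$. I expect the main obstacle to be the precise alignment of local views, so that the commitment set the process fixes in $\cGame_\pGame$ after $u_\tau$ genuinely coincides with the set $\mathds{A}_q$ that the corresponding place committed to during the incremental construction; this is exactly the point where the $\tau$-maximality of $u_\tau$, the restriction $\star$, and \refLemma{rec} must be combined carefully.
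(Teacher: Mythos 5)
Your proposal is correct and follows essentially the same route as the paper's proof: it invokes \refLemma{PtoC_sameLabel} for $\zeta(\state{u}) = \lambda[M]$, extends $u$ to a $\tau$-maximal $u_\tau$ where $\star$ guarantees unique commitments, uses uncontrollability of $t$ together with \refLemma{rec} to identify the chosen commitment set $B$ with $\mathds{A}_q$ (via $t \in \Delta_M$), and concludes $M' \relation u'$ from the definition of $\relation$. The only difference is cosmetic: the paper derives $B = \mathds{A}_{q_p}$ by contradiction from an assumed blocking process, whereas you argue the contrapositive directly for every $p \in \dom(t)$.
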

\begin{proof}
	From $M \relation u$, we get $\lambda[M] = \zeta(\state{u})$ \myItem{(1)} by \refLemma{PtoC_sameLabel}.
	Since $\fireTranTo{M}{t}{M'}$, all places in $M$ that are involved in $t$ allow it, i.e., for every $q \in M$ with $\lambda(q) \in \pre{\pGame}{t}$, it holds that $t \in \lambda[\post{\pNet^{\sigma_\varrho}}{q}]$ . We hence conclude that $t \in \Delta_M$ \myItem{(2)}.\\
	Let $u_\tau$ be the trace obtained from $u$ by playing as many $\tau$-actions as possible s.t.\ there are no $\tau$-actions enabled after $u_\tau$.
	It holds that $M \relation u_\tau$. 
	By assumption $\star$, every process, that can, chooses a commitment set. For every $p$ with $\zeta(\statep{p}{u_\tau}) \in \places_\psys$, we therefore know that $\statep{p}{u_\tau} = (\_, \_)$.\\
	Assume for contradiction that $u_\tau \, t \not\in \Plays(\cGame_\pGame, \varrho)$. 
	Because of \myItem{(1)}, we know that $t$ would be possible after $u_\tau$ if the commitment sets are chosen appropriately.  
	There hence is a process ${p} \in \dom(t)$ that has chosen a commitment set that does not include $t$, i.e., $\statep{{p}}{u_\tau} = (\lambda(q_{p}), B)$ where $t \not\in B$.\\
	Let $q_{p} \in M \cap \inv{\lambda}[\places^{\pts{p}}]$ be the place that corresponds to $p$ in $M$. 
	By \refLemma{rec} and as $u_\tau$ is by assumption maximal, we now know that 
	$$\view_{p}(\recv(\pastt{q_{p}})) = \view_{p}(u_\tau)$$
	From \myItem{(1)} and as $p \in \dom(t)$, we can conclude that $\lambda(q_{p}) \in \pre{\pGame}{t}$. So, since $t \in \Delta_M$ \myItem{(2)}, we get that $t \in \mathds{A}_{q_{p}}$.	\\
	We can now analyze the construction of $\sigma_\varrho$ to observe how $\mathds{A}_{q_{p}}$ is derived. It is computed by matching 
	$$\statep{{p}}{\view_{p}(\recv(\pastt{q_{p}}))} = (\lambda(q_{p}), \mathds{A}_{q_{p}})$$
	But now 
	\begin{align*}
	(\lambda(q_{p}), B) &= \statep{{p}}{u_\tau}\\
	&= \statep{{p}}{\view_{p}(u_\tau)}\\
	&= \statep{\bar{a}}{\view_{p}(\recv(\pastt{q_{p}}))}\\
	&= (\lambda(q_{p}), \mathds{A}_{q_{p}})
	\end{align*}
	So $B = \mathds{A}_{q_{p}}$, i.e., the transitions allowed by $q_{p}$ are exactly the transitions that $p$ has chosen as a commitment set. 
	This is a contradiction since $t \in \mathds{A}_{q_{p}}$ \myItem{(2)} but by assumption $t \not\in B$.
\end{proof}

 We use the previous lemma to justify our assumption \myItem{(A)} made in the construction of $\sigma_\varrho$.
\begin{corollary}
	For any place $q \in \places^{\pNet^{\sigma_\varrho}}$ with $q \in M \cap \inv{\lambda}[\places^{\pts{p_q}}]$, computing $u = \recv(\pastt{q})$ does not violate an assertion and $\zeta(\statep{p_q}{u}) = \lambda(q)$.
\end{corollary}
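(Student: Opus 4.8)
The plan is to obtain this corollary from a single \emph{completeness invariant} for the incremental construction, namely
\begin{quote}
$(\ast)$\quad for every $M \in \reach(\pNet^{\sigma_\varrho})$ there is a $u \in \Plays(\cGame_\pGame, \varrho)$ with $M \relation u$.
\end{quote}
The delicate point is that $\sigma_\varrho$ is only well-defined once assumption \myItem{(B)} is known to hold at each processing step, so $(\ast)$ and the well-definedness of the construction must be established \emph{simultaneously}. I would therefore prove $(\ast)$ by induction on the reachability order of markings in $\pNet^{\sigma_\varrho}$, discharging \myItem{(B)} for the places of a marking exactly at the moment that marking is processed.

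For the base case, $\init^{\sigma_\varrho} \relation \epsilon$ follows directly from the definition of $\relation$, since $\projtr{\epsilon} = \epsilon$ and $\fireSeq{\pGame^\mathfrak{U}}{\epsilon} = \init^{\pGame}$. For the inductive step I would assume $M$ reachable with $M \relation u$ for some $u \in \Plays(\cGame_\pGame, \varrho)$. Because $M \relation u$, \refLemma{rec} shows that for each place $q \in M$ the call $\recv(\pastt{q})$ violates no assertion, and (by the argument in the next paragraph) that $\zeta(\statep{p_q}{\recv(\pastt{q})}) = \lambda(q)$; hence \myItem{(B)} holds for every $q \in M$, so $\mathds{A}_q$ and $\Delta_M$ are well-defined and the construction may legitimately process $M$. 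For any successor $M'$ produced by firing a transition $t'$ with $\lambda(t') = t \in \Delta_M$, we have $\fireTranTo{M}{t}{M'}$ in $\sigma_\varrho$, and \refLemma{PtoAl4} supplies a play $u' = u \, \tau^* \, t \in \Plays(\cGame_\pGame, \varrho)$ with $M' \relation u'$. This re-establishes $(\ast)$ for $M'$ and closes the induction; crucially, the apparent circularity is broken because \myItem{(B)} for the places of $M$ is discharged \emph{from} the inductive hypothesis $M \relation u$, i.e.\ before any successor of $M$ is created.

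The corollary then drops out directly. Given $q$ with $q \in M \cap \inv{\lambda}[\places^{\pts{p_q}}]$, invariant $(\ast)$ yields $u \in \Plays(\cGame_\pGame, \varrho)$ with $M \relation u$, and \refLemma{rec} gives that $v := \recv(\pastt{q})$ violates no assertion. For the remaining identity, non-violation implies $\projtr{v} = \pastt{q}$, so $\fireSeq{\pGame^\mathfrak{U}}{\projtr{v}} = \fireSeq{\pGame^\mathfrak{U}}{\pastt{q}} =: M_q$ is a marking that contains $q$; thus $M_q \relation v$, and \refLemma{PtoC_sameLabel} gives $\zeta(\state{v}) = \lambda[M_q]$. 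Restricting this equality to the unique token of slice $\pts{p_q}$ — which occupies precisely the place $q$ in $M_q$ by $1$-boundedness and the slice/process correspondence — yields $\zeta(\statep{p_q}{v}) = \lambda(q)$, as required.

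The main obstacle, and the only genuinely subtle point, is exactly the interplay described above: the definition of $\sigma_\varrho$ presupposes \myItem{(B)}, while \myItem{(B)} is what we are proving, so the two have to be woven into one induction on reachability rather than argued independently. A secondary care point is the passage from the global marking equality of \refLemma{PtoC_sameLabel} to the per-process statement $\zeta(\statep{p_q}{v}) = \lambda(q)$, which relies on each process corresponding to exactly one slice that carries exactly one token.
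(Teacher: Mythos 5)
Your proof is correct. Its first half --- establishing the invariant that every reachable marking of $\pNet^{\sigma_\varrho}$ is $\relation$-related to some $\varrho$-compatible play by repeated application of \refLemma{PtoAl4}, then discharging the assertion claim through \refLemma{rec} --- is exactly the paper's argument; the paper leaves the induction, and in particular the simultaneous discharge of assumption \myItem{(B)}, implicit, and making that interleaving explicit, as you do, is the cleaner presentation of what is otherwise a mildly circular setup. Where you genuinely diverge is the identity $\zeta(\statep{p_q}{u}) = \lambda(q)$ for $u = \recv(\pastt{q})$: the paper derives it from the second clause of \refLemma{rec}, namely $\view_{p_q}(u) = \view_{p_q}(u')$ for the play $u'$ with $M \relation u'$, and then chains $\zeta(\statep{p_q}{u}) = \zeta(\statep{p_q}{\view_{p_q}(u')}) = \zeta(\statep{p_q}{u'}) = \lambda(q)$ via \refLemma{PtoC_sameLabel} applied to the pair $(M, u')$. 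You instead manufacture a fresh related pair: non-violation gives $\projtr{u} = \pastt{q}$ and $u \in \Plays(\cGame_\pGame, \varrho)$ (\refLemma{collection}), hence $\fireSeq{\pNet^{\sigma_\varrho}}{\projtr{u}} = \fireSeq{\pNet^{\sigma_\varrho}}{\pastt{q}} = M_q$ with $q \in M_q$ (no transition in $\pastt{q}$ can consume $q$, by acyclicity of the occurrence net), so $M_q \relation u$ holds by the very definition of $\relation$ and \refLemma{PtoC_sameLabel} applies to $(M_q, u)$ directly. This buys you something concrete: the view-equality clause of \refLemma{rec} carries the hypothesis that the ambient play be maximal w.r.t.\ $\tau$-actions, which the play supplied by \refLemma{PtoAl4} need not satisfy --- the paper silently glosses over this (one would first have to pad $u'$ with $\tau$s), whereas your route bypasses that clause altogether. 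The one step both you and the paper use without fully spelling out is the passage from the multiset equality $\zeta(\state{u}) = \lambda[M_q]$ to the per-process statement; your justification is the right one: since the slices partition $\places^\pGame$ and $\zeta$ maps the local states of process $p_q$ into $\places^{\pts{p_q}}$, the equality decomposes componentwise over the partition, singling out the unique token of slice $\pts{p_q}$, which in $M_q$ sits precisely on $q$.
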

\begin{proof}
	Place $q$ is part of some reachable marking $M$. 
	Using \refLemma{PtoAl4}, we get that there is some $u' \in \Plays(\cGame_\pGame, \varrho)$ with $M \relation u'$. 
	By \refLemma{rec}, computing $\recv(\pastt{q})$ does not violate any assertions.
	For the second part, we know (from \refLemma{rec}) that $\view_p(u') = \view_p(u)$.
	Now
	\begin{align*}
	\zeta(\statep{p_q}{u}) &= \zeta(\statep{p_q}{\view_{p_q}(u)})\\
	&= \zeta(\statep{p_q}{\view_{p_q}(u')})\\
	&= \zeta(\statep{p_q}{u'})\\
	&= \lambda(q)
	\end{align*}
	where the second equality follows from $\view_p(u') = \view_p(u)$ and the third from \refLemma{PtoC_sameLabel} since $M \relation u'$. 
\end{proof}

\begin{lemma}
	If $M \relation u$ and $u' = u\,t \in \Plays(\cGame_\pGame, \varrho)$ then there exists $M' \in \reach(\pNet^{\sigma_\varrho})$ with $\fireTranTo{M}{t}{M'}$ and $M' \relation u'$.
	\label{lem:PtoAl5}
\end{lemma}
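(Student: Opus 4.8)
The plan is to prove \refLemma{PtoAl5} as the exact converse of \refLemma{PtoAl4}: instead of lifting a strategy move to a $\varrho$-compatible play, I lift the controller move $t$ to a firing in $\pNet^{\sigma_\varrho}$. Since $t$ is an observable transition, the entire argument reduces to showing $t \in \Delta_M$, i.e.\ that every place of $M$ in $\pre{\pGame}{t}$ has committed to $t$. Once this is established, the construction in \refFig{PtoC_translation2} guarantees a transition $t'$ with $\lambda(t') = t$ and $\pre{\pNet^{\sigma_\varrho}}{t'} \subseteq M$, so $\fireTranTo{M}{t}{M'}$ for some $M' \in \reach(\pNet^{\sigma_\varrho})$; and because $\projtr{u'} = \projtr{u}\,t$, firing $t$ from $M = \fireSeq{\pNet^{\sigma_\varrho}}{\projtr{u}}$ yields $\fireSeq{\pNet^{\sigma_\varrho}}{\projtr{u'}} = M'$, which is $M' \relation u'$ by definition of $\relation$.

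First I would invoke \refLemma{PtoC_sameLabel} to get $\lambda[M] = \zeta(\state{u})$. As $u' = u\,t \in \Plays(\cGame_\pGame, \varrho)$ and $t \in \transitions$ is observable, $t$ is enabled from $\state{u}$ in $\AA_\pGame$, hence $\pre{\pGame}{t} \subseteq \lambda[M]$. I then fix a place $q \in M$ with $\lambda(q) \in \pre{\pGame}{t}$ and show $t \in \mathds{A}_q$. If $q$ is an environment place, $\mathds{A}_q = \post{\pGame}{\lambda(q)}$ contains $t$ since $\lambda(q) \in \pre{\pGame}{t}$. If $q$ is a system place, let $p = p_q$ be its owner, so $q \in M \cap \inv{\lambda}[\places^{\pts{p}}]$ and $p \in \dom(t)$. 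The structural observation is that a bare system state $\lambda(q)$ has only outgoing $\tau$-edges (commitment rule \textbf{\textsf{(1)}} of $\vartheta$), and $t$ fires from such a state only via rule \textbf{\textsf{(3)}}, i.e.\ from a committed state $(\lambda(q),A)$ with $t \in A$. Since $t$ fires from $\state{u}$ and $p$ sits on a system state, $p$ must already be committed in $u$ with $\statep{p}{u} = (\lambda(q), B)$ and $t \in B$.

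The main obstacle, where I would spend the most care, is identifying this actually-chosen set $B$ with the set $\mathds{A}_q = B'$ that the construction of $\sigma_\varrho$ recomputes via $\recv(\pastt{q})$, namely $\statep{p}{\view_p(\recv(\pastt{q}))} = (\lambda(q), B')$. To bridge these I would pass to the maximal $\tau$-extension $u_\tau$ of $u$, which still satisfies $M \relation u_\tau$ because $\projtr{u_\tau} = \projtr{u}$. Extending by $\tau$-actions only commits previously uncommitted processes, and no process of $\dom(t)$ on a system state is uncommitted (otherwise $t$ could not fire from $\state{u}$), so $t$ stays firable after $u_\tau$ and, since $p$ is already committed, $\statep{p}{u_\tau} = (\lambda(q), B)$. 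Now $u_\tau$ is maximal w.r.t.\ $\tau$-actions, so \refLemma{rec} applies and gives $\view_p(\recv(\pastt{q})) = \view_p(u_\tau)$; as a process's local state depends only on its local view, $\statep{p}{\view_p(\recv(\pastt{q}))} = \statep{p}{u_\tau} = (\lambda(q), B)$, whence $B' = B \ni t$, i.e.\ $t \in \mathds{A}_q$.

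Having shown $t \in \mathds{A}_q$ for every relevant $q$ together with $\pre{\pGame}{t} \subseteq \lambda[M]$, I conclude $t \in \Delta_M$ and finish as outlined in the first paragraph. The one subtlety to keep in view is the interplay between the $\tau$-maximality that \refLemma{rec} requires and the fact that the relevant processes are already committed in $u$: this is precisely what keeps $\statep{p}{u_\tau}$ equal to $\statep{p}{u}$, so that the set reconstructed by $\recv$ coincides with the one the controller actually chose along $u$.
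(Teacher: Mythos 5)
Your proof is correct and takes essentially the same route as the paper: both arguments reduce to identifying the commitment set $\mathds{A}_q$ computed via $\recv(\pastt{q})$ with the set $B$ actually chosen along $u$, using \refLemma{PtoC_sameLabel}, the passage to the maximal $\tau$-extension $u_\tau$ with $M \relation u_\tau$, \refLemma{rec}, and the observation that already-committed processes are unaffected by the added $\tau$-actions (so $\statep{p}{u_\tau} = \statep{p}{u}$). The only cosmetic differences are that you argue directly ($t \in \mathds{A}_q$ for every relevant place, hence $t \in \Delta_M$) where the paper argues by contradiction from a blocking system place, and that you spell out the trivial environment-place case and the final firing step that the paper leaves implicit.
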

\begin{proof}
	From \refLemma{PtoC_sameLabel}, we know that $\lambda[M] = \zeta(\state{u})$. So, $t$ is by construction enabled from $\lambda[M]$. \\
	Assume for contradiction that $t$ is not enabled in $M$, i.e., forbidden by the strategy. 
	Then $t \not\in \Delta_M$. Since $t$ is enabled from $\lambda[M]$, by construction of $\Delta_M$, there must be a \emph{system} place $q \in M$ with $\lambda(q) \in \pre{\pGame}{t}$ but $t \not\in \mathds{A}_{q}$ \myItem{(1)}, i.e., there is at least one place that hindered $t$ from being added to the strategy. \\
	Let $p_{q}$ be the process to which $q$ belongs, i.e., $q \in M \cap \inv{\lambda}[\places^{\pts{p_{q}}}]$. 
	Since $q$ is involved in $t$ we get that $p_{q} \in \dom(t)$. 
	Since $u' = u \, t \in \Plays(\cGame_\pGame, \varrho)$ and $q$ is a system place we get that $\statep{p_{q}}{u} = (\lambda(q), B)$ with $t \in B$ \myItem{(2)}, i.e., $p_{q}$ has chosen a commitment set that contains $t$.\\
	We know that $\mathds{A}_{q}$ for place $q$ is computed by matching $$\statep{p_{q}}{\view_{p_{q}}(\recv(\pastt{q}))} = (\lambda(q), \mathds{A}_{q})$$
	Let $u_\tau$ be $u$ extended with as many $\tau$-actions as possible (only necessary to fulfill the assumptions of \refLemma{rec}). It holds that $M \relation u_\tau$. 
	By \refLemma{rec}, we get
	$$\view_{p_{q}}(\recv(\pastt{q})) = \view_{p_{q}}(u_\tau)$$
	It now holds that
	\begin{align*}
	(\lambda(q), \mathds{A}_{q}) &= \statep{p_{q}}{\view_{p_{q}}(\recv(\pastt{q}))} \\
	&= \statep{p_{q}}{\view_{p_{q}}(u_\tau)}\\
	&= \statep{p_{q}}{u_\tau}\\
	&= \statep{p_{q}}{u}\\
	&= (\lambda(q), B)
	\end{align*} 
	Where the fourth equality holds since $p_{q}$ has already chosen a commitment set after $u$, i.e., adding more $\tau$-actions to get from $u$ to $u_\tau$ does not affect $p_{q}$.\\
	So $\mathds{A}_{q} = B$.
	This is a contradiction to $t \not\in \mathds{A}_{q}$ \myItem{(1)} and $t \in B$ \myItem{(2)}.
\end{proof}

\begin{lemma}
	If $M \relation u$ and $u' = u\, \tau \in \Plays(\cGame_\pGame, \varrho)$ then $M \relation u'$.
	\label{lem:PtoAl6}
\end{lemma}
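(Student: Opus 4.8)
The plan is to observe that this lemma is the exact analogue of \refLemma{PtoAl2} from the first translation direction and, like it, follows immediately from the definition of $\relation$. The single idea is that a $\tau$-action is, by construction, an \emph{internal} action of $\cGame_\pGame$ and therefore does not lie in $\transitions$. Appending such an action to a play leaves the observable projection $\projtr{\cdot}$ unchanged, and since $\relation$ is defined purely in terms of this projection, the related marking cannot change.

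Concretely, I would first recall that the restriction of $\relation$ to $\pNet^{\sigma_\varrho}$ reads $\fireSeq{\pNet^{\sigma_\varrho}}{\projtr{u}} = M$ (exactly as already used in \refLemma{rec}). The action $\tau$ extending $u$ is either one of the commitment-choosing $\tau_{(q,A)}$-actions or, in the case of $\widehat{\cGame_\pGame}$, a \Lightning-action; in both cases it is a fresh symbol outside $\transitions$. Hence the defining clause $\projtr{u\,\tau} = \projtr{u}$ of the projection applies verbatim, so $\projtr{u'} = \projtr{u}$. Combining this with the hypothesis $M \relation u$ yields
\[
\fireSeq{\pNet^{\sigma_\varrho}}{\projtr{u'}} = \fireSeq{\pNet^{\sigma_\varrho}}{\projtr{u}} = M,
\]
which is precisely $M \relation u'$.

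There is essentially no obstacle here: the only point needing care is to confirm that every action classified as $\tau$ really is non-observable, i.e.\ absent from $\transitions$, so that the projection clause above is the correct one to invoke. This is immediate from the construction in \refFig{PGtoAA}, where $\transitions$ consists solely of the original transitions while the added actions ($\tau_{(q,A)}$ and \Lightning) are distinct symbols. Moreover, since $\varrho$ is assumed winning, no \Lightning-action ever occurs in a $\varrho$-compatible play, so for the relevant plays the only internal moves are the commitment actions. The statement is therefore definitional, mirroring the one-line proof of \refLemma{PtoAl2}.
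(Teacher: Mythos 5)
Your proposal is correct and coincides with the paper's own proof, which simply states that the claim is an obvious consequence of the definition of $\relation$; you merely spell out the (accurate) one-line argument that $\projtr{u\,\tau} = \projtr{u}$, so the marking reached by simulating the observable projection is unchanged.
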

\begin{proof}
	Obvious consequence from the definition of $\relation$.
\end{proof}

\begin{corollary}
	$\varrho$ and $\sigma_\varrho$ are bisimilar.
\end{corollary}
\begin{proof}
	By definition of $\relation$, it holds that $\init^{\pNet^{\sigma_\varrho}} \relation \epsilon$.
	Since there are no $\tau$-transitions in $\pGame$ the claim follows from \refLemma{PtoAl4}, \refLemma{PtoAl5}, and \refLemma{PtoAl6}.
\end{proof}

 We show next that a winning $\varrho$ results in a winning $\sigma_\varrho$. Since a winning controller for $\widehat{\cGame_\pGame}$ avoids all \Lightning-actions, neglecting them in our bisimulation proofs is justified. 

\begin{lemma}
	If $\varrho$ is a winning controller for $\cGame_\pGame$ or $\widehat{\cGame_\pGame}$ then $\sigma_\varrho$ is a winning strategy for $\pGame$.
	\label{lem:PtC_winning}
\end{lemma}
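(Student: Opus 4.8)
The plan is to transport the two reachability-winning conditions for $\pGame$—finiteness of the strategy net and the requirement that every final, reachable marking is covered by winning places—back from the winning controller $\varrho$, using the bisimilarity of $\varrho$ and $\sigma_\varrho$ just established. Throughout I would recall that $\pGame$ has no internal transitions, so every transition of $\sigma_\varrho$ is observable, and that a winning $\varrho$ admits only finite plays and terminates every final play in a winning global state. For the $\widehat{\cGame_\pGame}$ case I would first note that a winning $\varrho$ never admits a \Lightning-action: such an action drives an involved process into a $\badw$-state, which is not a winning state and has no outgoing transitions, so its occurrence would eventually force a final play with a non-winning local state, contradicting reachability-winningness. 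Hence \Lightning-actions may be neglected exactly as in the preceding bisimulation lemmas.

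First I would establish finiteness of $\pNet^{\sigma_\varrho}$. This net is an occurrence net with a finite initial marking, and since $\pGame$ is finite every place has only finitely many outgoing transitions and every transition a finite postcondition, so the flow graph is finitely branching with finitely many sources. If $\pNet^{\sigma_\varrho}$ had infinitely many nodes, König's lemma would yield an infinite $\lessdot$-chain and thus an infinite firing sequence $\init^{\sigma_\varrho} = M_0 \to M_1 \to \cdots$ of observable transitions. Starting from $\init^{\sigma_\varrho} \relation \epsilon$ and applying \refLemma{PtoAl4} inductively, each step lifts to an extension of the compatible play by $\tau^* t$, producing an infinite play in $\Plays(\cGame_\pGame, \varrho)$. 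This contradicts the finiteness of the plays admitted by a winning $\varrho$, so $\pNet^{\sigma_\varrho}$ is finite.

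Next I would show that every final, reachable marking $M$ of $\sigma_\varrho$ places all tokens on winning places. By bisimilarity there is $u \in \Plays(\cGame_\pGame, \varrho)$ with $M \relation u$; I would extend $u$ by all enabled $\tau$-actions to a $\tau$-maximal play $u_\tau$, which still satisfies $M \relation u_\tau$ by \refLemma{PtoAl6}. I claim $u_\tau$ is a final play: no $\tau$-action extends it by maximality; no observable $t$ extends it, since \refLemma{PtoAl5} would then make $t$ enabled in $M$, contradicting that $M$ is final; and \Lightning-actions are ruled out as above. Since $\varrho$ is reachability-winning, $u_\tau$ is a final play in which every process terminates in a winning local state, i.e.\ $\statep{p}{u_\tau} \in \win_p$ for all $p \in \processes$. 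Applying \refLemma{PtoC_sameLabel} gives $\zeta(\state{u_\tau}) = \lambda[M]$, and $\zeta$ maps every winning state (a place $q$, or a pair $(q,A)$, with $q \in \win$) to the winning place $q$. Hence $\lambda[M] \subseteq \win$, so every token of $M$ sits on a place labelled by a winning place, and $M$ is a winning marking.

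Combining finiteness with this, $\sigma_\varrho$ is reachability-winning. The one delicate point—and the main obstacle—is converting a final marking of the strategy into a genuinely final play of the control game: one must first saturate the $\tau$-actions before invoking winningness of $\varrho$, and then rule out all three ways a final play could otherwise be extended (an internal $\tau$, an observable $t$, and a \Lightning-action), which is precisely where the maximality of $u_\tau$, \refLemma{PtoAl5}, and the winning-avoids-\Lightning observation are each needed.
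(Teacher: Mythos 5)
Your proof is correct and follows essentially the same route as the paper's: finiteness of $\pNet^{\sigma_\varrho}$ via K\"onig's lemma lifted through the bisimulation to an infinite $\varrho$-compatible play, and winningness of final markings by relating them to ($\tau$-saturated) maximal plays and transferring the winning condition through $\zeta(\state{u}) = \lambda[M]$ (\refLemma{PtoC_sameLabel}). You merely spell out details the paper leaves implicit — the explicit $\tau$-saturation to $u_\tau$, the finality check via \refLemma{PtoAl5}, and the justification that a winning controller for $\widehat{\cGame_\pGame}$ never admits \Lightning-actions — which the paper compresses into "maximal (up to $\tau$-actions)" and its standing assumption that \Lightning-actions can be neglected.
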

\begin{proof}
	We first show that $\pNet^{\sigma_\varrho}$ is finite: 
	Assume for contradiction that it is infinite. Koenig's lemma and the fact that  $\pNet^{\sigma_\varrho}$ is an occurrence net allow us to conclude that there is an infinite sequence of consecutive markings.
	By bisimilarity, any infinite sequence of markings in $\pNet^{\sigma_\varrho}$ results in an infinite $\varrho$-compatible play. 
	A contradiction since $\varrho$ is winning. \\
	Now, suppose that $M$ is a reachable final marking in $\pNet^{\sigma_\varrho}$, i.e., there are no further transitions enabled. 
	There is a $\varrho$-compatible play $u$ with $M \relation u$ and this play is maximal (up to $\tau$-actions). Since $\varrho$ is winning, $\state{u}$ must be winning (playing further $\tau$-actions does not move into winning states).
	It holds that $\zeta(\state{u}) = \lambda[M]$ (by \refLemma{PtoC_sameLabel}). So, by construction of $\cGame_\pGame$, $\lambda[M]$ is winning.
\end{proof}

\paragraph*{Deterministic Strategies}

By \refLemma{PtC_winning}, any winning controller $\varrho$ for either $\widehat{\cGame_\pGame}$ or $\cGame_\pGame$ results in a winning strategy $\sigma_\varrho$ for $\pGame$. 
If $\varrho$ is winning for $\widehat{\cGame_\pGame}$ it must additionally avoid all \Lightning-actions.
We now show that such a controller results in a deterministic $\sigma_\varrho$:
The \Lightning-action are designed such that they can occur if and only if a commitment set is chosen and two distinct actions from this set can occur. 
A winning controller for $\widehat{\cGame_\pGame}$ must avoid every \Lightning-action and therefore has to choose commitment sets where at most one transitions from every set is possible. 
In $\sigma_\varrho$, every place decides what to enable in accordance with the commitment sets chosen by $\varrho$. 
If in $\varrho$ there is at most one action from each commitment set enabled, there is at most one transition enabled from every system place in $\sigma_\varrho$. 

\begin{lemma}
	If $\varrho$ is a controller for $\widehat{\cGame_\pGame}$ such that no play in $\Plays(\widehat{\cGame_\pGame}, \varrho)$ contains a \Lightning-action, then $\sigma_\varrho$ is deterministic.
	\label{lem:PtoC_det2}
\end{lemma}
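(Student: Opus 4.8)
The plan is to argue by contradiction: assuming $\sigma_\varrho$ is \emph{not} deterministic, I will exhibit a $\varrho$-compatible play of $\widehat{\cGame_\pGame}$ that contains a \Lightning-action, contradicting the hypothesis. Non-determinism gives a reachable marking $M \in \reach(\pNet^{\sigma_\varrho})$, a \emph{system} place $q \in M$, and two distinct transitions $t_1, t_2 \in \post{\pNet^{\sigma_\varrho}}{q}$ that are both enabled in $M$. First I would invoke bisimilarity of $\varrho$ and $\sigma_\varrho$ (from $\init^{\pNet^{\sigma_\varrho}} \relation \epsilon$ together with \refLemma{PtoAl4}) to obtain a play $u \in \Plays(\cGame_\pGame, \varrho)$ with $M \relation u$, and then extend $u$ by $\tau$-actions using \refLemma{PtoAl6} until it is maximal w.r.t.\ $\tau$, so that every process that can commit has committed (here $\star$ guarantees a \emph{unique} commitment per state). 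Since $\varrho$ admits no \Lightning-action, $\Plays(\widehat{\cGame_\pGame}, \varrho)$ and $\Plays(\cGame_\pGame, \varrho)$ coincide, so all bisimulation results for $\cGame_\pGame$ transfer.

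The core of the argument is to show that the global state $\state{u}$ satisfies \emph{all} premises of the action $\text{\Lightning}^{(\lambda(q), B)}_{[t_1, t_2]}$, where $B$ is the commitment set that $q$ copied. For the anchor process $p_q$ (the process $q$ belongs to, $q \in M \cap \inv{\lambda}[\places^{\pts{p_q}}]$), the construction of $\sigma_\varrho$ sets $\mathds{A}_q = B$ with $\statep{p_q}{\view_{p_q}(\recv(\pastt{q}))} = (\lambda(q), B)$, and $t_1, t_2 \in \post{\pNet^{\sigma_\varrho}}{q} \subseteq \mathds{A}_q = B$. Because $u$ is maximal w.r.t.\ $\tau$, \refLemma{rec} yields $\view_{p_q}(\recv(\pastt{q})) = \view_{p_q}(u)$ and hence $\statep{p_q}{u} = (\lambda(q), B)$; with $t_1 \neq t_2$ and $t_1, t_2 \in B$ this is exactly rule \textbf{\textsf{(4)}}.

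Next I would verify rules \textbf{\textsf{(5)}} and \textbf{\textsf{(6)}} for every further slice $\slice$ with $t_1 \in \transitions^\slice$ or $t_2 \in \transitions^\slice$. Let $q_\slice$ be the unique place of $\slice$ in $M$; since each $t_i$ involving $\slice$ is enabled in $M$, the place $q_\slice$ lies in its precondition, and by \refLemma{PtoC_sameLabel} the process $\stp{\slice}$ is in a state labelled $\lambda(q_\slice)$ (the side condition $\lambda(q) \notin Q_\slice$ holds because places partition in a sliceable net, a routine check). If $\lambda(q_\slice) \in \places_\penv$, then $\stp{\slice}$ sits on the environment state $\lambda(q_\slice)$ and $t_i \in \post{\slice}{\lambda(q_\slice)}$ for each $t_i$ that $\slice$ takes part in, which is rule \textbf{\textsf{(5)}}. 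If $\lambda(q_\slice) \in \places_\psys$, then by maximality $\stp{\slice}$ has committed, its commitment set is $\mathds{A}_{q_\slice}$ (the same reading via \refLemma{rec}), and since each such $t_i$ is enabled in $M$ with $q_\slice$ a system place in its precondition, $t_i$ could only have been added to $\sigma_\varrho$ because $q_\slice$ allowed it, i.e.\ $t_i \in \mathds{A}_{q_\slice}$, which is rule \textbf{\textsf{(6)}}.

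With all premises in place, $\text{\Lightning}^{(\lambda(q), B)}_{[t_1, t_2]}$ is possible from $\state{u}$, and since it is uncontrollable ($\text{\Lightning}^{(\lambda(q), B)}_{[t_1, t_2]} \in \Sigma^\env$) the play $u\,\text{\Lightning}^{(\lambda(q), B)}_{[t_1, t_2]}$ is $\varrho$-compatible, giving the desired contradiction. I expect the main obstacle to be the third step: one must match the local states reached along the related play $u$ against the disjunctive side conditions of rules \textbf{\textsf{(5)}}/\textbf{\textsf{(6)}} precisely, which rests on the coincidence --- guaranteed by \refLemma{rec} --- of the commitment set read off during the construction of $\sigma_\varrho$ with the one actually chosen along $u$, and on the fact that the $\mathds{A}$-sets depend only on the causal past of a place rather than on the marking.
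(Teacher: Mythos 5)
Your proposal is correct and follows the paper's own proof in all essentials: contradiction from non-determinism, bisimulation to a $\tau$-maximal $\varrho$-compatible play $u$ with $\statep{p}{u} = (\lambda(q), B)$ and $t_1, t_2 \in B$, then firing the uncontrollable $\text{\Lightning}^{(\lambda(q), B)}_{[t_1, t_2]}$-action (possible since all \Lightning-actions are in $\Sigma^\env$) to contradict the hypothesis. The only difference is internal bookkeeping: where you re-derive $t_1, t_2 \in B$ and the readiness of the other involved processes from the strategy construction via \refLemma{rec} and an explicit check of rules \textbf{\textsf{(4)}}--\textbf{\textsf{(6)}}, the paper obtains the same facts more directly by pushing $t_1$ and $t_2$ through the bisimulation --- $u\,t_1, u\,t_2 \in \Plays(\widehat{\cGame_\pGame}, \varrho)$ --- so that the local transition rule for committed states immediately forces $t_1, t_2 \in B$ and the enabledness of the \Lightning-action then follows ``by construction''.
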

\begin{proof}
	We assume for contradiction that $\sigma_\varrho$ is not deterministic, i.e., there exists a reachable marking $M$ in $\pNet^{\sigma_\varrho}$ and a system place $q \in M$ from which two transitions $t_1, t_2 \in \post{\pNet^{\sigma_\varrho}}{q}$ are enabled.\\
	By our previous bisimulation result, there is a $u \in \Plays(\widehat{\cGame_\pGame}, \varrho)$ with $M \relation u$. Choose this $u$ such that there are no more $\tau$-actions possible.
	Because of assumption $\star$, every process on a system place has chosen a commitment set. 
	By bisimulation, we know that $u \, t_1$ and $u \, t_2$ are both in $\Plays(\widehat{\cGame_\pGame}, \varrho)$. \\
	Let $p$ be the process that $q$ belongs to, i.e., $q \in \inv{\lambda}[\places^{\pts{p}}]$. Since $q$ is in the precondition of $t_1$ and $t_2$ we have $t_1, t_2 \in \transitions^{\pts{p}}$ and, so, $p \in \dom(t_1)$ and $p \in \dom(t_2)$. 
	Since $q$ is a system place we can conclude that $\zeta(\statep{p}{u}) \in \places_\psys$ (by \refLemma{PtoC_sameLabel}) and, since in $u$ every process, that can, has chosen a commitment set, $\statep{p}{u} = (\lambda(q), B)$.
	Since $t_1$ and $t_2$ are both enabled we derive $t_1, t_2 \in B$.\\
	Now $t_1, t_2$ are both enabled from the same commitment set. 
	By construction of the \Lightning-actions, it is easy to see that $u \, \text{\Lightning}^{(\lambda(q), B)}_{[t_1, t_2]}$ is a play in $\Plays(\widehat{\cGame_\pGame})$ and, since all \Lightning-actions are uncontrollable, in $\Plays(\widehat{\cGame_\pGame}, \varrho)$.
	A contradiction.
\end{proof}

\refLemma{PtC_winning} together with \refLemma{PtoC_det2} gives us the second half of our correctness proof:
\begin{proposition}
	If $\varrho$ is a winning controller for $\cGame_\pGame$, then $\sigma_\varrho$ is a winning strategy for $\pGame$ and bisimilar to $\varrho$.
	If $\varrho$ is a winning controller for $\widehat{\cGame_\pGame}$, then $\sigma_\varrho$ is a winning, deterministic strategy for $\pGame$ and bisimilar to $\varrho$.
	\label{prop:PtoC_2}
\end{proposition}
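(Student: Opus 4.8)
The plan is to assemble this proposition directly from the bisimilarity corollary together with \refLemma{PtC_winning} and \refLemma{PtoC_det2}, which have already done all the substantive work; the only genuinely new observation needed concerns the treatment of the \Lightning-actions in $\widehat{\cGame_\pGame}$. In each of the two cases I would first dispose of bisimilarity and winningness by citing existing results, and then handle determinism separately in the second case.

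For the first statement, suppose $\varrho$ is a winning controller for $\cGame_\pGame$. Bisimilarity of $\sigma_\varrho$ and $\varrho$ is exactly the content of the preceding corollary (built from \refLemma{PtoAl4}, \refLemma{PtoAl5}, and \refLemma{PtoAl6}), and the fact that $\sigma_\varrho$ is a winning strategy for $\pGame$ is \refLemma{PtC_winning}. Since $\cGame_\pGame$ contains no \Lightning-actions, nothing more is required and both halves of the first claim hold verbatim.

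For the second statement, suppose $\varrho$ is a winning controller for $\widehat{\cGame_\pGame}$. The key step I would carry out first is to show that no play in $\Plays(\widehat{\cGame_\pGame}, \varrho)$ contains a \Lightning-action. Every \Lightning-action is uncontrollable and moves some process into a $\badw_\slice$-state; by construction $\badw_\slice$ has no outgoing transitions and is not a winning state, and no winning state is reachable from it. If some $\varrho$-compatible play used a \Lightning-action, then --- since uncontrollable actions are admitted whenever they are enabled --- $\varrho$ would admit a final play in which a process is stuck in $\badw_\slice$, contradicting that $\varrho$ is reachability-winning. With this established, \Lightning-actions can be neglected throughout, so the bisimilarity corollary and \refLemma{PtC_winning} apply to $\widehat{\cGame_\pGame}$ as well, yielding that $\sigma_\varrho$ is a winning strategy bisimilar to $\varrho$. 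Finally, determinism of $\sigma_\varrho$ follows from \refLemma{PtoC_det2}, whose hypothesis --- that no $\varrho$-compatible play contains a \Lightning-action --- is precisely what we just verified.

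The main obstacle, such as it is, is confined to the \Lightning-avoidance argument: one has to be careful that it is the reachability objective that forces it, namely that a dead, non-winning $\badw_\slice$-state cannot occur in any winning final play, and that the uncontrollability of \Lightning-actions means the controller cannot simply refuse them. Everything else is bookkeeping over results already established.
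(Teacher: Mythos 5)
Your proposal is correct and follows essentially the same route as the paper: the paper likewise assembles the proposition from the bisimilarity corollary, \refLemma{PtC_winning}, and \refLemma{PtoC_det2}, with the observation that a winning controller for $\widehat{\cGame_\pGame}$ cannot admit any (uncontrollable) \Lightning-action since it would force a final play with a process stuck in a non-winning $\badw_\slice$-state. If anything, you make explicit the \Lightning-avoidance argument that the paper only asserts in passing, which is a sound and welcome elaboration rather than a deviation.
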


 Combining \refProp{PtoC_1} and \refProp{PtoC_2}, we conclude \refTheo{theor1}.

\section{Singular Net Distributions}
\label{sec:appSND}

In this section, we introduce introduce a new mechanism to distribute a game.
Thereby, we generalize our translation to all \emph{concurrency-preserving} games and obtain a proof of \refTheo{theor2}.
We can observe that the notion of slices is too strict for our purposes: 
Our translation requires to distribute the global movement of the Petri game into local behavior.
A \emph{partitioning} of the places (as prescribed by slice distributions) is not necessarily needed. 
Requiring such a partitioning is what enables proofs as the one above and hence limit the applicability of slices. 

\subparagraph{Singular Net Distribution}
We introduce the new concepts of \emph{singular nets} (SN) and \emph{singular net distributions} (SND). 
We later see how our translation can be modified to work with SNs instead of slices.

Before giving a formal description, we consider the example in \refFig{PtoC_snDist}.
The Petri net in (a) comprises three tokens of which two reside on the same place. 
As the net is not safe it is not sliceable. 
In (b) and (c), two possible singular net distributions of (a) are given. The black label (annotated with a hat) is the name of the node, whereas the gray label is the one given by $\pi$. 
The singular nets share transitions. If we, e.g., consider the SND in (b), the labelling of the initial marking agrees with the initial marking of (a) and both transitions $a$ and $b$ can be matched by some copy ($\hat{a_1}, \hat{a_2}$, $\hat{b_1}$).
By observing the SNDs in both (b) and (c), it becomes clear that both are valid distributions of the behavior in (a). 

\begin{figure}[t]
	\begin{subfigure}[c]{0.4\textwidth}
		\begin{center}
			\begin{tikzpicture}[scale=1.0, every label/.append style={font=\scriptsize}, label distance=-1mm]
			\path 	(0,0) node[envplace,label=west:$A$](p1) {}
			(1.5,0) node[envplace,label=east:$B$](p2) {}
			(0,-1.5) node[envplace,label=west:$C$](p3) {}
			(1.5,-1.5) node[envplace,label=east:$D$](p4) {}
			
			(0.75,-0.75) node[transition,label=west:$a$](t1) {}
			(1.5,-0.75) node[transition,label=east:$b$](t2) {};

			\path	(-0.1,0) node[token]() {};
			\path	(0.1,0) node[token]() {};
			\path	(1.5,0) node[token]() {};

			\draw[arrow] (p1) -- (t1);
			\draw[arrow] (p2) -- (t1);
			\draw[arrow] (t1) -- (p3);
			\draw[arrow] (t1) -- (p4);
			\draw[arrow] (p4) -- (t2);
			\draw[arrow] (t2) -- (p2);
			
			\end{tikzpicture}
		\end{center}
		\subcaption{}
	\end{subfigure}
	\begin{subfigure}[c]{0.6\textwidth}
		\begin{center}
			\begin{tikzpicture}[scale=1.0, every label/.append style={font=\scriptsize}, label distance=-1mm]
			\path 	(0,0) node[envplace,label=west:\textcolor{myLGray}{$A$},label=east:$\hat{A}$](p11) {}
			(0,-1.5) node[envplace,label=west:\textcolor{myLGray}{$C$},label=east:$\hat{B}$](p12) {}
			(0,-0.75) node[transition,label=west:\textcolor{myLGray}{$a$},label=east:$\hat{a_1}$](t11) {}
			
			(2,0) node[envplace,label=west:\textcolor{myLGray}{$A$},label=east:$\hat{C}$](p21) {}
			(2,-1.5) node[envplace,label=west:\textcolor{myLGray}{$C$},label=east:$\hat{D}$](p22) {}
			(2,-0.75) node[transition,label=west:\textcolor{myLGray}{$a$},label=east:$\hat{a_2}$](t21) {}
			
			(4,0) node[envplace,label=west:\textcolor{myLGray}{$B$},label=east:$\hat{E}$](p31) {}
			(4,-1.5) node[envplace,label=west:\textcolor{myLGray}{$D$},label=east:$\hat{F}$](p32) {}
			(3.5,-0.75) node[transition,label=north:\textcolor{myLGray}{$a$},label=south:$\hat{a_1}$](t31) {}
			(4.5,-0.75) node[transition,label=west:\textcolor{myLGray}{$a$},label=east:$\hat{a_2}$](t32) {}
			(5.5,-0.75) node[transition,label=north:\textcolor{myLGray}{$b$},label=south:$\hat{b_1}$](t33) {};

			\path	(0,0) node[token]() {};
			\path	(2,0) node[token]() {};
			\path	(4,0) node[token]() {};

			\draw[arrow] (p11) -- (t11);
			\draw[arrow] (t11) -- (p12);
			
			\draw[arrow] (p21) -- (t21);
			\draw[arrow] (t21) -- (p22);
			
			\draw[arrow] (p31) -- (t31);
			\draw[arrow] (p31) -- (t32);
			\draw[arrow] (t31) -- (p32);
			\draw[arrow] (t32) -- (p32);
			\draw[arrow] (p32) -- (t33);
			\draw[arrow] (t33) -- (p31);
			\end{tikzpicture}
		\end{center}
		\subcaption{}
	\end{subfigure}\\[0.3cm]
	\begin{subfigure}[c]{0.6\textwidth}
		\begin{center}
			\begin{tikzpicture}[scale=1.0, every label/.append style={font=\scriptsize}, label distance=-1mm]
			\path 	(0,0) node[envplace,label=west:\textcolor{myLGray}{$A$},label=east:$\hat{A}$](p11) {}
			(0,-1.5) node[envplace,label=west:\textcolor{myLGray}{$D$},label=east:$\hat{a}$](p12) {}
			(0,-3) node[envplace,label=west:\textcolor{myLGray}{$B$},label=east:$\hat{C}$](p13) {}
			(0,-0.75) node[transition,label=west:\textcolor{myLGray}{$a$},label=east:$\hat{a_1}$](t11) {}
			(0,-2.25) node[transition,label=west:\textcolor{myLGray}{$b$},label=east:$\hat{b_2}$](t12) {}
			(1,-2.25) node[transition,label=north:\textcolor{myLGray}{$a$},label=east:$\hat{a_3}$](t13) {}
			
			(2.5,0) node[envplace,label=west:\textcolor{myLGray}{$A$},label=east:$\hat{D}$](p21) {}
			(2.5,-1.5) node[envplace,label=west:\textcolor{myLGray}{$C$},label=east:$\hat{E}$](p22) {}
			(2,-0.75) node[transition,label=north:\textcolor{myLGray}{$a$},label=west:$\hat{a_2}$](t21) {}
			(3,-0.75) node[transition,label=west:\textcolor{myLGray}{$a$},label=east:$\hat{a_3}$](t22) {}
			
			(5,0) node[envplace,label=west:\textcolor{myLGray}{$B$},label=east:$\hat{F}$](p31) {}
			(4.5,-1.5) node[envplace,label=west:\textcolor{myLGray}{$C$},label=south:$\hat{G}$](p32) {}
			(5.5,-1.5) node[envplace,label=west:\textcolor{myLGray}{$D$},label=south:$\hat{H}$](p33) {}
			(4.5,-0.75) node[transition,label=north:\textcolor{myLGray}{$a$},label=west:$\hat{a_1}$](t31) {}
			(5.5,-0.75) node[transition,label=west:\textcolor{myLGray}{$a$},label=east:$\hat{a_2}$](t32) {}
			(6.5,-0.75) node[transition,label=north:\textcolor{myLGray}{$b$},label=south:$\hat{b_1}$](t33) {};

			\path	(0,0) node[token]() {};
			\path	(2.5,0) node[token]() {};
			\path	(5,0) node[token]() {};

			\draw[arrow] (p11) -- (t11);
			\draw[arrow] (t11) -- (p12);
			\draw[arrow] (p12) -- (t12);
			\draw[arrow] (t12) -- (p13);
			\draw[arrow] (p13) -- (t13);
			\draw[arrow] (t13) -- (p12);
			
			\draw[arrow] (p21) -- (t21);
			\draw[arrow] (t21) -- (p22);
			\draw[arrow] (p21) -- (t22);
			\draw[arrow] (t22) -- (p22);
			
			\draw[arrow] (p31) -- (t31);
			\draw[arrow] (p31) -- (t32);
			\draw[arrow] (t31) -- (p32);
			\draw[arrow] (t32) -- (p33);
			\draw[arrow] (p33) -- (t33);
			\draw[arrow] (t33) -- (p31);
			
			\end{tikzpicture}
		\end{center}
		\subcaption{}
	\end{subfigure}
	\begin{subfigure}[c]{0.4\textwidth}
		\begin{center}
			\begin{tikzpicture}[scale=1.0, every label/.append style={font=\scriptsize}, label distance=-1mm]
			\path 	(-1.75,0) node[envplace,label=west:\textcolor{myLGray}{$A$},label=east:$\hat{A}$](p11) {}
			(-1.75,-1.5) node[envplace,label=west:\textcolor{myLGray}{$C$},label=east:$\hat{B}$](p12) {}
			
			(1.75,0) node[envplace,label=west:\textcolor{myLGray}{$A$},label=east:$\hat{C}$](p21) {}
			(1.75,-1.5) node[envplace,label=west:\textcolor{myLGray}{$C$},label=east:$\hat{D}$](p22) {}
			
			(0,0) node[envplace,label=west:\textcolor{myLGray}{$B$},label=east:$\hat{E}$](p31) {}
			(0,-1.5) node[envplace,label=west:\textcolor{myLGray}{$D$},label=east:$\hat{F}$](p32) {}
			(-1,-0.75) node[transition,label=north:\textcolor{myLGray}{$a$},label=south:$\hat{a_1}$](t31) {}
			(1,-0.75) node[transition,label=north:\textcolor{myLGray}{$a$},label=south:$\hat{a_2}$](t32) {}
			(0,-0.75) node[transition,label=west:\textcolor{myLGray}{$b$},label=east:$\hat{b_1}$](t33) {};
			
			\node[] at (0,-3) (){};
			
			\path	(0,0) node[token]() {};
			\path	(-1.75,0) node[token]() {};
			\path	(1.75,0) node[token]() {};

			\draw[arrow] (p11) -- (t31);
			\draw[arrow] (t31) -- (p12);
			
			\draw[arrow] (p21) -- (t32);
			\draw[arrow] (t32) -- (p22);
			
			\draw[arrow] (p31) -- (t31);
			\draw[arrow] (p31) -- (t32);
			\draw[arrow] (t31) -- (p32);
			\draw[arrow] (t32) -- (p32);
			\draw[arrow] (p32) -- (t33);
			\draw[arrow] (t33) -- (p31);
			\end{tikzpicture}
		\end{center}
		\subcaption{}
	\end{subfigure}
	
	\caption{A Petri game (a) and two possible distributions in singular nets (b) and (c). 
		In (b) and (c), labels of the SND ($\pi$) are given in gray.
		The label of each node in an SND is annotated with a hat and labelled in black. 
		Note that transitions can be shared between SNs.
		In (d) the composition of the SND in (b) is depicted.
	}
	\label{fig:PtoC_snDist}
\end{figure}

Throughout this section let $\pNet$ be a finite, concurrency-preserving Petri net.
We now proceed and give a formal description of both SN and SNDs.

\begin{definition}
	A \emph{singular net} (SN) of $\pNet$ is a pair $(\slice, \pi)$ where $\slice = (\places^\slice, \transitions^\slice, \flow^\slice, \init^\slice)$ is a Petri net satisfying
	\begin{center}
		$|\init^\slice| = 1$ and $\forall t \in \transitions^\slice: \, |\pre{}{t}| = |\post{}{t}| = 1$
	\end{center}
	and $\pi : \places^\slice \cup \transitions^\slice \to \places^\pNet \cup \transitions^\pNet$ is a mapping with the following properties:
	
	\def\arraystretch{1.6}
	\begin{tabular}{ll}
		\myItem{(1)} \; $\pi(\places^\slice) \subseteq \places^\pNet$ and $\pi(\transitions^\slice) \subseteq \transitions^\pNet$ & \myItem{(2)} \; $\forall q_1, q_2 \in \places^\slice : \, \pi(q_1) \neq \pi(q_2)$ \\ 
		\myItem{(3)} \; $\pi(\init^\slice) \subseteq \init^\pNet$ &  \myItem{(4)} \; $\forall q \in \places^\slice : \, \post{\pNet}{\pi(q)} \subseteq \pi(\transitions^\slice)$\\ 
		\multicolumn{2}{l}{\myItem{(5)} \; $\forall x, y \in \places^\slice \cup \transitions^\slice: \, (x, y) \in \flow^\slice \Leftrightarrow (\pi(x), \pi(y)) \in \flow^\pNet$} 
	\end{tabular} 
	
\end{definition}

\noindent A singular net can be thought of as a generalized slice.
The underlying net describes the movement of a single token.
Instead of viewing it as a subnet of $\pNet$ (as we have done for slices), we label it using $\pi$.
This labelling should satisfy five properties, most of which correspond to properties lifted from the definition of slices:
$\pi$ must respect the node type \myItem{(1)} and copy each place at most once \myItem{(2)}.
Singular nets of finite nets are hence finite.
The initial marking must be labelled within the initial marking of $\pNet$ \myItem{(3)}. 
Similar to the definition of slices, we require that all transitions leaving the label of some place are represented by at least one copy \myItem{(4)}.
Lastly, the flow relation adds a flow between two nodes if and only if there is a flow between the labels of the nodes in $\pNet$ \myItem{(5)}. 

Singular nets are, similar to slices, defined as nets describing behavior of individual tokens. 
To model global behavior in the end, we want to compose multiple singular nets to obtain a description of a system involving more than one player.

\begin{definition}
	If $\pNet$ is a Petri net and $\slices = \{ (\slice_i, \pi_i)\}_{i \in \indx}$ with $\slice_i = (\places^i, \transitions^i, \flow^i, \init^i)$ is a finite family of singular nets for $\pNet$, 
	we call $\slices$ \emph{compatible} if
	$$\places^{\slice_i} \cap \places^{\slice_j} = \emptyset \text{ for all } i, j \in \indx \text{ with } i \neq j$$
	and 
	$$\forall t : \, t \in \transitions^{\slice_i} \cap \transitions^{\slice_j} \Rightarrow \pi_{\slice_i}(t) = \pi_{\slice_j}(t)$$

	If $\slices$ is compatible, we define the \emph{composition} of $\slices$ as the pair $(\parcomp, \pi_{\parcomp})$ where 
	$$\parcomp = (\places^{\parcomp}, \transitions^{\parcomp}, \flow^{\parcomp}, \init^{\parcomp})$$
	with $\places^{\parcomp} = \biguplus_{i \in \indx} \places^i$, \, $\transitions^{\parcomp} = \bigcup_{i \in \indx} \transitions^i$, \, $\flow^{\parcomp} = \biguplus_{i \in \indx} \flow^i$, $\init^{\parcomp} = \biguplus_{i \in \indx} \init^i$, 
	and
	$$\pi_{\parcomp} = \bigcup_{i \in \indx} \pi_i$$
\end{definition}

\noindent  As for slices, we require a family of singular nets to contain disjoint sets of places. As each SN is furthermore labelled with $\pi$, we require that shared transitions are labelled equally among all singular nets.
Then, $\parcomp$ is defined as the Petri net obtained by taking the union of places, transitions, flows and initial markings. Since only transitions can be shared all unions except for them are disjoint. 
As $\pNet$ is compatible we know that for all transitions the label agrees in all SNs. We can hence label the nodes in $\parcomp$ with nodes in $\pNet$, i.e., design $\pi_{\parcomp}$ as the union of all individual labelling functions. 
Note that unlike for slices the composition $\parcomp$ in general differs from $\pNet$. In \refFig{PtoC_snDist} (d), the composition of the SN-family in (b) is depicted. The $\pi$-label of the composition is given in gray. 

The labelling of an SND allows us to split up places and transitions. 
We want to distribute a Petri net into a family of singular nets, that together show the same behavior as the Petri net.
We can hence define what a family of singular nets should suffice to be a valid \emph{distribution} of a net:
\begin{definition}
	A \emph{singular net distribution} (SND) for Petri net $\pNet$ is a compatible family~$\slices$ of singular nets for $\pNet$ where the composition $(\parcomp, \pi_{\parcomp})$ fulfills:
	\begin{enumerate}
		\item[\myItem{(1)}] $\pi(\init^{\parcomp}) = \init^\pNet$

		\item[\myItem{(2)}] For every transition $t \in \transitions^{\parcomp}$, $\pi(\pre{\parcomp}{t}) = \pre{\pNet}{\pi(t)}$ and $\pi(\post{\parcomp}{t})) = \post{\pNet}{\pi(t)}$.
		
		\item[\myItem{(3)}] For every $t_1, t_2 \in \transitions^{\parcomp}$ with $\pre{\parcomp}{t_1} = \pre{\parcomp}{t_2}$ and $\pi(t_1) = \pi(t_2)$, it holds that $t_1 = t_2$.
		
		\item[\myItem{(4)}] For reachable markings $M \in \reach(\parcomp)$ and subsets $C \subseteq M$ with $\pi(C) = \pre{\pNet}{t}$ for some $t \in \transitions^\pNet$, there exists a transition $t' \in \transitions^{\parcomp}$ with $\pi(t') = t$ and $\pre{\parcomp}{t'} = A$. 
	\end{enumerate}
\end{definition}

\noindent
A singular net distribution is a compatible family of singular nets, i.e., a family with disjoint places and equally labelled shared transitions.
The additional restrictions guarantee that the composition of the SNs shows the same behavior as the original net. They are reminiscent of the definition of a branching process and unfolding.
Restriction \myItem{(1)} requires the initial marking of $\parcomp$ to be labelled within the initial marking of $\pNet$, whereas \myItem{(2)} requires the composition to preserve the structure on transitions.
Together, \myItem{(1)} and \myItem{(2)} state that $\pi_{\parcomp}$ is an initial homomorphism from $\parcomp$ to $\pNet$.
As for a branching process, \myItem{(3)} requires $\pi_{\parcomp}$ to be injective on transitions with the same precondition: Equally labelled transitions must occur from distinct situations. 
An SND is almost identical to a branching process with the exception of not requiring an underlying occurrence net and, furthermore, being described in terms of local token movements.
Lastly, requirement \myItem{(4)} is similar to the one found in the definition of an unfolding. It is a maximality criterion requiring that, for every situation where there are tokens on places in $C$, every transition possible from $\pi_{\parcomp}(C)$ is matched by some copy. 
While we can split up places in an SND, \myItem{(4)} requires us to still add transitions from every possible combination of the new copies. 
Both families of singular nets in \refFig{PtoC_snDist}~(b) and (c) form singular net distributions of the net in (a).


Note that our notion of a singular net distribution agrees with slice distributions if we enforce to have only one copy of each place. In this case, we can choose $\pi$ as the identity. 
Every sliceable net has an SND.

\subparagraph{Properties of SNDs}

An SND is defined as a family of singular nets such that their composition is both structure-preserving \myItem{(2)} and at the same time captures all behavior \myItem{(4)}.
It is easy to see that an SND describes the exact behavior of a Petri net. 

\begin{corollary}
	$\reach(\pNet) = \pi(\reach(\parcomp))$. 
\end{corollary}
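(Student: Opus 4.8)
The statement asserts the equality of two \emph{sets} of markings, so the plan is to prove both inclusions, in each case by induction on the length of a firing sequence. The backbone of the argument is the remark made just after the definition: properties \myItem{(1)} and \myItem{(2)} of an SND make $\pi := \pi_\parcomp$ an \emph{initial homomorphism} from $\parcomp$ to $\pNet$. To this I would add one elementary preliminary observation, namely that the multiset image $\pi(\cdot)$ is additive and commutes with subtraction of a sub-multiset: whenever $M_2 \le M_1$ pointwise, $\pi(M_1+M_2)=\pi(M_1)+\pi(M_2)$ and $\pi(M_1-M_2)=\pi(M_1)-\pi(M_2)$ (both follow from $\pi(M)(y)=\sum_{x\,:\,\pi(x)=y}M(x)$ by splitting the sum). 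All firing-rule computations below are then routine multiset arithmetic.

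For the inclusion $\pi(\reach(\parcomp)) \subseteq \reach(\pNet)$ (soundness) I would show by induction on $n$ that every firing sequence $\fireTranTo{\init^\parcomp}{t_0,\dotsc,t_{n-1}}{M}$ in $\parcomp$ induces $\fireTranTo{\init^\pNet}{\pi(t_0),\dotsc,\pi(t_{n-1})}{\pi(M)}$ in $\pNet$. The base case is \myItem{(1)}. For the step, if $t$ is enabled in $M$, i.e.\ $\pre{\parcomp}{t}\subseteq M$, then by \myItem{(2)} we have $\pre{\pNet}{\pi(t)}=\pi(\pre{\parcomp}{t})\subseteq\pi(M)$, so $\pi(t)$ is enabled in $\pi(M)$; firing and using additivity together with \myItem{(2)} on the postcondition gives $\pi(M-\pre{\parcomp}{t}+\post{\parcomp}{t})=\pi(M)-\pre{\pNet}{\pi(t)}+\post{\pNet}{\pi(t)}$, the correct successor marking in $\pNet$. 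This direction uses only \myItem{(1)} and \myItem{(2)}.

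The converse inclusion $\reach(\pNet)\subseteq\pi(\reach(\parcomp))$ (completeness) is the main obstacle, and it is exactly where the maximality condition \myItem{(4)} enters. I would prove by induction that for every $M\in\reach(\pNet)$ there is some $\tilde M\in\reach(\parcomp)$ with $\pi(\tilde M)=M$; the base case is \myItem{(1)}. For the step, assume $\pi(\tilde M)=M$ with $\tilde M\in\reach(\parcomp)$ and $\fireTranTo{M}{a}{M'}$, so $\pre{\pNet}{a}\subseteq\pi(\tilde M)$. The crux is to \emph{lift} the precondition to $\parcomp$: since $\pre{\pNet}{a}(y)\le\pi(\tilde M)(y)=\sum_{x\,:\,\pi(x)=y}\tilde M(x)$ for each place $y$ of $\pNet$, I can select, for every $y$, a sub-multiset of the $\pi$-preimages of $y$ present in $\tilde M$ of total multiplicity $\pre{\pNet}{a}(y)$; their union is a $C\subseteq\tilde M$ with $\pi(C)=\pre{\pNet}{a}$. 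Applying \myItem{(4)} to $\tilde M\in\reach(\parcomp)$ and this $C$ yields a transition $t'\in\transitions^\parcomp$ with $\pi(t')=a$ and $\pre{\parcomp}{t'}=C$. As $C\subseteq\tilde M$, the transition $t'$ is enabled in $\tilde M$; firing it gives $\tilde M'\in\reach(\parcomp)$, and additivity with \myItem{(2)} on the postcondition yields $\pi(\tilde M')=M-\pre{\pNet}{a}+\post{\pNet}{a}=M'$, closing the induction. The delicate points are precisely the multiset preimage selection and the recognition that \myItem{(4)} is the hypothesis converting enabledness in $\pNet$ into enabledness in $\parcomp$; the injectivity requirement \myItem{(3)} plays no role for reachability and is needed only for the branching-process structure of $\parcomp$.
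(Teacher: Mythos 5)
Your proof is correct and follows essentially the same route as the paper's own argument: both inclusions by induction on firing sequences, with the direction $\pi(\reach(\parcomp)) \subseteq \reach(\pNet)$ resting on the structure-preservation of $\pi$ as an initial homomorphism (\myItem{(1)}, \myItem{(2)}) and the converse resting on the maximality condition \myItem{(4)} to lift an enabled transition of $\pNet$ to a copy enabled in $\parcomp$. Your explicit multiset-preimage selection and the observation that injectivity \myItem{(3)} is not needed are just careful spellings-out of details the paper leaves implicit.
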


\noindent 
Our main motivation for defining SNs and SNDs is to generalize our previous translation to allow for a broader class of games.
We already remarked that every sliceable net has an~SND.
In \refFig{PtoC_snDist}, we saw that even some non-sliceable nets have an SND.
Nets with SND are thus a \textit{strict superset} of sliceable nets. 
The next theorem shows that the class of Petri nets, that can be distributed into singular nets, can be characterized precisely: It is exactly the class of concurrency-preserving Petri nets. 
\begin{proposition}
	Every finite, concurrency-preserving Petri net has an SND.
	\label{prop:allSND}
\end{proposition}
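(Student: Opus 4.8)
The plan is to build the $n := |\init^\pNet|$ singular nets by tracing individual tokens through the behaviour of $\pNet$. Since $\pNet$ is concurrency-preserving, every reachable marking carries exactly $n$ tokens; hence $\reach(\pNet)$ is finite and it is natural to aim for exactly $n$ singular nets, one per token (indeed each singular net has a single initial token, so the SND axiom $\pi(\init^{\parcomp}) = \init^\pNet$ forces precisely $n$ of them). The core device is a \emph{token-flow matching}: for every transition $t$ fix a bijection $m_t : \pre{\pNet}{t} \to \post{\pNet}{t}$, which exists precisely because $|\pre{\pNet}{t}| = |\post{\pNet}{t}|$, read as ``the token consumed from place $p$ reappears on $m_t(p)$''. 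This turns one firing into $|\pre{\pNet}{t}|$ independent one-token moves $p \to m_t(p)$, exactly the shape a singular net demands of its transitions.

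First I would carry such matchings over to the unfolding $\pNet^\mathfrak{U}$ with homomorphism $\lambda$, which is again concurrency-preserving since each occurrence-transition copies the pre- and postcondition of its $\lambda$-image. Choosing a matching $m_{t'}$ for every occurrence-transition $t'$ induces a token-flow graph on $\places^{\pNet^\mathfrak{U}}$ with an edge $p \to m_{t'}(p)$. Because $\pNet^\mathfrak{U}$ is an occurrence net, every non-initial place has a unique incoming transition and therefore a unique flow-predecessor, so the flow graph is a forest whose $n$ roots are the initial places. I would then prove the key \emph{thinness} property: each tree $T_1,\dots,T_n$ meets the precondition of every occurrence-transition in at most one place. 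Two distinct pre-places of one transition are concurrent; two places lying in a common tree share a flow-ancestor from which the forward paths to them must diverge through two different post-transitions, which would put the two places in conflict (as in the definition of $\conflict{x}{y}$) — contradicting concurrency.

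Each tree is, however, infinite and repeats $\lambda$-labels whenever $\pNet$ has cycles, whereas a singular net must have pairwise distinct place-labels. I would therefore fold each tree $T_c$ by its $\lambda$-label into a finite net $\slice_c$ on the place set $\lambda(T_c)\subseteq\places^\pNet$ with $\pi_c := \lambda$. The main obstacle is that folding can destroy thinness: if the chosen matchings let a single token visit two \emph{different} pre-places $r\neq r'$ of one transition $t$ at different times, then $\{r,r'\}\subseteq\lambda(T_c)\cap\pre{\pNet}{t}$ and the flow requirement~(5) of a singular net fails. A small two-token net with $t:\{r,r'\}\to\{x,y\}$ and feedback to $r,r'$ already shows that one matching swaps the two token roles (breaking thinness) while another keeps them fixed (yielding a valid distribution), so the matchings must be chosen \emph{coherently}, making every token play a fixed role in each transition it ever joins. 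Equivalently, the folded flow graph on $\places^\pNet$ must decompose into sequential components, each meeting every pre- and postcondition in at most one place. I would prove such a coherent choice exists from the finiteness of $\reach(\pNet)$ together with concurrency-preservation — which guarantees the $n$ tokens never need to merge roles — for instance by a fixed-point/saturation argument over $\reach(\pNet)$ that commits each token to a component and matches within it. This is the technical heart of the proof.

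Finally I would assign the $n$ initial tokens to the components according to $\init^\pNet$ (re-using a component, i.e.\ introducing a copy, whenever several initial tokens sit on places of it) and verify the four SND axioms for the composition $(\parcomp,\pi)$. Axiom~(1) holds by construction of the initial placement. Axiom~(2), i.e.\ that $\pi$ is an initial homomorphism, holds because a composed copy of $t$ gathers exactly one pre- and one post-place from each participating token and so recovers $\pre{\pNet}{t}$ and $\post{\pNet}{t}$. Axiom~(3) (injectivity on equal preconditions) holds since a copy of $t$ is uniquely determined by the assignment of its pre-places to tokens. Axiom~(4) (maximality) is inherited from the maximality of the unfolding: any concurrent $C$ with $\pi(C)=\pre{\pNet}{t}$ has its elements on distinct tokens (a singular net contributes at most one token to a marking), hence induces exactly one such assignment and the corresponding copy of $t$. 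Together these establish an SND of $\pNet$, proving \refProp{allSND}.
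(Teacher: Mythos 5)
Your unfolding-based route starts well: lifting token-flow matchings $m_t:\pre{\pNet}{t}\to\post{\pNet}{t}$ to the unfolding and proving thinness of the resulting flow-forest there is correct (two places in one tree are causally related or diverge at a shared flow-ancestor through two distinct post-transitions, hence are in conflict, so they are never jointly in an occurrence-transition's precondition). But the step you yourself flag as the technical heart --- that matchings can always be chosen \emph{coherently}, so that each folded component meets every pre- and postcondition of $\pNet$ in at most one place --- is false. Take $\places = \{A,B\}$, $\init = \{A,B\}$, and transitions $t$ with $\pre{\pNet}{t} = \{A\}$, $\post{\pNet}{t} = \{B\}$, and $s$ with $\pre{\pNet}{s} = \post{\pNet}{s} = \{A,B\}$. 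This net is finite and concurrency-preserving, yet the token starting on $A$ can fire $t$, a single-token move that admits \emph{no} matching freedom, so its component necessarily contains both $A$ and $B$, i.e., both pre-places of $s$. No saturation or fixed-point argument over $\reach(\pNet)$ can repair this; the lemma you defer simply does not hold, and with it the folding step collapses.

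The deeper issue is that the coherence requirement is an artifact of reading condition \textbf{\textsf{(5)}} of singular nets as a biconditional. Under that literal reading, the net above has \emph{no} SND at all (condition \textbf{\textsf{(4)}} forces an $s$-copy into the SN containing both an $A$- and a $B$-labelled place, and the biconditional would give that copy two pre-places, contradicting $\abs{\pre{\slice}{t'}} = 1$), so \refProp{allSND} itself would be false. The intended reading is structure preservation per transition copy (cf.\ SND axiom \textbf{\textsf{(2)}}): an SN may contain several differently-labelled pre-places of the same transition, each served by its own copy --- the paper's own example in \refFig{PtoC_snDist}~(c), where one SN visits $A$, $D$, $B$ with two separate $a$-copies, already exhibits this. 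Accordingly, the paper's proof avoids unfoldings and matchings entirely: it creates $\abs{\init^\pNet}$ singular nets each containing a copy of \emph{every} place of $\pNet$ (so place-injectivity of $\pi$ is trivial and no thinness is needed), places one token per SN so that the initial labels cover $\init^\pNet$, and then saturates over the reachable markings of the composition, adding for every concurrent set $C$ with $\pi(C) = \pre{\pNet}{t}$ a fresh copy of $t$ whose postcondition is any set $C'$ over the same participating SNs with $\pi(C') = \post{\pNet}{t}$ (this is where concurrency-preservation enters); termination follows from finiteness. If you relax your folding step to permit multiple pre-places of one transition per component, adding one transition copy per combination, your construction essentially degenerates into this saturation argument and the coherence problem disappears.
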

\begin{proof}
	We present a constructive proof. Given a Petri net $\pNet = (\places^\pNet, \transitions^\pNet, \flow^\pNet, \init^\pNet)$, we build $\abs{\init^\pNet}$ many singular nets. 
	Each of these nets initially consists of a copy of the places in~$\pNet$, i.e., $\abs{\init^\pNet}$-many copies of $\pNet$ without any transitions. 
	So, $\slices = \{(\slice_1, \pi_1), \cdots, (\slice_{\abs{\init^\pNet}}, \pi_{\abs{\init^\pNet}})\}$ where $\slice_i = (\places^i, \transitions^i, \flow^i, \init^i)$ with $\places^i = \{q_i \mid q \in \places^\pNet\}$. 
	Define $\pi_i(q_i) = q$.
	From each $\slice_i$, we select a single place and add it to a set $D$ s.t.\ $\pi(D) = \init^\pNet$ (this is always possible). We put one token on each of these selected places, resulting in one token in the initial marking of each SN.\\
	Now, define $(\parcomp, \pi_{\parcomp})$ as the composition of the singular nets.
	We incrementally add transitions to the SNs:
	We iterate over every reachable marking $M$ in $\parcomp$ and consider every set $C \subseteq M$ where $\pi_{\parcomp}(C) = \pre{\pNet}{t}$ for a transition $t$ and there is no $t'$ with $\pi_{\parcomp}(t') = t$ and $\pre{\parcomp}{t'} = C$.
	The set of SNs involved in $C$ is $\nabla_C = \{i \mid \places^i \cap C \neq \emptyset\}$. 
	We create a new transition $t'$, define $\pi_{\parcomp}(t') = t$, and add it to all SNs with places contained in $C$.
	$$\transitions^i = \begin{cases}
	\begin{aligned}[t]
	&\transitions^i \quad &\text{ if } i \not\in \nabla_C\\
	&\transitions^i \cup \{t'\} \quad &\text{ if } i \in \nabla_C
	\end{aligned}
	\end{cases}$$
	We extend the flow of every SN in $\nabla_C$ s.t.\ $\pre{\slice_i}{t'} = C \cap \places^i$.
	We pick a set of places $C'$ s.t.\ $\nabla_{C'} = \nabla_{C}$ and $\pi_{\parcomp}(C') = \post{\pNet}{t}$.
	We hence assign for each involved SN a place such that the label of $C'$ agrees with $\post{\pNet}{t}$. We note that there might be many such combinations but there is at least one. We extend the flow of every SN in $\nabla_C$ ($\nabla_{C'}$) such that $\post{\slice_i}{t'} = C' \cap \places^i$. 
	Afterwards, we recompute the composition $(\parcomp, \pi_{\parcomp})$ with the newly added transitions and repeat until no more transitions can be added. \\
	We iterate this and thereby add more and more transitions. 
	Since we deal with a finite number of places and transitions the construction terminates. 
	Since we add exactly the transitions required in an SND it can easily be checked that each net is a singular net and the resulting family is a singular net distribution. 
\end{proof}

\paragraph*{Branching processes of SNDs}

In the long run, we want to extend SNDs to Petri games and use them for our translation.  
Since strategies are defined in terms of branching processes we begin by comparing branching processes for an SND with ones for the original net. 
Assume $\pNet$ is a Petri net,  $\slices$ is an SND for~$\pNet$, and $(\parcomp, \pi)$ the composition of $\slices$. 
We analyze and compare possible branching processes for both $\pNet$ and $\parcomp$.

Let $\iota = (\pNet^\iota, \lambda)$ be a branching process for $\pNet$ and $\iota_{\slices} = (\pNet_{\slices}^\iota, \lambda_{\slices})$ a branching process for~$\parcomp$. 
$\lambda$ labels the nodes from $\pNet^\iota$ with nodes in $\pNet$ whereas $\lambda_{\slices}$ labels the nodes of $\pNet_{\slices}^\iota$ in $\parcomp$. All nodes in $\parcomp$ are themself, by $\pi$, labelled in $\pNet$.
A branching process of $\parcomp$ hence has a finer label; instead of being labelled in nodes from $\pNet$ directly, it is labelled in an intermediate entity, namely $\parcomp$, that is itself labelled in $\pNet$.
We define 
$$\iota \leftrightsquigarrow \iota_{\slices} \; \Leftrightarrow \; \pNet^\iota = \pNet_{\slices}^\iota \; \land \; \lambda = \pi \circ \lambda_{\slices}$$
$\leftrightsquigarrow$ relates a branching processes for $\parcomp$ and $\pNet$ iff the underlying occurrence net is identical and the labelling of $\iota_{\slices}$ is finer than that of $\iota$, i.e., agrees when made coarser by applying~$\pi$. 
It is intuitive that $\leftrightsquigarrow$-related branching processes describe equivalent restrictions of the Petri net.



\begin{figure}[t]
	\begin{center}
		\begin{tikzpicture}[scale=1.0, every label/.append style={font=\scriptsize}, label distance=-0.5mm]
		\node[envplace,label=south:$E$] at (0,-0.75) (p1) {};
		\node[envplace,label=north:$X$] at (-1.75,-0.75) (p2) {};
		\node[envplace,label=north:$Y$] at (1.75,-0.75) (p3) {};
		
		\node[envplace,label=north:$A$] at (0,0.25) (p4) {};
		
		\node[envplace,label=north:$B$] at (1.5,-2) (p5) {};
		
		\node[envplace,label=west:$\mathit{meet}$] at (0,-2) (p6) {};
		
		\node[sysplace,label=south:$D$] at (0,-4) (p7) {};
		\node[sysplace,label=east:$P$] at (0.75,-3) (p8) {};
		
		\node[sysplace,label={[]north:$D_x$}] at (-1.5,-4) (p9) {};
		\node[sysplace,label={[]north:$D_y$}] at (1.5,-4) (p10) {};
		
		
		\node[transition,label={[yshift=-1mm]south:$m_x$}] at (-1, -0.75) (t1) {};
		\node[transition,label={[yshift=-1mm]south:$m_y$}] at (1, -0.75) (t2) {};
		
		\node[transition] at (0.75, -2) (t6) {};
		
		\node[transition,label=west:$i$] at (0, -3) (t7) {};
		
		\node[transition,label=south:$c_x$] at (-0.75, -4) (t8) {};
		\node[transition,label=south:$c_y$] at (0.75, -4) (t9) {};
		
		\node[token] at (0,0.25) (){};
		\node[token] at (0,-0.75) (){};
		\node[token] at (1.5,-2) (){};
		\node[token] at (0.75,-3) (){};

		\draw[arrow] (p1) -- (t1);
		\draw[arrow] (p1) -- (t2);
		\draw[arrow] (t1) -- (p2);
		\draw[arrow] (t2) -- (p3);
		
		\draw[arrow] (p4) to[out=180,in=90] (t1);
		\draw[arrow] (p4) to[out=0,in=90] (t2);
		\draw[arrow] (t1) -- (p6);
		\draw[arrow] (t2) -- (p6);
		
		\draw[arrow] (p5) -- (t6);
		\draw[arrow] (t6) -- (p6);
		
		\draw[darrow] (p6) -- (t7);
		
		\draw[arrow] (p8) -- (t7);
		\draw[arrow] (t7) -- (p7);
		
		\draw[arrow] (p7) -- (t8);
		\draw[arrow] (p7) -- (t9);
		\draw[arrow] (t8) -- (p9);
		\draw[arrow] (t9) -- (p10);

		\draw[darrow, red, dashed] (p9) to[out=180, in=180] (p2);
		\draw[darrow, red, dashed] (p10) to[out=0, in=0] (p3);

		\end{tikzpicture}
	\end{center}
	
	\caption{Petri Game without a winning strategy. The player starting in $P$ should copy the decision made by the player in $E$, indicated with the red-arrows. 
	}
	\label{fig:PtoC_commitment}
\end{figure}

We can show the following, as an SND preserves both the structure and every possible transition is added:
\begin{corollary} For every branching process $\iota_{\slices}$ of $\parcomp$, there exists a branching process $\iota$ for $\pNet$ with $\iota \leftrightsquigarrow \iota_{\slices}$\\
	For every branching process $\iota$ of $\pNet$,  there exists a branching process $\iota_{\slices}$ for $\parcomp$ with $\iota \leftrightsquigarrow \iota_{\slices}$
\end{corollary}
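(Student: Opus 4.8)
The plan is to observe that in both directions the underlying occurrence net is left untouched — the relation $\leftrightsquigarrow$ fixes $\pNet^\iota = \pNet_{\slices}^\iota$ — so the entire content of the claim is the translation of the labelling function through $\pi$, either by composing with $\pi$ (first direction) or by lifting along $\pi$ (second direction). Throughout I would use that $\pi$ is itself an initial homomorphism from $\parcomp$ to $\pNet$, which is exactly what SND-properties \myItem{(1)} and \myItem{(2)} assert.

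For the first direction I would set $\pNet^\iota := \pNet_{\slices}^\iota$ and $\lambda := \pi \circ \lambda_{\slices}$, so that $\iota \leftrightsquigarrow \iota_{\slices}$ holds by construction. That $\lambda$ is an initial homomorphism is routine: it respects node types, and using that both $\lambda_{\slices}$ and $\pi$ are structure-preserving, $\lambda[\pre{\pNet^\iota}{t}] = \pi[\pre{\parcomp}{\lambda_{\slices}(t)}] = \pre{\pNet}{\lambda(t)}$ for every transition $t$, and likewise for postconditions and for the initial marking (via $\pi[\init^{\parcomp}] = \init^\pNet$). The one point needing the SND-axioms is injectivity of $\lambda$ on transitions with equal precondition: if $\pre{\pNet^\iota}{t} = \pre{\pNet^\iota}{t'}$ and $\pi(\lambda_{\slices}(t)) = \pi(\lambda_{\slices}(t'))$, then applying $\lambda_{\slices}$ to the equal preconditions gives $\pre{\parcomp}{\lambda_{\slices}(t)} = \pre{\parcomp}{\lambda_{\slices}(t')}$, so SND-property \myItem{(3)} yields $\lambda_{\slices}(t) = \lambda_{\slices}(t')$, and the injectivity of $\lambda_{\slices}$ (as $\iota_{\slices}$ is a branching process) gives $t = t'$.

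The second direction is the substantial one. Again I keep $\pNet_{\slices}^\iota := \pNet^\iota$ and must build a labelling $\lambda_{\slices}$ into $\parcomp$ with $\pi \circ \lambda_{\slices} = \lambda$. I would define $\lambda_{\slices}$ by induction along the well-founded causal order $\leq$ of the occurrence net, maintaining the invariant that the partially defined $\lambda_{\slices}$ is structure-preserving and hence sends every cut of the so-far-labelled prefix to a reachable marking of $\parcomp$. In the base case, $\lambda[\init^{\pNet^\iota}] = \init^\pNet = \pi[\init^{\parcomp}]$ with matching multiplicities, so I fix a bijection $\init^{\pNet^\iota} \to \init^{\parcomp}$ compatible with $\lambda$ and $\pi$ and take it as $\lambda_{\slices}$ on the initial places. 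For a transition $t$ whose precondition is already labelled, set $C := \lambda_{\slices}[\pre{\pNet^\iota}{t}]$; by the invariant $C$ lies in a reachable marking of $\parcomp$, and $\pi(C) = \lambda[\pre{\pNet^\iota}{t}] = \pre{\pNet}{\lambda(t)}$. SND-property \myItem{(4)} then supplies a transition $t'' \in \transitions^{\parcomp}$ with $\pi(t'') = \lambda(t)$ and $\pre{\parcomp}{t''} = C$, and I set $\lambda_{\slices}(t) := t''$. Finally, since $\pi[\post{\parcomp}{t''}] = \post{\pNet}{\lambda(t)} = \lambda[\post{\pNet^\iota}{t}]$ and $\pNet$ is concurrency-preserving, I fix a $\pi$/$\lambda$-compatible bijection $\post{\pNet^\iota}{t} \to \post{\parcomp}{t''}$ to define $\lambda_{\slices}$ on the new output places; well-foundedness makes the induction sound, and each output place has a unique producing transition, so $\lambda_{\slices}$ is well defined everywhere.

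It then remains to check that this $\lambda_{\slices}$ is a genuine branching-process labelling. Structure-preservation and agreement on the initial marking hold by construction, while injectivity on transitions with equal precondition comes for free: if $\pre{\pNet^\iota}{t} = \pre{\pNet^\iota}{t'}$ and $\lambda_{\slices}(t) = \lambda_{\slices}(t')$, then $\lambda(t) = \pi(\lambda_{\slices}(t)) = \pi(\lambda_{\slices}(t')) = \lambda(t')$, whence $t = t'$ because $\iota$ is a branching process. The main obstacle is precisely the inductive step of this second direction — verifying that the set of pairwise concurrent places $C$ really does lie inside a reachable marking of $\parcomp$ so that \myItem{(4)} applies, and that the bijective choices for the initial and output places can be made coherently across stages; the invariant on cuts together with concurrency-preservation is exactly what keeps these choices available.
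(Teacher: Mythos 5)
Your proposal is correct and follows exactly the route the paper intends: the paper states this corollary without a written proof, justifying it only by the remark that an SND ``preserves both the structure and every possible transition is added'', which is precisely your argument — the first direction uses that $\pi$ is an initial homomorphism (SND-properties \myItem{(1)}, \myItem{(2)}) together with \myItem{(3)} for injectivity on transitions with equal precondition, and the second direction uses the maximality property \myItem{(4)} applied in a well-founded induction along the causal order. The one step you leave slightly informal — that $C = \lambda_{\slices}[\pre{\pNet^\iota}{t}]$ lies in a reachable marking of $\parcomp$ — indeed closes as you indicate: the transitions strictly below $t$ form a finite, causally closed, conflict-free configuration whose firing reaches a marking of the occurrence net containing $\pre{\pNet^\iota}{t}$, and your structure-preservation invariant lets the same firing sequence be replayed in $\parcomp$, so nothing essential is missing.
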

\noindent For every branching process of either $\pNet$ or $\parcomp$, there hence exists an equivalent one for $\parcomp$  or $\pNet$, i.e., one with a finer or coarser labelling.

\paragraph*{Translating Games using SND}

We can now adopt the previous concepts to Petri games, i.e., mark the places in an SN as either system and environment, and require that $\pi$ respects this distribution. 
Let $\pGame$ be a concurrency-preserving Petri game, $\slices$ an SND for $\pGame$, and $(\parcomp, \pi)$ the composition of $\slices$. 
While for every \emph{branching process} for $\pGame$ there exists an equivalent (defined by $\leftrightsquigarrow$) branching process for $\parcomp$ and vice versa,  this does not hold for \emph{strategies}. 
It still holds that for every strategy of $\pGame$ there exists an equivalent one for $\parcomp$, but the reverse does not hold in general: 
A strategy for $\parcomp$ can distinguish between copies of transition even though they have the same $\pi$-label (i.e., belong to the same transition in $\pGame$). 
Since an SND splits up transitions a strategy for the composition can be more restrictive without violating justified refusal. 
There even exist games where the composition of an SND has a winning strategy even though the original game has not.

As an example, we consider the Petri game in \refFig{PtoC_commitment}. 
It comprises four players: An environment player that generates inputs starting in $E$, two dummy players starting in $A$ and $B$ as well as a system player starting in $P$. 
The player starting in $E$ can use transition $m_x$ or $m_y$ and thereby move to $X$ or $Y$ and synchronize with the dummy player in $A$.
Upon synchronization, $A$ hence moves to the place $\mathit{meet}$ whereas the dummy player in $B$ can move there directly. 
The system player that is initially in place $P$ can synchronize with a token on $\mathit{meet}$ on $i$ and afterwards use $c_x$ or $c_y$ to move to $D_x$ or $D_y$.
To win the game, the system player should copy the decision of the environment, i.e., move to $D_x$ iff the $E$ moves to $X$. This wining criterion can be expressed in either reachability and safety games. 
This game has \emph{no} winning strategy:
Both dummy players do not possess the same information since only the one starting in $A$ knows the decision that needs to be copied.
To copy the player from $E$ reliably, the system player in $P$ needs to share transition $i$ with the player starting in $A$ since this is the only source of the much needed information. Communication with the player from $B$ does not provide any relevant information. 
Justified refusal, however, prohibits strategies that can guarantee communication with the token starting in $A$ and not with the one from $B$.

A possible singular net distribution of the Petri game in \refFig{PtoC_commitment} is depicted in \refFig{ex_snDist}~(a). The gray label is the one given by $\pi$. 
The name of the node is depicted in black where each name is equipped with a hat to aid readability. 
The place $\mathit{meet}$ is split up into two places $\hat{\mathit{meet}_1}$ and $\hat{\mathit{meet}_2}$. The transition $i$ is split up into $\hat{i_1}$ and $\hat{i_2}$. 
\refFig{ex_snDist}~(b) delineates the composition of the SND in (a). The $\pi$-label is omitted to aid readability. 
Unlike the initial game from \refFig{PtoC_commitment}, the composition (b) has a winning strategy, since a strategy could forbid $\hat{i_2}$ while allowing $\hat{i_1}$ without violating justified refusal. 
When applying the coarser label to this winning strategy (i.e., applying $\pi$ pointwise), the resulting branching process is no strategy for the original game.

\begin{figure}[t!]
	
	\begin{subfigure}[c]{0.6\textwidth}
		\centering
		\begin{tikzpicture}[scale=1.0, every label/.append style={font=\scriptsize}, label distance=-0.5mm]
		\node[envplace,label=south:$\color{myDGray}E$,label=north:$\hat{E}$] at (0,0) (p11) {};
		\node[envplace,label=south:$\color{myDGray}X$,label=north:$\hat{X}$] at (-1.5,0) (p12) {};
		\node[envplace,label=south:$\color{myDGray}Y$,label=north:$\hat{Y}$] at (1.5,0) (p13) {};

		\node[transition,label=south:$\color{myDGray}m_x$,label=north:$\hat{m_x}$] at (-0.75, 0) (t11) {};
		\node[transition,label=south:$\color{myDGray}m_y$,label=north:$\hat{m_y}$] at (0.75, 0) (t12) {};
		
		\node[token] at (0,0) (){};
		
		\draw[arrow] (p11) -- (t11);
		\draw[arrow] (p11) -- (t12);
		\draw[arrow] (t11) -- (p12);
		\draw[arrow] (t12) -- (p13);

		
		\node[envplace,label=east:$\color{myDGray}A$,label=west:$\hat{A}$] at (4,0.75) (p24) {};
		
		\node[transition,label=east:$\color{myDGray}m_x$,label=west:$\hat{m_x}$] at (3.25, 0) (t21) {};
		\node[transition,label=west:$\color{myDGray}m_y$,label=east:$\hat{m_y}$] at (4.75, 0) (t22) {};
		
		\node[envplace,label=east:$\color{myDGray}\mathit{meet}$,label=west:$\hat{\mathit{meet}_1}$] at (4,-0.75) (p26) {};
		
		\node[token] at (4,0.75) (){};
		
		\draw[arrow] (p24) to (t21);
		\draw[arrow] (p24) to (t22);
		\draw[arrow] (t21) -- (p26);
		\draw[arrow] (t22) -- (p26);
		
		\node[transition,label=east:$\color{myDGray}i$,label=west:$\hat{i_1}$] at (4, -1.5) (t27) {};
		
		\draw[darrow] (p26) -- (t27);

		
		\node[envplace,label=east:$\color{myDGray}B$,label=west:$\hat{B}$] at (4,-2.5) (p35) {};
		
		\node[envplace,label=east:$\color{myDGray}\mathit{meet}$,label=west:$\hat{\mathit{meet}_2}$] at (4,-4) (p36) {};

		\node[transition] at (4, -3.25) (t36) {};
		
		\node[transition,label=east:$\color{myDGray}i$,label=west:$\hat{i_2}$] at (4, -4.75) (t37) {};
		
		\node[token] at (4,-2.5) (){};
		
		\draw[arrow] (p35) -- (t36);
		\draw[arrow] (t36) -- (p36);
		
		\draw[darrow] (p36) -- (t37);

		
		\node[sysplace,label=south:$\color{myDGray}D$,label=north:$\hat{D}$] at (0,-4.5) (p47) {};
		\node[sysplace,label=east:$\color{myDGray}P$,label=west:$\hat{P}$] at (0,-2.5) (p48) {};
		
		\node[sysplace,label=south:$\color{myDGray}D_x$,label=north:$\hat{D_x}$] at (-1.5,-4.5) (p49) {};
		\node[sysplace,label=south:$\color{myDGray}D_y$,label=north:$\hat{D_y}$] at (1.5,-4.5) (p410) {};

		\node[transition,label=east:$\color{myDGray}i$,label=west:$\hat{i_1}$] at (-0.75, -3.5) (t47) {};
		\node[transition,label=west:$\color{myDGray}i$,label=east:$\hat{i_2}$] at (0.75, -3.5) (t47i) {};
		
		\node[transition,label=south:$\color{myDGray}c_x$,label=north:$\hat{c_x}$] at (-0.75, -4.5) (t48) {};
		\node[transition,label=south:$\color{myDGray}c_y$,label=north:$\hat{c_y}$] at (0.75, -4.5) (t49) {};
		
		\node[token] at (0,-2.5) (){};

		\draw[arrow] (p48) -- (t47);
		\draw[arrow] (t47) -- (p47);
		\draw[arrow] (p48) -- (t47i);
		\draw[arrow] (t47i) -- (p47);
		
		\draw[arrow] (p47) -- (t48);
		\draw[arrow] (p47) -- (t49);
		\draw[arrow] (t48) -- (p49);
		\draw[arrow] (t49) -- (p410);
		

		\draw[-, black!15, very thick] (-2,-2) to (4.75,-2);
		\draw[-, black!15, very thick] (2.25,1.25) to (2.25,-5.25);
		
		\end{tikzpicture}
		
		\subcaption{}
	\end{subfigure}
	\begin{subfigure}[c]{0.4\textwidth}
		\centering
		\begin{tikzpicture}[scale=1.0, every label/.append style={font=\scriptsize}, label distance=-0.5mm]
		\node[envplace,label=south:$\hat{E}$] at (-0.75,-0.75) (p1) {};
		\node[envplace,label=north:$\hat{X}$] at (-2.5,-0.75) (p2) {};
		\node[envplace,label=north:$\hat{Y}$] at (1,-0.75) (p3) {};
		
		\node[envplace,label=north:$\hat{A}$] at (-0.75,0.25) (p4) {};
		
		\node[envplace,label=north:$\hat{B}$] at (2,-1.25) (p5) {};
		
		\node[envplace,label=west:$\hat{\mathit{meet}_1}$] at (-0.75,-2) (p6) {};
		
		\node[envplace,label=east:$\hat{\mathit{meet}_2}$] at (2,-2.75) (p6i) {};
		
		\node[sysplace,label=south:$\hat{D}$] at (0.25,-3.75) (p7) {};
		\node[sysplace,label=south:$\hat{P}$] at (0.25,-2) (p8) {};
		
		\node[sysplace,label=south:$\hat{D_x}$] at (-1.5,-3.75) (p9) {};
		\node[sysplace,label=south:$\hat{D_y}$] at (2,-3.75) (p10) {};
		
		
		\node[transition,label={[yshift=-1mm]south:$\hat{m_x}$}] at (-1.75, -0.75) (t1) {};
		\node[transition,label={[yshift=-1mm]south:$\hat{m_y}$}] at (0.25, -0.75) (t2) {};
		
		\node[transition] at (2, -2) (t6) {};
		
		\node[transition,label=east:$\hat{i_1}$] at (-0.75, -2.75) (t7) {};
		\node[transition,label=west:$\hat{i_2}$] at (1.25, -2.75) (t7i) {};
		
		\node[transition,label=south:$\hat{c_x}$] at (-0.75, -3.75) (t8) {};
		\node[transition,label=south:$\hat{c_y}$] at (1.25, -3.75) (t9) {};
		
		\node[token] at (-0.75,0.25) (){};
		\node[token] at (-0.75,-0.75) (){};
		\node[token] at (2,-1.25) (){};
		\node[token] at (0.25,-2) (){};

		\draw[arrow] (p1) -- (t1);
		\draw[arrow] (p1) -- (t2);
		\draw[arrow] (t1) -- (p2);
		\draw[arrow] (t2) -- (p3);
		
		\draw[arrow] (p4) to[out=180,in=90] (t1);
		\draw[arrow] (p4) to[out=0,in=90] (t2);
		\draw[arrow] (t1) -- (p6);
		\draw[arrow] (t2) -- (p6);
		
		\draw[arrow] (p5) -- (t6);
		\draw[arrow] (t6) -- (p6i);
		
		\draw[darrow] (p6) -- (t7);
		\draw[darrow] (p6i) -- (t7i);
		
		\draw[arrow] (p8) -- (t7);
		\draw[arrow] (t7) -- (p7);
		\draw[arrow] (p8) -- (t7i);
		\draw[arrow] (t7i) -- (p7);
		
		\draw[arrow] (p7) -- (t8);
		\draw[arrow] (p7) -- (t9);
		\draw[arrow] (t8) -- (p9);
		\draw[arrow] (t9) -- (p10);


		\end{tikzpicture}
		\subcaption{}
	\end{subfigure}

	\caption{A singular net distribution for the Petri game in \refFig{PtoC_commitment} (a) and the composition of the distribution in (b). The black label is the name of the node whereas the gray label is the one given by $\pi$. To aid readability, the name of nodes in the SND are annotated with a hat and the $\pi$-label is omitted in (b).}
	\label{fig:ex_snDist}
\end{figure}

While we cannot find equivalent strategies between $\pGame$ and $\parcomp$, we can, however, find equivalent strategies, if strategies for $\parcomp$ do not distinguish between equally $\pi$-labelled transitions. This motivates the following definition:
\begin{definition}
	A strategy $\sigma = (\pNet^\sigma, \lambda)$ for $\parcomp$ is \emph{$\pi$-insensitive}, if for any pairwise concurrent set of places $C$ with $\lambda[C] = \pre{\pGame}{t}$ for some transition $t$ there either is a transition $t'$ with $\lambda(t') = t$ and $\pre{\pNet^\sigma}{t'} = C$ or there is a system place $q \in C \cap \inv{\lambda}[\places_\psys]$ with $\pi(t) \not\in \pi(\lambda[\post{\sigma}{q}])$.
\end{definition}

\noindent The definition is almost identical to the one of a strategy. 
The original justified refusal requires that every transition, that is not added to the strategy, must be uniformly forbidden by a system place. By contrast, in a $\pi$-insensitive strategy, there must be a system place that uniformly forbids \emph{all} transitions with the same $\pi$-label. 
Even though a transition is duplicated, a $\pi$-insensitive strategy considers all transitions with the same label as identical. 
The interested reader is advised to check that the composition in \refFig{ex_snDist} (b) has no winning strategy that is $\pi$-insensitive.

We can show that if a strategy for $\parcomp$ is $\pi$-insensitive and we apply the coarse label, the resulting branching process fulfills justified refusal, i.e., is a strategy:
\begin{corollary}
	If $\sigma = (\pNet^\sigma, \lambda)$ is a $\pi$-insensitive strategy for $\parcomp$ then $\sigma' = (\pNet^\sigma, \pi \circ \lambda)$ is a strategy for $\pGame$. 
\end{corollary}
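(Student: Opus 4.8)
The plan is to verify the two defining requirements of a Petri game strategy for $\sigma' = (\pNet^\sigma, \pi \circ \lambda)$ separately: that it is an initial branching process of $\pGame$, and that it satisfies justified refusal. Throughout, I would stress that the underlying occurrence net $\pNet^\sigma$ is literally unchanged when passing from $\sigma$ to $\sigma'$; only the labelling and the target net change. Hence all purely structural conditions (well-foundedness of $\leq$, absence of self-conflict, set-valued pre- and postconditions, the ingoing-arc conditions on places, and the notions of concurrency, conflict, and causal dependence used below) are inherited verbatim from the fact that $(\pNet^\sigma, \lambda)$ is a branching process of $\parcomp$.

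For the branching-process requirement I would simply invoke the preceding corollary on $\leftrightsquigarrow$-related branching processes. Since $\sigma$ is in particular a branching process $(\pNet^\sigma, \lambda)$ of $\parcomp$, that corollary furnishes a branching process $\iota$ of $\pGame$ with $\iota \leftrightsquigarrow (\pNet^\sigma, \lambda)$; by definition of $\leftrightsquigarrow$ this forces $\iota = (\pNet^\sigma, \pi \circ \lambda) = \sigma'$. In particular $\pi \circ \lambda$ is an initial homomorphism from $\pNet^\sigma$ to $\pGame$ (it is the composition of the initial homomorphism $\lambda$ with the initial homomorphism $\pi = \pi_{\parcomp}$, the latter being an initial homomorphism by SND conditions \myItem{(1)} and \myItem{(2)}), and it is injective on transitions with equal precondition. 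The only nonroutine point, which one could also verify directly, is this injectivity clause: if $t_1, t_2$ are transitions of $\pNet^\sigma$ with $\pre{\pNet^\sigma}{t_1} = \pre{\pNet^\sigma}{t_2}$ and $\pi(\lambda(t_1)) = \pi(\lambda(t_2))$, then structure-preservation of $\lambda$ yields $\pre{\parcomp}{\lambda(t_1)} = \pre{\parcomp}{\lambda(t_2)}$, whence SND condition \myItem{(3)} gives $\lambda(t_1) = \lambda(t_2)$ and then injectivity of $\lambda$ gives $t_1 = t_2$. This is exactly the step where the maximality and uniqueness guarantees of an SND are indispensable.

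It remains to establish justified refusal for $\sigma'$, and this is where $\pi$-insensitivity is used. Let $C \subseteq \places^{\pNet^\sigma}$ be pairwise concurrent and let $t \in \transitions^\pGame$ with $(\pi \circ \lambda)[C] = \pre{\pGame}{t}$. Applying the hypothesis that $\sigma$ is $\pi$-insensitive to the set $C$ yields one of two cases. In the first case there is a transition $t'$ of $\pNet^\sigma$ with $\pre{\pNet^\sigma}{t'} = C$ and $\pi(\lambda(t')) = t$; since $(\pi \circ \lambda)(t') = t$, this is precisely the first disjunct of justified refusal for $\sigma'$. In the second case there is a system place $q \in C$ of $\parcomp$ (i.e.\ $\lambda(q) \in \places_\psys^{\parcomp}$) with $t \notin \pi(\lambda[\post{\pNet^\sigma}{q}])$. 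Because an SND is required to respect the system/environment partition, $\pi$ maps system places of $\parcomp$ to system places of $\pGame$, so $(\pi \circ \lambda)(q) \in \places_\psys$, i.e.\ $q$ is a system place of $\sigma'$; and $t \notin (\pi \circ \lambda)[\post{\pNet^\sigma}{q}]$ exhibits the required uniformly refusing system place. Thus justified refusal holds in both cases and $\sigma'$ is a strategy for $\pGame$.

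The main obstacle is conceptual rather than computational: it lies in correctly translating between the $\parcomp$-level data appearing in the definition of $\pi$-insensitivity and the $\pGame$-level data appearing in justified refusal, using that $\pi = \pi_{\parcomp}$ is an initial homomorphism compatible with the system/environment partition. The whole notion of $\pi$-insensitivity is tailored to this step: ordinary justified refusal for $\parcomp$ would only guarantee that some system place refuses \emph{one} copy of a split transition, which need not survive coarsening by $\pi$, whereas $\pi$-insensitivity demands a single system place refusing \emph{all} $\pi$-copies at once --- exactly the condition $t \notin \pi(\lambda[\post{\pNet^\sigma}{q}])$ --- and this is what makes the coarse-labelled branching process satisfy justified refusal. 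The remaining care-points are bookkeeping: checking that concurrency and the set-valued preconditions are preserved under relabelling (they are, since the net is unchanged) and invoking SND condition \myItem{(3)} for the injectivity clause above.
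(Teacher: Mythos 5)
Your treatment of the branching-process half is complete and correct: the underlying occurrence net is untouched, $\pi \circ \lambda$ is an initial homomorphism as a composition of initial homomorphisms (SND conditions \myItem{(1)} and \myItem{(2)}), and your direct verification of the injectivity clause via structure-preservation of $\lambda$ followed by SND condition \myItem{(3)} is exactly right. The paper itself states this corollary without proof, and your overall decomposition (branching process, then justified refusal via $\pi$-insensitivity with $\pi$ respecting the system/environment partition) is the intended argument.

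There is, however, a genuine gap in the justified-refusal half. The definition of $\pi$-insensitivity is triggered by a transition of $\parcomp$: its hypothesis is a pairwise concurrent set $C$ with $\lambda[C] = \pre{\parcomp}{t}$ for some $t \in \transitions^{\parcomp}$ (the paper's ``$\pre{\pGame}{t}$'' must be read this way, since otherwise ``$\lambda(t') = t$'' and ``$\pi(t)$'' are ill-typed --- $\lambda$ labels into $\parcomp$). You instead start from $t \in \transitions^{\pGame}$ with $(\pi \circ \lambda)[C] = \pre{\pGame}{t}$ and apply $\pi$-insensitivity to $C$ directly; this is not an instance of the hypothesis. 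A priori there might be \emph{no} transition of $\parcomp$ whose precondition is $\lambda[C]$ and whose $\pi$-label is $t$, in which case $\pi$-insensitivity says nothing about $C$ and neither disjunct is available. The missing bridge is SND condition \myItem{(4)}, the maximality clause: since $C$ is a finite set of pairwise concurrent places of the branching process, $\lambda[C]$ is contained in a reachable marking of $\parcomp$, and $\pi(\lambda[C]) = \pre{\pGame}{t}$, so condition \myItem{(4)} yields a transition $t'' \in \transitions^{\parcomp}$ with $\pi(t'') = t$ and $\pre{\parcomp}{t''} = \lambda[C]$. Applying $\pi$-insensitivity to $C$ and this $t''$ then produces exactly your two cases --- either a transition $t'$ of $\pNet^\sigma$ with $\lambda(t') = t''$ and $\pre{\pNet^\sigma}{t'} = C$, whence $(\pi \circ \lambda)(t') = t$, or a system place $q \in C$ with $\pi(t'') = t \notin \pi(\lambda[\post{\pNet^\sigma}{q}])$ --- and the rest of your coarsening argument goes through verbatim. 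It is worth noting that you explicitly credit the SND maximality and uniqueness guarantees only for the injectivity clause, whereas condition \myItem{(4)} is equally indispensable at precisely this point.
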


\subparagraph{Extending the Translation}
We can do a similar translation as before but work with singular net distributions instead of slice distributions by treating SNs as slices, i.e., ignoring the $\pi$ label.
In \refFig{exma_comm_translation} (a), the translated automaton for the SND from \refFig{ex_snDist} is depicted. 
We already saw that a composition of an SND might have a winning strategy even though the original game has not (cf.\ \refFig{ex_snDist} (b)). If we build our translation from an SND we run into the same problem:
We give potential controllers to much power, by allowing them to distinguish equally $\pi$-labelled transitions (using their commitment sets) and therefore restrict the behavior in a way that the strategy of the Petri game cannot.
For example, the control game in \refFig{exma_comm_translation} (a) has a winning controller: As in composition of the SND, a controller can distinguish between $\hat{a_1}$ and $\hat{a_2}$ and therefore enforce communication with the player that possesses the information needed to win the game.

We fix this by modifying our translation slightly: We restrict the commitment sets for each process to transitions in the original game instead of the copies in the SND\footnote{For a place $q$ in the SND, we do not allow all commitment sets $A \subseteq \post{}{q}$ but $A \subseteq \pi(\post{}{q})$.}.
From such a commitment set, all copies of a transition in the set are allowed.
With the coarser commitment sets, a controller can no longer distinguish equally labelled transitions and has to allow either all copies of a transition or none. 
If we translate the singular net distribution from \refFig{ex_snDist} with the modified translation, the fourth singular net yields the process in \refFig{exma_comm_translation} (b).
If we substitute this process into the overall control game in (a) the resulting control game has no longer a winning controller, as $\hat{i_1}$ and $\hat{i_2}$ can no longer be distinguished.

\begin{figure}[t!]
	\begin{subfigure}[c]{0.6\textwidth}
		\centering
		\begin{tikzpicture}[scale=1.0, every label/.append style={font=\scriptsize}, label distance=-0.5mm]
		
		\node[aastate,label=west:{$\hat{E}$}] at (0,0) (s11){}; 
		\node[aastate,label=south:{$\hat{X}$}] at (-1,-1) (s12){}; 
		\node[aastate,label=south:{$\hat{Y}$}] at (1,-1) (s13){}; 
		
		\draw[arrow] (s11)+(0.35, 0.35) to (s11);
		
		\draw[arrow] (s11) to node[left] {\scriptsize$\hat{m_x}$} (s12);
		\draw[arrow] (s11) to node[right] {\scriptsize$\hat{m_y}$} (s13);

		
		\node[aastate,label=west:{$\hat{A}$}] at (0,-2) (s21){}; 
		\node[aastate,label=west:{$\hat{\mathit{meet}_1}$}] at (0,-3) (s22){}; 
		
		\draw[arrow] (s21)+(0.35, 0.35) to (s21);
		
		\draw[arrow] (s21) to[out=230,in=130] node[left=-1mm] {\scriptsize$\hat{m_x}$} (s22);
		\draw[arrow] (s21) to[out=310,in=50] node[right=-1mm] {\scriptsize$\hat{m_y}$} (s22);
		
		\draw[arrow,loop below] (s22) to node[below=-0.5mm] {\scriptsize$\hat{i_1}$} (s22);

		\node[aastate,label=west:{$\hat{B}$}] at (0,-4.5) (s31){}; 
		\node[aastate,label=west:{$\hat{\mathit{meet}_2}$}] at (0,-5.5) (s32){}; 
		
		\draw[arrow] (s31)+(0.35, 0.35) to (s31);
		
		\draw[arrow] (s31) to (s32);
		
		\draw[arrow,loop below] (s32) to node[below=-0.5mm] {\scriptsize$\hat{i_2}$} (s32);
		
		
		\node[aastate,label={[label distance=-1.5mm]225:{$P$}}] at (4,-1) (s41){}; 
		
		\node[aastate,label=west:{$(\hat{P}, \emptyset)$}] at (4,0) (s42){}; 
		\node[aastate,label=north:{$(\hat{P}, \{i_1\})$}] at (3,-1) (s43){}; 
		\node[aastate,label=north:{$(\hat{P}, \{i_2\})$}] at (5,-1) (s44){}; 
		\node[aastate,label=east:{\tiny$(\hat{P}, \{i_1, i_2\})$}] at (4,-2) (s45){}; 
		
		\node[aastate,label={[label distance=-1.5mm]225:{$\hat{D}$}}] at (4,-4) (s46){}; 
		
		\node[aastate,label=north:{\tiny$(\hat{D}, \emptyset)$}] at (4,-3) (s47){}; 
		\node[aastate,label=west:{$(\hat{D}, \{c_x\})$}] at (3,-4) (s48){}; 
		\node[aastate,label=east:{$(\hat{D}, \{c_y\})$}] at (5,-4) (s49){}; 
		\node[aastate,label=south:{$(\hat{D}, \{c_x, c_y\})$}] at (4,-5) (s410){}; 
		
		\node[aastate,label=south:{$\hat{D_x}$}] at (2.5,-5) (s411){}; 
		\node[aastate,label=south:{$\hat{D_y}$}] at (5.5,-5) (s412){}; 
		
		\draw[arrow] (s41)+(0.35, 0.35) to (s41);
		
		\draw[arrow, densely dotted] (s41) to (s42);
		\draw[arrow, densely dotted] (s41) to (s43);
		\draw[arrow, densely dotted] (s41) to (s44);
		\draw[arrow, densely dotted] (s41) to (s45);

		\draw[arrow] (s43) to[out=195,in=170] node[left] {\scriptsize$\hat{i_1}$} (s46);
		\draw[arrow] (s44) to[out=345,in=20] node[right] {\scriptsize$\hat{i_2}$} (s46);
		\draw[arrow] (s45) to[out=220,in=140]  node[left=-1mm] {\scriptsize$\hat{i_1}$} (s46);
		\draw[arrow] (s45) to[out=320,in=40] node[right=-1mm] {\scriptsize$\hat{i_2}$} (s46);

		\draw[arrow, densely dotted] (s46) to  (s47);
		\draw[arrow, densely dotted] (s46) to (s48);
		\draw[arrow, densely dotted] (s46) to (s49);
		\draw[arrow, densely dotted] (s46) to (s410);

		\draw[arrow] (s48) to node[left=-0.5mm] {\scriptsize$\hat{c_x}$} (s411);
		\draw[arrow] (s49) to node[right=-0.5mm] {\scriptsize$\hat{c_y}$} (s412);
		\draw[arrow] (s410) to node[above=-1mm] {\scriptsize$\hat{c_x}$} (s411);
		\draw[arrow] (s410) to node[above=-1mm] {\scriptsize$\hat{c_y}$} (s412);

		\end{tikzpicture}
		\subcaption{}
	\end{subfigure}
	\begin{subfigure}[c]{0.4\textwidth}
		\centering
		\begin{tikzpicture}[scale=1.0, every label/.append style={font=\scriptsize}, label distance=-1mm]
		\node[aastate,label={west:{$\hat{P}$}}] at (2,-3) (s41){}; 
		
		\node[aastate,label=east:{$(\hat{P}, \emptyset)$}] at (3,-3) (s44){}; 
		\node[aastate,label=east:{$(\hat{P}, \{i\})$}] at (2,-4) (s45){}; 
		
		\node[aastate,label=225:{$\hat{D}$}] at (2,-6) (s46){}; 
		
		\node[aastate,label=north:{\tiny$(\hat{D}, \emptyset)$}] at (2,-5) (s47){}; 
		\node[aastate,label=west:{$(\hat{D}, \{c_x\})$}] at (1,-6) (s48){}; 
		\node[aastate,label=east:{$(\hat{D}, \{c_y\})$}] at (3,-6) (s49){}; 
		\node[aastate,label=south:{$(\hat{D}, \{c_x, c_y\})$}] at (2,-7) (s410){}; 
		
		\node[aastate,label=south:{$\hat{D_x}$}] at (0.5,-7) (s411){}; 
		\node[aastate,label=south:{$\hat{D_y}$}] at (3.5,-7) (s412){}; 
		
		\draw[arrow] (s41)+(0.35, 0.35) to (s41);
		
		\draw[arrow, densely dotted] (s41) to (s44);
		\draw[arrow, densely dotted] (s41) to (s45);

		\draw[arrow] (s45) to[out=220,in=140]  node[left=-1mm] {\scriptsize$\hat{i_1}$} (s46);
		\draw[arrow] (s45) to[out=320,in=40] node[right=-1mm] {\scriptsize$\hat{i_2}$} (s46);

		\draw[arrow, densely dotted] (s46) to  (s47);
		\draw[arrow, densely dotted] (s46) to (s48);
		\draw[arrow, densely dotted] (s46) to (s49);
		\draw[arrow, densely dotted] (s46) to (s410);

		\draw[arrow] (s48) to node[left=-0.5mm] {\scriptsize$\hat{c_x}$} (s411);
		\draw[arrow] (s49) to node[right=-0.5mm] {\scriptsize$\hat{c_y}$} (s412);
		\draw[arrow] (s410) to node[above=-1mm] {\scriptsize$\hat{c_x}$} (s411);
		\draw[arrow] (s410) to node[above=-1mm] {\scriptsize$\hat{c_y}$} (s412);
		\end{tikzpicture}
		
		\subcaption{}
	\end{subfigure}
	
	\caption{Translation of the Petri game from \refFig{PtoC_commitment} using the singular net distribution from \refFig{ex_snDist} (a). The local automaton of the first three singular nets is depicted in (a). If we do our unmodified translation the fourth SN is translated to the automaton in (b). The modified translation designed for SNDs yields the automaton in (c). }
	\label{fig:exma_comm_translation}
\end{figure}


\subparagraph{Translating Strategies to Controllers}

Given a winning strategy $\sigma$ for $\pGame$, we outline that there exists a bisimilar winning controller for the modified $\cGame_\pGame$. We can refine the labels of $\sigma$ to obtain a strategy $\sigma_{\slices}$ for $\parcomp$. It holds that $\sigma \leftrightsquigarrow \sigma_{\slices}$. Since $\sigma$ satisfies justified refusal we get that $\sigma_{\slices}$ is $\pi$-insensitive, i.e., if a place forbids a transition it forbids all transitions with the same $\pi$-label. 
We can now do the same controller construction as in Appendix~\ref{sec:PtoC_translateStrtoCont} on the strategy $\sigma_{\slices}$.
Since $\sigma_{\slices}$ is $\pi$-insensitive every transition that is not added must be forbidden together with all equally labelled transitions. The controller can hence choose an appropriate commitment set, even though the selection of sets does not allow to distinguish equally labelled transitions. 
As $\sigma \leftrightsquigarrow \sigma_{\slices}$, it is easy to see that the obtained controller and $\sigma$ are bisimilar\footnote{For the bisimulation, we identify every transition in the control game with its $\pi$-label.}.

\subparagraph{Translating Controllers to Strategies}
Given a controller for the modified $\cGame_\pGame$, we can construct a bisimilar strategy for $\pGame$. We first build a strategy $\sigma_{\slices}$ for $\parcomp$ using the construction from Appendix~\ref{sec:PGtoAA2}. As the commitment sets of the control game range over original transitions rather than copies, we observe that the resulting $\sigma_{\slices}$ is $\pi$-insensitive, i.e., equally labelled transitions are not distinguished. 
When taking the coarser label, we obtain a branching process $\sigma$ for $\pGame$. Since $\sigma_{\slices}$ is $\pi$-insensitive, $\sigma$ fulfills justified refusal, i.e., is a strategy for $\pGame$. 
Bisimilar behavior follows since $\sigma \leftrightsquigarrow \sigma_{\slices}$.

\paragraph*{The General Result}
We generalized the translation to work with SNDs instead of slice distributions. 
In \refProp{allSND}, we showed that all concurrency-preserving nets (and hence games) have an SND.
Therefore, we can derive the generalization of our initial result stated in \refTheo{theor2}.

\section{Lower Bound}
\label{sec:appLB1}

In this section, we prove that our translation is asymptomatically optimal in size if we require strategy-equivalence by giving an exponential lower bound. 
While this does not answer the question whether there is a sub-exponential translation, it highlights that such a translation would inevitably destroy the structure of the game.

Before we can proceed with a concrete analyses, we need to agree on the parameters used to define the size of a control game and of a Petri game. 
Two natural parameters that are well suited for measuring the size of a control game are the number of local states $\abs{\bigcup_{p \in \processes} S_p}$ as well as the number of actions $\abs{\Sigma}$.
Conversely, for a Petri game, the number of places $\abs{\places}$ and transitions $\abs{\transitions}$ are good candidates\footnote{We remark at this point, that the more concise communication scheme of control games over Petri games allows us to hide additional complexity that is made explicit in Petri games.
In a concurrency-preserving Petri game, the size of the flow relation $\flow$ is always polynomial in the number of places and transitions. 
In a control game, the transition function $\{\delta_a\}_{a \in \Sigma}$ can be of exponential size in the number of local states.
Our resulting game $\cGame_\pGame$ can, however, be described as the parallel composition of local automata. The local transition relation is therefore again polynomial in $\abs{\bigcup_{p \in \processes} S_p}$ and $\abs{\Sigma}$.
For our analysis, we restrict us to the number of local states and actions in the alphabet.}. 

\paragraph*{Lower Bound}

\begin{figure}[!t]
	\centering
	\begin{tikzpicture}[scale=1.0, every label/.append style={font=\scriptsize}, label distance=-1mm]
	\node[envplace,label=east:$A$] at (0,0) (p11){};
	\node[envplace,label=east:$B$] at (0,-1.5) (p12){};
	\node[envplace, specialEnv,label=east:$C$] at (0,-3) (p13){};
	
	\node[transition,label=west:$a$] at (-0.5,-0.75) (t11){};
	\node[transition,label=east:$b$] at (0.5,-0.75) (t12){};
	
	\node[transition,label=south:$t_1$] at (-1,-2.25) (t13){};
	\node[] at (-0.5,-2.25) (){\tiny$\cdots$};
	\node[transition,draw=black!50] at (0,-2.25) (t14){};
	\node[] at (0.5,-2.25) (){\tiny$\cdots$};
	\node[transition,label=south:$t_n$] at (1,-2.25) (t15){};
	
	\node[token] at (0,0)(){};
	
	\draw[arrow] (p11) -- (t11);
	\draw[arrow] (p11) -- (t12);
	\draw[arrow] (t11) -- (p12);
	\draw[arrow] (t12) -- (p12);
	
	\draw[arrow] (p12) -- (t13);
	\draw[arrow,black!50] (p12) -- (t14);
	\draw[arrow] (p12) -- (t15);
	
	\draw[arrow] (t13) -- (p13);
	\draw[arrow,black!50] (t14) -- (p13);
	\draw[arrow] (t15) -- (p13);

	\node[sysplace, specialSys,label=north:$D$] at (3,-1.5) (p21){};
	
	\node[transition,label=south:$t_1$] at (2,-2.25) (t21){};
	\node[] at (2.5,-2.25) (){\tiny $\cdots$};
	\node[transition,draw=black!50] at (3,-2.25) (t22){};
	\node[] at (3.5,-2.25) (){\tiny $\cdots$};
	\node[transition,label=south:$t_n$] at (4,-2.25) (t23){};
	
	\node[token] at (3, -1.5)(){};
	
	\draw[darrow] (p21) -- (t21);
	\draw[darrow,black!50] (p21) -- (t22);
	\draw[darrow] (p21) -- (t23);
	
	\end{tikzpicture}
	
	\caption{Slices of a concurrency-preserving (reachability) Petri game family $\{\pGame_n\}_{n \in \mathbb{N}}$ where every strategy-equivalent control game is of exponential size.}
	\label{fig:PtoC_lowerBound}
\end{figure}

Consider the Petri game family $\{\pGame_n\}_{n \in \mathbb{N}}$ obtained as the composition of the slices in \refFig{PtoC_lowerBound}. We fix any $n$ and refer to $\pGame_n$ as $\pGame$. 
In the initial marking of $\pGame$, both $a$ and $b$ can fire resulting in a  marking $M = \{B, D\}$.
From here, all transitions $t_1, \cdots, t_n$ are enabled. 
$\pGame$~is played between two players.
Both of which possess different information, i.e., the first player (starting in $A$) knows whether $a$ or $b$ occurred, while the second (starting in $D$) does not. 
Only the second player can decide which of the transitions $t_1, \cdots, t_n$ should be possible. 
The decision of which of the transitions in $t_1, \cdots, t_n$  to allow can hence not be based on the occurrence of $a$ or $b$. 
Since a winning strategy for $\pGame$ can restrict any combinations of $t_i$-transitions, every strategy-equivalent control game must admit controllers that can do the same. At the same time, the decision, which of the $t_i$s to enable, cannot be based on the occurrence of $a$ or $b$, since this would imply a strategy for $\pGame$ that can do the same. 
Unlike Petri games that can naturally express that the second player can restrict transitions $t_1, \cdots, t_n$, while the first one cannot (using system and environment places), control games are limited to controllable or uncontrollable actions.
We now show that this already results in exponentially many global states. 

For a strategy $\sigma$ for $\pGame$, we write $\Seq(\sigma)$ for the set of sequences admitted by $\sigma$.
For every~$\sigma$, it holds that $\Seq(\sigma) = \{ \epsilon, a, b \} \cup \{ a \, t, b \, t \mid t \in B \}$ for some $B \subseteq \{t_1, \cdots, t_n\}$.
Conversely, for every such $B$, there exists a $\sigma$ s.t.\ $\Seq(\sigma)$ has exactly this form.
In particular, a strategy cannot base the decision of what $t_i$s to enable on the occurrence of $a$ or $b$.

Consider any control game $\cGame$ that is strategy-equivalent to $\pGame$. 
Due to $\cGame$ being a translation of $\pGame$, we assume $\transitions \subseteq \Sigma$.

\begin{lemma}
	$a$ and $b$ are uncontrollable. 
	\label{lem:unncessary}
\end{lemma}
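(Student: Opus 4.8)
The plan is to argue by contradiction: I assume $a \in \Sigma^{\sys}$ (the case for $b$ is symmetric) and contradict strategy-equivalence. The structural fact I would exploit is that in $\pGame$ the place $A$ is an environment place and the unique source of both $a$ and $b$, with $\pre{}{a} = \pre{}{b} = \{A\}$, so these transitions fire purely from an environment place. By justified refusal, every strategy $\sigma$ for $\pGame$ must therefore contain both $a$ and $b$ from $\init^\sigma$; more generally, in any reachable marking of $\sigma$ in which $a$ is enabled, $b$ is enabled as well (and vice versa), since both require exactly the token on $A$. Recall also that $B$ is not a winning place, so a configuration in which a token is stuck on $B$ is losing for reachability.

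First I would fix a winning strategy $\sigma$ for $\pGame$, e.g.\ the one admitting all $t_i$ with $\Seq(\sigma) = \{\epsilon, a, b\} \cup \{a\,t_i, b\,t_i \mid 1 \le i \le n\}$, and use strategy-equivalence to obtain a bisimilar winning controller $\varrho_\sigma$ with bisimulation $\relation$. From $\varrho_\sigma$ I build a controller $\varrho'$ that agrees with $\varrho_\sigma$ except that every local controller of a process $p \in \dom(a)$ has $a$ deleted from its output: $f^{\varrho'}_p(\cdot) = f^{\varrho_\sigma}_p(\cdot) \setminus \{a\}$. Because $a \in \Sigma^{\sys}$, one refusing process already forbids $a$, so no $\varrho'$-compatible play contains $a$; moreover $\varrho'$ allows only a subset of what $\varrho_\sigma$ allows, whence $\Plays(\cGame, \varrho') \subseteq \Plays(\cGame, \varrho_\sigma)$ and all plays stay finite. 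The key observation I would record here is that $\varrho'$ and $\varrho_\sigma$ agree on every extension by an action $c \neq a$.

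Next I would verify that $\varrho'$ is winning. The only possible failure is a play $u$ that is final under $\varrho'$ but not under $\varrho_\sigma$; by the agreement property this forces $u$ to be extendable under $\varrho_\sigma$ \emph{solely} by $a$. Taking the marking $M$ with $M \relation u$, the control-to-Petri clause of the bisimulation turns $u\,a \in \Plays(\cGame, \varrho_\sigma)$ into $\fireTranTo{M}{a}{M'}$ in $\sigma$, so $A$ is present in $M$ and $\sigma$ also fires $b$ from $M$. The Petri-to-control clause then produces a $\varrho_\sigma$-compatible extension $u\,\tau^* b\,\tau^*$ whose first action is some $c \neq a$, contradicting that $u$ is only $a$-extendable. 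Hence every final play of $\varrho'$ is final under $\varrho_\sigma$, so it ends in winning states, and $\varrho'$ is a winning controller that never plays $a$.

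Finally I would close the loop. By strategy-equivalence $\varrho'$ admits a bisimilar winning strategy $\sigma'$ for $\pGame$ with $\init^{\sigma'} \relation \epsilon$. Since $A$ is an environment place, justified refusal forces $\fireTranTo{\init^{\sigma'}}{a}{M'}$ in $\sigma'$, and the first bisimulation clause yields a play $u' = \epsilon\,\tau^* a\,\tau^* \in \Plays(\cGame, \varrho')$ containing the observable action $a$ --- impossible, as $\varrho'$ forbids $a$. This contradiction gives $a \in \Sigma^{\env}$, and the symmetric argument gives $b \in \Sigma^{\env}$. I expect the main obstacle to be the winningness check for $\varrho'$: one must rule out that forbidding the controllable action $a$ converts a previously non-final play into a losing deadlock, and this is precisely where the co-enabledness of $a$ and $b$ in $\pGame$, transported across the bisimulation, does the work.
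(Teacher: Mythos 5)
Your proof is correct and follows essentially the same route as the paper's: assume $a$ controllable, delete $a$ from every local controller of a bisimilar winning controller to obtain $\varrho'$, show $\varrho'$ remains winning, and then contradict strategy-equivalence because any strategy bisimilar to $\varrho'$ must, by justified refusal at the environment place $A$, still fire $a$ while $\varrho'$ never allows it. The only (harmless) difference lies in the winningness check: the paper analyzes $\tau$-maximal plays of the everything-allowing strategy and uses that $b$ extends each of them (and that no $a$ can occur after $u\,b$), whereas you rule out deadlocks at arbitrary $\varrho'$-final plays by transporting the co-enabledness of $a$ and $b$ across the bisimulation --- the same idea, organized slightly more generally.
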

\begin{proof}
	Choose the (winning) strategy $\sigma$ for $\pGame$ as the one that allows everything, i.e., $\pNet^\sigma = \pGame^\mathfrak{U}$.
	Let $\varrho_\sigma = \{ f^{\varrho_\sigma}_p\}_{p \in \processes}$ be a bisimilar controller for $\cGame$.
	There exists a relation $\relation$ with $\init^{\pNet^\sigma} \relation \epsilon$. 
	
	Since $\varrho_\sigma$ is winning every play must be finite.
	Now, we consider every play $u$ in $\Plays(\cGame, \varrho_\sigma)$ that only consists of $\tau$-actions and is maximal w.r.t.\ $\tau$-actions, i.e., $u$ cannot be extended by another $\tau$.
	By bisimulation, we know that $\init^{\pNet^\sigma} \relation u$. Since both $a$ and $b$ are possible from~$\init^{\pNet^\sigma}$, i.e., there is a marking $M$ with $\fireTranTo{\init^{\pNet^\sigma}}{a}{M}$ and $\fireTranTo{\init^{\pNet^\sigma}}{b}{M}$, we know that $a$ and~$b$ must be the only extensions of such a play $u$.
	So, $u \, a \in \Plays(\cGame, \varrho_\sigma)$ and $u \, b \in \Plays(\cGame, \varrho_\sigma)$. 
	This holds for \emph{every} play $u$ that solely consists of $\tau$-actions and is maximal w.r.t.\ them \myItem{(1)}. 
	Now, assume for contradiction and w.l.o.g.\ that $a$ is controllable. 
	We build a slightly modified controller $\varrho' = \{f^{\varrho'}_p\}_{p \in \processes}$ as follows:
	$$f^{\varrho'}_p(u) = f^{\varrho_\sigma}_p(u) - \{a\}$$
	$\varrho'$ behaves like $\varrho_\sigma$ but always forbids $a$.

	As strategy-equivalence only considers winning strategies and controllers, our main objective is to show that $\varrho'$ is winning.
	It holds that $\Plays(\cGame, \varrho') \subseteq \Plays(\cGame, \varrho_\sigma)$.
	This alone does not allow us to conclude that $\varrho'$ is winning. It could happen that $\varrho'$ blocks action $a$ and therefore blocks itself from reaching a winning configuration. 
	Suppose $u \in \Plays(\cGame, \varrho') \subseteq  \Plays(\cGame, \varrho_\sigma)$ is any sequence that consists only of $\tau$-actions and is maximal w.r.t.\ $\tau$-actions, i.e., there is no $\tau$ with $u \, \tau \in \Plays(\cGame, \varrho')$. 
	Outside from always rejecting $a$, $\varrho'$ behaves like~$\varrho_\sigma$. 
	As $us$ is maximal w.r.t.\ $\tau$-actions in  $\varrho'$, it is hence maximal w.r.t.\ $\tau$-actions for $\varrho_\sigma$, i.e., there is no $\tau$ with $u \, \tau \in \Plays(\cGame, \varrho_\sigma)$. 
	By \myItem{(1)}, we get that $u \, a \in \Plays(\cGame, \varrho_\sigma)$ and $u \, b \in \Plays(\cGame, \varrho_\sigma)$. While $\varrho'$ blocks $a$, we still conclude that  $u \, b \in \Plays(\cGame, \varrho')$. 
	Starting in $u\,b$, $\varrho'$ again behaves like $\varrho_\sigma$, since after $u \, b$ there cannot be an $a$-action  (by bisimulation). Therefore, there is no maximal play in $\Plays(\cGame, \varrho')$ that does not reach a winning configuration.
	$\varrho'$ is winning.
	
	Since $\varrho'$ is winning, there exists a strategy $\sigma_{\varrho'}$ that is bisimilar to $\varrho'$. 
	We can derive an easy contradiction: Let $u \in \Plays(\cGame, \varrho')$ be any play that only consists of $\tau$-actions and is maximal w.r.t.\ them. It holds that $\init^{\pNet^\sigma} \relation u$. 
	By construction of $\varrho'$, we know that $u \, b$ is the only extension of $u$, i.e., no $\tau$ or $a$ is possible. 
	This is a contradiction since $\fireTranTo{\init^{\pNet^\sigma}}{a}{M}$ for some $M$ as this is possible for any strategy for $\pGame$. 
\end{proof}

\noindent We can show next that all $t_i$-actions are uncontrollable. The insight is that if some of them were controllable, \emph{all} processes can control them. There must be at least one process that can, from its local view, deduce whether $a$ or $b$ happened. This process can then base its decision of which $t_i$-actions to allow on the occurrence of $a$ or $b$. Such behavior cannot be achieved by a strategy for $\pGame$.  

\begin{lemma}
	$t_1, \cdots, t_n$ are uncontrollable.
	\label{lem:PtC_lb_uncon}
\end{lemma}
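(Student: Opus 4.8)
The plan is to mirror the structure of \refLemma{unncessary}: assume for contradiction that some $t_i$ is controllable, exploit the extra control this grants to build a winning controller whose admitted behaviour no strategy for $\pGame$ can reproduce, and then invoke strategy-equivalence for the contradiction. Fix the everything-allowing strategy $\sigma$ for $\pGame$ (i.e.\ $\pNet^\sigma = \pGame^\mathfrak{U}$, which is winning and realises $B = \{t_1,\dots,t_n\}$), and let $\varrho = \{f_p\}_{p\in\processes}$ be a bisimilar winning controller for $\cGame$, witnessed by a relation $\relation$ with $\init^{\pNet^\sigma}\relation\epsilon$.

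First I would locate a process that can tell $a$ and $b$ apart and is involved in $t_i$. Since $t_i$ is not enabled in the initial marking of $\pGame$ (it needs a token on $B$, which is reachable only through $a$ or $b$), bisimulation forces the only observable actions available from $\epsilon$ to be $a$ and $b$; in particular no $\varrho$-compatible play may begin with an observable $t_i$. If $\dom(a)\cap\dom(t_i)=\emptyset$, then $a$ and $t_i$ are independent, so in the $\sim_\ind$-closed set of plays the play $a\,t_i$ could be commuted (modulo interleaved $\tau$-actions) to one beginning with $t_i$, contradicting the previous sentence. Hence there is a process $p\in\dom(a)\cap\dom(t_i)$. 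Because $a$ occurs in no $b$-history, $\view_p(v)$ contains $a$ for every play $v$ whose observable content starts with $a$, whereas it never contains $a$ for a play whose observable content starts with $b$; thus $p$ distinguishes the two branches while still controlling $t_i$.

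Using this $p$ I would define a controller $\varrho' = \{f'_q\}_{q\in\processes}$ that agrees with $\varrho$ except that $p$ additionally removes $t_i$ from its decision on every local view that does not contain $a$ (i.e.\ after $b$). Then $\Plays(\cGame,\varrho')\subseteq\Plays(\cGame,\varrho)$, so all plays stay finite, and after $b$ the player starting in $A$ can still fire any $t_j$ with $j\neq i$ (these remain allowed, and we take $n\geq 2$) to reach the winning place $C$, while the second player rests on the winning place $D$; hence $\varrho'$ is again winning. Crucially, $\varrho'$ admits $a\,t_i$ but not $b\,t_i$.

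The contradiction is then immediate: strategy-equivalence yields a winning strategy $\sigma_{\varrho'}$ for $\pGame$ bisimilar to $\varrho'$, and transporting the two facts above across the bisimulation gives $a\,t_i\in\Seq(\sigma_{\varrho'})$ and $b\,t_i\notin\Seq(\sigma_{\varrho'})$. This is impossible, since for every strategy $\Seq(\cdot)=\{\epsilon,a,b\}\cup\{a\,t,b\,t\mid t\in B\}$ is symmetric in $a$ and $b$, so $a\,t_i\in\Seq(\sigma_{\varrho'})$ forces $b\,t_i\in\Seq(\sigma_{\varrho'})$. I expect the delicate step to be the existence of the distinguishing process $p\in\dom(a)\cap\dom(t_i)$: unlike for our own translation, for an arbitrary strategy-equivalent $\cGame$ one only has bisimulation to reason with, so one must argue purely from the causal dependency — via $\sim_\ind$-closure together with the fact that $t_i$ is not initially enabled — that $t_i$ cannot be independent of $a$, while carefully handling the interleaved internal $\tau$-actions when commuting actions.
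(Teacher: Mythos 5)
Your overall architecture is the same as the paper's: modify the bisimilar controller so that the availability of $t_i$ depends on whether $a$ or $b$ occurred, argue the modification is still winning, and then contradict justified refusal via the asymmetric $\Seq(\sigma_{\varrho'})$. The genuine gap is the step you yourself flag as delicate: the existence of a \emph{fixed} process $p \in \dom(a) \cap \dom(t_i)$. Your commutation argument only establishes that, in any $\varrho$-compatible play where $t_i$ fires after $a$, the action $a$ lies in the \emph{causal past} of that occurrence of $t_i$ (otherwise some linearization has observable content beginning with $t_i$, contradicting the bisimulation, since $t_i$ is not enabled in $\init^{\pNet^\sigma}$). But causal precedence does not imply $\dom(a) \cap \dom(t_i) \neq \emptyset$. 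Nothing in the definition of strategy-equivalence forces internal actions to be local to a single process, so there may be an internal action $c$ with $\dom(c)$ meeting both $\dom(a)$ and $\dom(t_i)$, mediating a chain $a < c < t_i$. Independence is domain-disjointness and only licenses swapping \emph{adjacent} occurrences; with such a $\tau$ in between, $(a, t_i) \in I$ is perfectly consistent with $a$ causally preceding $t_i$, the commutation "modulo interleaved $\tau$-actions" fails, and your distinguishing process $p$ need not exist. Since the lemma quantifies over \emph{every} strategy-equivalent control game, you cannot assume internal actions are local (as they happen to be in the paper's own construction $\cGame_\pGame$).

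The paper's proof avoids exactly this by proving only the weaker, per-execution statement: whenever $t_1$ is played after $a$, \emph{at least one} process in $\dom(t_1)$ has $a$ in its causally downward-closed local view at decision time --- and this witness may differ between executions. Accordingly, it modifies \emph{all} local controllers at once, and in the direction opposite to yours: every process subtracts $t_1$ from its decision whenever $a$ appears in its view. Because $t_1$ is controllable and needs unanimous consent, one informed process suffices to block every post-$a$ occurrence, while on the $b$-branch no view ever contains $a$, so $t_1$ remains allowed there. Note that your direction ("forbid $t_i$ when $a$ is \emph{not} in the view") cannot be repaired by simply applying it to all processes: a process in $\dom(t_i) \setminus \dom(a)$ that has not yet learned of $a$ would then refuse $t_i$ on the $a$-branch as well, destroying $a\,t_i \in \Plays(\cGame, \varrho')$, which your final contradiction needs. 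So the correct repair is to flip the condition and quantify over all processes --- which is precisely the paper's argument; the remainder of your proof (winningness of $\varrho'$ via the surviving $t_j$, $j \neq i$, with $n \geq 2$, and the concluding justified-refusal contradiction, mirroring \refLemma{unncessary}) then goes through.
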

\begin{proof}
	Choose the strategy $\sigma$ for $\pGame$ as the one that allows everything, i.e., $\pNet^\sigma = \pGame^\mathfrak{U}$. It holds that $\Seq(\sigma) = \{ \epsilon, a, b \} \cup \{ a \, t , b \, t \mid t \in \{t_1, \cdots, t_n \} \}$.
	Let $\varrho_\sigma = \{f^{\varrho_\sigma}_p\}_{p \in \processes}$ be a bisimilar winning controller for $\cGame$ that exists by assumption.
	Assume for contradiction and w.l.o.g.\ that $t_1$ is controllable.\\
	We make an important observation: Any sequence of transitions in $\pGame$ or $\pNet^\sigma$ always begins with an $a$ or a $b$ followed by one of the $t_i$-actions. 
	Since $\varrho_\sigma$ and $\sigma$ are bisimilar we conclude that for any play in $\Plays(\cGame, \varrho_\sigma)$ the $t_i$-action must always follow the $a$ or $b$-action. 
	To put it differently: Assume there is a play $u = u' \, t_1 \in \Plays(\cGame, \varrho_\sigma)$, then there is an action $a$ or $b$ in $u'$. 
	At every point where $t_1$ is executed, there hence is some process in $\dom(t_1)$ that can derive the occurrence of $a$ or $b$ from its local view. The process might not be the same on every execution, but there is at least one at all times.\\
	We modify $\varrho_\sigma$ into a new controller $\varrho' = \{f^{\varrho'}_p\}_{p \in \processes}$ as follows:
	$$f^{\varrho'}_p (u) = \begin{cases}
	\begin{aligned}
	&f^{\varrho_\sigma}_p(u) \quad \quad &\text{if } a \not\in u\\
	&f^{\varrho_\sigma}_p(u) - \{t_1\} \quad &\text{if } a \in u
	\end{aligned}
	\end{cases}$$
	Here, $a \in u$ denotes that $a$ is an action in $u$. $\varrho'$ behaves just like $\varrho_\sigma$ with one difference: Whenever any process can deduce $a$ in its causal past, it forbids $t_1$. Since there is at least one process that can deduce $a$ or $b$ we can conclude that $\varrho'$ never admits a sequence where $t_1$ is played after a previous $a$.  \\
	Similar to \refLemma{unncessary}, the key is to argue that $\varrho'$ is a winning controller, i.e., blocking~$t_1$ never results in a state from which no winning configuration can be reached.
	It holds that $\Plays(\cGame, \varrho') \subseteq \Plays(\cGame, \varrho_\sigma)$.
	By the same reasoning as for the previous lemma, any maximal play $u$ where no $t_i$-actions have been played can be extended by all $t_i$-actions, i.e., $(u\, t_1), \cdots, (u\, t_n) \in \Plays(\cGame, \varrho_\sigma)$ (since $\sigma$ allows all $t_i$-transitions). 
	Since $\varrho'$ only forbids $t_1$ we conclude that $(u\, t_2), \cdots, (u\, t_n) \in \Plays(\cGame, \varrho')$.
	Blocking $t_1$ never blocks a winning state. Therefore, $\varrho'$ is winning. \\
	Since $\cGame$ and $\pGame$ are strategy-equivalent there is a strategy $\sigma_{\varrho'}$ for $\pGame$ that is bisimilar to $\varrho'$. Because of the bisimilarity it is easy to see that 
	$$\Seq(\sigma_{\varrho'}) = \{\epsilon, a, b\} \cup \{  a \, t_i \mid t_i \in \{t_2, \cdots, t_n\} \} \cup \{ b \, t_i \mid t_i \in \{t_1, \cdots, t_n\} \}$$
	Justified refusal forbids a strategy achieving this behavior, a contradiction. 
\end{proof}

\noindent We can now show that there are exponentially many global states needed in $\cGame$.
The idea is to simulate maximal $\tau$-sequences in the controller. The resulting \textit{global} state should allow exactly all the actions in any subset of the $t_i$-actions. Since we know that the $t_i$-actions are uncontrollable the fact that exactly certain actions are enabled from a global state must be a ``property'' of the global state, i.e., it cannot be the result of a controller simply forbidding some $t_i$-actions.

\begin{lemma}
	For every $\emptyset \neq B \subseteq \{t_1, \cdots, t_n\}$, there is a \emph{global} state $s_B = \{s_p\}_{p \in \processes}$ s.t.\ for the set of actions $E$ that can fire from $s_B$ ($E= \{a \in \Sigma \mid \{s_p\}_{p \in \dom(a)} \in \domain(\delta_a)  \}$), it holds that $E \cap \transitions = B$.
\end{lemma}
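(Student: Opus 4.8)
The plan is to exploit that, once we know from \refLemma{unncessary} and \refLemma{PtC_lb_uncon} that all of $a$, $b$, $t_1, \dots, t_n$ are uncontrollable, the set of observable actions enabled from a global state in the \emph{arena} coincides (for these actions) with the set of observable actions that can extend a controller-compatible play. I would therefore reduce the claim to exhibiting a single controller-compatible play whose global state realizes exactly $B$.

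Fix a nonempty $B \subseteq \{t_1, \dots, t_n\}$. First I would take the winning strategy $\sigma_B$ for $\pGame$ with $\Seq(\sigma_B) = \{\epsilon, a, b\} \cup \{a\,t, b\,t \mid t \in B\}$; this is winning because its only final reachable marking is $\{C, D\}$, where both tokens sit on winning places, and $\{B, D\}$ is never final since $B \neq \emptyset$. By strategy-equivalence there is a bisimilar winning controller $\varrho$ for $\cGame$ with relation $\relation$ and $\init^{\sigma_B} \relation \epsilon$. Writing $M = \{B, D\}$ for the marking reached by firing $a$ in $\sigma_B$, the forward-observable clause of the bisimulation yields a play $u$ with $M \relation u$. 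Since $\pGame$ has no internal transitions, the $\tau$-clauses show that appending $\tau$-actions on the control side keeps $M$ related; using that $\cGame$ admits no infinite $\tau$-sequence, I would extend $u$ to a $\tau$-maximal play $\bar u$ with $M \relation \bar u$ and set $s_B = \state{\bar u}$ and $E = \{a \in \Sigma \mid \{s_p\}_{p \in \dom(a)} \in \domain(\delta_a)\}$.

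For $B \subseteq E \cap \transitions$, for each $t \in B$ we have $\fireTranTo{M}{t}{M'}$ with $M' = \{C, D\}$ reachable in $\sigma_B$, so the forward-observable clause gives $\bar u\,\tau^*\,t\,\tau^* \in \Plays(\cGame, \varrho)$; because $\bar u$ is $\tau$-maximal the leading $\tau^*$ is empty, hence $\bar u\,t \in \Plays(\cGame)$ and $t \in E$. For $E \cap \transitions \subseteq B$, suppose $t \in E \cap \transitions$. Since $t$ is uncontrollable, $\bar u\,t \in \Plays(\cGame)$ forces $\bar u\,t \in \Plays(\cGame, \varrho)$; the backward-observable clause then yields $\fireTranTo{M}{\tau^* t \tau^*}{M''}$ in $\sigma_B$, and as $\pGame$ has no $\tau$-transition this is $\fireTranTo{M}{t}{M''}$. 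But the only transitions enabled from $M = \{B, D\}$ in $\sigma_B$ are those in $B$ (in particular $a$ and $b$ cannot refire from $M$), so $t \in B$. Combining both inclusions gives $E \cap \transitions = B$.

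The main obstacle I anticipate is the careful bookkeeping around $\tau$-actions: I must be sure that $\tau$-maximality of $\bar u$ — which is a property relative to the controller, since $\tau$-actions are controllable — really forces enabled-in-the-arena to coincide with directly-extendable-in-the-controller-play. The argument hinges on the two inclusions being powered by different facts, the $B \subseteq E$ direction by $\tau$-maximality and the $E \cap \transitions \subseteq B$ direction by uncontrollability of every transition together with the rigid shape of $\Seq(\sigma_B)$, so the delicate point is to invoke \refLemma{unncessary} and \refLemma{PtC_lb_uncon} precisely where the controller would otherwise be free to block an action.
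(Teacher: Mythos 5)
Your proposal is correct and follows essentially the same route as the paper's proof: fix the winning strategy with $\Seq(\sigma_B) = \{\epsilon, a, b\} \cup \{a\,t, b\,t \mid t \in B\}$, pass to a bisimilar winning controller, push the firing of $a$ through the bisimulation, extend to a $\tau$-maximal controller-compatible play, and use uncontrollability of the $t_i$ (\refLemma{PtC_lb_uncon}) to identify arena-enabled with controller-extendable; your explicit two-inclusion bookkeeping, including the use of \refLemma{unncessary} to rule $a$ and $b$ out of $E \cap \transitions$, is if anything more careful than the paper's write-up. One slip to fix: the existence of the $\tau$-maximal extension $\bar u$ cannot be justified by ``$\cGame$ admits no infinite $\tau$-sequence'' --- no such assumption is available for an arbitrary strategy-equivalent control game in this direction (that hypothesis appears only in the reverse lower bound); instead, as the paper does, derive it from the controller $\varrho$ being reachability-winning and hence admitting only finite plays, so every chain of $\tau$-extensions within $\Plays(\cGame, \varrho)$ terminates.
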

\begin{proof}
	Consider the (winning) strategy $\sigma$  s.t.\ $\Seq(\sigma) = \{ \epsilon, a, b \} \cup \{ a\,t , b\,t \mid t \in B \}$ and the bisimilar (winning) controller $\varrho_\sigma$. 
	We know that $a$ is a $\sigma$-compatible sequence of transitions ($\fireTranTo{\init^{\pNet^\sigma}}{a}{M}$ for some $M$). So by assumption, there is a $\varrho_\sigma$-compatible play $u$ with $u = \tau^* a$ and $M \relation u$.
	We now extend the play $u$ as long as possible with $\tau$-actions.
	We obtain a play $u' \in \Plays(\cGame_\pGame, \varrho_\sigma)$ s.t.\ $u'$ cannot be extended by another $\tau$-action. 
	We can guarantee the existence of such a $u'$ since $\varrho_\sigma$ is winning and therefore does not admit infinite plays. \\
	It holds that $M \relation u'$.
	Since $\sigma$ does allow from $M$ exactly the transitions in $B$, $M \relation u'$ holds, and $u'$ is maximal w.r.t.\ $\tau$-actions, we know that the possible extensions of $u'$ are exactly $B$. So 
	$$u' \, t_i \in \Plays(\cGame_\pGame, \varrho_\sigma) \Leftrightarrow t_i \in B$$ 
	Since all $t_i$-actions are uncontrollable (\refLemma{PtC_lb_uncon}), we get that $u' \, t_i \in \Plays(\cGame_\pGame, \varrho_\sigma)$ if and only if $u' \, t_i \in \Plays(\cGame_\pGame)$.
	So 
	$$u' \, t_i \in \Plays(\cGame_\pGame) \Leftrightarrow t_i \in B$$ 
	The global state $s_B = \state{u'}$ hence allows exactly the $t_i$-actions that are in $B$.
\end{proof}

\noindent Consequently, there must be exponentially many \emph{global} states. 
For every strategy-equivalent control game $\cGame$ with a constant number of players, there must hence be exponentially many local states. 

\begin{theorem}
	There is a family of Petri games $\{\pGame_n\}_{n \in \mathbb{N}}$ with $\abs{\transitions^{\pGame_n}} = n$ s.t.\ every strategy-equivalent control game (with an equal number of players) must have at least $\Omega(d^n)$ local states for $d > 1$.
	\label{theo:LB_1}
\end{theorem}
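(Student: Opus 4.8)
The plan is to convert the existence statement for the global states $s_B$ (established in the lemma immediately preceding this theorem) into a counting bound on the local state space. First I would note that the assignment $B \mapsto s_B$ is injective: the set of transition-actions enabled in a global state is entirely determined by that state, so if $B_1 \neq B_2$ then $s_{B_1}$ and $s_{B_2}$ enable the distinct transition-sets $B_1$ and $B_2$ and hence must be different global states. As there are $2^n - 1$ nonempty subsets $B \subseteq \{t_1, \cdots, t_n\}$, any control game $\cGame$ strategy-equivalent to $\pGame_n$ must contain at least $2^n - 1$ pairwise distinct global states.

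Next I would bound the number of global states from above by the number of local states. A global state is an element $\{s_p\}_{p \in \processes}$ of $\prod_{p \in \processes} S_p$, so there are at most $\prod_{p \in \processes} \abs{S_p} \leq L^k$ of them, where $L = \abs{\bigcup_{p \in \processes} S_p}$ is the total number of local states and $k = \abs{\processes}$ is the number of players. The crucial point is that $k$ is a fixed constant: each $\pGame_n$ is composed of exactly two slices, and since we compare only against control games with an equal number of players, $k = 2$ independently of $n$. Combining the two bounds yields $L^k \geq 2^n - 1$, hence $L \geq (2^n - 1)^{1/k} = \Omega\big((2^{1/k})^n\big) = \Omega(d^n)$ with $d = 2^{1/k} > 1$, which is the desired exponential lower bound.

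The step I expect to carry the real weight is not this counting argument but the invocation of the uncontrollability results, in particular \refLemma{PtC_lb_uncon}. The whole argument hinges on the transitions $t_1, \cdots, t_n$ being uncontrollable, because only then is ``exactly the subset $B$ is firable'' an intrinsic property of a global state rather than something a controller could engineer by forbidding some $t_i$; if even one $t_i$ were controllable, a single global state could serve many different subsets and the counting would collapse. Granting the preceding lemmas, the remaining obstacle is purely bookkeeping: ensuring the player count stays constant so that the exponent $1/k$ does not degrade with $n$, which is guaranteed by the two-player structure of the family $\{\pGame_n\}_{n \in \mathbb{N}}$.
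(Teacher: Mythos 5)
Your proposal is correct and takes essentially the same route as the paper's proof: both derive $\Omega(2^n)$ pairwise distinct global states from the preceding lemma (the injectivity of $B \mapsto s_B$ that you make explicit is implicit there) and then convert this into an exponential local-state bound using the fixed two-player structure. The only cosmetic difference is bookkeeping — the paper bounds $\abs{S_{p_1}} \cdot \abs{S_{p_2}}$ and concludes one of the two processes has $\Omega\big((\sqrt{2})^n\big)$ local states, while you bound the total local-state count $L$ via $L^k \geq 2^n - 1$ — which yields the same $d = 2^{1/k} > 1$.
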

\begin{proof}
	There are $\Omega(2^n)$ many sets $B \subseteq \{t_1, \cdots, t_n\}$.
	By the previous lemma, any strategy-equivalent control game must hence have $\Omega(2^n)$ many global states.
	For any control game (asynchronous automaton) with two processes $p_1, p_2$, there are at most $\abs{S_{p_1}} \cdot \abs{S_{p_2}}$ many global states. 
	Hence, one of the two processes must have $\Omega((\sqrt{2})^n)$ many local states.
\end{proof}

This is the proof of \refTheo{theor3}.

\section{Artificial Deadlocks}
\label{sec:appAD}
We want to hinder a strategy from terminating early by using the newly introduced commitment sets. 
Recall our use of \Lightning-actions in \refSection{PGtoAA}:
We used uncontrollable actions to prohibit certain global configurations, namely non-deterministic ones, by moving the processes to a locked state and causing the system to lose. 
We pursue a similar approach by using (losing) transitions to prohibit configurations where the commitment sets are used to terminate early. 
The gray parts in \refFig{secReduction} describe this formalism. 
In contrast to the \Lightning-actions that were purely optional when considering deterministic strategies, the deadlock-detection mechanisms is inevitable to even allow for winning-equivalent translations.

We begin by formally defining artificial deadlocks:
Every place in $\pGame_\cGame$ (ignoring the $\badw_\DL^p$-places being introduced in this section) corresponds to a state in $\cGame$.
The correspondence is formalized by $\zeta$ as:
\begin{align*}
\zeta (s) &= s\\
\zeta ( \,(s, A)\, ) &= s
\end{align*}
We extend this definition to markings by defining for each marking $M$ a corresponding global state in the asynchronous automaton by: $\zeta (M) = \bigcup_{q \in M} \{\zeta(q) \}$.

A global state in a control game is called \emph{final}, if no further actions are possible once in that state. 
An \emph{artificial deadlock} now comprises a situation where $M$ is final even though the corresponding state $\zeta(M)$ could still act, i.e., is not final.
As $M$ is final, a strategy would be allowed to terminate in that marking even though the controller would still be required to keep playing. 
We want to hinder and penalize a strategy that reaches such a situation. 
To this extent, we equip $\pGame_\cGame$ with additional $\badw_{\DL}^p$-places that are marked as losing. 
We define the set of all artificial deadlocks by
$$\mathfrak{D}_\DL = \{ M \in \reach(\pGame_\cGame) \mid M \subseteq \places_\penv \; \land \; M \text{ is final} \; \land \; \zeta(M) \text{ is not final}\}$$
Note that every marking in $\mathfrak{D}_\DL$ contains only environment places, i.e., only places corresponding to chosen commitment sets\footnote{We remark that the definition of $\mathfrak{D}_\DL$ depends on the reachable markings in the very game we are just defining. Conceptually, we first construct the game without the deadlock-mechanism and afterwards add the gray parts corresponding to the deadlock-detection.  }. 
For every $M \in \mathfrak{D}_\DL$, we add a transition $t_{\DL}^M$ \myItem{(7)} that fires exactly from $M$ and moves every token to a losing place $\badw_{\DL}^p$ \myItem{(8)}.
It holds that 
$$ \pre{}{t_{\DL}^M} = M \;\quad \text{ and } \; \quad \post{}{t_{\DL}^M} = \{\badw_{\DL}^p \mid p \in \processes \}$$
Since all $\badw_{\DL}^p$-places are losing, a winning strategy has to guarantee that none of the $t_{\DL}^M$-transitions are enabled and therefore has to avoid all artificial deadlocks. 
We remark that this relies on the fact that all places in any marking $M \in \mathfrak{D}_\DL$ belong to the environment and are therefore unrestrictable by a strategy.
With the added deadlock detection a strategy cannot maneuver into a situation where it can terminate early, but is only allowed to end the game if the corresponding global state in $\cGame$ is final as well.

\section{Translating Control Games to Petri Games}
\label{sec:appsecondDir}

In this section, we show that our translated Petri game $\pGame_\cGame$ and the control game $\cGame$ are strategy-equivalent, i.e., prove \refTheo{theor4}.
As for the first translation, we can give an even stronger result by defining our bisimulation $\relation$ not on a concrete strategy and a concrete controller but relate markings in the unfolding with plays in the control game.
For our proofs, we assume that any winning strategy for $\pGame_\cGame$ always commits.

For our bisimulation, we pursue a similar approach as we did in Appendix~\ref{sec:firstDir}. 
There, we related a marking and a play if both describe the same situation, i.e., result from the same observable actions/transitions. 
To translate strategies and controllers, we showed that in related situations the local information of each player are identical to its counterpart in the other game and they are therefore able to copy the decisions of one another.
For our present translation, we also relate a marking $M$ and play $u$ if they describe the ``same situation''. 
More concretely, we relate $M$ and $u$ if the observable transitions in the causal past of $M$ (when identifying transitions with the corresponding actions) agree with $u$, i.e., if the poset structure of the transitions in the causal past of $M$ agree with the poset representation of~$u$. 
This captures the idea that in equivalent situations a strategy and a controller should act equivalently.

Having fixed $\relation$, we need to show that we can translate strategies and controller such that they allow bisimilar behavior from related situations.
As our commitment set constructions allows equal control possibilities, it remains to argue that the local informations of each player are preserved in related situations. 
To this extend, it is helpful to think of each token in the Petri game as a player moving along one slice and consider the game as a composition of slices. 
The game can be sliced such that every token resides only on places of exactly one process. 
A token hence takes part in precisely the transitions that correspond to actions in which the corresponding process takes part in.

As we did in our first translation, we overload notation and do not distinguish between transitions in a branching process and transitions in the original net. We are hence able to simulate sequences of original transitions in a branching process. Since our obtained Petri game is safe we again obtain a unique simulation.
We can note that for any reachable marking in the unfolding there is a one-to-one correspondence between places in the marking and processes. 
For marking $M$ and $p$, we define $\markpro{M}{p}$ as the unique place in $M$ with $\zeta(\lambda(\markpro{M}{p})) \in S_p$. 

\subparagraph{On the relation $\relation$}

We use the function $\zeta$ defined in the context of deadlock detection to map places or markings in $\pGame_\cGame$ to states or global states in $\cGame$.
Recall that $\pastt{C} = \{y \in \transitions \mid \exists x \in C, y \leq x \}$ are the transitions in the causal past of a set of places $C$.
The transitions in the causal past can either be $\tau$-transitions or of the form $(a, B, \{A_s\}_{s \in B})$. 
Since $\tau$-transitions are only added in our translation and $(a, B, \{A_s\}_{s \in B})$-transitions were only needed because of the restrictive synchronization primitives of Petri nets, we define a pointwise operation $\square$ that deletes $\tau$-transitions and maps transitions to the corresponding actions. 
\begin{align*}
\square \big(\tau_{(s, A)} \big) &= \epsilon\\
\square\big( (a, B, \{A_s\}_{s \tuplein B})\big) &=  a
\end{align*}
$\square$ can be seen as projection on the observable actions followed by a mapping to the underlying action. Here, $\epsilon$ denotes the deletion of an element. 

We can now express our informal ideas on $\relation$ properly:
\begin{tcolorbox}[colback=white]
	\centering
	$M \relation u$ iff $\square(\pastt{M}) = u$
\end{tcolorbox}

\noindent Note that we can express an equality between the causal past of a marking and a play by comparing the underlying poset representation in terms of isomorphisms. $\square(\pastt{M}) = u$ hence means that the poset of $\square(\pastt{M})$ and $u$ is equal. Note that both are labelled with~$\Sigma$.  
$\relation$ agrees with what we argued informally. Given some marking $M$ in the unfolding, $\square(\pastt{M})$ is the partially ordered set that describes the observable transitions in the causal past of $M$.
If this agrees with some play $u$ then $M$ and $u$ result from the same situation, i.e., they are reached on the same observable actions/transitions.

In our translation, we represent each local state as a place and add a transition $(a, B, \{A_s\}_{s \in B})$ exactly from preconditions that correspond to configurations from which $a$ can occur.
A transition in $\pGame_\cGame$ hence moves the tokens exactly as the corresponding actions would move the processes in $\cGame$.
We can hence see that related marking and play result in equally labelled configurations. 
\begin{lemma}
	If $M \relation u$ then $\zeta(\lambda[M]) = \state{u}$.
	\label{lem:AtoPsameState}
\end{lemma}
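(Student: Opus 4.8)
The plan is to prove the statement by induction on the number of transitions fired in the unfolding to reach $M$, exploiting that every transition of $\pGame_\cGame$ relocates tokens exactly as the corresponding action of $\cGame$ relocates the processes. The two structural facts that drive the induction are read off directly from the flow rules of \refFig{secReduction}: for a $\tau$-transition $\tau_{(s,A)}$ we have $\pre{\pGame_\cGame}{\tau_{(s,A)}} = \{s\}$ and $\post{\pGame_\cGame}{\tau_{(s,A)}} = \{(s,A)\}$, and since $\zeta(s) = \zeta((s,A)) = s$, firing a $\tau$-transition leaves $\zeta$ of the marking unchanged; for an observable transition $t$ with $\lambda(t) = (a, B, \{A_s\}_{s \in B})$, rules \textbf{\textsf{(4)}} and \textbf{\textsf{(5)}} give $\zeta(\pre{\pGame_\cGame}{\lambda(t)}) = B$ and $\zeta(\post{\pGame_\cGame}{\lambda(t)}) = \delta_a(B)$. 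This is the exact analogue of the driving fact used in the proof of \refLemma{AtoPsameState}'s counterpart \refLemma{PtoC_sameLabel} for the first direction.

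First I would fix a firing sequence $\fireTranTo{\init}{t_0, \ldots, t_{n-1}}{M}$ reaching $M$ in the unfolding and induct on $n$. In the base case $n = 0$ we have $M = \init$; since $\pastt{M} = \emptyset$, relatedness forces $u = \epsilon$, and because $\init$ maps under $\lambda$ to the initial marking $s_{in}^\AA$ of $\pGame_\cGame$ we obtain $\zeta(\lambda[\init]) = s_{in}^\AA = \state{\epsilon}$. For the step, write $\fireTranTo{\init}{t_0,\ldots,t_{n-2}}{M'} \fireTranTo{}{t_{n-1}}{M}$; in an occurrence net the predecessors of $t_{n-1}$ lie in $\pastt{M'}$, so $\pastt{M} = \pastt{M'} \cup \{t_{n-1}\}$ with $t_{n-1}$ a maximal element.

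I would then split on the type of $t_{n-1}$. If $\lambda(t_{n-1})$ is a $\tau$-transition, then $\square(\pastt{M}) = \square(\pastt{M'})$, so $M' \relation u$; the induction hypothesis gives $\zeta(\lambda[M']) = \state{u}$, and the first structural fact yields $\zeta(\lambda[M]) = \zeta(\lambda[M']) = \state{u}$. If $\lambda(t_{n-1}) = (a, B, \{A_s\}_{s \in B})$, then under $\square$ the added maximal element is labelled $a$, so $u$ decomposes as $u = u' \, a$ with $u' = \square(\pastt{M'})$, whence $M' \relation u'$ and, by induction, $\zeta(\lambda[M']) = \state{u'}$. Since $t_{n-1}$ is enabled in $M'$ its precondition lies in $\lambda[M']$, so by the second structural fact $B = \zeta(\pre{\pGame_\cGame}{\lambda(t_{n-1})}) \subseteq \state{u'}$ is precisely the $\dom(a)$-component of $\state{u'}$. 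Applying $\zeta$ to $\lambda[M] = \lambda[M'] - \pre{\pGame_\cGame}{\lambda(t_{n-1})} + \post{\pGame_\cGame}{\lambda(t_{n-1})}$ therefore replaces $B$ by $\delta_a(B)$ and leaves every other component fixed, which is exactly the effect of playing $a$ from $\state{u'}$; thus $\zeta(\lambda[M]) = \state{u' a} = \state{u}$.

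The main obstacle is the bookkeeping that connects the poset formulation of $\relation$ to the sequential induction: the relation compares labelled posets up to isomorphism, whereas the induction fires transitions one at a time. I would discharge this by noting that for a reachable marking of an occurrence net the chosen firing sequence is a linearization of $\pastt{M}$, so $\square$ applied to it is a linearization of the trace $u$; and since independent transitions in $\pGame_\cGame$ arise from actions involving disjoint processes (hence independent in $\cGame$), the resulting global state $\state{\cdot}$ is invariant under the choice of linearization. This is the analogue of \refLemma{wellDefined} and guarantees both that $\state{u}$ is well-defined and that appending the maximal transition $t_{n-1}$ to the causal past corresponds to appending $a$ to $u$, legitimizing the decomposition $u = u'\,a$ used above.
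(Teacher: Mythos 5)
Your proof is correct and follows essentially the same route as the paper's: the paper's own proof is a terse induction on the length of a totally ordered sequence (linearization) of $\pastt{M}$ using exactly your three structural facts, namely $\zeta(\pre{\pGame_\cGame}{(a,B,\{A_s\}_{s \tuplein B})}) = B$, $\post{\pGame_\cGame}{(a,B,\{A_s\}_{s \tuplein B})} = \delta_a(B)$, and $\zeta$-invariance of $\tau$-transitions. You merely spell out the details the paper leaves implicit, including the base case, the decomposition $u = u'\,a$ via maximality of the last transition, and the linearization-invariance of $\state{\cdot}$ (the analogue of \refLemma{wellDefined}), all of which are sound.
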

\begin{proof}
	By induction on the length of a totally ordered sequence of $\pastt{M}$ using the following three facts:
	$$\post{\pGame_\cGame}{(a, B, \{A_s\}_{s \tuplein B})} = \delta_a(B) \in \mathit{image}(\delta_a)$$ 
	$$\zeta(\pre{\pGame_\cGame}{(a, B, \{A_s\}_{s \tuplein B})}) = \zeta(\{(s, A_s) \mid s \in B\}) = B \in \domain(\delta_a)$$ 
	$$~\hspace{4.1cm}\zeta(\pre{\pGame_\cGame}{\tau_{(s, A)}} = \zeta(\post{\pGame_\cGame}{\tau_{(s, A)}}\hspace{4.1cm}\qedhere$$
\end{proof}

\subparagraph{Causal Information Flow}

The translated Petri game $\pGame_\cGame$ describes the global behavior of all players. 
It is nerveless helpful to view $\pGame_\cGame$ in terms of slices where one slice comprises all places added from the local states of one process. 
A token hence moves along one slice and thereby along the states of one process.
Each transition $(a, B, \{A_s\}_{s \in B})$ added in $\pGame_\cGame$ involves exactly the tokens that correspond to the processes that take part in $a$.

Now, we consider a marking $M$ and play $u$ s.t.\ $M \relation u$. By our definition, the observable transitions in the past of $M$ agree with $u$. 
We can observe that in both $M$ and $u$ the local information of a player can be seen as the minimal downward closed set that contains all transitions/actions where the player is involved in directly. 
This goes well with the idea that both game types rely on causal information. 
The partial order organizes the events in time. As each communication transmits everything, the entire previous execution is transmitted comprising all causally preceding events. 
The local view of a player hence comprises all transitions/actions it is involved in directly as well as all causally preceding ones resulting in a downward closed set.
We can state:

\begin{lemma}
	\label{lem:identicalView}
	If $M \relation u$ and $p \in \processes$ then 
	$$\view_p(u) = \square(\pastt{\markpro{M}{p}}) = \view_p(\square(\pastt{\markpro{M}{p}}))$$
\end{lemma}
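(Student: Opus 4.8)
The plan is to exploit the slice structure of $\pGame_\cGame$ together with the poset definition of $\relation$. Viewing $\pGame_\cGame$ as a composition of slices, the token of process $p$ follows a single trajectory whose places alternate between a system place $s$, a commitment place $(s,A)$ reached by a local $\tau_{(s,A)}$-transition, and the next system place reached by an observable transition $(a,B,\{A_s\}_{s \tuplein B})$ (cf.\ rules \textbf{\textsf{(4)}}--\textbf{\textsf{(6)}} in \refFig{secReduction}); an observable transition involves $p$ exactly when $p \in \dom(a)$, i.e.\ when its action lies in $\Sigma_p$. Of the three claimed equalities, the right-hand one ($\square(\pastt{\markpro{M}{p}}) = \view_p(\square(\pastt{\markpro{M}{p}}))$) expresses that the observable causal past of $\markpro{M}{p}$ is already its own $p$-view, while the left-hand one is the substantive link to $u$. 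I would first prove a structural fact about a single place, then derive the left equality by two inclusions, reading causal order as trace order through the isomorphism $\square(\pastt{M}) = u$.

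The structural fact is: for $q = \markpro{M}{p}$, the labelled poset $\square(\pastt{q})$ has a unique maximal element, and this element is an action from $\Sigma_p$. If $q$ is initial then $\pastt{q} = \emptyset$ and all three expressions are the empty trace. Otherwise, since the unfolding of $\pGame_\cGame$ is an occurrence net, $q$ has a unique ingoing transition $t_q$, and every transition $t' \leq q$ satisfies $t' \leq t_q$ (the only cover of $q$ is $t_q \lessdot q$). If $q$ is a system place, then $t_q$ is observable and, moving $p$'s token, involves $p$, so $t_q$ is the required maximal $\Sigma_p$-action. If $q$ is a commitment place $(s,A)$, then $t_q = \tau_{(s,A)}$ is local to $p$ and deleted by $\square$; its unique predecessor place is the system place $s$, whose ingoing transition $t_s$ is observable and involves $p$, and every observable $t' \in \pastt{q}$ satisfies $t' \leq t_s$ by the same path argument. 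In both cases every element of $\square(\pastt{q})$ lies below a single $\Sigma_p$-action, so by the characterisation of $\view_p$ as the downward closure of the $\Sigma_p$-actions, $\view_p(\square(\pastt{q})) = \square(\pastt{q})$, which is the right equality.

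For the left equality I would show that $\view_p(u)$ and $\square(\pastt{q})$ coincide as labelled posets, identifying their elements through the isomorphism $\square(\pastt{M}) = u$ supplied by $M \relation u$; crucially, this isomorphism turns the causal order $\leq$ on the observable transitions of $\pastt{M}$ into the trace order on $u$. For the inclusion $\square(\pastt{q}) \subseteq \view_p(u)$, every element of $\square(\pastt{q})$ lies below the maximal $\Sigma_p$-action found above, hence belongs to the downward closure of the $\Sigma_p$-actions of $u$, i.e.\ to $\view_p(u)$. For the reverse inclusion, any action $a'$ of $\view_p(u)$ precedes some $\Sigma_p$-action $a''$ of $u$; this $a''$ corresponds to an observable transition $t''$ involving $p$, which therefore lies on $p$'s token trajectory and satisfies $t'' \leq q$, and $a' \leq a''$ in $u$ transports to $t' \leq t'' \leq q$ in the net, so $t' \in \pastt{q}$ and $a' = \square(t') \in \square(\pastt{q})$. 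As both orders are restrictions of one and the same (causal $=$ trace) order, the two labelled posets are equal.

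The main obstacle is the structural fact, specifically pinning down that the causal past of a single place has a unique maximal observable transition that involves $p$, and correctly handling the case where $q$ is a commitment place so that its incoming transition is an invisible $\tau_{(s,A)}$ and the maximal observable transition sits one step earlier, on the feeding system place. A secondary but essential care point is the bookkeeping between the two representations of causal dependence: matching the ``shortest $v$ with $u \sim_I v\,w$ and $w$ free of $\Sigma_p$'' definition of $\view_p$ with the downward-closure characterisation, and transporting the net order to the trace order through the poset isomorphism underlying $\relation$ (this last step is exactly what \refLemma{AtoPsameState} provides at the level of global states, and which I extend here to the full causal past).
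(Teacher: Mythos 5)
Your proof is correct and takes essentially the same route as the paper's: it reads $M \relation u$ as the labelled-poset isomorphism $\square(\pastt{M}) = u$, uses the downward-closure characterisation of $\view_p$, and identifies $\view_p(u)$ with $\square(\pastt{\markpro{M}{p}})$ via the unique ingoing transition of $\markpro{M}{p}$ (occurrence-net structure) together with the fact that $p$'s token takes part in exactly the transitions corresponding to $\Sigma_p$-actions, all of which therefore lie below $\markpro{M}{p}$. If anything, your explicit case split for a commitment place --- whose ingoing transition is an invisible $\tau_{(s,A)}$, so the maximal \emph{observable} transition sits one step earlier on the feeding system place --- is more careful than the paper's own proof, which tacitly treats the ingoing transition as corresponding to a $\Sigma_p$-action.
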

\begin{proof}
	Form $M \relation u$, it follows that $\square(\pastt{M}) = u$ where both are considered as partially ordered sets. \\
	We first consider $\view_p(u)$:	We already argued that in the poset representation, $\view_p(u)$ is the smallest downward closed subset of $u$ that contains all actions from $\Sigma_p$.\\
	Now consider $\pastt{\markpro{M}{p}}$:
	As $\pGame^\mathfrak{U}$ is an occurrence net, there is a unique transition $t \in \pre{}{\markpro{M}{p}}$. It holds that $\pastt{\markpro{M}{p}} = \pastt{t}$.
	The token on place $\markpro{M}{p}$ moves along the places that correspond to process $p$ and thereby takes part in all transitions that correspond to an action in $\Sigma_p$. 
	All transitions in $\pastt{\markpro{M}{p}}$ that correspond to an action in $\Sigma_p$ are hence causally related to $\markpro{M}{p}$. Since transition $t$ also corresponds to an action in $\Sigma_p$ and $\pastt{\markpro{M}{p}} = \pastt{t}$ we can characterize $\pastt{\markpro{M}{p}}$ as the smallest downward closed subset of $\pastt{M}$ that contains all transitions corresponding to actions from $\Sigma_p$. \\
	The minimal downward closed subset is unique. As both $\view_p(u)$ and $\pastt{M}$ describe the minimal downward closed subset containing all actions in $\Sigma_p$ and all transitions corresponding to actions in $\Sigma_p$, respectively, it holds that 
	$$ \view_p(u) = \square(\pastt{\markpro{M}{p}})$$
	We can then conclude $ \view_p(u) = \view_p(\square(\pastt{\markpro{M}{p}}))$ as $\view_p(\cdot)$ is idempotent. 
\end{proof}

\refLemma{identicalView} states that our definition of $\relation$ does not only capture the global configurations of both game types (as stated in \refLemma{AtoPsameState}) but preserves causal information. 
In by $\relation$-related situations $M$ and $u$, the causal past of the place in $M$ that belongs to some process $p$ ($\markpro{M}{p}$) agrees with the local view of $p$ on $u$ (after applying $\square$).  
Note that in general $\pastt{\markpro{M}{p}} \neq \pastt{M}$.

\begin{figure}[t!]
	\begin{tcolorbox}[colback=white, colframe=myYellow, arc=3mm, boxrule=1mm]
		We create places for the initial marking $\init^{\sigma_\varrho}$ and extend $\lambda$ s.t.\ $\lambda[\init^{\sigma_\varrho}] = \init^\pNet$.
		
		We then iterate:
		\begin{itemize}
			\item For every system place $q$ with no outgoing transitions: 
			$q$ belongs to a process, i.e., $\lambda(q) \in S_p$ for process $p$. \\
			Compute $u = \square(\pastt{q})$ in the already constructed strategy.
			
			We assume that $u$ is a $\varrho$-compatible play and $\statep{p}{u} = \lambda(q)$ \myItem{(A)}.
			Now define $\mathds{E} = f^\varrho_p(\view_p(u)) \subseteq \Sigma^{\sys}_p$.
			Add a transition $t'$ with $\lambda(t') = \tau_{(\statep{p}{u}, \mathds{E})}$, a place $q'$ with $\lambda(q') = (\statep{p}{u}, \mathds{E})$, and the flow s.t.\ $q \in \pre{\pNet^{\sigma_\varrho}}{t'}$ and $q' \in \post{\sigma_\varrho}{t'}$.\\
			Afterwards, continue with a new unprocessed marking.
			
			\item For every set of concurrent environment places $C$ with $\lambda[C] = \pre{\pGame_\cGame}{t}$ for some~$t$, we add a new copy of $t$ and places for the postcondition (if these nodes did not already exist).
		\end{itemize}
	\end{tcolorbox}
	
	\caption{Construction of strategy $\sigma_\varrho$ for $\pGame_\cGame$  that is build from controller $\varrho$ for $\cGame$}
	\label{fig:CtoP_translation1}
\end{figure}

\subsection{Translating Controllers to Strategies}

In this section, we provide a formal translation of strategies for $\pGame_\cGame$ to controllers for $\cGame$. Given a winning controller $\varrho$ for $\cGame$, we define a strategy $\sigma_\varrho$ for $\pGame_\cGame$. The strategy construction is depicted in \refFig{CtoP_translation1}.

$\sigma_\varrho$ does what we sketched informally. It is build incrementally. 
We start by creating a branching process that only contains the initial marking and incrementally add more and more places and transitions.
For every system place $q$ in the partially constructed strategy, we need to add an environment place that represents a commitment set. 
To decide which to choose, we apply $\square$ to the causal past of $q$, i.e., transform the transitions in the past to a trace of actions. 
There is a process $p$ that corresponds to $q$. 
The play obtained from the causal past is then given to the local controller of this process which decides for a set of controllable actions $\mathds{E}$.
Then, $q$ copies this decision by adding the commitment set that contains exactly these actions. 
As soon as our construction adds a new system place, we can hence choose a commitment set for that place. 
Apart from the $\tau$-transitions used to choose commitment sets, no observable transitions involve any system place.
To add them, we hence consider every set of pairwise concurrent places $C$ and add all transitions leaving from there. 
The construction proceeds by choosing commitment sets for every system place and afterwards adding all transitions possible from these commitment sets. 

\begin{lemma}
	$\sigma_\varrho$ is a deterministic, deadlock-avoiding strategy that always commits. 
	\label{lem:PtC_dlI}
\end{lemma}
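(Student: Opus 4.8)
The plan is to verify the four properties—that $\sigma_\varrho$ is a branching process satisfying justified refusal, that it always commits, that it is deterministic, and that it is deadlock-avoiding—directly from the incremental construction in \refFig{CtoP_translation1}, without yet invoking the bisimulation $\relation$. First I would observe that $\sigma_\varrho$ is a branching process of $\pGame_\cGame$: exactly as in the analogous lemma of the first direction, the incremental construction yields an occurrence net whose labelling $\lambda$ is an initial homomorphism that is injective on transitions sharing a precondition, since each freshly created place receives a unique incoming transition and no self-conflict is introduced.

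For justified refusal I would exploit the crucial structural fact about $\pGame_\cGame$ (cf.\ rules \textbf{\textsf{(4)}}--\textbf{\textsf{(6)}} and \textbf{\textsf{(8)}} in \refFig{secReduction}): the only transitions consuming a token from a system place $s \in \bigcup_{p} S_p$ are the $\tau_{(s,A)}$-transitions, whereas every observable transition $(a, B, \{A_s\}_{s \in B})$ and every deadlock transition $t_\DL^M$ consumes only from environment places (commitment-set places, respectively places of an artificial deadlock $M \subseteq \places_\penv$). The second bullet of the construction adds a copy of each such environment-only transition whenever its precondition occurs as a set of pairwise concurrent places, so the first disjunct of justified refusal is met for them. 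For a system place $q$ the construction adds precisely one $\tau$-transition, namely the one for $\mathds{E} = f^\varrho_{p}(\view_p(u))$; any other $\tau_{(\lambda(q), A')}$ with $A' \neq \mathds{E}$ is uniformly refused by $q$, which is the second disjunct. Hence justified refusal holds and $\sigma_\varrho$ is a strategy. The same observation yields at once that $\sigma_\varrho$ always commits—every system place obtains its committing $\tau$-transition—and that it is deterministic, since each system place has exactly one outgoing transition, so $|\post{\pNet^{\sigma_\varrho}}{q}| = 1$ and at most one transition out of $q$ can be enabled in any reachable marking.

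The remaining and main obstacle is deadlock-avoidance. I would take a reachable final marking $M$ of $\sigma_\varrho$ and show $\lambda[M]$ is final in $\pGame_\cGame$. Suppose not; then some transition $t$ is enabled in $\lambda[M]$, and I distinguish three cases according to its type. If $t = \tau_{(s,A)}$, then $\lambda[M]$ contains the system place $s$, so $M$ contains a place $q$ with $\lambda(q) = s$; but the construction added a $\tau$-transition out of $q$ consuming only $q$, hence enabled in $M$, contradicting finality. If $t$ is an observable transition or a deadlock transition $t_\DL^{\lambda[M]}$, its precondition consists solely of environment places, and since the places of a marking of a branching process are pairwise concurrent, the subset $C \subseteq M$ with $\lambda[C] = \pre{}{t}$ triggers the second bullet of the construction, so a copy of $t$ enabled in $M$ exists, again contradicting finality. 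Thus every final marking of $\sigma_\varrho$ maps to a final marking of $\pGame_\cGame$, so $\sigma_\varrho$ is deadlock-avoiding. I expect the deadlock-transition case to be the subtle point: it is exactly the interaction between the always-commit property and the deadlock-detection mechanism of Appendix~\ref{sec:appAD} that rules out early termination, because an artificial deadlock $\lambda[M] \in \mathfrak{D}_\DL$ can never be a genuine stopping point of $\sigma_\varrho$—the uncontrollable $t_\DL^{\lambda[M]}$-transition has no system place in its precondition and is therefore forced into the strategy.
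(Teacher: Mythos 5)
Your proposal is correct and follows essentially the same route as the paper's proof: verify the branching-process property, observe that only local $\tau$-transitions leaving system places are ever omitted (so justified refusal holds), note that exactly one commitment set per system place yields determinism and the always-commits property, and derive deadlock-avoidance from the latter. Your three-case analysis of an enabled transition in $\lambda[M]$ (including the $t_\DL$-case, which is forced into the strategy by its environment-only precondition) merely makes explicit the step the paper compresses into ``As $\sigma_\varrho$ always commits, it is also deadlock-avoiding,'' and it is sound.
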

\begin{proof}
	It is easy to see that the constructed net is a branching process of $\pGame_\cGame$.
	The only transitions that might not be added are local $\tau$-transitions. Since they are local and leave a system place we can refuse to add them without violating justified refusal.
	From any commitment set, the construction adds all transitions possible from this set, i.e., it does not restrict any transitions that can occur from the commitment sets. 
	The constructed $\sigma_\varrho$ is hence a strategy. 
	For every system place, we add exactly one commitment set. $\sigma_\varrho$ is therefore deterministic and always chooses a commitment set. 
	As $\sigma_\varrho$ always commits, it is also deadlock-avoiding. 
\end{proof}

\paragraph*{Strategy-Equivalence}

Having constructed $\sigma_\varrho$, we can prove it strategy-equivalent to $\varrho$.
For our bisimulation, we use the previously defined $\relation$ and restrict it to the reachable markings in $\pNet^{\sigma_\varrho}$ and plays compatible with $\varrho$. 
The previous lemmas (\refLemma{AtoPsameState} and \refLemma{identicalView}), established for the unfolding, extend to the restricted version.

The definition of $\sigma_\varrho$ and $\relation$ are, on their own, completely independent. 
\refLemma{identicalView}, however, established an important connection between both:
Assume that $M \relation u$. 
By definition, it holds that $\square(\pastt{M}) = u$. 
In our construction of the strategy, every system place $q$ in $M$ decides what commitment set to choose by constructing a play from its causal past and copying the decision of $\varrho$. 
According to \refLemma{identicalView}, the computation of $\view_p(\square(\pastt{q}))$ (as done in the definition of a strategy) agrees with $\view_p(u)$. 
We hence conclude that in $\relation$-related situations the places in $M$ copy the decision made by $\varrho$ on $u$. 
We can argue in both directions:
\begin{itemize}
	\item Suppose that there is a transition $(a, B, \{A_s\}_{s \in B})$ enabled in $M$. Transition $a$ is either controllable or uncontrollable. If uncontrollable, $u$ can be extended by $a$ as it is independent of the controller and the state reached on $u$ agrees with $\lambda[M]$ (\refLemma{AtoPsameState}). If controllable, all places $q$ involved in $(a, B, \{A_s\}_{s \in B})$ represent commitment sets that contain $a$. The commitment set of a place $q$ was chosen by computing $\view_p(\square(\pastt{q}))$ for the respective process $p$ and defining the commitment set as the set of actions allowed by the controller. By \refLemma{identicalView}, this is however identical to $\view_p(u)$.
	Since $a$ is included in the commitment set of all involved places we can deduce that $a$ must have been allowed by the controller of each involved process. We get that $u \, a \in \Plays(\cGame, \varrho)$.
	
	\item Suppose $u \, a \in \Plays(\cGame, \varrho)$. Then $a$ is either uncontrollable or controllable. We first move all tokens from $M$ to a commitment set to be able to execute observable transitions. We call this marking $M'$.
	In case of $a$ being uncontrollable, we can deduce that a transition corresponding to $a$ is possible from $M'$. 
	If $a$ is controllable we observe that all processes involved in $a$ allowed $a$. By \refLemma{AtoPsameState}, every place in $M'$ involved in $a$ has chosen its commitment set in accordance with the controller's decision on $u$. As all involved processes  allowed $a$, it is included in all commitment sets. Hence, there is a transition corresponding to $a$ enabled in $M'$. 
\end{itemize}

\noindent 
We can now give formal proofs: 
We begin by showing that it suffices to show that in $\relation$-related situations the same actions/transitions are possible. That is, if $M \relation u$ and we extend $M$ and $u$ by the same action/transition we obtain markings and plays that are again related\footnote{For the first translation, we did not need such a results as we defined $\relation$-directly in terms of firing the actions in the branching process of a strategy. Extending a related marking and play with the same action/transition hence automatically resulted in related situation. }. 

\begin{lemma}
	If $M \relation u$ and $\fireTranTo{M}{\tau^* (a, B, \{A_s\}_{s \in B}) \tau^*}{M'}$ for some $M'$ and $u' = u \, a$ then $M' \relation u'$.
	\label{lem:progress}
\end{lemma}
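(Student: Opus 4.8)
The plan is to prove the poset equality $\square(\pastt{M'}) = u\,a$ by splitting the firing sequence $\fireTranTo{M}{\tau^* (a, B, \{A_s\}_{s \in B}) \tau^*}{M'}$ into its three segments and tracking how each affects the observable causal past. I would name the intermediate markings $M_1$ and $M_2$ with $\fireTranTo{M}{\tau^*}{M_1}$, $\fireTranTo{M_1}{(a, B, \{A_s\}_{s \in B})}{M_2}$, and $\fireTranTo{M_2}{\tau^*}{M'}$, and then treat the two $\tau$-blocks and the single observable step separately.

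First I would dispose of the $\tau$-blocks. Firing a single $\tau_{(s,A)}$-transition is local: it consumes the token on $s$ and produces one on $(s,A)$, so the causal past of the produced token is the past of $s$ together with the $\tau$-transition itself, while every other token's past is untouched. Since $\square$ deletes all $\tau$-transitions, firing a $\tau$-transition leaves $\square(\pastt{\cdot})$ invariant as a labeled poset. Iterating over the leading and trailing blocks gives $\square(\pastt{M_1}) = \square(\pastt{M}) = u$ and $\square(\pastt{M'}) = \square(\pastt{M_2})$. Hence $M_1 \relation u$, and the claim reduces to showing $M_2 \relation u\,a$.

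The crux is the single observable step $\fireTranTo{M_1}{(a, B, \{A_s\}_{s \in B})}{M_2}$. By the flow definition (rule \textbf{\textsf{(4)}} in \refFig{secReduction}) this transition consumes exactly the committed places $\{(s, A_s) \mid s \in B\}$, so after applying $\square$ it contributes exactly one new element labeled $a$, lying causally above $\square(\pastt{\{(s, A_s) \mid s \in B\}})$ and otherwise incomparable to the observable transitions already present in $u$. The hard part will be to identify this set of predecessors with the causal predecessors of the appended action $a$ in the trace $u\,a$. For this I would invoke \refLemma{identicalView}: since $M_1 \relation u$, for each process $p$ and the local state $s \in S_p$ committed in $B$ we have $\square(\pastt{(s, A_s)}) = \square(\pastt{\markpro{M_1}{p}}) = \view_p(u)$, where the $\tau$-transition that created $(s,A_s)$ again vanishes under $\square$. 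As $B \in \domain(\delta_a)$ consists of exactly one local state per process $p \in \dom(a)$, the set of predecessors of the new $a$-element is $\bigcup_{p \in \dom(a)} \view_p(u)$, which is precisely the strict down-set of the occurrence of $a$ in $u\,a$ by the trace-theoretic characterization of local views recalled in the preliminaries. Thus $\square(\pastt{M_2})$ is $u$ extended by a single maximal $a$-labeled element with exactly the right causal dependencies, i.e.\ the labeled posets $\square(\pastt{M_2})$ and $u\,a$ are isomorphic, so $M_2 \relation u\,a$.

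Combining the three segments yields $\square(\pastt{M'}) = \square(\pastt{M_2}) = u\,a = u'$, which is exactly $M' \relation u'$. The only genuine subtlety, beyond the bookkeeping that $\square$ annihilates every $\tau$-transition, is the poset matching in the observable step; this is where \refLemma{identicalView} does the essential work by pinning down the predecessors of the freshly fired transition to the union of the relevant local views.
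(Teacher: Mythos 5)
Your proof is correct and takes essentially the same route as the paper's: the $\tau$-transitions are local and vanish under $\square$, while the single observable transition $(a, B, \{A_s\}_{s \tuplein B})$ adds one maximal $a$-labeled element whose causal predecessors are exactly the transitions of the tokens corresponding to processes in $\dom(a)$, matching the dependencies of the appended action in $u\,a$. The paper compresses this into three sentences arguing the dependency matching directly from the construction; your explicit decomposition into $M_1, M_2$ and the appeal to \refLemma{identicalView} to identify the strict down-set of the new element with $\bigcup_{p \in \dom(a)} \view_p(u)$ is a more detailed formalization of the same argument, not a different one.
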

\begin{proof}
	From $M \relation u$, we conclude that $\square(\pastt{M}) = u$.
	All $\tau$-transitions are local, i.e., involve only one place and hence do not add any dependencies in the poset of $\pastt{M'}$. An action $a$ adds dependency, i.e., a causal relation, to all actions from processes in $\dom(a)$. As by construction, $(a, B, \{A_s\}_{s \in B})$ involves places of tokens that correspond to $\dom(a)$ it induces a dependency to the transitions that belong to actions where processes in $\dom(a)$ are involved in. It hence holds that $\square(\pastt{M'}) = u'$.
\end{proof}

\noindent We can now formally prove bisimilarity.
$\pGame_\cGame$ comprises additional $t_{\DL}^M$-transitions used to detect artificial deadlocks. 
We ignore them for our bisimulation proofs.
We later show that they are indeed never enabled if we construct $\sigma_\varrho$ from a deadlock-avoiding controller $\varrho$.

\begin{lemma}
	\label{lem:AtoPl1}
	If $M \relation u$ and $\fireTranTo{M}{ (a, \_, \_) }{M'}$ for some $M' \in \reach(\pNet^{\sigma_\varrho})$ then $u' = u \, a \in \Plays(\cGame, \varrho)$ and $M' \relation u'$.
\end{lemma}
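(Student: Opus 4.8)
The plan is to prove the two conclusions in turn. Showing $M' \relation u'$ will be the easy half: the hypothesis $\fireTranTo{M}{(a,\_,\_)}{M'}$ is precisely the special case of \refLemma{progress} in which both $\tau$-sequences are empty, so once I know that $u' = u\,a$ is a genuine play that lemma delivers $M' \relation u'$ immediately. Hence essentially all the work lies in establishing $u\,a \in \Plays(\cGame, \varrho)$, and I would organize the argument around first producing plain enabledness $u\,a \in \Plays(\cGame)$ and then promoting it to $\varrho$-compatibility.

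First I would show $u\,a \in \Plays(\cGame)$. The fired transition has the shape $(a, B, \{A_s\}_{s \in B})$, so its precondition in $\pNet^{\sigma_\varrho}$ is the set of places whose $\lambda$-labels are $(s, A_s)$ for $s \in B$, and applying $\zeta$ to these labels returns exactly $B$. Since this precondition is contained in $M$ and, by \refLemma{AtoPsameState}, $\zeta(\lambda[M]) = \state{u}$, I obtain $B \subseteq \state{u}$. Because the construction of $\pGame_\cGame$ only introduces transitions $(a, B, \{A_s\}_{s \in B})$ with $B \in \domain(\delta_a)$, action $a$ is enabled in $\state{u}$, giving $u\,a \in \Plays(\cGame)$.

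Next I would split on the controllability of $a$. If $a \in \Sigma^\env$, clause (1) in the definition of $\Plays(\cGame, \varrho)$ makes $u\,a$ compatible with $\varrho$ with no further work. The substantive case is $a \in \Sigma^\sys$, where clause (2) demands $a \in f^\varrho_p(\view_p(u))$ for every $p \in \dom(a)$. By construction rule \textbf{\textsf{(2)}} the transition exists only if $a \in A_s$ for each $s \in B$, so it suffices to identify, for the process $p$ with $s \in S_p$, the commitment set $A_s$ carried by the precondition place $\markpro{M}{p}$ with $f^\varrho_p(\view_p(u))$. This is exactly where \refLemma{identicalView} is used: the construction of $\sigma_\varrho$ fixes $A_s = f^\varrho_p(\view_p(\square(\pastt{q})))$ for the system place $q$ immediately preceding $\markpro{M}{p}$, and since the intervening $\tau$-transition is erased by $\square$, we have $\square(\pastt{\markpro{M}{p}}) = \square(\pastt{q})$; \refLemma{identicalView} then equates $\view_p(\square(\pastt{\markpro{M}{p}}))$ with $\view_p(u)$, yielding $A_s = f^\varrho_p(\view_p(u)) \ni a$. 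Running this over all $p \in \dom(a)$ discharges clause (2), so $u\,a \in \Plays(\cGame, \varrho)$, and \refLemma{progress} finishes with $M' \relation u'$.

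The main obstacle I anticipate is the bookkeeping in the controllable case: correctly identifying the precondition place $\markpro{M}{p}$ with the system place $q$ and the $\tau$-step that produced it during the incremental construction of \refFig{CtoP_translation1}, and justifying that $\square$ deletes that $\tau$-step so that the two causal pasts coincide before \refLemma{identicalView} is applied. Everything else is a direct appeal to \refLemma{AtoPsameState}, \refLemma{identicalView}, and \refLemma{progress} together with the construction rules of $\pGame_\cGame$; I would also remark, as noted just before the lemma, that the deadlock transitions $t^M_\DL$ play no role here and are ignored for the bisimulation.
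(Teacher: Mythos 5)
Your proposal is correct and follows essentially the same route as the paper's proof: enabledness of $a$ in $\state{u}$ via \refLemma{AtoPsameState} and the shape of the constructed transitions, a case split on controllability, and in the controllable case the identification of each commitment set $A_s$ with $f^\varrho_p(\view_p(u))$ via \refLemma{identicalView} and the one-$\tau$-step difference between the pasts of $\markpro{M}{p}$ and its predecessor, before closing with \refLemma{progress}. The only deviation is cosmetic: you argue the identification $A_s = f^\varrho_p(\view_p(u))$ directly for every $p \in \dom(a)$, whereas the paper phrases the same computation as a proof by contradiction from a hypothetical process refusing $a$.
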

\begin{proof}
	If $M \relation u$ then $\zeta(\lambda[M]) = \state{u}$ \myItem{(1)} (by \refLemma{AtoPsameState}).
	We first remark that $a$ is a possible extension of $u$ in the underlying game arena, i.e, $u \, a \in \Plays(\cGame)$.
	This follows from 
	$$\pre{\pNet}{(a, B, \{A_s\}_{s \in B})} \subseteq \lambda[M]$$
	$$\zeta(\pre{\pNet}{(a, B, \{A_s\}_{s \tuplein B})}) = B \in \domain(\delta_a)$$ 
	and \myItem{(1)}.
	We now show that $u' = u\, a \in \Plays(\cGame, \varrho)$. $M' \relation u'$ follows from \refLemma{progress}.
	We distinguish two cases:
	\begin{itemize}
		\item If $a \in \Sigma^{\env}$:
		Since $a$ is enabled and uncontrollable, and $u\, a \in \Plays(\cGame)$ it follows that $u\, a \in \Plays(\cGame, \varrho)$, i.e, $u \,a $ is a $\varrho$-compatible play.
		
		\item If $a \in \Sigma^{\sys}$:
		We know that $a$ is enabled in $\state{u}$.
		By our construction of $\pGame_\cGame$, because $a \in \Sigma^{\sys}$, and since $\fireTranTo{M}{ (a, B, \{A_s\}_{s \tuplein B}) }{M'}$, we can furthermore conclude that $a \in A_s$ holds for all $s \in B$ \myItem{(2)}.
		
		We assume for contradiction that $u\, a \not\in \Plays(\cGame, \varrho)$. Then there is a process $p \in \dom(a)$ s.t.\ $a \not\in f^\varrho_p(\view_p(u))$.
		We derive the contradiction by showing that the set of allowed actions agrees with one of the commitment sets in $M$ which, by \myItem{(2)}, contains $a$. 
		
		Since $(a, B, \{A_s\}_{s \in B})$ is enabled in $M$ and because of \myItem{(1)}, we conclude that for the place $q = \markpro{M}{p}$ it holds that $\lambda(q) = (\statep{p}{u}, A_{\statep{p}{u}})$. Since $p \in \dom(a)$, we conclude that $\statep{p}{u} \in B$ and from \myItem{(2)} we get that $a \in A_{\statep{p}{u}}$. The place that belongs to process~$p$ has chosen a commitment set that includes $a$.
		
		We can now observe how this commitment set was chosen. Let $q'$ be the predecessor (system) place of $q$, i.e., the place in the strategy from which we added $q$ as a committent set. When considering the construction of $\sigma_\varrho$, we observe that $q'$ computed what commitment set to choose by applying $\square$ to its causal past. It follows by definition: 
		\begin{align*}
		A_{\statep{p}{u}} &= f^\varrho_p(\view_p(\square(\pastt{q'})))\\
		&= f^\varrho_p(\view_p(\square(\pastt{q})))
		\end{align*}
		where the last equality holds since the causal past of $q$ ad $q'$ only differ by one $\tau$-transition. 
		
		By \refLemma{identicalView}, we get that 
		\begin{align*}
			\view_p(u) = \view_p(\square(\pastt{q}))
		\end{align*}
		We can conclude \begin{align*}
		A_{\statep{p}{u}} &= f^\varrho_p(\view_p(\square(\pastt{q})))\\
		&= f^\varrho_p(\view_p(u))
		\end{align*}
		The commitment set encoded in $q$ ($A_{\statep{p}{u}}$), i.e, the set place $q'$ decided to add, agrees with the decision of $p$ on $u$. 
		This is a contradiction to $a \in A_{\statep{p}{u}}$ and our assumption $a \not\in f^\varrho_p(\view_p(u))$.\qedhere
	\end{itemize}
\end{proof}

\begin{lemma}
	If $M \relation u$ and $\fireTranTo{M}{ \tau }{M'}$ then $M' \relation u$.
	\label{lem:AtoPl2}
\end{lemma}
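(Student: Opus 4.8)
The plan is to read off the claim directly from the definition of $\relation$ (namely $M \relation u$ iff $\square(\pastt{M}) = u$) together with the fact that $\square$ erases exactly the $\tau$-transitions. The only $\tau$-transitions occurring in $\pGame_\cGame$ are the local commitment-choosing transitions $\tau_{(s, A)}$ (rules \textbf{\textsf{(3)}} and \textbf{\textsf{(6)}} of \refFig{secReduction}), so the sole point to establish is that firing such a transition leaves the \emph{observable} causal past unchanged.

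First I would describe the effect of the step $\fireTranTo{M}{\tau}{M'}$ on the causal past. Such a step fires one instance $t_\tau$ of some transition $\tau_{(s, A)}$; it consumes the single token sitting on the place labelled $s$ and produces a token on the place labelled $(s, A)$, while every other token of $M$ stays put. Writing $q_s$ for the consumed place and $q_{(s,A)}$ for the produced one, we have $q_s \lessdot t_\tau \lessdot q_{(s,A)}$, hence $\pastt{q_{(s,A)}} = \pastt{q_s} \cup \{t_\tau\}$, and since all other places of $M'$ coincide with places of $M$ I would conclude $\pastt{M'} = \pastt{M} \cup \{t_\tau\}$. Moreover $t_\tau$ is a maximal element of the causal order on $\pastt{M'}$, as no transition lying above it has fired yet (the token it produced still rests on $q_{(s,A)} \in M'$).

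Next I would apply $\square$. Because $\square(t_\tau) = \epsilon$, deleting $t_\tau$ removes it from the poset without affecting any of the remaining, observable transitions: the set of observable transitions in $\pastt{M'}$ is identical to that in $\pastt{M}$, and because $t_\tau$ is maximal, erasing it cannot alter any order relation between two observable transitions. Hence $\square(\pastt{M'}) = \square(\pastt{M})$ as labelled posets. Combining this with the hypothesis $\square(\pastt{M}) = u$ gives $\square(\pastt{M'}) = u$, which is exactly $M' \relation u$.

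I expect no real obstacle here: this is the $\tau$-counterpart of \refLemma{AtoPl1} and follows by unfolding the definition of $\relation$, just as \refLemma{PtoAl2} and \refLemma{PtoAl6} did for the first translation. The only line deserving a word of justification is that adjoining the local transition $t_\tau$ introduces no spurious causal dependency among the observable transitions — which holds precisely because $t_\tau$ is local (it touches a single slice) and maximal in $\pastt{M'}$, so removing it restores the poset of $\square(\pastt{M})$ verbatim.
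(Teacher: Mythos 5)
Your proof is correct and takes exactly the paper's route: the paper dispatches this lemma with the one-line remark that it is an ``obvious consequence from the definition of $\relation$,'' and your argument --- that firing a local $\tau$-transition merely adjoins to $\pastt{M}$ a single element that $\square$ erases, so $\square(\pastt{M'}) = \square(\pastt{M}) = u$ --- is precisely the unfolding of that remark. Your extra observation about maximality of $t_\tau$ is sound but not even needed, since restricting the causal order $\leq$ to the observable transitions already yields the same induced labelled poset regardless of where the deleted element sits.
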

\begin{proof}
	Obvious consequence from definition of $\relation$.
\end{proof}

\noindent Using the two previous lemmas, we already show that our assumption \myItem{(A)} in the strategy construction is justified: 
\begin{corollary}
	For any place $q \in \pNet^{\sigma_\varrho}$ that belongs to process $p$, $u = \square(\pastt{q})$ is a $\varrho$-compatible play and $\statep{p}{u} = \lambda(q)$.
\end{corollary}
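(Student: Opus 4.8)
The plan is to reduce the corollary to the bisimulation lemmas already established, using \refLemma{identicalView} to identify the reconstructed play $\square(\pastt{q})$ with a local view. First I would fix a reachable marking $M$ of $\pNet^{\sigma_\varrho}$ with $q \in M$; such an $M$ exists since every place of an occurrence net occurs in some reachable marking. Starting from the base case $\init^{\sigma_\varrho} \relation \epsilon$ (which holds because the initial marking has empty causal past, so $\square(\pastt{\init^{\sigma_\varrho}}) = \epsilon$), I would walk along a firing sequence $\init^{\sigma_\varrho} \to \cdots \to M$ and apply \refLemma{AtoPl1} to each observable transition and \refLemma{AtoPl2} to each $\tau$-transition. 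This yields a $\varrho$-compatible play $u'$ with $M \relation u'$.

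Next I would exploit that $q$ is the unique place of $M$ belonging to $p$, i.e.\ $q = \markpro{M}{p}$. Applying \refLemma{identicalView} to $M \relation u'$ gives the poset identity $\square(\pastt{q}) = \square(\pastt{\markpro{M}{p}}) = \view_p(u')$, so the reconstructed trace $u = \square(\pastt{q})$ is exactly the local $p$-view of $u'$. Since $\Plays(\cGame, \varrho)$ is prefix-closed and $\view_p(u')$ is a trace-prefix of the $\varrho$-compatible play $u'$, it is itself $\varrho$-compatible; this gives the first half of the claim.

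For the state equality I would use \refLemma{AtoPsameState}: from $M \relation u'$ we get $\zeta(\lambda[M]) = \state{u'}$, and projecting to the $p$-component yields $\statep{p}{u'} = \zeta(\lambda(q))$. Because the local state of a process is determined by its view, $\statep{p}{u} = \statep{p}{\view_p(u')} = \statep{p}{u'}$, and since the relevant $q$ is a system place its label $\lambda(q) \in S_p$ is a plain state with $\zeta(\lambda(q)) = \lambda(q)$. Chaining these equalities gives $\statep{p}{u} = \lambda(q)$, as required.

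The main obstacle I anticipate is the first step: one must guarantee that an entire reachable marking of the strategy can be simulated by a single $\varrho$-compatible play, and that this is insensitive to the chosen linearization of the firing sequence reaching $M$. This rests on the forward simulation in \refLemma{AtoPl1} and \refLemma{AtoPl2} together with the fact, implicit in the poset formulation of $\relation$, that $\square(\pastt{M})$ is invariant under reordering of concurrent transitions. Once that play $u'$ is in hand, the remaining steps are direct consequences of \refLemma{identicalView} and \refLemma{AtoPsameState} and the standard view-determinacy of local states.
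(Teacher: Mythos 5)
Your proposal is correct and follows essentially the same route as the paper's proof: obtain a play $u'$ with $M \relation u'$ via \refLemma{AtoPl1} and \refLemma{AtoPl2}, identify $\square(\pastt{q})$ with $\view_p(u')$ via \refLemma{identicalView}, and derive the state equality from \refLemma{AtoPsameState}. You merely spell out steps the paper leaves implicit (the induction along a firing sequence from $\init^{\sigma_\varrho}$, the prefix/trace-closure argument for why $\view_p(u')$ is $\varrho$-compatible, and the observation that $\zeta(\lambda(q)) = \lambda(q)$ for the relevant system places), all of which are sound.
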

\begin{proof}
	$q$ is part of some reachable marking $M$ and by \refLemma{AtoPl1} and \refLemma{AtoPl2} there is a play $u$ with $M \relation u$. 
	By \refLemma{identicalView}, it holds that $\square(\pastt{q}) = \view_p(u)$ and, as $\view_p(u)$ is a $\varrho$-compatible play, $\square(\pastt{q})$ is as well. 
	$\statep{p}{u} = \lambda(q)$ follows from \refLemma{AtoPsameState}.
\end{proof}

\begin{lemma}
	If $M \relation u$ and $u' = u \, a \in \Plays(\cGame, \varrho)$ then there exists $M' \in \reach(\pNet^{\sigma_\varrho})$ with $\fireTranTo{M}{ \tau^* \, (a, \_, \_) }{M'}$ and $M' \relation u'$.
	\label{lem:AtoPl3}
\end{lemma}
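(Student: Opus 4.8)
The plan is to prove the converse of \refLemma{AtoPl1}: starting from $M \relation u$ together with the knowledge that $\varrho$ allows $a$ after $u$, I will first commit the relevant tokens with $\tau$-transitions and then exhibit a concrete transition labelled $a$ that is simultaneously present in $\sigma_\varrho$ and enabled. First I would fire $\tau$-transitions. By \refLemma{PtC_dlI}, $\sigma_\varrho$ always commits, so every system place $\markpro{M}{p}$ with $p \in \dom(a)$ carries a unique outgoing $\tau$-transition to a commitment-set place; firing these yields a marking $M_\tau$ in which, for each $p \in \dom(a)$, the token $\markpro{M_\tau}{p}$ sits on an environment place with label $(\statep{p}{u}, \mathds{E}_p)$. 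Since $\square$ ignores $\tau$-transitions the observable past is unchanged, so iterating \refLemma{AtoPl2} gives $M_\tau \relation u$.

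The crucial identification is that the committed set equals the controller's decision, $\mathds{E}_p = f^\varrho_p(\view_p(u))$. By construction of $\sigma_\varrho$ the commitment set of $\markpro{M_\tau}{p}$ is $f^\varrho_p(\view_p(\square(\pastt{\markpro{M_\tau}{p}})))$, and because the causal pasts of a commitment-set place and its predecessor system place differ only by a $\tau$, \refLemma{identicalView} gives $\square(\pastt{\markpro{M_\tau}{p}}) = \view_p(u)$; idempotence of $\view_p$ then fixes $\mathds{E}_p = f^\varrho_p(\view_p(u))$.

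Next I would exhibit the transition. Put $B = \{\statep{p}{u}\}_{p \in \dom(a)}$ and $A_s = \mathds{E}_p$ for $s = \statep{p}{u}$. From $M \relation u$ and \refLemma{AtoPsameState} we have $\zeta(\lambda[M]) = \state{u}$, and since $u\,a \in \Plays(\cGame, \varrho) \subseteq \Plays(\cGame)$ action $a$ is enabled in $\state{u}$, so $B \in \domain(\delta_a)$. If $a \in \Sigma^{\env}$, rule \textbf{\textsf{(1)}} adds $(a, B, \{A_s\}_{s \in B})$ independently of the commitment sets. If $a \in \Sigma^{\sys}$, then $u\,a$ can only belong to $\Plays(\cGame, \varrho)$ via the rule for controllable actions, which forces $a \in f^\varrho_p(\view_p(u)) = A_{\statep{p}{u}}$ for every $p \in \dom(a)$, so rule \textbf{\textsf{(2)}} adds the transition. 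In both cases its precondition is exactly $\{(\statep{p}{u}, A_{\statep{p}{u}}) \mid p \in \dom(a)\} \subseteq M_\tau$, so $(a, B, \{A_s\}_{s \in B})$ is enabled in $M_\tau$; as the strategy construction adds every transition leaving a set of concurrent commitment-set places, this copy lives in $\pNet^{\sigma_\varrho}$.

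Finally, firing this transition from $M_\tau$ reaches some $M' \in \reach(\pNet^{\sigma_\varrho})$, witnessing $\fireTranTo{M}{\tau^*\,(a,\_,\_)}{M'}$, and \refLemma{progress} applied to $M \relation u$, the fired sequence $\tau^*\,(a, B, \{A_s\}_{s \in B})$, and $u' = u\,a$ yields $M' \relation u'$. I expect the main obstacle to be precisely the bookkeeping of the previous paragraph, namely identifying the commitment set encoded in $M_\tau$ with $f^\varrho_p(\view_p(u))$: this is where \refLemma{identicalView} is indispensable, since it transports the causal past of a place into the local view of the corresponding process and thereby guarantees that the sets committed by $\sigma_\varrho$ agree with $\varrho$.
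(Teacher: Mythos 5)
Your proposal is correct and follows essentially the same route as the paper's proof: commit the tokens via the unique $\tau$-transitions (using that $\sigma_\varrho$ is deterministic and always commits), use \refLemma{AtoPsameState} and \refLemma{identicalView} to identify the encoded commitment sets with $f^\varrho_p(\view_p(u))$, case-split on $a \in \Sigma^\env$ versus $a \in \Sigma^\sys$ to exhibit the transition $(a, B, \{A_s\}_{s \in B})$, and close with \refLemma{progress}. The only deviations are cosmetic — you argue the controllable case directly where the paper uses a contradiction, you commit only the tokens in $\dom(a)$ where the paper commits all of them, and you make explicit (where the paper leaves implicit) that the transition copy exists in $\pNet^{\sigma_\varrho}$ by the second rule of the strategy construction.
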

\begin{proof}
	Since $M \relation u$, we know that $\zeta(\lambda[M]) = \state{u}$ (by \refLemma{AtoPsameState}). 
	We first move every token in $M$ that resides on a system place to an environment place, i.e., move it to a chosen commitment set. 
	Since $\sigma_\varrho$ is by construction deterministic and always commits there is exactly one $\tau$-transition possible from every system place. So, $\fireTranTo{M}{ \tau^* }{M''}$ and  every token is on an environment place in $M''$. 
	It holds that $\zeta(\lambda[M'']) = \zeta(\lambda[M]) = \state{u}$. We furthermore know that for every $q \in M''$, $\lambda(q) = (s, A_s)$ holds, i.e., every place represents a commitment set. 
	For every local state $s \in \state{u}$, there hence is a commitment set $A_s$ s.t.\ there is a token on a place labelled $(s, A_s)$ \myItem{(1)}. \\
	Let $B = \{\statep{p}{u}\}_{p \in \dom(a)}$. Since $a$ can occur from $\state{u}$ (as $u \, a \in \Plays(\cGame, \varrho)$) we know that $B \in \domain(\delta_a)$.
	We now claim that there is a transition corresponding to $a$ possible from $M''$. As each transition explicitly encodes the configuration in $\domain(\delta_a)$ and commitment sets, we need the global state $B$ and the current commitment sets $A_s$ from \myItem{(1)} to ``design'' the transition. 
	We distinguish whether $a$ is controllable or uncontrollable:
	\begin{itemize}
		\item If $a \in \Sigma^{\env}$: 
		We consider the transition $t = (a, B, \{ A_s\}_{s \in B} )$ where $B$ and the $A_s$ are the state and sets from above. Such a transition exists as for uncontrollable actions transitions are added independent of the commitment sets.
		Because of \myItem{(1)}, we know that $t$ is enabled from $M''$, i.e., $\fireTranTo{M''}{ t }{M'}$ for some $M'$. $M' \relation u'$ follows from \refLemma{progress}.

		\item If $a \in \Sigma^{\sys}$: 
		As $u' = u \, a \in \Plays(\cGame, \varrho)$, we know that for every process $p \in \dom(a)$, it holds that $a \in f^\varrho_p(\view_p(u))$ \myItem{(2)}.
		We consider the transition $t = (a, B, \{ A_s\}_{s \in B} )$ where $B$ and the $A_s$ are the state and sets from above. Since $a$ is controllable such a transition must not necessarily exists. We show the existence by proving that $a \in A_s$ for every $s \in B$.
		
		Assume for contradiction that $a \not\in A_{s'}$ for $s' \in B$ (and $p$ is the process with $s' \in S_{p}$).
		We derive the construction by showing that the commitment set $A_{s'}$ agrees with the set of transitions allowed by $p$ after $u$ which, by \myItem{(3)}, contains $a$.
		
		Let $q = \markpro{M''}{p}$. It holds that $\lambda(q) = (s', A_{s'})$, which exists by \myItem{(1)}.
		We can study the construction of $\sigma_\varrho$ to see how the commitment set $A_{s'}$ was chosen. Let $q'$ be the predecessor of $q$, i.e., the place from which $q$ was added as a commitment set. 
		By construction, it holds that 
		\begin{align*}
		A_{s'} &= f_{p}(\view_{p}(\square(\pastt{q'})))\\
		&= f_{p}(\view_{p}(\square(\pastt{q})))
		\end{align*}
		where the last equality holds since the pasts of $q$ and $q'$ differ only by a $\tau$-transition. 
		From  \refLemma{identicalView}, we get
		\begin{align*}
		\view_{p}(u) = \view_p(\square(\pastt{q}))
		\end{align*}
		We conclude
		\begin{align*}
		A_{s'} &= f_{p}(\view_{p}(\square(\pastt{q})))\\
		&= f_{p}(\view_p(u))
		\end{align*}
		The commitment set $A_{s'}$ encoded in $q$ was added from $q'$ and agrees with the decision of $p$ made on play $u$. 
		This is a contradiction to $a \in f_{p}(\view_{p}(u))$  \myItem{(2)} and our assumption that $a \not\in A_{s'}$.	\\
		Since $t$ exists and because of \myItem{(1)}, we know that $t$ is enabled from $M''$, i.e., $\fireTranTo{M''}{ t }{M'}$ for some $M'$. Then, $M' \relation u'$ follows from \refLemma{progress}.\qedhere
	\end{itemize}
\end{proof}

\begin{corollary}
	$\varrho$ and $\sigma_\varrho$ are bisimilar.
\end{corollary}
\begin{proof}
	By definition, $\init^{\sigma_\varrho} \relation \epsilon$ holds. Since there are (unobservable) $\tau$-actions in $\cGame$ the statements follows from \refLemma{AtoPl1}, \refLemma{AtoPl2}, and \refLemma{AtoPl3}.
\end{proof}

\paragraph*{Deadlock-Avoidance} 

\noindent So far, we have established bisimilarity under the assumption that no deadlock detecting transition $t_\DL^M$ is enabled.  We now show that it this is a valid assumption. 
As discussed before, a transition $t_{\DL}^M$ is enabled if the strategy maneuvered into an artificial deadlock. $\sigma_\varrho$ simulates $\varrho$ and copies each decision of the controller. Since the commitment sets are chosen according to the controller an artificial-deadlock is never reached, as it would correspond to a deadlock of $\varrho$. 

\begin{lemma}
	If $\varrho$ is deadlock-avoiding then there are no $t_{\DL}^M$-transitions enabled in any reachable marking of $\pNet^{\sigma_\varrho}$.\label{lem:AtoPdl1}
\end{lemma}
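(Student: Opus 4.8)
The plan is to argue by contradiction: I would assume that some transition $t_{\DL}^M$ is enabled in a reachable marking $\widehat{M}$ of $\pNet^{\sigma_\varrho}$ and show that this forces $\varrho$ to violate deadlock-avoidance. The first step is to pin down what ``$t_{\DL}^M$ enabled at $\widehat{M}$'' means structurally. A copy $t'$ of $t_{\DL}^M$ in the branching process has a precondition $C = \pre{\pNet^{\sigma_\varrho}}{t'} \subseteq \widehat{M}$ with $\lambda[C] = \pre{}{t_{\DL}^M} = M$, where $M \in \mathfrak{D}_\DL$. Since every transition of $\pGame_\cGame$ (including the $t_{\DL}^{M}$-transitions) is concurrency-preserving and the initial marking places exactly one token per process, every reachable marking of $\pGame_\cGame$, and hence of $\pNet^{\sigma_\varrho}$, carries exactly $|\processes|$ tokens. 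As $M$ is itself reachable in $\pGame_\cGame$ we have $|C| = |M| = |\processes| = |\widehat{M}|$, so in fact $C = \widehat{M}$ and therefore $\lambda[\widehat{M}] = M$.

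Next I would bring in the bisimulation. Since $\widehat{M}$ contains no $\badw_\DL^p$-place, it is reachable without ever firing a deadlock transition, so the bisimilarity of $\varrho$ and $\sigma_\varrho$ established above (via \refLemma{AtoPl1} and \refLemma{AtoPl2}) yields a play $u \in \Plays(\cGame, \varrho)$ with $\widehat{M} \relation u$. By \refLemma{AtoPsameState} this gives $\state{u} = \zeta(\lambda[\widehat{M}]) = \zeta(M)$. Because $M \in \mathfrak{D}_\DL$, its image $\zeta(M)$ is \emph{not} final in $\cGame$, so at least one action is enabled from $\state{u}$.

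The crucial step is to show that $u$ is nonetheless a maximal play of $\varrho$, i.e., $u \in \PlaysInf(\cGame, \varrho)$. Here I would use that $M$ is final in $\pGame_\cGame$ (no type-\textbf{\textsf{(1)}}, \textbf{\textsf{(2)}}, \textbf{\textsf{(3)}} transition fires from $M$) together with the fact, from \refLemma{PtC_dlI}, that $\sigma_\varrho$ always commits. The latter means every token of $\widehat{M}$ already sits on an environment (commitment-set) place, consistent with $\lambda[\widehat{M}] = M \subseteq \places_\penv$, so no $\tau$-transition is enabled at $\widehat{M}$, and no observable transition is enabled either since $\lambda[\widehat{M}] = M$ is final. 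Suppose towards a contradiction that $u\,a \in \Plays(\cGame, \varrho)$ for some action $a$. Applying \refLemma{AtoPl3} would give a marking $\widehat{M}'$ with $\fireTranTo{\widehat{M}}{\tau^*\,(a,\_,\_)}{\widehat{M}'}$; but since all tokens are already committed the $\tau^*$-prefix must be empty, so a transition labelled $(a,\_,\_)$ is enabled directly from $\widehat{M}$, contradicting the finality of $\widehat{M}$ (up to $t_{\DL}^M$). Hence no such $a$ exists and $u$ is maximal for $\varrho$.

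Combining the two observations, $u \in \PlaysInf(\cGame, \varrho)$ while $\state{u} = \zeta(M)$ is not final, so $u \notin \PlaysInf(\cGame, \top)$. This contradicts deadlock-avoidance of $\varrho$, and the contradiction proves that no $t_{\DL}^M$-transition is enabled in any reachable marking. I expect the main obstacle to be the careful bookkeeping around the $\tau^*$-prefix in \refLemma{AtoPl3}: one must be certain that ``always commits'' rules out \emph{any} $\tau$-action at $\widehat{M}$, so that enabledness of $a$ collapses to a genuine observable transition out of the supposedly final marking $\widehat{M}$, and one must also keep the deadlock transitions out of scope when invoking the bisimulation lemmas, which were proved only for the $t_{\DL}$-free part of the game.
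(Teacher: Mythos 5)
Your proof is correct and takes essentially the same route as the paper's: obtain via the bisimulation a play $u$ with $\widehat{M} \relation u$, observe that $u$ is maximal for $\varrho$ because $M$ is final while $\state{u} = \zeta(M)$ is not final (\refLemma{AtoPsameState}), and contradict deadlock-avoidance. The additional bookkeeping you supply --- the token-count argument giving $C = \widehat{M}$ and $\lambda[\widehat{M}] = M$, and the explicit maximality argument via \refLemma{AtoPl3} together with the always-commits property --- merely fills in what the paper's terser phrase ``by bisimilarity, $u$ is maximal'' leaves implicit, so it refines rather than changes the argument.
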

\begin{proof} 
	Suppose there is a reachable marking $M$ and a $t_{\DL}^M$-transition enabled from $M$.
	By construction, $t_{\DL}^M$ only exists if $M$ is final. 
	Using bisimulation of $\sigma_\varrho$ and $\varrho$, we get a play $u \in \Plays(\cGame, \varrho)$ with $M \relation u$. 
	By bisimilarity, $u$ is maximal (since $M$ is final). 
	By construction of $t_{\DL}^M$, there is an action enabled from $\zeta(M)$ in the underlying automaton.
	By \refLemma{AtoPsameState}, it holds that $\zeta(M) = \state{u}$ and $\state{u}$ is not final.
	There hence is a play $u$ that is maximal w.r.t.\ $\varrho$ but ends in a non-final state $\state{u}$.
	This is a contradiction as deadlock-avoidance requires a controller to only terminate in final states. 
\end{proof}

\paragraph*{Winning Equivalence} 
Finally, we can show that $\sigma_\varrho$ is winning. Having already proved the bisimilarity this is rather easy. 

\begin{lemma}
	If $\varrho$ is winning then $\sigma_\varrho$ is winning.
\end{lemma}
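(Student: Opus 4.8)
The plan is to unfold both winning conditions and discharge them one at a time, reusing the bisimulation and the deadlock analysis already established above. Since $\cGame$ carries a safety objective, a winning $\varrho$ is by definition both \emph{deadlock-avoiding} and such that no play in $\Plays(\cGame, \varrho)$ ever visits a local losing state from $\bigcup_{p \in \processes} \mathcal{B}_p$. Correspondingly, to prove that $\sigma_\varrho$ is safety-winning for $\pGame_\cGame$ I have to show two things: that $\sigma_\varrho$ is deadlock-avoiding, and that no reachable marking of $\pNet^{\sigma_\varrho}$ contains a bad place from $\mathcal{B} = \bigcup_{p \in \processes} \mathcal{B}_p \cup \bigcup_{p \in \processes} \badw_\DL^p$. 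The first point is immediate, since \refLemma{PtC_dlI} already states that $\sigma_\varrho$ is a deadlock-avoiding strategy that always commits; so nothing further is required there.

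For the second point I would distinguish the two kinds of bad places. A place labelled $\badw_\DL^p$ can only be entered by firing some artificial-deadlock transition $t_\DL^M$. As $\varrho$ is winning it is in particular deadlock-avoiding, so \refLemma{AtoPdl1} guarantees that no $t_\DL^M$-transition is ever enabled in a reachable marking of $\pNet^{\sigma_\varrho}$; hence no $\badw_\DL^p$ is ever marked. For a place labelled by a losing local state, suppose for contradiction that some reachable marking $M$ contains a place $q$ with $\lambda(q) \in \mathcal{B}_p$. By the bisimilarity of $\varrho$ and $\sigma_\varrho$ proved above there is a play $u \in \Plays(\cGame, \varrho)$ with $M \relation u$, and by \refLemma{AtoPsameState} we have $\zeta(\lambda[M]) = \state{u}$. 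Since $\zeta(\lambda(q)) = \lambda(q) \in \mathcal{B}_p$, the losing local state $\lambda(q)$ occurs in $\state{u}$, i.e.\ the $\varrho$-compatible play $u$ visits a losing state. This contradicts $\varrho$ being safety-winning, so no such $M$ exists and no reachable marking contains a bad place. Together with deadlock-avoidance, this makes $\sigma_\varrho$ safety-winning.

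The main obstacle is making sure the bisimulation is applied at the right markings and that ``visiting a losing state'' transfers faithfully: the relation $\relation$ must be invoked for arbitrary reachable markings of $\pNet^{\sigma_\varrho}$, including intermediate ones in which a token still sits on a system place labelled by some $s \in \mathcal{B}_p$ before a commitment set has been chosen. Here the definition $M \relation u \Leftrightarrow \square(\pastt{M}) = u$ is harmless because the relevant $\tau$-transitions are deleted by $\square$, and the bisimulation lemmas \refLemma{AtoPl1}, \refLemma{AtoPl2}, and \refLemma{AtoPl3} together with $\init^{\sigma_\varrho} \relation \epsilon$ supply a related play for every such marking; one then only has to observe, via $\zeta$ and \refLemma{AtoPsameState}, that the token residing on $s$ corresponds exactly to the control-game play passing through $s$. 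Once this correspondence is in hand, both the deadlock-place case and the losing-state case reduce cleanly to the winning hypothesis on $\varrho$.
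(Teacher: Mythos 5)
Your proposal is correct and follows essentially the same route as the paper's own proof: deadlock-avoidance via \refLemma{PtC_dlI}, then a case split on bad places into the $\badw_\DL^p$-places (excluded by \refLemma{AtoPdl1} since a winning $\varrho$ is deadlock-avoiding) and places inherited from losing states of $\cGame$ (excluded via the bisimulation together with $\zeta(\lambda[M]) = \state{u}$ from \refLemma{AtoPsameState}, contradicting that $\varrho$ is safety-winning). Your additional remark that $\relation$ applies at intermediate markings because $\square$ deletes $\tau$-transitions is a sound observation the paper leaves implicit.
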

\begin{proof}
	As noticed in \refLemma{PtC_dlI}, $\sigma_\varrho$ is deadlock-avoiding. \\
	Suppose there is a reachable marking $M$ in $\pNet^{\sigma_\varrho}$ that contains a bad place $q$. 
	By construction of $\pGame_\cGame$, it either holds that $\lambda(q) \in \bigcup_{p \in  \processes} S_p$ or $\lambda(q) = \bad_p$, i.e., the bad state must either be inherited from $\cGame$ or part of the deadlock-detection mechanism. 
	\begin{itemize}
		\item If $\lambda(q) \in \bigcup_{p \in  \processes} S_p$, i.e., $q$ is a place resulting from a bad state in $\cGame$:
		By bisimulation, there is a $u \in \Plays(\cGame, \varrho)$ with $M \relation u$. 
		It holds that $\zeta(\lambda[M]) = \state{u}$ (\refLemma{AtoPsameState}). By construction of $\pGame_\cGame$, $\state{u}$ must contain a bad place. A contradiction of the fact that $\varrho$ is winning. 
		
		\item If $\lambda(q) = \bad_p$, i.e., $q$ is a bad place added to detect artificial deadlocks. 
		Since $\varrho$ is by definition deadlock-avoiding, \refLemma{AtoPdl1} gives us that no $t_{\DL}^M$-transition is enabled in any reachable marking in $\pNet^{\sigma_\varrho}$. Hence, $\lambda(q) = \bad_p$ is not possible.\qedhere
	\end{itemize}
\end{proof}

\begin{proposition}
	\label{prop:int1}
	If $\varrho$ is a winning controller for $\cGame$ then $\sigma_\varrho$ is a winning, deterministic strategy for $\pGame$ and bisimilar to $\varrho$.
\end{proposition}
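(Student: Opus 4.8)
The plan is to assemble the proposition from the three facts already isolated in this section---that $\sigma_\varrho$ is a deterministic strategy, that it is bisimilar to $\varrho$, and that it is winning---each of which is a consequence of a preceding lemma; the only real work is to check that the hypotheses of those lemmas are met and that their joint use is coherent.

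First I would invoke \refLemma{PtC_dlI} to obtain that $\sigma_\varrho$ is a well-defined, deterministic, deadlock-avoiding strategy for the translated game $\pGame_\cGame$ that always commits. The subtle point here is that the incremental construction in \refFig{CtoP_translation1} presupposes assumption \myItem{(A)}, i.e.\ that for each freshly created system place $q$ belonging to a process $p$, the reconstructed trace $u = \square(\pastt{q})$ is a $\varrho$-compatible play with $\statep{p}{u} = \lambda(q)$. I would discharge this exactly as in the corollary following \refLemma{AtoPl2}, which combines \refLemma{identicalView} with \refLemma{AtoPsameState} to show that the local view accumulated along $q$ coincides with that of the corresponding process.

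Next I would establish bisimilarity of $\varrho$ and $\sigma_\varrho$ using the relation $\relation$ restricted to $\reach(\pNet^{\sigma_\varrho}) \times \Plays(\cGame, \varrho)$. The base case $\init^{\sigma_\varrho} \relation \epsilon$ is immediate from the definition of $\relation$, and the four transfer conditions follow by chaining \refLemma{AtoPl1} (an enabled observable transition is matched by the action $a$), \refLemma{AtoPl2} (an internal $\tau$-step stays $\relation$-related), and \refLemma{AtoPl3} (a controller action $a$ is matched by a $\tau^* (a,\_,\_)$ sequence), together with \refLemma{progress} to keep markings and plays related; since there are no observable internal moves on the Petri side, no further case arises.

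The decisive step---and the place where I expect the main obstacle---is reconciling the deadlock-detection transitions $t_\DL^M$ with this argument, since every bisimulation lemma above silently discards them. I would close this gap with \refLemma{AtoPdl1}: a winning controller is in particular deadlock-avoiding, so no $t_\DL^M$ is ever enabled in a reachable marking of $\pNet^{\sigma_\varrho}$, and dropping them therefore loses no reachable behaviour. Winning of $\sigma_\varrho$ then follows as in the preceding winning lemma: deadlock-avoidance is already furnished by \refLemma{PtC_dlI}, and for the absence of reachable bad places I would split on a hypothetical bad place $q$---if $\lambda(q)$ is inherited from a losing state of $\cGame$, bisimilarity and $\zeta(\lambda[M]) = \state{u}$ (\refLemma{AtoPsameState}) yield a $\varrho$-compatible play reaching a losing state, contradicting that $\varrho$ is winning, while if $\lambda(q) = \badw_\DL^p$ then \refLemma{AtoPdl1} already excludes it. A further standing subtlety, valid across this whole direction, is the \emph{always-commits} assumption, which the construction enforces by adding exactly one commitment set per system place.
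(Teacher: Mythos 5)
Your proposal is correct and follows the paper's own proof essentially verbatim: the same decomposition into \refLemma{PtC_dlI} for the strategy properties, \refLemma{AtoPl1}--\refLemma{AtoPl3} together with \refLemma{progress} for bisimilarity, the corollary discharging assumption \myItem{(A)}, \refLemma{AtoPdl1} to neutralize the $t_\DL^M$-transitions, and the identical two-case split on bad places (inherited from $\cGame$ versus $\badw_\DL^p$) for winningness. One cosmetic slip: the fourth bisimulation transfer condition is vacuous because the control game $\cGame$ has no $\tau$-actions, not because of anything on the Petri-game side---$\pGame_\cGame$ does contain internal $\tau$-transitions, which is exactly what \refLemma{AtoPl2} handles.
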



\subsection{Translating Strategies to Controllers}

In this section, we provide the formal translation of strategies for $\pGame_\cGame$ to controllers for $\cGame$.
Given a winning, deterministic strategy $\sigma$ for $\pGame_\cGame$ that always chooses a commitment set, we construct a winning controller $\varrho_\sigma = \{f^{\varrho_\sigma}_p\}_{p \in \processes}$ for $\cGame$. 
We refer to the fact that $\sigma$ always commits by $\star$.
The description of $\varrho_\sigma$ is depicted in \refFig{CtoP_translation2}.

\begin{figure}[t!]
	\begin{tcolorbox}[colback=white, colframe=myYellow, arc=3mm, boxrule=1mm]
		Given $p \in \processes$ and $u \in \Plays_p(\cGame)$,
		fix any linearization $u_0, \cdots, u_{n-1}$ of $u$.
		
		Define $M_0 = \init^\sigma$.
		
		For $i$ from $0$ to $n-1$, define $\hat{M}_i$ as the marking with $\fireTranTo{M_i}{\tau^*}{\hat{M}_i}$ and there is no $\tau$-transition enabled in $\hat{M}_i$.
		Since all $\tau$ are local and $\sigma$ is deterministic, $\hat{M}_i$ is unique.
		
		Check if there is an observable transition $t_i = (u_i, \_, \_ )$ enabled from $\hat{M}_i$.
		There is at most one such transition $t_i$. 
		\begin{enumerate}
			\item[\myItem{a)}] If such $t_i$ exists: Let $M_{i+1}$ be the resulting marking, i.e., $\fireTranTo{\hat{M}_i}{t}{M_{i+1}}$.
			Continue with $i+1$.
			\item[\myItem{b)}] If no such $t_i$ exists: Define $f^{\varrho_\sigma}_p(u) = \emptyset$ \myItem{(break)}.
		\end{enumerate}
		We iterate like this until the marking $\mathds{M}_u = \hat{M}_n$ is reached. 
		We observe that $\lambda({\markpro{(\mathds{M}_u)}{p}}) = (\_, \mathds{E})$ (because of $\star$). Define $f^{\varrho_\sigma}_p(u) = \mathds{E}$ .
	\end{tcolorbox}
	
	\caption{Definition of controller $\varrho_\sigma = \{f^{\varrho_\sigma}_p\}_{p \in \processes}$ constructed from  from strategy $\sigma$ for $\pGame_\cGame$. The figure depicts the description of each of the local controller $f^{\varrho_\sigma}_p$.}
	\label{fig:CtoP_translation2}
\end{figure}

The controller $\varrho_\sigma$ does what we argued informally. 
We depict the decision of a local controller $f_p^{\varrho_\sigma}$ for process $p$. 
Given some play $u$, the controller tries to simulate the actions in $u$ in the branching process of $\pNet^{\sigma}$. 
Since $\pGame_\cGame$ comprises additional local $\tau$-transitions the simulation needs to add them as well. After having played an action from $u$, $\varrho_\sigma$ hence simulates as many $\tau$-transitions as possible, i.e., moves every token to a place that corresponds to a chosen commitment set. 
It is easy to see that this is well-defined since all linearizations of a play result in the same marking: If two actions in trace are independent they correspond to distinct parts in the unfolding. The concrete order in which they are fired is hence irrelevant. 
The simulation of the actions in $u$ can fail, i.e., case \myItem{b)} can be reached. While we later show that if $u$ is a controller-compatible play the simulation always succeeds, we need to include \myItem{b)} to obtain a total function $f_p^{\varrho_\sigma}$. 
In case of a successful simulation, a marking $\mathds{M}_u$ is reached. It is easy to see that in this marking all tokens are on environment places, i.e., have chosen a commitment set. $\markpro{(\mathds{M}_u)}{p}$ is the place in this marking that corresponds to $p$. 
Process $p$ now copies the decision by allowing exactly the actions that are encoded in the commitment set of place $\markpro{(\mathds{M}_u)}{p}$, i.e., the decision of the player that corresponds to $p$. For later reference, we call this set of enabled actions $\mathds{E}$.

\paragraph*{Strategy-Equivalence} 

Having defined $\varrho_\sigma$, we can prove that it is bisimilar to $\sigma$.
We use the same relation $\relation$ and restrict it to the reachable markings in $\pNet^{\sigma}$ and the plays compatible with $\varrho_\sigma$. As before, the existing results (\refLemma{AtoPsameState} and \refLemma{identicalView}) extend to the restricted version.

\begin{lemma}
	If $M \relation u$, $p \in \processes$, and $\markpro{M}{p}$ is an environment place, i.e., corresponds to a chosen commitment set then: Simulating $\view_p(u)$ as in the definition of $\varrho_\sigma$ succeeds and yields a marking $\mathds{M}_{\view_p(u)}$ where $\markpro{M}{p} = \markpro{{(\mathds{M}_{\view_p(u)})}}{p}$
	\label{lem:ident_sim}
\end{lemma}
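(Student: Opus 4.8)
The plan is to mirror the argument of \refLemma{PtoC_localeViewSamePlace} from the first translation, now phrased inside the strategy $\pNet^\sigma$. The starting point is \refLemma{identicalView}, which gives $\view_p(u) = \square(\pastt{\markpro{M}{p}})$ as labelled posets. Read the other way around, the observable actions occurring in $\view_p(u)$ are exactly the $\square$-images of the non-$\tau$ transitions lying in $\pastt{\markpro{M}{p}}$ inside $\pNet^\sigma$. Since the token of process $p$ moves only along places built from the local states of $p$ and therefore participates exactly in the transitions whose label lies in $\Sigma_p$, all these transitions are causally below $\markpro{M}{p}$, and firing precisely the transitions of $\pastt{\markpro{M}{p}}$ leaves this token on $\markpro{M}{p}$. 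The goal is to show that the simulation of $\view_p(u)$ from \refFig{CtoP_translation2} fires exactly these transitions as far as the $p$-token is concerned, so that $\markpro{(\mathds{M}_{\view_p(u)})}{p} = \markpro{M}{p}$.

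First I would show that the simulation never reaches case \myItem{b)}. I proceed by induction along a fixed linearization $v_0, \dots, v_{n-1}$ of $\view_p(u)$, maintaining the invariant that after processing $v_0, \dots, v_{i-1}$ the observable transitions fired are precisely those realizing this prefix in $\pastt{\markpro{M}{p}}$. Because $\sigma$ always commits ($\star$) and is deterministic, the maximal $\tau$-filling producing $\hat{M}_i$ is forced and unique; as $\tau$-transitions are local to a single slice, this filling only commits tokens and can never disable a later synchronization. Since $v_i$ genuinely labels a transition of $\pastt{\markpro{M}{p}}$ and $\pGame_\cGame$ is safe, so that at most one transition with a given label is enabled, the required observable transition $t_i = (v_i, \_, \_)$ is enabled from $\hat{M}_i$, case \myItem{a)} applies, and the invariant is preserved. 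Hence the simulation succeeds and reaches $\mathds{M}_{\view_p(u)} = \hat{M}_n$.

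It remains to read off the place of $p$. By the invariant, the observable transitions fired by the simulation are exactly the non-$\tau$ transitions of $\pastt{\markpro{M}{p}}$; the additional $\tau$-transitions fired by the maximal filling involve other slices and hence leave the $p$-token untouched. Consequently the $p$-token in $\mathds{M}_{\view_p(u)}$ rests on the same place as after firing all of $\pastt{\markpro{M}{p}}$, namely on $\markpro{M}{p}$. The hypothesis that $\markpro{M}{p}$ is an environment place, i.e.\ a chosen commitment set, is exactly what is needed here: it guarantees that the closing $\tau$-filling of the simulation does not advance the $p$-token past $\markpro{M}{p}$, since that token is already committed; were $\markpro{M}{p}$ a system place, the simulation would commit it and the equality would fail. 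This yields $\markpro{M}{p} = \markpro{(\mathds{M}_{\view_p(u)})}{p}$, as claimed. The main obstacle is precisely this control of the maximal $\tau$-filling: establishing that committing the other slices neither blocks a subsequent observable action involving $p$ nor moves the $p$-token, for which the locality of $\tau$-transitions together with $\star$ and determinism of $\sigma$ are the key ingredients.
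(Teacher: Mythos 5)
Your proof is correct and follows essentially the same route as the paper's (much terser) argument: both derive $\square(\pastt{\markpro{M}{p}}) = \view_p(u)$ from \refLemma{identicalView}, use determinism of $\sigma$ (with $\star$ and locality of $\tau$-transitions) to conclude that the simulation is unique and succeeds, and then read off $\markpro{M}{p} = \markpro{(\mathds{M}_{\view_p(u)})}{p}$ because the simulation fires exactly the transitions in $\pastt{\markpro{M}{p}}$ as far as the $p$-token is concerned. One phrasing slip, harmless given your invariant: the maximal $\tau$-fillings do fire $\tau$-transitions of $p$'s own slice (its intermediate commitments), not only those of other slices, but by determinism these are exactly the $\tau$-transitions in $\pastt{\markpro{M}{p}}$, so the conclusion stands.
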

\begin{proof}
	From \refLemma{identicalView}, we get that $\square(\pastt{\markpro{M}{p}}) =  \view_p(u)$. 
	The actions in $\view_p(u)$ hence agree with the past of $q$ and since $\sigma$ is deterministic the simulation is deterministic as well. Thus, the simulation succeeds and yields a marking $\mathds{M}_{\view_p(u)}$.  
	$\markpro{M}{p} = \markpro{{(\mathds{M}_{\view_p(u)})}}{p}$ follows since the simulation fires exactly the transitions in the past from $\markpro{M}{p}$.
\end{proof}

\noindent \refLemma{ident_sim} is a trivial consequence from \refLemma{identicalView} that allows us to prove bisimilarity. 
It tells us that simulating the local view of a process results in a marking $\mathds{M}_{\view_p(u)}$ and the decisive point in this marking is shared with $M$. 
This allows us to conclude a connection between our definition of $\relation$ and our construction of $\varrho_\sigma$. 
In $\varrho_\sigma$, every process simulates its local view and, according to \refLemma{ident_sim}, copies the decision in a related marking. 

We can use \refLemma{ident_sim} to show bisimilarity.
We can reason in both direction: 
\begin{itemize}
	\item If $\fireTranTo{M}{ (a, B, \{A_s\}_{s \tuplein B}) }{M'}$, we can do a case analysis depending on whether $a$ is uncontrollable or not. If it is uncontrollable we immediately get that $u \, a \in \Plays(\cGame, \varrho_\sigma)$ since the underlying state reached on $u$ agrees with $\lambda[M]$ (\refLemma{AtoPsameState}).
	If $a$ is controllable we can deduce that $a \in A_s$ for all $s \in B$, i.e., every involved token has chosen a commitment set where $a$ is included. By \refLemma{ident_sim}, $\varrho_\sigma$ now simulates the local view of a process and thereby reaches a place in $M$. Since all tokens involved in $a$ have chosen a commitment set where $a$ is included, all processes involved in $a$ will allow $a$. So $u \, a \in \Plays(\cGame, \varrho_\sigma)$.
	
	\item If $u \, a \in \Plays(\cGame, \varrho_\sigma)$, we first move every token in $M$ to a commitment set which is always possible by $\star$. The new marking is $M'$. If $a$ is uncontrollable a transition corresponding to $a$ is possible from this commitment set combination. 
	If $a$ is controllable every involved process has allowed $a$. By construction, the processes decided what to allow by simulating their local view, which, according to \refLemma{ident_sim}, results in a place of $M'$. Since every involved process allows $a$ every involved place must have chosen a commitment set including $a$. We derive that a transition corresponding to $a$ is possible from $M'$. 
\end{itemize}

\noindent Since we assume that $\sigma$ is winning there can never be any $t_{\DL}^M$-transition enabled. We can hence neglect them for our bisimulation proofs. 

\begin{lemma}
	\label{lem:AtoPl4} 
	If $M \relation u$ and $\fireTranTo{M}{ (a, \_, \_) }{M'}$ for some $M' \in \reach(\pNet^{\sigma})$ then $u' = u \, a \in \Plays(\cGame, \varrho_\sigma)$ and $M' \relation u'$.
\end{lemma}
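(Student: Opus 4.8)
The plan is to mirror the already-established \refLemma{AtoPl1}, swapping the roles of strategy and controller: there a place of the constructed strategy copied the controller's decision, whereas here the constructed controller $\varrho_\sigma$ must copy the commitment set chosen by the given strategy $\sigma$. The statement has two conclusions, and the relational part $M' \relation u'$ is immediate once $u' = u\,a$ is known: a single observable transition $\fireTranTo{M}{(a,\_,\_)}{M'}$ is the special case of $\fireTranTo{M}{\tau^*\,(a, B, \{A_s\}_{s \tuplein B})\,\tau^*}{M'}$ with no $\tau$-transitions, so \refLemma{progress} applies directly. The real work therefore concentrates on showing $u\,a \in \Plays(\cGame, \varrho_\sigma)$.

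First I would establish that $a$ is at all enabled after $u$ in the underlying arena, i.e.\ $u\,a \in \Plays(\cGame)$. From $M \relation u$ and \refLemma{AtoPsameState} we have $\zeta(\lambda[M]) = \state{u}$. Since $(a, B, \{A_s\}_{s \tuplein B})$ is enabled in $M$, its precondition lies in $\lambda[M]$, and by construction of $\pGame_\cGame$ the $\zeta$-image of this precondition is exactly $B \in \domain(\delta_a)$; combined with $\zeta(\lambda[M]) = \state{u}$ this yields that $a$ can fire from $\state{u}$.

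Next I would case-split on the controllability of $a$. If $a \in \Sigma^{\env}$ the claim is immediate, since an uncontrollable action that is possible in the arena is always admitted by any controller. If $a \in \Sigma^{\sys}$, I must verify $a \in f^{\varrho_\sigma}_p(\view_p(u))$ for every $p \in \dom(a)$. Because the transition fires and $a$ is controllable, construction rule \textbf{\textsf{(2)}} of $\pGame_\cGame$ forces $a \in A_s$ for every $s \tuplein B$. Fixing $p \in \dom(a)$ and setting $q = \markpro{M}{p}$, the place $q$ lies in the precondition of the fired transition and hence is an environment (commitment-set) place with $\lambda(q) = (\statep{p}{u}, A_{\statep{p}{u}})$, where $a \in A_{\statep{p}{u}}$. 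It then remains to identify $f^{\varrho_\sigma}_p(\view_p(u))$ with this commitment set: by definition $\varrho_\sigma$ computes its decision by simulating $\view_p(u)$ in $\pNet^\sigma$ and reading off the commitment set encoded at the corresponding place. Since $q$ is an environment place, \refLemma{ident_sim} guarantees this simulation succeeds and that its resulting place coincides with $q = \markpro{M}{p}$. Consequently $f^{\varrho_\sigma}_p(\view_p(u)) = A_{\statep{p}{u}} \ni a$, and as this holds for all $p \in \dom(a)$ we obtain $u\,a \in \Plays(\cGame, \varrho_\sigma)$.

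The main obstacle is the controllable subcase, specifically the step that lets $\varrho_\sigma$ faithfully reproduce the strategy's choice: I must ensure that the marking reached by simulating the purely \emph{local} view $\view_p(u)$ shares, at the slot of process $p$, exactly the commitment-set place $\markpro{M}{p}$ that the \emph{global} marking $M$ exposes. This is precisely what \refLemma{ident_sim} supplies (itself a consequence of the causal-information alignment in \refLemma{identicalView}), so the apparent difficulty is already isolated in that lemma. The only remaining care is the bookkeeping that $q$ is genuinely an environment place so that \refLemma{ident_sim} is applicable, which follows because $p \in \dom(a)$ forces $\markpro{M}{p}$ into the precondition of the fired observable transition.
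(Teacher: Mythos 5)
Your proof is correct and takes essentially the same approach as the paper's: the same use of \refLemma{AtoPsameState} to get $u\,a \in \Plays(\cGame)$, the same case split on controllability, the same identification of $f^{\varrho_\sigma}_p(\view_p(u))$ with the commitment set $A_{\statep{p}{u}}$ encoded at $\markpro{M}{p}$ via \refLemma{ident_sim} (including the check that this place is an environment place so that \refLemma{ident_sim} applies), and $M' \relation u'$ from \refLemma{progress}. The only difference is cosmetic: the paper argues the controllable case by contradiction (assuming some $p \in \dom(a)$ refuses $a$), whereas you prove directly that every $p \in \dom(a)$ allows $a$.
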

\begin{proof}
	Since $M \relation u$, we know that $\zeta(\lambda[M]) = \state{u}$ \myItem{(1)} (by \refLemma{AtoPsameState}). \\
	Because $(a, B, \{A_s\}_{s \tuplein B})$ is enabled in $M$, \myItem{(1)}, and our construction of transitions, we know that $a$ is enabled from $\state{u}$, i.e., $u \, a \in \Plays(\cGame)$. 
	We distinguish two cases:
	\begin{itemize}
		\item If $a \in \Sigma^{\env}$: Then, $u' = u \, a \in \Plays(\cGame, \varrho_\sigma)$ follows from the definition of control games.
		$M' \relation u'$ follows from \refLemma{progress}.
		\item If $a \in \Sigma^{\sys}$: Assume for contradiction that $u' = u \, a \not\in \Plays(\cGame, \varrho_\sigma)$.
		Then, there is a $p \in \dom(a)$ with $a \not\in f^{\varrho_\sigma}_p(\view_p(u))$.
		We derive the contradiction by showing that the set of allowed transitions by $p$ is exactly one of the commitment sets in $M$ which by assumption includes $a$.\\
		$\markpro{M}{p} \in M$ is the place that corresponds to process $p$.
		As this place is involved in $(a, B, \{A_s\}_{s \tuplein B})$ and $(a, B, \{A_s\}_{s \tuplein B})$ is enabled in $M$, we conclude that $\lambda(\markpro{M}{p})$ is an environment place, i.e., a chosen commitment set. 
		Because of \myItem{(1)}, we obtain that $\lambda(\markpro{M}{p}) = (\statep{p}{u}, A_{\statep{p}{u}})$.
		By construction of $\pGame_\cGame$ and since $(a, B, \{A_s\}_{s \tuplein B})$ is enabled, we get that $a \in A_{\statep{p}{u}}$.\\
		Let $\mathds{E} = f^{\varrho_\sigma}_p(\view_p(u))$ be this decision of process $p$ on $u$. 
		We can now study how $\varrho_\sigma$ came to this decision. 
		It does so by simulating $\view_p(u)$ in the branching process of~$\sigma$ and reaching a marking $\mathds{M}_{\view_p(u)}$ (by \refLemma{ident_sim} the simulation is successfully).
		By construction, $p$ then chooses $\mathds{E}$ as the set with $\lambda(\markpro{{(\mathds{M}_{\view_p(u)})}}{p}) = (\statep{p}{u}, \mathds{E})$. 
		That is, $\varrho_\sigma$ copies the decision of the corresponding place in $\mathds{M}_{\view_p(u)}$.\\
		From \refLemma{ident_sim} we now get that
		\begin{align*}\markpro{M}{p} = \markpro{{(\mathds{M}_{\view_p(u)})}}{p}\end{align*}
		This allows us to conclude that
		\begin{align*}
		(\statep{p}{u}, \mathds{E}) &= \lambda(\markpro{{(\mathds{M}_{\view_p(u)})}}{p})\\
		&= \lambda(\markpro{M}{p})\\
		&=(\statep{p}{u}, A_{\statep{p}{u}})
		\end{align*}
		We get that $f^{\varrho_\sigma}_p(\view_p(u)) = \mathds{E} = A_{\statep{p}{u}}$. The decision of what to enable ($f^{\varrho_\sigma}_p(\view_p(u))$) hence agrees with the commitment set of place $\markpro{M}{p}$ which is $A_{\statep{p}{u}}$.
		This is a contradiction to $a \in A_{\statep{p}{u}}$ and our assumption $a \not\in f^{\varrho_\sigma}_p(\view_p(u)) = \mathds{E}$. \\
		So, $u' = u \, a \in \Plays(\cGame, \varrho_\sigma)$ holds.
		$M' \relation u'$ follows from \refLemma{progress}.\qedhere
	\end{itemize}
\end{proof}

\begin{lemma}
	If $M \relation u$ and $\fireTranTo{M}{\tau}{M'}$ for some $M' \in \reach(\pNet^\sigma)$ then $M' \relation u$.
	\label{lem:AtoPl5}
\end{lemma}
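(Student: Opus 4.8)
The plan is to observe that this is the exact analogue of \refLemma{AtoPl2} and that it falls out immediately from the definition of $\relation$. Recall that in this translation $M \relation u$ means $\square(\pastt{M}) = u$ as labelled posets, where $\square$ erases every $\tau$-transition from the causal past of $M$ and relabels each observable transition $(a, B, \{A_s\}_{s \tuplein B})$ by its underlying action $a$. Since $\square$ is by construction blind to $\tau$-transitions, firing one should leave $\square(\pastt{\cdot})$ unchanged.

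First I would unfold what firing a $\tau$-transition does to the causal past. The only $\tau$-transitions of $\pGame_\cGame$ are the local $\tau_{(s,A)}$-transitions, each with a singleton pre- and postcondition. Let $t'$ be the branching-process copy fired in $\fireTranTo{M}{\tau}{M'}$, so $\lambda(t') = \tau_{(s,A)}$ for some $s$ and $A$; it consumes a single system place $q \in M$ and produces a single place $q'$ with $\pre{}{q'} = \{t'\}$, hence $\pastt{q'} = \pastt{q} \cup \{t'\}$. Every other place of $M$ is untouched, so as a set of transitions $\pastt{M'} = \pastt{M} \cup \{t'\}$, and in the causal-order poset $t'$ enters as a fresh \emph{maximal} element sitting directly above $\pastt{q}$.

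Then the conclusion is immediate: as $t'$ is a $\tau$-transition, $\square$ deletes it, and deleting a maximal element that is itself erased does not reorder the surviving, observable transitions. Therefore the labelled posets $\square(\pastt{M'})$ and $\square(\pastt{M})$ are isomorphic (indeed identical), whence $\square(\pastt{M'}) = u$, i.e.\ $M' \relation u$. I expect no real obstacle here, exactly as in \refLemma{AtoPl2}: the entire content is that $\square$ ignores $\tau$-transitions. The only point deserving a word of care is that the newly added $\tau$-element is maximal in $\pastt{M'}$, so its removal cannot alter the order among the remaining observable elements; this is what guarantees the poset identity rather than a mere equality of underlying sets.
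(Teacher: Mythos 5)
Your proof is correct and follows exactly the route the paper intends: the paper's own proof is the one-liner ``Obvious consequence from the definition of $\relation$'', and your argument simply spells out why — $\square$ erases the freshly fired $\tau$-transition, which enters $\pastt{M'}$ as a maximal element, so $\square(\pastt{M'}) = \square(\pastt{M}) = u$. Your extra care about maximality (so that deletion cannot perturb the order among the surviving observable transitions) is a valid and welcome elaboration of the same argument.
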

\begin{proof}
	Obvious consequence from the definition of $\relation$.
\end{proof}

\begin{lemma}
	If $M \relation u$ and $u' = u \, a \in \Plays(\cGame, \varrho_\sigma)$ then there exists $M' \in \reach(\pNet^\sigma)$ with $\fireTranTo{M}{\tau^* (a, \_, \_)} {M'}$ and $M' \relation u'$.
	\label{lem:AtoPl6}
\end{lemma}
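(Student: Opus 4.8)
The plan is to mirror the backward-direction argument already carried out for the first translation in \refLemma{AtoPl3}, except that here the commitment information must be \emph{read off} the given strategy $\sigma$ rather than written into it. First I would invoke \refLemma{AtoPsameState} to get $\zeta(\lambda[M]) = \state{u}$. Because $\sigma$ always commits ($\star$) and is deterministic, from $M$ I can fire a maximal, and hence unique, block of local $\tau$-transitions, reaching a marking $M''$ with $\fireTranTo{M}{\tau^*}{M''}$ in which every token sits on an environment place, i.e.\ on a chosen commitment set. By \refLemma{AtoPl5}, $M'' \relation u$, and since the $\tau$-transitions do not alter the global state, $\zeta(\lambda[M'']) = \state{u}$. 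Consequently, for every local state $s$ occurring in $\state{u}$ there is a commitment set $A_s$ with a token of $M''$ on the place labelled $(s, A_s)$.

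Next I would identify the transition to fire. Put $B = \{\statep{p}{u}\}_{p \in \dom(a)}$; since $u' = u\,a \in \Plays(\cGame)$, the action $a$ is enabled from $\state{u}$, so $B \in \domain(\delta_a)$. The candidate is $t = (a, B, \{A_s\}_{s \in B})$ with the $A_s$ read off $M''$ above: by the flow rules \textbf{\textsf{(4, 5)}} of the construction, whenever $t$ exists it fires precisely from the committed places present in $M''$. If $a \in \Sigma^{\env}$, then rule \textbf{\textsf{(1)}} guarantees that $t$ exists irrespective of the commitment sets, so $t$ is already enabled in $M''$ and this case is settled.

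The crux is the controllable case $a \in \Sigma^{\sys}$, where rule \textbf{\textsf{(2)}} adds $t$ only if $a \in A_s$ for every $s \in B$. Since $u' = u\,a \in \Plays(\cGame, \varrho_\sigma)$, each process $p \in \dom(a)$ satisfies $a \in f^{\varrho_\sigma}_p(\view_p(u))$. I would then show that this decision coincides with the commitment set already encoded at the token's place of $M''$. Let $\markpro{M''}{p}$ be the place of $M''$ belonging to $p$; it is an environment place, so $\lambda(\markpro{M''}{p}) = (\statep{p}{u}, A_{\statep{p}{u}})$. By the definition of $\varrho_\sigma$, the value $f^{\varrho_\sigma}_p(\view_p(u))$ is obtained by simulating $\view_p(u)$ in $\pNet^\sigma$ and copying the commitment set $\mathds{E}$ found at the analogous place of the reached marking $\mathds{M}_{\view_p(u)}$. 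Applying \refLemma{ident_sim} to $M''$ (which in turn rests on \refLemma{identicalView}) yields $\markpro{M''}{p} = \markpro{{(\mathds{M}_{\view_p(u)})}}{p}$, whence $\mathds{E} = A_{\statep{p}{u}}$ and therefore $a \in A_{\statep{p}{u}}$. As this holds for all $p \in \dom(a)$, we obtain $a \in A_s$ for every $s \in B$, so $t$ exists by rule \textbf{\textsf{(2)}}.

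In either case $t = (a, \_, \_)$ is a transition of $\pGame_\cGame$ whose precondition consists entirely of environment places, so (transitions involving only environment places being always available) a corresponding copy is present in $\pNet^\sigma$; hence $\fireTranTo{M''}{t}{M'}$ for some $M' \in \reach(\pNet^\sigma)$. Composing with $\fireTranTo{M}{\tau^*}{M''}$ gives $\fireTranTo{M}{\tau^* (a, \_, \_)}{M'}$, and $M' \relation u'$ follows immediately from \refLemma{progress}. The only genuine difficulty is the controllable case: proving that the commitment set physically present at the token's place in $M''$ is exactly the set the controller would choose. This is precisely where \refLemma{ident_sim} is indispensable, since it certifies that the local view $\view_p(u)$ simulated by the controller reconstructs the very committed place the token already occupies, so the strategy's and the controller's choices cannot diverge.
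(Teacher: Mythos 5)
Your proposal is correct and follows essentially the same route as the paper's proof: fire a maximal block of $\tau$-transitions to reach a fully committed marking $M''$, design the candidate transition $t = (a, B, \{A_s\}_{s \in B})$ from the commitment sets present in $M''$, settle the uncontrollable case via rule \textbf{\textsf{(1)}}, and in the controllable case use \refLemma{ident_sim} to identify the controller's simulated place with the token's actual place, forcing $f^{\varrho_\sigma}_p(\view_p(u)) = A_{\statep{p}{u}}$, before concluding with \refLemma{progress}. The only (cosmetic) difference is that you argue the controllable case directly while the paper phrases the same commitment-set comparison as a proof by contradiction.
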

\begin{proof}
	Since $M \relation u$, we know that $\zeta(\lambda[M]) = \state{u}$ (by \refLemma{AtoPsameState}).
	We first move every token that resides on a system place to an environment one, i.e., to a place corresponding to a commitment set. Since $\sigma$ satisfies $\star$, i.e., always commits this is always possible. 
	So, $\fireTranTo{M}{ \kappa }{M''}$ for some $M''$ holds and there are no enabled $\tau$-transitions in $M''$. 
	It holds that $\zeta(\lambda[M'']) = \zeta(\lambda[M]) = \state{u}$.
	For every $q \in M''$, it holds that $\lambda(q) = (s, A_s)$, i.e., all tokens have chosen a commitment set. 
	For every local state $s \in \state{u}$, there is a set $A_s$ such that there is a token on a place labelled $(s, A_s)$ \myItem{(1)}.\\
	Since $u \, a \in \Plays(\cGame, \varrho_\sigma)$ ,we know that $a$ can occur from $\state{u}$.
	We define $B = \{\statep{p}{u}\}_{p \in \dom(a)}$. Since $a$ can occur from $\state{u}$ (as $u \, a \in \Plays(\cGame, \varrho)$) we know that $B \in \domain(\delta_a)$.
	We now claim that there is a transition corresponding to $a$ possible from $M''$. As each transition explicitly encodes the configuration in $\domain(\delta_a)$ and commitment sets, we need the global state $B$ and the current commitment sets $A_s$ from \myItem{(1)} to ``design'' the transition. 
	We distinguish whether $a$ is controllable or uncontrollable. 
	\begin{itemize}
		\item If $a \in \Sigma^\env$:
		Consider transition $t = (a, B, \{A_s\}_{s \tuplein B})$ where $B$ is the global state from above and $A_s$ are the sets such that there is a token on $(s, A_s)$ \myItem{(1)}.
		By construction of $\pGame_\cGame,$ such a transition $t$ exists.
		We conclude that $t$ is enabled in $M''$ and, as $t$ involves only environment places, it is allowed by $\sigma$. So, there is a $M'$ with $\fireTranTo{M''}{t} {M'}$.
		$M' \relation u'$ is follows from \refLemma{progress}.
		
		\item If $a \in \Sigma^\sys$:
		We know that for every $p \in \dom(a)$, $a \in f^{\varrho_\sigma}_p(\view_p(u))$ \myItem{(2)}.
		We again consider the transition $t = (a, B, \{A_s\}_{s \tuplein B})$ where $B$ is the global state from above and $A_s$ are the sets such that there is a token on $(s, A_s)$ \myItem{(1)}.
		By construction of $\pGame_\cGame$, such a transition only exists if $a \in A_s$ for all $s \in B$.
		
		Assume for contradiction that $a \not\in A_{s'}$ for some $s' \in B$. 
		Let $p$ be the process with $s' \in S_{p}$ (it holds that $\statep{p}{u} = s'$). 
		We derive the contradiction by showing that the set $A_{s'}$ is the set of actions allowed by $p$ on $u$ and $a$ must therefore, by \myItem{(2)}, be included.

		For $\markpro{M''}{{p}} \in M$, it holds that $\lambda(\markpro{M''}{{p}}) = (s', A_{s'})$.
		Let $\mathds{E} = f^{\varrho_\sigma}_{p}(\view_{p}(u))$ be the decision made by $p$. 
		We can study how $\varrho_\sigma$ came to this decision. It does so by simulating $\view_p(u)$ in the branching process of $\sigma$ and reaching a marking $\mathds{M}_{\view_{p}(u)}$. 
		$\mathds{E}$ is then, by construction, the set with $\lambda(\markpro{(\mathds{M}_{\view_{p}(u)})}{{p}}) = (\statep{p}{u}, \mathds{E})$.
		
		From \refLemma{ident_sim}, we get that 
		\begin{align*}\markpro{(\mathds{M}_{\view_{p}(u)})}{{p}} = \markpro{M''}{{p}}\end{align*}
		So, we can derive that
		\begin{align*}
		(s', A_{s'}) &= \lambda(\markpro{M''}{{p}})\\
		&=\lambda(\markpro{(\mathds{M}_{\view_{p}(u)})}{{p}})\\
		&= (\statep{p}{u}, \mathds{E})
		\end{align*}
		
		Therefore, $f^{\varrho_\sigma}_{p}(\view_{p}(u)) = \mathds{E} = A_{s'}$ holds, i.e., the commitment set $A_{s'}$ agrees with the decision of $p$ made on $u$. 
		This is a contradiction to $a \in f^{\varrho_\sigma}_{p}(\view_{p}(u))$ \myItem{(2)} and our assumption that $a \not\in A_{s'}$. 
		
		We conclude that $t$ exists and is enabled in $M''$. So, there is a $M'$ with $\fireTranTo{M''}{t} {M'}$.
		$M' \relation u'$ follows from \refLemma{progress}.\qedhere
	\end{itemize}
\end{proof}

\begin{corollary}
	$\sigma$ and $\varrho_\sigma$ are bisimilar
\end{corollary}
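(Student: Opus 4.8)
The plan is to verify directly that the relation $\relation$ defined earlier in this subsection --- namely $M \relation u$ iff $\square(\pastt{M}) = u$, restricted to the reachable markings of $\pNet^\sigma$ and the $\varrho_\sigma$-compatible plays --- is a weak bisimulation witnessing bisimilarity of $\sigma$ and $\varrho_\sigma$. All of the genuine content has already been isolated into the three preceding lemmas, so the corollary is an assembly step: I would simply check the four clauses of the bisimulation definition against \refLemma{AtoPl4}, \refLemma{AtoPl5}, and \refLemma{AtoPl6}, after dispatching the base case.

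First I would establish the base case $\init^\sigma \relation \epsilon$. The initial marking of a branching process consists exactly of the places without incoming transitions, so its causal past contains no transitions; hence $\square(\pastt{\init^\sigma}) = \epsilon$, and $\epsilon$ is trivially a $\varrho_\sigma$-compatible play. Thus $\init^\sigma \relation \epsilon$.

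Next I would match the four clauses. Recall that the \emph{observable} transitions of $\pGame_\cGame$ are exactly those of the form $(a, \_, \_)$ (each mapping to the action $a$ of $\cGame$), while the \emph{internal} ($\tau$) transitions are the $\tau_{(s,A)}$-transitions together with the deadlock transitions $t_\DL^M$; the latter are never enabled under the winning assumption on $\sigma$ and are therefore neglected, as noted before \refLemma{AtoPl4}. On the side of $\cGame$, every action is observable, so $\cGame$ has no internal action. Clause~1 (an observable Petri move must be matched on the control side) is exactly \refLemma{AtoPl4}: firing $(a, \_, \_)$ from $M$ yields $u \, a \in \Plays(\cGame, \varrho_\sigma)$ with $M' \relation u \, a$, and since there are no control-side $\tau$-actions this is precisely the required $u' = u\,\tau^* a \tau^*$. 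Clause~2 (an internal Petri move) is \refLemma{AtoPl5}: firing $\tau$ preserves $M' \relation u$, and the demanded $u' = u\,\tau^*$ is just $u$ because $\cGame$ cannot take a $\tau$-step. Clause~3 (an observable control move matched on the Petri side) is \refLemma{AtoPl6}. Clause~4 (an internal control move) is vacuous, as $\cGame$ has no internal actions.

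The step I expect to require the most care --- though it is not hard --- is the bookkeeping around clauses~2 and~4: one must observe that because $\cGame$ has no internal actions, the $\tau$-stars appearing on the control side of the definition collapse to the empty sequence, so that \refLemma{AtoPl5} delivering $M' \relation u$ (rather than $M' \relation u'$ for some $u' \neq u$) is exactly what clause~2 needs, and clause~4 need not be checked at all. The remaining point is the implicit restriction of $\relation$ to $\reach(\pNet^\sigma) \times \Plays(\cGame, \varrho_\sigma)$, which is maintained inductively: starting from $\init^\sigma \relation \epsilon$, each of the three lemmas keeps related pairs inside this product. The truly substantive work --- that in $\relation$-related situations the commitment sets committed to by $\sigma$ coincide with the sets enabled by $\varrho_\sigma$ --- has already been discharged via \refLemma{identicalView} and \refLemma{ident_sim} inside those lemmas.
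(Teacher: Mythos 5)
Your proposal is correct and takes essentially the same route as the paper: the paper's proof likewise notes $\init^\sigma \relation \epsilon$ by definition and then appeals to \refLemma{AtoPl4}, \refLemma{AtoPl5}, and \refLemma{AtoPl6}, with precisely your key observation that $\cGame$ has no unobservable $\tau$-actions, so the control-side $\tau^*$'s collapse and the fourth bisimulation clause is vacuous. Your additional bookkeeping---the empty causal past of $\init^\sigma$, neglecting the never-enabled $t_\DL^M$-transitions, and the inductive preservation of the restriction of $\relation$ to $\reach(\pNet^\sigma) \times \Plays(\cGame, \varrho_\sigma)$---merely makes explicit what the paper leaves implicit.
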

\begin{proof}
	By definition, it holds that $\init^\sigma \relation \epsilon$. Since there are no unobservable $\tau$-actions in~$\cGame$ the statements follows from \refLemma{AtoPl4}, \refLemma{AtoPl5}, and \refLemma{AtoPl6}.
\end{proof}

\paragraph*{Deadlock-Avoidance} 
We show that $\varrho_\sigma$ is deadlock-avoiding. 
By construction, $\varrho_\sigma$ allows exactly the actions that $\sigma$ has included in the commitment sets. To avoid all $t_{\DL}^M$-transitions, $\sigma$ has to choose commitment sets such that there is a transition possible if there is an action possible from the corresponding state in $\cGame$ (cf.\ Appendix~\ref{sec:appAD}). By copying commitment sets, there is no deadlock reachable.

\begin{lemma}\label{lem:AtoPdl2}
	If $\sigma$ is deadlock-avoiding and avoids $t_{\DL}^M$-transitions, $\varrho_\sigma$ is deadlock-avoiding.
\end{lemma}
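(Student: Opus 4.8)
The plan is to unfold the definition of deadlock-avoidance for controllers and reduce the claim to the bisimulation between $\sigma$ and $\varrho_\sigma$ combined with the deadlock-detection mechanism. Recall that $\varrho_\sigma$ is deadlock-avoiding iff $\PlaysInf(\cGame, \varrho_\sigma) \subseteq \PlaysInf(\cGame, \top)$, i.e., every play $u$ that is maximal w.r.t.\ $\varrho_\sigma$ already ends in a state $\state{u}$ that is final in the underlying automaton. First I would fix such a maximal play $u \in \Plays(\cGame, \varrho_\sigma)$ and assume, for contradiction, that $\state{u}$ is \emph{not} final, i.e., some action can still fire from $\state{u}$ in $\cGame$.

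Next, using the bisimulation established above (\refLemma{AtoPl4}, \refLemma{AtoPl5}, \refLemma{AtoPl6}) and starting from $\init^\sigma \relation \epsilon$, I replay $u$ to obtain a reachable marking $M$ in $\pNet^\sigma$ with $M \relation u$. Since $\sigma$ always commits ($\star$) and is deterministic, I then fire all enabled $\tau$-transitions to reach a unique marking $M'$ with $\fireTranTo{M}{\tau^*}{M'}$ in which no $\tau$ is enabled; by \refLemma{AtoPl5} we still have $M' \relation u$. Every token in $M'$ now rests on a commitment-set place, so $\lambda[M'] \subseteq \places_\penv$: no token can sit on a system place (a $\tau$ would still be enabled), and none can sit on a $\badw_\DL^p$-place because $\sigma$ avoids all $t_\DL^M$-transitions. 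I would then argue that $M'$ is in fact \emph{final} in $\pNet^\sigma$. No $\tau$ is enabled by construction of $M'$, and no $t_\DL^M$-transition is enabled by the hypothesis that $\sigma$ avoids them. Finally, no observable transition $(a,\_,\_)$ can be enabled in $M'$: otherwise \refLemma{AtoPl4} would give $u\,a \in \Plays(\cGame, \varrho_\sigma)$, contradicting the maximality of $u$.

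With $M'$ a final, reachable marking of $\pNet^\sigma$, the deadlock-avoidance of $\sigma$ (in the Petri-game sense) forces $\lambda[M']$ to be final in $\pGame_\cGame$. On the other hand, $\lambda[M'] \subseteq \places_\penv$, and by \refLemma{AtoPsameState} we have $\zeta(\lambda[M']) = \state{u}$, which by assumption is not final. These three facts place $\lambda[M']$ in $\mathfrak{D}_\DL$, so the transition $t_\DL^{\lambda[M']}$ exists and fires precisely from $\lambda[M']$, contradicting the finality of $\lambda[M']$ just derived. Hence $\state{u}$ must be final, which yields $\PlaysInf(\cGame, \varrho_\sigma) \subseteq \PlaysInf(\cGame, \top)$ and thus deadlock-avoidance of $\varrho_\sigma$. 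The step I expect to be most delicate is the transfer from the strategy marking $M'$ to the game marking $\lambda[M']$: one must pin down exactly that maximality of $u$ rules out observable transitions, that $\tau$-saturation and the $t_\DL^M$-avoidance rule out the remaining ones, and that the resulting $\mathfrak{D}_\DL$-membership clashes with the conclusion of deadlock-avoidance of $\sigma$. Here the two hypotheses play complementary roles: deadlock-avoidance of $\sigma$ makes $\lambda[M']$ final, while the $t_\DL^M$-avoidance guarantees $M'$ is genuinely final in the strategy, and together they trigger the deadlock detector and close the contradiction.
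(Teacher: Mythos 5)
Your proof is correct and follows essentially the same route as the paper's: assume a play $u$ maximal w.r.t.\ $\varrho_\sigma$ whose state $\state{u}$ is not final, use the bisimulation to reach a $\tau$-saturated marking $M'$ with $M' \relation u$, transfer finality of $M'$ to $\lambda[M']$ via deadlock-avoidance of $\sigma$, and let membership of $\lambda[M']$ in $\mathfrak{D}_\DL$ together with the enabled $t_{\DL}^{\lambda[M']}$-transition close the contradiction. The only difference is presentational: you spell out (via \refLemma{AtoPl4}, the $t_{\DL}^M$-avoidance hypothesis, and $\tau$-saturation) why $M'$ is final in $\pNet^\sigma$, a step the paper compresses into ``since $u$ is maximal, $M'$ is final (by bisimulation).''
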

\begin{proof}
	Assume for contradiction that $\varrho_\sigma$ is not deadlock-avoiding. Then, there exists a play $u \in \Plays(\cGame, \varrho_\sigma)$ that is maximal w.r.t.\ the controller that could be extended in the underlying automaton. 
	By bisimulation, there exist a reachable marking $M$ in $\pNet^\sigma$ with $M \relation u$. 
	Let $M'$ be the marking that results from $M$ by playing as many $\tau$-transitions as possible, i.e., where every token has chosen a commitment set. 
	Because $\sigma$ satisfies $\star$, every token is on an environment place in $M'$  (i.e., has chosen a commitment set).
	$M' \relation u$ holds.\\
	Since $u$ is maximal, $M'$ is final (by bisimulation). 
	Because $\sigma$ is deadlock-avoiding, $\lambda[M']$ is final as well.
	Since $u$ can be extended in the underlying automaton there exist an action $a$ that is enabled in $\state{u}$, i.e., $\state{u}$ is not final. By \refLemma{AtoPsameState}, it holds that $\zeta(\lambda[M']) = \state{u}$. 
	By construction of the deadlock detection mechanism, there hence is a transition $t_{\DL}^{\lambda[M']}$ that is enabled in $\lambda[M']$ and cannotbe averted by strategy. 
	This is a contradiction to the assumption.
\end{proof}

\paragraph*{Winning Equivalence}

Once we established the bisimilar behavior we can show that $\varrho_\sigma$ is indeed winning.

\begin{lemma}
	If $\sigma$ is winning, $\varrho_\sigma$ is winning.
	\label{lem:CtoP_winning}
\end{lemma}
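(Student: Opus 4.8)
The plan is to verify directly the two ingredients that make up safety-winningness for $\varrho_\sigma$: deadlock-avoidance, and the absence of any visit to a local losing state in $\bigcup_{p \in \processes} \bad_p$. Both are obtained by pulling the corresponding guarantees back from $\sigma$ through the bisimulation already established between $\sigma$ and $\varrho_\sigma$, together with \refLemma{AtoPdl2} and \refLemma{AtoPsameState}.

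For deadlock-avoidance, I would first argue that $\sigma$ avoids every deadlock-detecting transition. Since $\sigma$ is winning, no reachable marking of $\pNet^\sigma$ contains a bad place, so in particular no $\badw_\DL^p$-place is ever marked. Moreover, each $t_\DL^M$-transition consumes tokens solely from environment places (by construction $M \subseteq \places_\penv$ for every $M \in \mathfrak{D}_\DL$); hence, by justified refusal, such a transition can never be refused by the strategy and would be part of $\sigma$ and fireable whenever its precondition is present, forcing a token onto a $\badw_\DL^p$-place. As this contradicts winningness, no $t_\DL^M$-transition is enabled in any reachable marking of $\pNet^\sigma$. Combined with the fact that a winning strategy is by definition deadlock-avoiding, this lets me invoke \refLemma{AtoPdl2} to conclude that $\varrho_\sigma$ is deadlock-avoiding.

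For the safety part, I would argue by contradiction. Suppose some $\varrho_\sigma$-compatible play visits a losing state; then there is a $\varrho_\sigma$-compatible prefix $u$ whose resulting global state $\state{u}$ contains a local losing state $s \in \bad_p$ for some $p$. Using the bisimilarity of $\sigma$ and $\varrho_\sigma$, starting from $\init^\sigma \relation \epsilon$ and following $u$, there is a reachable marking $M \in \reach(\pNet^\sigma)$ with $M \relation u$, and by \refLemma{AtoPsameState} we have $\zeta(\lambda[M]) = \state{u}$. Hence $M$ contains a place $q$ with $\zeta(\lambda(q)) = s$.

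The one subtlety to resolve — and the main obstacle — is that $q$ need not be the system place $s$ itself: it could be an environment place of the form $(s, A)$ encoding a commitment set, which is \emph{not} one of the bad places of $\pGame_\cGame$. I would close this gap by observing that the only transition producing a token on $(s, A)$ is the local $\tau_{(s, A)}$-transition whose precondition is the system place $s$; therefore the marking from which that $\tau_{(s, A)}$-transition fired is itself reachable in $\pNet^\sigma$ and contains the bad place $s \in \bad_p \subseteq \mathcal{B}$. In either case a reachable marking of $\pNet^\sigma$ contains a bad place, contradicting the assumption that $\sigma$ is winning. Thus no $\varrho_\sigma$-compatible play visits a losing state, and together with deadlock-avoidance this shows $\varrho_\sigma$ is safety-winning.
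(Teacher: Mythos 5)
Your proof is correct and takes essentially the same route as the paper's: deadlock-avoidance is obtained from \refLemma{AtoPdl2} after establishing that $\sigma$ avoids all $t_\DL^M$-transitions, and the safety part follows from the bisimulation together with \refLemma{AtoPsameState}, rewinding a $\tau_{(s,A)}$-step to exhibit a reachable marking of $\pNet^\sigma$ that contains the bad \emph{system} place. Your two refinements --- rewinding only the single token on the commitment-set place rather than, as the paper does, reversing the last $\tau$-transition of every token to obtain a marking $M'$ with $M' \relation u$, and spelling out the justified-refusal argument for why a winning $\sigma$ can never enable a $t_\DL^M$-transition, which the paper merely asserts --- are sound fill-ins of details, not a different approach.
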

\begin{proof}
	Since $\sigma$ is winning it is by definition deadlock-avoiding. It, furthermore, avoids all $t_{\DL}^M$-transitions so, by \refLemma{AtoPdl2}, $\varrho_\sigma$ is deadlock-avoiding.\\
	Suppose $u \in \Plays(\cGame, \varrho_\sigma)$ is a play that reaches a state that contains a bad state. 
	By bisimulation, there is reachable marking $M$ in $\pNet^\sigma$ with $M \relation u$.
	Let $M'$ be the marking where the last $\tau$-transition of every place is reversed, i.e., $M'$ is almost identical to $M$ but every token is on a system place. It holds that $M' \relation u$. (We only need to do this reasoning because bad places in $\pGame_\cGame$ are restricted to system places.)
	Using \refLemma{AtoPsameState}, we conclude that $\zeta(\lambda[M']) = \lambda[M'] = \state{u}$. So, $M'$ contains a bad place by construction of $\pGame_\cGame$. A contradiction to the fact that $\sigma$ is winning.
\end{proof}

\begin{proposition}
	\label{prop:int2}
	If $\sigma$ is a winning deterministic strategy for $\pGame_\cGame$ then $\varrho_\sigma$ is a winning controller for $\cGame$ and bisimilar to $\sigma$.
\end{proposition}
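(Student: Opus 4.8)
The plan is to read \refProp{int2} as the packaging of the two properties that strategy-equivalence demands of the constructed controller $\varrho_\sigma$ (defined in \refFig{CtoP_translation2}): that it is bisimilar to $\sigma$ and that it is winning. Both are obtained from the lemmas already established in this subsection, so the proof itself is essentially an assembly. Throughout I keep the standing assumptions of the construction in force, namely that $\sigma$ is deterministic and always commits ($\star$); determinism is what makes the simulation inside each local controller $f^{\varrho_\sigma}_p$ well-defined (a unique marking is reached), and $\star$ guarantees that every token has a commitment set available to copy.

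For bisimilarity I would take the relation $\relation$ already fixed, restricted to the reachable markings of $\pNet^\sigma$ and the $\varrho_\sigma$-compatible plays. The base case $\init^\sigma \relation \epsilon$ is immediate from the definition of $\relation$. It then remains to check the four clauses of weak bisimulation. Because $\cGame$ carries no internal $\tau$-actions (all $\tau$-transitions were introduced on the Petri-game side), the two clauses that would consume a $\tau$ on the control-game side are vacuous, and the matching simplifies to three statements: \refLemma{AtoPl4} matches an observable firing $\fireTranTo{M}{(a,\_,\_)}{M'}$ by the action $a$; \refLemma{AtoPl5} shows a $\tau$-firing keeps the related play unchanged; and \refLemma{AtoPl6} matches a controller action $u\,a$ by a firing $\tau^*(a,\_,\_)$. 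The real content of these three lemmas is the causal-information transport supplied by \refLemma{identicalView} and its consequence \refLemma{ident_sim}: in related situations the local view $\view_p(u)$ and the causal past of $\markpro{M}{p}$ agree, so the controller, which picks its commitment set by simulating $\view_p(u)$ inside $\pNet^\sigma$, copies exactly the set recorded at $\markpro{M}{p}$. Assembling the three clauses yields the corollary that $\sigma$ and $\varrho_\sigma$ are bisimilar.

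For winningness I would invoke \refLemma{CtoP_winning}, which decomposes into deadlock-avoidance and safety. Deadlock-avoidance follows from \refLemma{AtoPdl2} once we note that a winning $\sigma$ is itself deadlock-avoiding and never enables any $t_\DL^M$-transition. For safety, any reachable marking of $\pNet^\sigma$ containing a bad place has that place labelled either by a state inherited from $\cGame$ or by a $\badw_\DL^p$-place; the first is ruled out by bisimulation together with the state correspondence $\zeta(\lambda[M]) = \state{u}$ of \refLemma{AtoPsameState}, and the second by the absence of enabled $t_\DL^M$-transitions. This also justifies neglecting the deadlock-detection transitions throughout the bisimulation argument.

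The step I expect to be the genuine obstacle is not in this proposition at all but in \refLemma{identicalView}/\refLemma{ident_sim}, where one must argue that the global relation $\relation$ --- defined only by matching the poset of observable past transitions of $M$ with the play $u$ --- nonetheless respects each player's partial information. Once that correspondence is secured, combining the bisimilarity corollary with \refLemma{CtoP_winning} closes \refProp{int2}.
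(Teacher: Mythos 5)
Your proposal is correct and matches the paper's own treatment: \refProp{int2} is proved there exactly as you assemble it, namely $\init^\sigma \relation \epsilon$ plus \refLemma{AtoPl4}, \refLemma{AtoPl5}, and \refLemma{AtoPl6} (with the control-game $\tau$-clauses vacuous since $\cGame$ has no internal actions) for bisimilarity, and \refLemma{CtoP_winning} (via \refLemma{AtoPdl2} and \refLemma{AtoPsameState}, with $t_\DL^M$-transitions never enabled under a winning $\sigma$) for winningness. The only cosmetic deviation is that you sketch the safety part of \refLemma{CtoP_winning} from markings to plays, whereas the paper argues from a play reaching a bad state back to a marking (reversing the last $\tau$-transitions, as bad places sit on system places); since you invoke the lemma wholesale, this is immaterial.
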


\refProp{int1} and \refProp{int2} give us both parts of our initial obligation of \refTheo{theor4}.

\section{Enforcing Commitment}
\label{sec:appEC}

As an example of why always committing and avoiding deadlocks is fundamentally different, we consider the control game in \refFig{CtoP_exampleChallange} (a) and the translated Petri game in (b) (ignoring all grayed out parts). Even though the control game has no winning controller, the Petri game has a winning strategy: The token in $A$ refuses to commit and the token in $D$ plays transition $c$ forever. We note that if the token in $A$ chooses a commitment set, even if it is the empty one, the uncontrollable $a$-transitions can occur, causing a loss. 

\begin{figure}[t!]
	\begin{subfigure}[c]{0.4\textwidth}
		\begin{center}
			\begin{tikzpicture}[scale=1.0,every label/.append style={font=\scriptsize}, label distance=-0.5mm]
			\node[aastate,label=north:$A$] at (0,0) (s1) {};
			\node[aastate,double,label=south:$B$] at (-1,-1) (s2) {};
			\node[aastate,label=south:$C$] at (1,-1) (s3) {};
			
			\node[aastate,label=north:$D$] at (3,0) (s4) {};
			
			\draw[arrow] (s1) --node[left=0mm]{\scriptsize$a$} (s2);
			\draw[arrow,densely dotted] (s1) --node[right=0mm]{\scriptsize$b$} (s3);
			
			\path[]
			(s4) edge [loop below,arrow,densely dotted] node {\scriptsize$c$} (s4);
			
			\draw[arrow] (s1)+(0.3,0.3) -- (s1);
			\draw[arrow] (s4)+(0.3,0.3) -- (s4);

			\node[] at (-0.75,0) (){\scriptsize \color{myDGray}$p_1:$};
			
			\node[] at (2.25,0) (){\scriptsize \color{myDGray}$p_2:$};
			\end{tikzpicture}
		\end{center}
		\subcaption{}
	\end{subfigure}
	\begin{subfigure}[c]{0.6\textwidth}
		\begin{center}
			\begin{tikzpicture}[scale=1.0,every label/.append style={font=\scriptsize}, label distance=-1mm]
			\node[sysplace,label=south:$A$] at (0,0) (p1) {};
			\node[envplace,label=west:{$(A, \emptyset)$}] at (-1,-1.5) (p2) {};
			\node[envplace,label=west:{$(A, \{b\})$}] at (1,-1.5) (p3) {};
			
			\node[sysplace,specialSys,label=south:$B$] at (-1,-3) (p4) {};
			\node[sysplace,label=south:$C$] at (1,-3) (p5) {};
			
			\node[transition] at (-1, -0.75) (t1){};
			\node[transition] at (1, -0.75) (t2){};
			
			\node[transition,label=east:{$a$}] at (-1, -2.25) (t3){};
			\node[transition,label=west:{$a$}] at (1, -2.25) (t4){};
			\node[transition,label=south:{$b$}] at (0, -2.25) (t5){};
			
			\node[token] at (0,0) (){};
			
			\draw[arrow] (p1) -- (t1);
			\draw[arrow] (p1) -- (t2);
			\draw[arrow] (t1) -- (p2);
			\draw[arrow] (t2) -- (p3);
			
			\draw[arrow] (p2) -- (t3);
			\draw[arrow] (p3) -- (t4);
			\draw[arrow] (p3) -- (t5);
			
			\draw[arrow] (t3) -- (p4);
			\draw[arrow] (t5) -- (p4);
			\draw[arrow] (t4) -- (p5);

			\node[envplace,draw=black!35,label=north:\color{myTGray}{$\top_{p_1}$}] at (0, -4) (ps){};
			\node[transition,draw=orange,draw=black!35,label=south:\color{myTGray}{$t^{\mathit{ch}}_{(A, \emptyset)}$}] at (-1, -4) (ts1){};
			\node[transition,draw=orange,draw=black!35,label=south:\color{myTGray}{$t^{\mathit{ch}}_{(A, \{b\})}$}] at (1, -4) (ts2){};
			
			\draw[arrow,draw=black!35] (p2) to[out=225,in=135] (ts1);
			\draw[arrow,draw=black!35] (p3) to[out=315,in=55] (ts2);
			\draw[arrow,draw=black!35] (ts1) -- (ps);
			\draw[arrow,draw=black!35] (ts2) -- (ps);
			
			
			\node[sysplace,label=south:{$D$}] at (4,0) (p1) {};
			\node[envplace,label=west:{$(D, \emptyset)$}] at (3,-1.5) (p2) {};
			\node[envplace,label=west:{$(D, \{c\})$}] at (5,-1.5) (p3) {};
			
			\node[transition] at (3, -0.75) (t1){};
			\node[transition] at (5, -0.75) (t2){};
			
			\node[transition,label=east:{$c$}] at (6, -0.75) (t3){};
			
			\node[token] at (4,0) (){};
			
			\draw[arrow] (p1) -- (t1);
			\draw[arrow] (p1) -- (t2);
			\draw[arrow] (t1) -- (p2);
			\draw[arrow] (t2) -- (p3);
			
			\draw[arrow] (p3) to[out=0,in=270] (t3);
			\draw[arrow] (t3) to[out=90,in=0] (p1);
			
			\node[envplace,draw=black!35,label=north:\color{myTGray}{$\top_{p_2}$}] at (4, -2.5) (ps){};
			\node[transition,draw=black!35,label=south:\color{myTGray}{$t^{\mathit{ch}}_{(D, \emptyset)}$}] at (3, -2.5) (ts1){};
			\node[transition,draw=black!35,label=south:\color{myTGray}{$t^{\mathit{ch}}_{(D, \{c\})}$}] at (5, -2.5) (ts2){};
			
			\draw[arrow,draw=black!35] (p2) -- (ts1);
			\draw[arrow,draw=black!35] (p3) -- (ts2);
			\draw[arrow,draw=black!35] (ts1) -- (ps);
			\draw[arrow,draw=black!35] (ts2) -- (ps);
			\end{tikzpicture}	
		\end{center}
		\subcaption{}
	\end{subfigure}
	
	\caption{Control game $\cGame$ with safety as winning condition (a) and our translated Petri game $\pGame_\cGame$ (b). Uninteresting commitment sets for places $B$ and $C$ are omitted. The greyed parts model the addition of the challenge transitions in $\pGame_\cGame^\mathit{ch}$. $\pGame_\cGame$ (without the grayed out parts) has a winning strategy. $\pGame_\cGame^\mathit{ch}$ does not admit a winning strategy. }
	\label{fig:CtoP_exampleChallange}
\end{figure}


We can use the structure of $\pGame_\cGame$ to reduce local deadlock-avoidance to global one.
The idea is to terminate certain players. 
For every process, we add an additional place as a ``safe haven'', i.e., a place that is neither losing nor has any outgoing transitions. We allow every token that is on an environment place, i.e., a place representing a chosen commitment set, to move to this new place.
Every token that behaves as intended, i.e., always commits, can hence be moved to the safe haven and is therefore effectively removed from the game. 
All players that are locally deadlocked, i.e., refuse to commit, could previously do so since some player continued playing. As soon as all other players terminate, the locally deadlocked players do, however, cause a global deadlock since there no longer is a progressing player that can justify its refusal. If the second player in \refFig{CtoP_exampleChallange} corresponding to $p_2$ would be removed from the game the first player creates a global deadlock.
Every strategy where a token is locally deadlocked hence results in a (globally) deadlocked strategy and is therefore by assumption not winning. 
We successfully reduced local deadlock-avoidance to global deadlock-avoidance. 
Our reduction relies on the fact that Petri games are conceptually scheduled by an adversary scheduler. While the transition leading to a safe haven is always possible, it might not be executed. 
Following this high level explanation, we proceed by outlining the precise construction.

\paragraph*{Construction}

Formally, we modify $\pGame_\cGame$ into a new game $\pGame_\cGame^\mathit{ch}$.
For every process $p$, we add an additional place $\top_p$ to $\pGame_\cGame$. This place serves as the ``safe haven''. 
We allow every token to move to this place whenever it has chosen a commitment set. We hence define a new set of transitions 
$$\transitions^\mathit{ch} = \{ t^{ch}_{(s, A)} \mid s \in \bigcup_{p \in \processes} S_p \; \land \; A \subseteq \enabled{s} \cap \Sigma^{\sys}  \}$$
and add them to the game. 
There is exactly one such $t^{ch}_{(s, A)}$-transition for every environment place, i.e., every place with commitment set  $(s, A)$. 
We extend the flow such that these transitions fire from precisely the commitment set encoded in the transition by defining
$$\pre{\pGame_\cGame^{ch}}{t^{ch}_{(s, A)}} = \{ (s, A) \}$$
Every $t^{ch}_{(s, A)}$-transition moves the token of the involved process to the safe haven $\top_p$: 
$$\post{\pGame_\cGame^{ch}}{t^{ch}_{(s, A)}} = \{\top_p \mid \text{where } p \text{ is the process with } s \in S_p \}$$
Note that the precondition of all $t^{ch}_{(s, A)}$ comprises only environment places and can hence not be restricted by a strategy. 
In \refFig{CtoP_exampleChallange}, the added transitions and places are depicted in gray. From every commitment set, a token can always move to the $\top$-place.
In the modified game, the system no longer has a winning strategy, since the token in $D$ can be stopped at any point causing the token in $A$ to create a deadlock. 

\subparagraph{Correctness}

We can first observe that in any winning strategy $\sigma_{ch}$ for $\pGame_\cGame^\mathit{ch}$ every system place always commits. This follows directly from the construction: Suppose there is a contradicting situation, i.e., a marking $M$ in $\pNet^{\sigma_{ch}}$ where a system place $q \in M$ refuses to commit by $\post{\pNet^{\sigma_{ch}}}{q} = \emptyset$. We now consider one possible sequence starting in $M$: Every system place that can commit chooses a commitment set and afterwards terminates using a $t^{ch}$-transition. This results in a final marking $M'$ with $q \in M'$. However, $M'$ is a deadlock as in the underlying Petri net since $q$ could still progress to a commitment set place. 

Following this, we can argue that $\pGame_\cGame$ has a winning strategy that always commits if and only if  $\pGame_\cGame^\mathit{ch}$ has a winning strategy. \\
It is easy to see that any winning strategy $\sigma$ for $\pGame_\cGame$ that always commits results in a winning strategy $\sigma_{ch}$ for $\pGame_\cGame^\mathit{ch}$: The branching process of $\sigma_{ch}$ is just extended by all places and transitions introduced by our construction, i.e., from every commitment set place, an outgoing $t^{ch}$-transition is added. 
Since only $\top$-places are added there is no bad place reachable in $\sigma_{ch}$.
Since $\sigma$ always commits, firing one of the $t^\mathit{ch}$-transitions does not result in a deadlock, since every token can always move to a commitment set and afterwards either progress further or use a $t^\mathit{ch}$-transition to terminate.\\
A winning strategy $\sigma_{ch}$ for $\pGame_\cGame^\mathit{ch}$ results in a winning strategy $\sigma$ for $\pGame_\cGame$ that always commits: The branching process of $\sigma$ is obtained by removing all places and transitions that were added in the construction of $\pGame_\cGame^\mathit{ch}$.
The idea is that, while in $\sigma_{ch}$ the player can terminate early using a $t^{ch}$-transitions, there is also the possibility of it just playing as if there is no challenge transition. 
Since there is no bad place reachable in $\sigma_{ch}$ there are no bad places reachable in $\sigma$ either. 
Now, assume for contradiction that in $\sigma$ some place refused to commit in some marking $M$. Since $\sigma$ is obtained by removing parts of $\sigma_{ch}$ we get that $M$ is also a marking in $\sigma_{ch}$ but by the previous consideration this is not possible. 
Because every system player always commits, $\sigma$ is also deadlock-avoiding. 

\begin{proposition}
	$\pGame_\cGame^\mathit{ch}$ has a winning strategy iff $\pGame_\cGame$ has a winning strategy where every system place always chooses a commitment set.
\end{proposition}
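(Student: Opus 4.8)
The plan is to prove the biconditional by first isolating the crucial property that any \emph{winning} strategy for $\pGame_\cGame^\mathit{ch}$ is forced to commit, and then translating strategies back and forth by simply adding or deleting the challenge gadget. Concretely, I would first establish the following key observation as a lemma: in every winning strategy $\sigma_{ch}$ for $\pGame_\cGame^\mathit{ch}$, every reachable system place chooses a commitment set, i.e.\ has a nonempty postcondition in $\pNet^{\sigma_{ch}}$.

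For this lemma I would argue by contradiction. Suppose some reachable marking $M$ of $\pNet^{\sigma_{ch}}$ contains a system place $q$ with $\post{\pNet^{\sigma_{ch}}}{q} = \emptyset$. Starting from $M$, I would construct a single maximal play in which every \emph{other} token that currently sits on (or can still reach) a commitment-set place first commits and then fires its challenge transition $t^{ch}_{(s,A)}$ into its safe haven $\top_p$. Since the preconditions of all challenge transitions consist solely of environment places, the scheduler may take them and the strategy cannot forbid them; hence this play is admissible. The resulting marking $M'$ still contains $q$ and is final in $\pNet^{\sigma_{ch}}$, yet $\lambda[M']$ is not final in the underlying net because $q$ could still move to a commitment-set place. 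This contradicts deadlock-avoidance of $\sigma_{ch}$.

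Given the lemma, the two implications become routine gadget surgery. For the forward direction, from a winning, always-committing $\sigma$ for $\pGame_\cGame$ I would obtain $\sigma_{ch}$ by extending the branching process of $\sigma$ with a fresh $t^{ch}$-transition (and the attached $\top_p$-place) out of every commitment-set place. No bad place is added, so safety is preserved; and because $\sigma$ already commits everywhere, firing any $t^{ch}$-transition cannot freeze the game---every remaining token can still either progress or commit and terminate---so $\sigma_{ch}$ stays deadlock-avoiding, hence winning. For the backward direction, from a winning $\sigma_{ch}$ I would delete all challenge places and transitions to obtain a branching process $\sigma$ for $\pGame_\cGame$; reachable bad places cannot appear since none were reachable in $\sigma_{ch}$, the lemma guarantees that every system place in $\sigma$ still commits (so $\sigma$ always commits and is therefore deadlock-avoiding), and justified refusal is inherited because we only removed nodes attached to unrestrictable environment places.

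The main obstacle I anticipate is the deadlock argument in the lemma: one must exhibit a concrete admissible play that simultaneously drains every committing token into its safe haven while leaving the non-committing place $q$ stuck, and then verify that the reached marking is genuinely final in the strategy but non-final in the underlying net. The subtlety is that this relies essentially on the adversarial scheduling of Petri games (the challenge transitions are always enabled but need not be taken) and on the fact that the challenge gadget hangs only off environment places, so that a strategy can neither block the challenges nor be rescued by a perpetually progressing co-player once that co-player has been forced to terminate.
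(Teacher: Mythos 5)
Your proposal is correct and takes essentially the same approach as the paper's proof: the same key lemma that every winning strategy for $\pGame_\cGame^\mathit{ch}$ must commit everywhere (established by scheduling every committing token through its $t^{ch}$-transition into its safe haven and deriving a contradiction with deadlock-avoidance, since the stuck system place leaves the strategy final while the underlying net is not), followed by the same gadget surgery—extending the branching process with the challenge transitions for one direction and pruning them for the other. Your explicit remarks on justified refusal for the pruned process and on the adversarial scheduling of the always-enabled challenge transitions are, if anything, slightly more careful than the paper's own exposition.
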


\noindent We can use this result to justify the assumptions made in our correctness proofs since we can always modify $\pGame_\cGame$ to enforce commitment of all system players. 
We remark that $\pGame_\cGame^\mathit{ch}$ is not strategy-equivalent to $\cGame$ as a deadlock challenge can end a game even though the controller can continue to play. 
For every winning strategy for $\pGame_\cGame^\mathit{ch}$ however, there is an ``identical'' winning strategy for $\pGame_\cGame$ that itself is bisimilar to a controller for $\cGame$.
Even though  $\pGame_\cGame^\mathit{ch}$ and $\cGame$ are not strategy-equivalent, they are winning-equivalent. 

\section{Lower Bounds}
\label{sec:appLB2}

\begin{figure}[!t]
	\centering
	
	\begin{tikzpicture}[scale=1.0, every label/.append style={font=\tiny}, label distance=-1mm]
	\node[aastate] at (0,0) (s1){};
	\node[aastate] at (0,-2) (s2){};
	
	\draw[arrow] (s1)+(0.4, 0.4) to (s1);
	
	\draw[arrow] (s1) to[out=180,in=180] node[left=-1mm] {\tiny$x$} (s2);
	\draw[arrow, densely dotted] (s1) to[out=270,in=90] node[left=-1mm] {\tiny$a_1$} (s2);
	\draw[arrow, densely dotted] (s1) to[out=315,in=45] node[left=-1mm] {\tiny$a_2$} (s2);
	\draw[arrow, densely dotted] (s1) to[out=335,in=25] node[left=-1mm] {} (s2);
	\draw[arrow, densely dotted] (s1) to[out=0,in=0] node[right=-1mm] {\tiny$a_n$} (s2);
	\end{tikzpicture}
	
	\caption{Control game family $\{\cGame_n\}_{n \in \mathbb{N}}$ where every strategy-equivalent Petri game (with an equal number of players) must be of exponential size. Action $x$ is uncontrollable. All other actions $a_1, \cdots, a_n$ are controllable.\vspace{-0.2cm}} 
	\label{fig:AtoC_LB}
\end{figure}

In this section, we give a family of control games s.t.\ every strategy-equivalent Petri game must be of exponential size (in the size of $\abs{\Sigma}$). 
In our translation, we had to duplicate actions into multiple transitions to overcome the restrictive communication scheme of Petri games. 
In our lower bound, we offer a Petri game to do the same, i.e., allow the same transition to occur from multiple distinct situations. 
In particular, our proof does not depend on the fact that any transition can only occur from a fixed precondition.

Consider the control game family $\{\cGame_n\}_{n \in \mathbb{N}}$ depicted in \refFig{AtoC_LB}. 
We fix $n$ and define $\cGame = \cGame_n$. 
The initial state has several outgoing controllable actions ($a_1, \cdots, a_n$) and one outgoing uncontrollable action ($x$). 
Let $\pGame$ be a Petri game that is strategy-equivalent to $\cGame$ and also contains only one player.
For our lower bound, we need the additional assumption that there are no infinite sequences of consecutive $\tau$-transitions possible in a winning strategy\footnote{There is in fact a Petri game that is strategy-equivalent to \refFig{AtoC_LB} of polynomial size (it permits possibly infinite $\tau$-transition sequences).}.
While winning strategies in reachability games never permit such infinite sequences, we need to assume it for safety games.

\begin{lemma}
	\label{lem:CtoP_lb1}
	For every $\emptyset \neq B \subseteq \{a_1, \cdots, a_n\}$, there is a place $q_B$ such that $$\post{\pGame}{q_B} = B \cup \{x\}$$
\end{lemma}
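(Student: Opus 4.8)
The plan is to fix an arbitrary nonempty $B \subseteq \{a_1, \dots, a_n\}$ and exhibit a winning controller for $\cGame$ whose allowed behavior is exactly ``enable precisely the actions in $B$ (plus the uncontrollable $x$)'', then pull this controller back through strategy-equivalence to a winning strategy $\sigma_B$ for $\pGame$, and finally read off the required place $q_B$ from the structure of $\sigma_B$. Concretely, define the local controller for the single process $p$ by $f_p(\epsilon) = B$ and $f_p(u) = \emptyset$ for every nonempty $u$. Since all $a_i$ are controllable and $x$ is uncontrollable, the plays compatible with this controller are exactly $\epsilon$ together with $\{a \mid a \in B \cup \{x\}\}$ extended by whatever follows; I would argue this controller is winning (for the safety objective, invoking the standing assumption that no infinite $\tau$-sequences arise, and the fact that the single outgoing uncontrollable $x$ is always admitted so no artificial deadlock blocks progress).

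By strategy-equivalence (\refTheo{theor4}, applied in the direction control game $\to$ Petri game, i.e.\ \refProp{int2}'s counterpart \refProp{int1}), there is a bisimilar winning strategy $\sigma_B = (\pNet^{\sigma_B}, \lambda)$ for $\pGame$. The key step is to examine the initial marking of this strategy. First I would use bisimilarity to establish that, in the reachable markings of $\pNet^{\sigma_B}$ reached from $\init^{\sigma_B}$ by firing only $\tau$-transitions, the set of \emph{observable} transitions eventually enabled is exactly $B \cup \{x\}$: the bisimulation forces every observable action the controller admits to be matchable by an observable transition of the strategy with the same label, and conversely every observable transition of the strategy must correspond to an admitted action. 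Since the controller admits, as first observable action, precisely the labels in $B \cup \{x\}$, the strategy must enable transitions labelled exactly by $B \cup \{x\}$ after its initial $\tau$-phase.

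The main obstacle is bridging between a \emph{set of enabled transitions in a marking} and the \emph{postcondition of a single place}. In a general Petri game these are not the same, but here $\pGame$ has a single player, and strategy-equivalence preserves the number of players, so every reachable marking of $\pNet^{\sigma_B}$ is a singleton. Hence after the initial $\tau$-saturation the unique token sits on one place $q_B$, and the transitions enabled there are exactly the transitions in $\post{\pGame}{q_B}$. Combining this with the previous paragraph gives $\lambda[\post{\pNet^{\sigma_B}}{q_B}] = B \cup \{x\}$; identifying transitions with their labels (the notational convention of Appendix~\ref{app:background}, justified since single-player games are safe) yields a place $q_B$ of $\pGame$ with $\post{\pGame}{q_B} = B \cup \{x\}$. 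I would close by noting that the token must genuinely be on a single place (not split), which is where the single-player/one-token invariant and the no-infinite-$\tau$ assumption do the real work: they guarantee the $\tau$-saturation terminates at a well-defined marking and that the place $q_B$ is a place of the underlying game rather than an artifact of unfolding.
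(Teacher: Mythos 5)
There is a genuine gap at the final step. Your bisimulation argument only establishes $\lambda[\post{\pNet^{\sigma_B}}{q_B}] = B \cup \{x\}$, i.e., a statement about the transitions \emph{the strategy enables} from $q_B$. Since a strategy is a restriction of the unfolding, this gives only the inclusion $B \cup \{x\} \subseteq \post{\pGame}{\lambda(q_B)}$: if $\lambda(q_B)$ is a \emph{system} place, justified refusal permits $\sigma_B$ to suppress further outgoing transitions of $\lambda(q_B)$, so the postcondition in the underlying game may be strictly larger than what the strategy shows. Your sentence ``identifying transitions with their labels yields a place $q_B$ of $\pGame$ with $\post{\pGame}{q_B} = B \cup \{x\}$'' silently upgrades the strategy-level equality to a game-level one. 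This is not a technicality but the crux of the lemma: a single system place with postcondition $\{a_1,\dots,a_n\} \cup \{x\}$ would admit, for \emph{every} $B$, a winning strategy enabling exactly $B \cup \{x\}$ and bisimilar to your controller, so without further argument the Petri game could have polynomially many places and the lower bound (\refTheo{theor5}) would collapse.

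What is missing is precisely the step on which the paper spends most of its proof: showing that $\lambda(q_B)$ must be an \emph{environment} place, which is where the uncontrollability of $x$ does the work. Suppose it were a system place. Then deleting every $x$-labelled transition leaving $q_B$ from $\sigma_B$ still satisfies justified refusal (a system place may uniformly forbid a transition) and remains winning because $B \neq \emptyset$, so some transition stays enabled and the residual behavior coincides with the winning $\sigma_B$. Strategy-equivalence would then yield a controller bisimilar to this modified strategy --- impossible, since $x$ is uncontrollable and hence belongs to $\Plays(\cGame, \varrho)$ for \emph{every} controller $\varrho$, while the modified strategy cannot match $x$ from $\{q_B\} \relation \epsilon$. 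Only once $\lambda(q_B)$ is known to be an environment place does justified refusal force the strategy to enable all outgoing transitions, turning your inclusion into the required equality $\post{\pGame}{\lambda(q_B)} = B \cup \{x\}$. The remainder of your proposal (the choice of controller, termination of the $\tau$-saturation via the no-infinite-$\tau$ assumption, and singleton markings in the one-player game) matches the paper's argument.
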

\begin{proof}
	Choose $\varrho$ as the (winning) controller that allows exactly the controllable actions in~$B$ and $\sigma_\varrho$ as the bisimilar (winning) strategy for $\pGame$. Now, let $M = \{q\}$ be a marking that is reachable (in the strategy) by firing as many $\tau$-transitions as possible from the initial marking. 
	This marking may not be unique but, by assumption, it exists.\\
	Since $\varrho$ and $\sigma_\varrho$ are bisimilar and there are no $\tau$-transitions leaving $q$,  we can conclude that $\post{\sigma_\varrho}{q} = B \cup \{x\}$ and therefore $B \cup \{x\} \subseteq \post{\pGame}{\lambda(q)}$.\\
	We claim that $\lambda(q)$ is an environment place. Assume for contradiction that it is not, i.e., it is a system place. 
	We now modify $\sigma_\varrho$ by removing every $x$-transition leaving place $q$. Call this modified strategy $\sigma'$.
	Note that, since $q$ is by assumption a system place, the resulting branching fulfills justified refusal, i.e., is indeed a strategy.  
	It is, furthermore, easy to see that $\sigma'$ is still winning since, whenever a token is in place $q$, all other transitions in $B$ ($B \neq \emptyset$) are still possible and the behavior on them agrees with the behavior of the (winning) $\sigma_\varrho$. \\
	By assumption, there is a bisimilar winning controller $\varrho_{\sigma'}$ to $\sigma'$.
	This is an immediate contradiction: In $\sigma'$, the place $q$ is still reachable (using only $\tau$-transitions) so it holds that $\{q\} \relation \epsilon$. 
	We know that $x \in \Plays(\cGame, \varrho_{\sigma'})$ as $x$ is uncontrollable but we get that $x \not\in \post{\sigma'}{q}$ by construction of $\sigma'$. A contradiction to the bisimilarity of $\sigma'$ and $\varrho_{\sigma'}$.
	We hence know that $q$ is an environment place. \\
	As $\post{\sigma_\varrho}{q} = B \cup \{x\}$, $q$ is an environment place and there is only one player it follows that $\post{\pGame}{\lambda(q)} = B \cup \{x\}$. Now, $q_B = \lambda(q)$ has the desired properties. 
\end{proof}

This allows us to prove \refTheo{theor5}:

\begin{theorem}
	There is a family of control games $\{\cGame_n\}_{n \in \mathbb{N}}$ with $\abs{\Sigma_n} = n$ such that every strategy-equivalent Petri Game (with an equal number of players) must have at least $\Omega(d^n)$ places for $d > 1$.
\end{theorem}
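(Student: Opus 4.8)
The plan is to build on \refLemma{CtoP_lb1} to establish the counting lower bound. The key observation from that lemma is that for every nonempty subset $B \subseteq \{a_1, \ldots, a_n\}$, any strategy-equivalent Petri game $\pGame$ must contain a place $q_B$ whose postcondition is exactly $B \cup \{x\}$. The strategy is to argue that these places must be \emph{distinct} for distinct subsets, which immediately yields $\Omega(2^n)$ places and hence the claimed $\Omega(d^n)$ bound for $d > 1$ (in fact with $d = 2$). First I would fix $n$ and write $\cGame = \cGame_n$, noting that $\abs{\Sigma_n} = n+1$ (the $n$ controllable actions $a_1, \ldots, a_n$ plus the uncontrollable $x$), which is linear in $n$ and therefore does not affect the asymptotic statement.

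Next I would invoke \refLemma{CtoP_lb1} to obtain, for each of the $2^n - 1$ nonempty subsets $B \subseteq \{a_1, \ldots, a_n\}$, a place $q_B$ with $\post{\pGame}{q_B} = B \cup \{x\}$. The crucial injectivity step is that if $B_1 \neq B_2$, then $B_1 \cup \{x\} \neq B_2 \cup \{x\}$ (since $x \notin \{a_1, \ldots, a_n\}$), so the postconditions differ and therefore $q_{B_1} \neq q_{B_2}$: a single place has a unique postcondition, so two places with different postconditions cannot coincide. This gives an injection from the $\Omega(2^n)$ nonempty subsets into the places of $\pGame$, forcing $\abs{\places^\pGame} = \Omega(2^n)$.

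Finally I would phrase the conclusion in terms of the player-count normalization required by the theorem. Since the theorem fixes the number of players (here a single player, matching the single process of $\cGame_n$), there is no opportunity to distribute the exponential blow-up across many tokens, and all $\Omega(2^n)$ places must reside in the one slice describing that player's behavior. Thus every strategy-equivalent Petri game with an equal number of players has at least $\Omega(d^n)$ places for any $1 < d \leq 2$.

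I expect the main obstacle to lie not in the counting argument itself, which is routine once \refLemma{CtoP_lb1} is in hand, but in having correctly established that lemma under the stated assumption about the absence of infinite $\tau$-sequences. The delicate point there is ruling out that $\lambda(q)$ is a system place: the proof must construct the modified strategy $\sigma'$ that deletes the uncontrollable $x$-transition from $q$, verify that $\sigma'$ still satisfies justified refusal (which is exactly where the system/environment distinction matters) and remains winning, and then derive a contradiction with bisimilarity because a winning controller for $\cGame$ can never refuse the uncontrollable $x$. Ensuring that each $q_B$ is genuinely an environment place — rather than one whose outgoing behavior could be reshaped by the strategy — is what makes the postcondition a robust, strategy-independent invariant of $\pGame$, and this is the real content underpinning the otherwise elementary cardinality count.
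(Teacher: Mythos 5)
Your proposal is correct and follows exactly the paper's approach: the paper's proof is literally ``follows from \refLemma{CtoP_lb1} with $d = 2$,'' and you simply make explicit the implicit counting step, namely that distinct nonempty subsets $B$ yield distinct postconditions $B \cup \{x\}$ and hence distinct places $q_B$, giving $\Omega(2^n)$ places. Your side remarks (the $\abs{\Sigma_n} = n+1$ bookkeeping and the observation that the real content lies in ruling out that $q_B$ is a system place within \refLemma{CtoP_lb1}) accurately reflect the paper as well.
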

\begin{proof}
	Follows from \refLemma{CtoP_lb1} with $d = 2$.
\end{proof}

\section{New Decidable Classes}
\label{sec:acyclicSliceNPcomplete}
\begin{lemma}
	Deciding whether a Petri net has an acyclic slice-distribution is \textsc{NP}-complete.
\end{lemma}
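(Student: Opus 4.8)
The plan is to establish the two directions of \textsc{NP}-completeness separately, treating a slice-distribution as the natural witness. Recall that a slice-distribution is essentially determined by a partition of $\places^\pNet$ into the place-sets of the slices: once the blocks $\{\places^\slice\}_{\slice \in \slices}$ are fixed, condition \textbf{\textsf{(4)}} forces $\transitions^\slice = \{t \mid \pre{\pNet}{t} \cap \places^\slice \neq \emptyset\}$ and $\flow^\slice$ to be the restriction of $\flow$. Equivalently, since the net must be concurrency-preserving for any slice-distribution to exist, the only genuine freedom lies, at each transition $t$, in the choice of a bijection between $\pre{\pNet}{t}$ and $\post{\pNet}{t}$ recording which incoming token continues on which outgoing place; the slices are then exactly the connected components (token threads) induced by these bijections. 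In either presentation the witness is polynomial in the size of $\pNet$.

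For membership in \textsc{NP}, I would guess such a partition and verify in polynomial time that it is a valid slice-distribution: each block contains exactly one initial place ($|\init \cap \places^\slice| = 1$), every transition meeting a block does so in exactly one pre- and one post-place (condition \textbf{\textsf{(3)}}), and the parallel composition reconstructs $\pNet$. I would then build the undirected graph $(V,E)$ with $V = \slices$ and an edge between $\slice_1,\slice_2$ whenever $\transitions^{\slice_1} \cap \transitions^{\slice_2} \neq \emptyset$, and test that it is a forest by a linear-time traversal (or by comparing the edge count against the number of vertices minus the number of components). All checks are polynomial, so the problem lies in \textsc{NP}.

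For \textsc{NP}-hardness I would give a polynomial reduction from a Boolean satisfiability problem, say (NAE-)$3$-\textsc{SAT}, exploiting that each synchronizing transition with two pre- and two post-places offers precisely two symmetric matching choices, which I use to encode a Boolean value. The construction would use \emph{variable gadgets} whose matching choice fixes a truth value, \emph{propagation gadgets} that force this value to be reused consistently at every occurrence of the variable, and \emph{clause gadgets} built so that the adjacencies induced among the participating token threads close into a cycle in $(V,E)$ precisely when the clause is left unsatisfied and remain a forest otherwise. Because the key structural fact (in the spirit of the lower-bound reasoning in \refSection{appLB1}) is that a transition with $k \geq 3$ distinct pre-places forces a $k$-clique among slices, I would restrict all transitions to fan-in and fan-out two and engineer cycles only through the chosen thread coalescences, so that global forest-ness of $(V,E)$ becomes equivalent to satisfiability of the formula.

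The main obstacle is the gadget design and its correctness proof, where three requirements must be met at once. The constructed net must stay concurrency-preserving and $1$-bounded while guaranteeing that every admissible matching yields exactly one initial place per slice (condition \textbf{\textsf{(2)}}), since a partition violating this is not a slice-distribution at all; the propagation gadgets must enforce equality of a variable's value across arbitrarily many occurrences using only local matching choices; and the clause gadgets must translate ``clause satisfied'' into ``no cycle is formed'', with a proof that these local cycle-avoidances are jointly realizable if and only if the formula is satisfiable. Verifying that the only cycles arising in $(V,E)$ are the intended ones, and that no unintended thread merging either repairs a cycle for free or breaks the one-initial-token invariant, is where the bulk of the argument will lie.
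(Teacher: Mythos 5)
Your membership argument is fine and matches the paper's (one-line) treatment, and your structural observations are correct: a slice-distribution is indeed determined by the partition of $\places^\pNet$ (equivalently, by per-transition pairings of pre- and post-places), and a transition whose pre-places land in $k$ distinct slices induces a $k$-clique in the communication graph $(V,E)$. The genuine gap is the hardness direction, which is the entire substance of the lemma: you state that a reduction from (NAE-)$3$-\textsc{SAT} \emph{would} use variable, propagation, and clause gadgets, and then candidly defer the gadget design and its three correctness obligations (one initial place per slice, value propagation across occurrences, and the equivalence ``forest-ness of $(V,E)$ iff satisfiability'') to future work, calling them ``the main obstacle'' where ``the bulk of the argument will lie''. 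A reduction whose gadgets are unspecified is not a proof; nothing in the proposal shows that gadgets with the required properties exist, and those properties are exactly where such constructions typically break.

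For comparison, the paper's reduction from $3$-\textsc{SAT} rests on two concrete devices that your sketch lacks. First, an \emph{equality gadget}: three transitions bidirectionally attached to two target places $x, y$ and a tokened auxiliary place, arranged so that the induced communication graph is acyclic only if $x$ and $y$ lie in the same slice; this yields a primitive for forcing places into a common slice. Second, a global $\top$/$\bot$ pair with, per variable, a transition with preset $\{\top,\bot\}$ and postset $\{x_i, \widehat{x_i}\}$, so that the slice membership of the literal places encodes a truth assignment; each clause becomes a transition $C_i$ whose preset consists of the three literal places and which has a dedicated post-place $V_i$, with the equality gadget forcing every $V_i$ into the $\top$-slice. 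Since the slice of $V_i$ must select exactly one pre-place of $C_i$, acyclicity forces a true literal in every clause. Note also that your plan is in one respect directly at odds with the paper's: invoking your clique observation, you propose to restrict all transitions to fan-in and fan-out two, whereas the paper's clause transitions are precisely $3$-ary; so instantiating your plan could not reproduce the paper's construction, and you would need a genuinely different one, for whose existence no evidence is given. As it stands, the proposal establishes membership in \textsc{NP} but not \textsc{NP}-hardness.
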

\begin{proof}
	It is easy to see that the problem is in NP, since for a given distribution it can be efficiently checked if it is valid and acyclic.  \\
	For hardness, we reduce from 3-SAT. Recall that 3-SAT is the problem of deciding whether a given propositional CNF formula, where each clause are exactly 3 literals, is satisfiable. This problem is known to be NP-hard. 
	Fix such a formula $\phi$ with propositional variables $x_1, \cdots, x_n$ and $\phi = (L_{1}^1 \lor L_{2}^1 \lor L_{3}^1) \land \cdots \land (L_{1}^m \lor L_{2}^m \lor L_{3}^m)$. 
	
	Our reduction relies on the fact that we can force two places to belong to the same slice. Consider the following structure:
	
	\begin{center}
		\begin{tikzpicture}[scale=1.0]
		\path 	(0,0) node[envplace,label=west:$x$](p1) {}
		(2,0) node[envplace,label=east:$y$](p2) {}
		(1,-1) node[envplace](p3) {}
		(1,0) node[transition](t1) {}
		(0,-1) node[transition](t2) {}
		(2,-1) node[transition](t3) {};

		\path	(1,-1) node[token](tt1) {};

		\draw[darrow] (p1) -- (t1);
		\draw[darrow] (p2) -- (t1);
		\draw[darrow] (p1) -- (t2);
		\draw[darrow] (p3) -- (t2);
		\draw[darrow] (p2) -- (t3);
		\draw[darrow] (p3) -- (t3);
		\end{tikzpicture}
	\end{center}

	It is easy to see that for this sub-net the distribution is acyclic only if nodes $x$ and $y$ belong to the same slice. Otherwise there will be at least one triangle (cycle) in the communication graph.
	By using this gadget, we can now define a Petri Net ($\pNet_\phi$) that is forced to create slices s.t.\ they exactly form a satisfying assignment.
	\begin{center}
		\begin{tikzpicture}[scale=1.0, every label/.append style={font=\scriptsize}, label distance=-1mm]
		\path 	(4,1) node[envplace,label=north:$\top$](top1) {}
		(6,1) node[envplace,label=north:$\bot$](bot1) {}
		
		(0,-1.5) node[envplace,label=west:$x_1$](x1) {}
		(1,-1.5) node[envplace,label=east:$\widehat{x_1}$](nx1) {}
		(0.5,-0.75) node[transition](tx1) {}
		
		(4,-1.5) node[envplace,label=west:$x_2$](x2) {}
		(5,-1.5) node[envplace,label=east:$\widehat{x_2}$](nx2) {}
		(4.5,-0.75) node[transition](tx2) {}
		
		(9,-1.5) node[envplace,label=west:$x_n$](xn) {}
		(10,-1.5) node[envplace,,label=east:$\widehat{x_n}$](nxn) {}
		(9.5,-0.75) node[transition](txn) {}
		
		(7,-0.75) node[](dot2) {$\cdots$}
		
		(3, -3) node[transition,label=east:$C_1$](c1) {}
		(9, -3) node[transition,label=east:$C_m$](cm) {}
		
		(2.5, -3.75) node[envplace](c1d1) {}	
		(3, -4.5) node[envplace,label=south:$V_1$](c1d2) {}
		(3.5, -3.75) node[envplace](c1d3) {}

		(8.5, -3.75) node[envplace](cmd1) {}	
		(9, -4.5) node[envplace,label=south:$V_m$](cmd2) {}
		(9.5, -3.75) node[envplace](cmd3) {}

		(5.5, -3) node[](cmmd2) {$\cdots$};	
		
		\path	(4,1) node[token]() {}
		(6,1) node[token]() {};
		
		\draw[arrow] (top1) -- (tx1);
		\draw[arrow] (bot1) -- (tx1);
		\draw[arrow] (top1) -- (tx2);
		\draw[arrow] (bot1) -- (tx2);
		\draw[arrow] (top1) -- (txn);
		\draw[arrow] (bot1) -- (txn);
		
		\draw[arrow] (tx1) -- (x1);
		\draw[arrow] (tx1) -- (nx1);
		\draw[arrow] (tx2) -- (x2);
		\draw[arrow] (tx2) -- (nx2);
		\draw[arrow] (txn) -- (xn);
		\draw[arrow] (txn) -- (nxn);

		\draw[arrow] (nx1) -- (c1);
		\draw[arrow] (x2) -- (c1);
		\draw[arrow] (c1)+(0.6,0.6) -- (c1);

		\draw[arrow] (cm)+(-0.6, 0.6) -- (cm);
		\draw[arrow] (cm)+(0,0.6) -- (cm);
		\draw[arrow] (cm)+(0.6,0.6) -- (cm);
		
		\draw[arrow] (c1) -- (c1d1);
		\draw[arrow] (c1) -- (c1d2);
		\draw[arrow] (c1) -- (c1d3);

		\draw[arrow] (cm) -- (cmd1);
		\draw[arrow] (cm) -- (cmd2);
		\draw[arrow] (cm) -- (cmd3);
		
		\node[envplace,label=south:$V_i$] at (6, -4.5) (helper){} ;
		
		\draw[-, thick, red, dashed] (top1) to[out=190, in=180] (c1d2);
		
		\draw[-, thick, red, dashed] (c1d2) to[out=20, in=160] (helper);
		\draw[-, thick, red, dashed] (helper) to[out=20, in=160] (cmd2);
		
		\end{tikzpicture}
	\end{center}

	For every variable, we create two new places $x_i$ and $\widehat{x_i}$ and connect them with a transition to both $\top$ and $\bot$. To get a valid slice, it is therefore enforced to put $x_i$ either in a slice containing $\top$ or in one containing $\bot$. For each clause, we then create a new transition $C_i$ whose precondition is exactly set such that it contains the three literals of the clause. For example, if $C_i = x_5 \lor \widehat{x_3} \lor x_8$ (all in the formula) then $\pre{}{C_i} = \{x_5,\widehat{x_3}, x_8\}$ (all in the net). For each of these transitions $C_i$, there are three outgoing places: two unimportant ones (only to stay concurrency-preserving) and one dedicated one, $V_i$. We will later see that this will enforce to put $V_i$ in the same slice with one of the literals that is in a slice containing $\top$. \\
	The red lines in the construction above indicate that these places should be in the same slice if the distribution is acyclic. We want all $V_i$s and the $\top$-place to be in the same slice. This can be done using the previously constructed gadget.
	
	The obtained Petri net has $2n + 4m + 1$ places and $n + 4m - 3$ transitions and can be computed efficiently.
	We can now prove that $\phi$ is satisfiable if and only if $\pNet_\phi$ has an acyclic distribution:\\
	$\Rightarrow$: Let $\phi$ be satisfiable and $h : \{x_1, \cdots, x_n\} \to \{0, 1\}$ be a satisfying assignment. We then select $\places^\top = \{\top\} \cup \bigcup\limits_{i = 1, \cdots, n} \left\{ \begin{aligned}
	x_i \quad &\textrm{if } h(x_i) = 1\\
	\widehat{x_i} \quad &\textrm{if } h(x_i) = 0
	\end{aligned} \right\} \cup \bigcup\limits_{j = 1, \cdots, m} V_j$ and $\places^\bot$ as the remaining places. It is easy to see that this is a valid slice when $h$ is satisfying: In this case for each clause, one of the three literal places is in $\places^\top$ and so we can put $V_i$ in $\places^\top$ as well. Since all $V_i$ are in the same slice this is also an acyclic distribution. \\
	$\Leftarrow$: Now, suppose $\pNet_\phi$ has an acyclic distribution. It is easy to see that $x_i$ and $\widehat{x_i}$ cannot be in the same slice. For a valid acyclic distribution, all $V_i$-places must be in the same slice as $\top$. Using these facts, it is easy to see that all literal nodes that are in the same slice with $\top$ form a satisfying assignment (formally $h : \{x_1, \cdots, x_n\} \to \{0, 1\}$ with $h(x_i) = \begin{cases}
	1 \quad \textrm{if } x_i \in \places^\top\\
	0 \quad \textrm{if } x_i \in \places^\bot
	\end{cases}$). 
	There cannot be complement literals in this slice and from each clause at least one literal must be in the $\top$ slice (otherwise $V_i$ cannot be in the $\top$ slice).
	Hence $\phi$ is satisfiable. 
\end{proof}

\end{document}